\numberwithin{equation}{section}
\newcommand{\numberset}{\mathbb} 
\newcommand{\R}{\numberset{R}}
\newcommand{\C}{\numberset{C}}  
\renewcommand{\Re}{\mathop{\mathrm{Re}}}
\renewcommand{\Im}{\mathop{\mathrm{Im}}}
\newcommand{\vl}{\guillemotleft}
\newcommand{\vr}{\guillemotright}
\newcommand{\spt}{\scriptscriptstyle}
\newcommand{\ssp}{\scriptstyle}
\lbrace\begin{array}{@{}l@{}}}%
\theoremstyle{definition}
\theoremstyle{plain}
\definecolor{shadowcolor}{rgb}{0,.5,.5}
\definecolor{light-gray}{gray}{0.45}
\providecommand*{\iu}{\ensuremath{\mathrm{i}}}
\DeclareMathOperator{\erf}{erf}
\DeclareMathOperator*{\tr}{Tr}
\definecolor{shadecolor}{cmyk}{0,0,0.41,0}
\definecolor{light-blue}{rgb}{.9, .9, 5}
\newsavebox{\mysaveboxM} 
\newsavebox{\mysaveboxT} 
\newcommand\restr[2]{{
  \left.\kern-\nulldelimiterspace 
  #1 
  \vphantom{\big|} 
  \right|_{#2} 
  }}
\theoremstyle{plain}
\newtheorem{thm}{Theorem}[section]
\newtheorem{corth}{Corollary}[thm]
\newtheorem{lem}{Lemma}[section]
\newtheorem{corlem}{Corollary}[lem]
\newtheorem{cor}{Corollary}[section]
\newtheorem{conj}{Conjecture}[section]
\theoremstyle{definition}
\newtheorem{defask}{Definition}[section]
\newtheorem{defdp}{Definition}[defask]
\theoremstyle{remark}
\newtheorem{remark}{\textsc{remark}}
\newenvironment{lem*} 
  {\pushQED{\qed}\lem} 
  {\popQED\endlem}
\newenvironment{corlem*} 
  {\pushQED{\qed}\corlem} 
  {\popQED\endcorlem}
\newenvironment{thm*} 
  {\pushQED{\qed}\thm} 
  {\popQED\endthm}
\newenvironment{corth*} 
  {\pushQED{\qed}\corth} 
  {\popQED\endcorth}
\DeclarePairedDelimiter{\abs}{\lvert}{\rvert}
\newcommand{\e}{\mathrm{e}}
\newcommand{\nc}{\spt{\textsc{nc}}}
\newcommand{\cc}{\spt{\textsc{c}}}
\newcommand{\Z}{\numberset{Z}}
\newcommand{\Q}{\numberset{Q}}
\newcommand{\0}{\setminus \left\{{0}\right\}}
\pgfplotsset{/pgf/number format/use comma}
\colorlet{circle edge}{blue!50}
\colorlet{circle area}{blue!20}
\tikzset{filled/.style={fill=circle area, draw=circle edge, thick},
    outline/.style={draw=circle edge, thick}}
\def\firstcircle{(0,0) circle (6.8cm)}
\def\secondcircle{(180:1.4cm) circle (2.9cm)}
\def\thirdcircle{(45:2cm) circle (2.3cm)}
\def\fourthcircle{(170:0.7cm) circle (1.35cm)}
\def\fifthcircle{(165:1.4cm) circle (0.23cm)}
\def\sixthcircle{(45:1.5cm) circle (4.2cm)}
\newcommand{\daywidth}{8 em}
\tikzset{day/.style  = {draw, text centered,
                       minimum height = 1cm,
                        minimum width  = \daywidth,
                        anchor         = south west,
                         text width = 12.35 em}}
\tikzset{dayF/.style  = {draw, text centered,
                       minimum height = 1cm,
                        minimum width  = 3cm,
                        anchor         = south west,
                         text width = 14 em}}
\tikzset{title/.style  = {draw,
                        minimum height = 1 cm,
                        minimum width  = 1 cm,
                           fill =      orange!25,
                        text           = black,
                        text width    = 13 em}}
\tikzset{hour/.style = {draw, text centered,
                        minimum height = 1.5 cm,
                        minimum width  = 1.5 cm,
                        fill           = yellow!30,
                        anchor         = north east,
                        text width    = 13 em}}
\tikzset{hourF/.style = {draw, text centered,
                        minimum height = 2 cm,
                        minimum width  = 1.5cm,
                        fill           = yellow!30,
                        anchor         = north east,
                        text width    = 13 em}}
\tikzset{hours/.style = {draw, text centered,
                         minimum width = \daywidth,
                         anchor        = north west,
                         text width = 13 em}}
\tikzset{hoursF/.style = {draw, text centered,
                         minimum width = 3cm,
                         anchor        = north west,
                         text width = 14 em}}
\tikzset{hoursF2/.style = {draw, text centered,
                         minimum width = 4cm,
                         anchor        = north west,
                         text width = 18em}}
\tikzset{hoursF2m/.style = {draw, 
                         minimum width = 4cm,
                         anchor        = north west,
                         text width = 18em}}
\tikzset{hoursFM/.style = {draw, text centered,
                         minimum width = 3cm,
                         anchor        = north west,
                         text width = 13 em}}
\tikzset{1hour/.style ={hours, minimum height = 1.5cm}}
\tikzset{1hourF/.style ={hoursF, minimum height = 1.5cm}}
\tikzset{1hourF2/.style ={hoursF2, minimum height = 2cm}}
\tikzset{1hourF2m/.style ={hoursF2m, minimum height = 2cm}}
\tikzset{2hours/.style ={hours, minimum height = 1.5cm}}
\tikzset{3hours/.style ={hoursF, minimum height = 2cm}}
\tikzset{3hoursM/.style ={hoursFM, minimum height = 3.4cm}}
\tikzset{6B/.style    ={1hour, fill = blue!10}}
\tikzset{6BF/.style    ={1hourF, fill = blue!10}}
\tikzset{2A/.style    ={1hour, fill = green!10}}
\tikzset{2AF/.style    ={1hourF, fill = green!10}}
\tikzset{2AF2/.style    ={1hourF2, fill = green!10}}
\tikzset{2AF2m/.style    ={1hourF2m, fill = green!10}}
\tikzset{TESSP/.style ={1hour, fill = blue!20}}
\tikzset{TES/.style   ={1hour, fill = blue!10}}
\tikzset{PESSP/.style ={1hour, fill = magenta!50}}
\tikzset{Empty/.style ={1hour, fill = lightgray!30}}
\definecolor{webgreen}{rgb}{0,.5,0}
\definecolor{webbrown}{rgb}{.6,0,0}
\begin{document}

\begin{titlepage}

\begin{center}
{ \textbf{\vspace{-2cm}\\
\fontsize{13.2}{15}\selectfont{\mbox{VORTEX ON A NON-COMMUTATIVE TORUS IN}\\
\vspace{.5em}
THE DUAL SUPERCONDUCTIVITY MODEL}}}\\ 
\vspace{1em}
{ \textsc{\fontsize{9.7}{15}\selectfont{a link between a cyclic $\tau\subset U(1)$, the color-electric charge of quarks and the vortex mass}}}

         \vspace{.7cm}
         {\scshape{Andrea Spirito}} 

\vspace{.15cm}

\textit{Dipartimento di Matematica e Fisica ``Ennio De Giorgi'' \\Via Arnesano, I-73100 Lecce, Italy}\\
\medskip
\textit{E-mail:} \href{mailto:andrea.spirito@le.infn.it}{\texttt{andrea.spirito@le.infn.it}}
     \vspace{.9cm}

      \textbf{Abstract}
 \end{center}
{\fontsize{9.5}{12}\selectfont {The model of \emph{dual superconductivity}, used to explain the static quark confinement, has been revisited considering the typical $U(1)$-gauged Ginzburg-Landau lagrangian density on a non-commutative torus $\mathcal{T}^{2}_{\nc}$, according to a new approach, we propose, in dealing with non-commutative space coordinates ($[\hat{x},\hat{y}]=\iu\theta$).\\ 
This led to consider a different set of the \emph{twist matrices} $\Omega_{\mu}$ relative to $\mathcal{T}^{2}_{\nc}$, since the corrisponding set, adopted in previous works, has resulted to be incompatible with the homogeneity of $\mathcal{T}^{2}_{\nc}$. Beyond this, we have found other differences which concern inter alia the observables of the model. Essentially, here, the index labelling the \emph{twists} $n\in \Z$ no longer has the usual physical role. In fact, the energy, suitably rewritten, of the minimum configurations at the point of Bogomolny and the quark color-electric charge $q_{e}$ depend on $r_{\spt{\theta}}=\theta_{\cc}/\theta_{\nc}\in\Q\0$, where $\theta_{\cc}$ ($\theta_{\nc}$) is the parameter that characterizes the commutative torus $\mathcal{T}^{2}_{\cc}$ ($\mathcal{T}^{2}_{\nc}$): here $\mathcal{T}^{2}_{\cc}$ is not characterized only by a $\theta$ null. The quantity $r_{\spt{\theta}}$ was then found to determine the order of a cyclic subgroup of $U(1)$ generated by $\e^{\iu\pi\theta_{\nc}}$. Therefore here $q_{e}$ can also be a fraction as well as a multiple of $g^{-1}$, where $g$ is the magnetic charge \mbox{of the scalar field $\phi$.} \\
Furthermore, the above energy was calculated only considering an appropriate class of the \emph{twisted boundary conditions} solutions without solving, as usual, any system of equations such as the BPS (Bogomolny-Prassad-Sommerfeld).\\
Another interesting novelty is the presence of a double Higgs mechanism regulated by $r_{\spt{\theta}}$. In this regard, it predicts, in general, a non null mass for the usual massless Goldstone boson and establishes a relation between the mass spectrum and $r_{\spt{\theta}}$ according to which a variation over time of $r_{\spt{\theta}}$ involves a quantized variation in time of such masses that will be null when $r_{\spt{\theta}}$ assumes values outside a certain interval of values.\\

\noindent{\normalsize{\textsc{Keywords:}}} Non-Commutative Geometry, Solitons Monopoles and Instantons, Confinement, Higgs Physics, Spontaneous Symmetry Breaking.}}

\end{titlepage}



\hrule
 \pdfbookmark{\contentsname}{tableofcontents}
\tableofcontents
\markboth{\contentsname}{\contentsname} 
\medskip\medskip\smallskip 
\hrule
\bigskip\bigskip





\section{Introduction}
\label{sec:intr}

\noindent According to the definition given by~\citep{Manton2004}, a vortex is a configuration of a scalar field $\phi$ in two dimensions characterized
by a \emph{core} of a finite size.\\
This particular configuration has attracted many interests and different types of generalizations~\citep{Vilenkin1994,Manton2004,EWeinberg2012}.\\
\indent Among these interests, in particular, there is the one concerning the confinement of static quarks.
Different models have been proposed to explain such phenomenon~\citep{Hooft2007}, among them there is the so-called \emph{dual superconductivity model}~\citep{Mandelstam1975,Hooft75,Parisi75}.  
In this context, $\phi$ represents a color-magnetic monopoles condensate and the \emph{core} is a bounded portion of the space between two color-electric charges (representing a quark-antiquark pair $\mathbf{q}\mathbf{\bar{q}}$), whose the edges are identified by the so-called \emph{twisted boundary conditions} (TBC)~\citep{Hooft1979}. 
These determine the appearance of a flux string in-between $\mathbf{q}$ and $\mathbf{\bar{q}}$ which produces the static potential responsible for the confinement (\emph{Cornell potential}). The above mentioned \emph{core} represents but a section of the space orthogonal to the direction of such flux string (see figure~\ref{fig:Scdual}).\\ 
\begin{figure}
\centering
\includegraphics[width=%
0.8\textwidth] {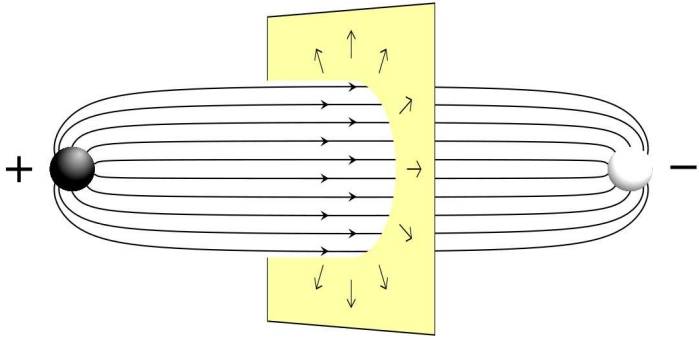} 
\caption{``Plaquette'' bidimensional that represents both the \emph{core} of a vortex $\phi$ and the orthogonal section of the flux string related to Cornell potential~\citep{Hooft2007}.}
\label{fig:Scdual}
\end{figure}
\hspace{-.45em}In particular, the \emph{abelian projection mechanism}~\citep{Hooft1981} on TBC allows the formation of the flux string. In fact this mechanism breaks the group $SU(3)$ of quarkless $\mathrm{QCD}$ into the maximal abelian subgroup and thus we can predicted the existence of a component of the gauge invariant field tensor $F_{\mu\nu}$, orthogonal to the \emph{core} and related to the flux string.\\
Furthermore, the twists around the \emph{core} determine the quantization of such flux and, ultimately, the quantization and sign of the color-electric charge of the quarks.
These twists, classified according to a winding number $n$, correspond to the elements of the fundamental homotopy group $\pi_{1}(SU(N)/Z_{N})\cong Z_{N}$~\citep{Baal82}. As a matter of fact, we can consider $SU(N)/Z_{N}$ as the effective gauge group since the gauge potential $A_{\mu}$ turns trivially under the $Z_{N}$ center of $SU(N)$.\\
\indent In this frame, we can understand how the system, although originally non-abelian, can be described by a U(1)-gauged Ginzburg-Landau lagrangian density $\mathcal{L}$ on a torus $\mathcal{T}^{2}$:
\begin{equation}
\label{eqn:dL}
\mathcal{L}=-\frac{1}{4}F_{\mu\nu}F^{\mu\nu}+(D_{\mu}\phi)^{\dagger}(D^{\mu}\phi)-\lambda(\phi^{\dagger}\phi-\phi^{2}_{0})^{2},
\end{equation}
to which corresponds the energy 
\begin{equation}
\label{eqn:energy'}
E=\int_{\mathcal{T}^{2}}\biggl(\frac{1}{4}F_{\mu\nu}F^{\mu\nu}+(D_{\mu}\phi)^{\dagger}(D^{\mu}\phi)+\lambda(\phi^{\dagger}\phi-\phi^{2}_{0})^{2}\biggr),
\end{equation}
where $D_{\mu}=\partial_{\mu}-\iu g A_{\mu}$, with $\mu=1,2$, is the covariant derivative with respect to $A_{\mu}$ ($U(1)$ gauge potential) and $\lambda, g$ are coupling constants.\\ 
Therefore, a phenomenology similar to that of the superconductors can be expected. The vacuum expectation value $\Braket{\phi}$ is the order parameter for the system, which makes a transition from a confined phase at $\mu^{2}=-2\phi^{2}_{0}\lambda<0 \Leftrightarrow\Braket{\phi}=\phi_{0}$, to a deconfined one for $\mu^{2}>0\Leftrightarrow\Braket{\phi}=0$. In the former case the typical lines of force are concentrated into a flux string, whereas in the latter they are distributed in the space like those of an electric dipole. \pagebreak \\
Later, it was observed~\citep{Bogomolny1976} that $E$ admitted a rewriting (called \textit{à la} Bogomolny), according to which, at a specific value of $\lambda$ (i.e.\,$\lambda=g^{2}/2$), $E$ presented a lower limit that
was achieved when these fields met the so-called Bogomolny-Prassad-Sommerfeld (BPS) two equations\footnote{They are first degree equations and we can easily show~\citep{Manton2004} that fields that meet these equations, automatically meet also the Ginzburg-Landau  second degree equations (GL eqs). We understand thus how the GL eqs represent ``weaker'' conditions compared to BPS eqs for the minimum-energy configurations.}(later they will be called I BPS eq and II BPS eq) and that depended on $n$.\\
From II BPS eq, a relation was later obtained~\citep{Bradlow90} according to which 
\begin{center}
$|\phi|^{2}\neq 0\Leftrightarrow \mathcal{A}-4\pi\, n>0$\quad (Bradlow's inequality),
\end{center}
with $\mathcal{A}$ area of $\mathcal{T}^{2}$ and $n$ that, in this context, appeared to determine the number of vortices. \\

The interest in considering the coordinates of $\mathcal{T}^{2}$ non-commutative came after a study by Gonz{\'a}lez-Arroyo and Ramos about solving BPS eqs.\\
In particular, it was proposed to solve them getting first from TBC some parameterizations for $A_{\mu}$ and $\phi$ and then expanding them in power series of \mbox{$\epsilon=1-4\pi n/\mathcal{A}$} (called Bradlow parameter), in such a way to obtain from BPS eqs certain recursion relations for the coefficients of the expansions. \\
\indent Such methodology was later revisited from a non-commutative perspective according to ``Fock space approach''~\citep{Lozano:2006xn}.\,\,This time, anyway, instead of $A_{\mu}$ and $\phi$ there were the operators $A_{\mu}(\hat{x},\hat{y})$ and $\phi(\hat{x},\hat{y})$ where $\hat{x}$ and $\hat{y}$ indicated the non-commutative coordinates such that $[\hat{x},\hat{y}]=\iu\theta\hat{\mathbbm{1}}$, with $\theta\in\R$ parameter of non-commutativity and $\hat{\mathbbm{1}}$ central element of non-commutative algebra of the coordinates. We obtained thus the operators $\hat{F}_{\mu\nu}=F_{\mu\nu}(\hat{x},\hat{y})$ and $|\hat{\phi}|^{2}=|\phi(\hat{x},\hat{y})|^{2}$ to which the quantities $F_{\mu\nu}(x,y)=\Omega^{-1}_{\spt{W}}\hat{F}_{\mu\nu}$ and $|\phi(x,y)|^{2}=\Omega^{-1}_{\spt{W}}|\hat{\phi}|^{2}$ were then associated with $\Omega_{\spt{W}}$ Weyl map~\citep{Weyl31}:\,\,isomorphism between the algebra of operators in ``Fock space'' and the algebra of functions on $\mathcal{T}^{2}$ multiplied with the non-commutative Gr$\mathrm{\ddot{o}}$newold-Moyal product~\citep{Gro46,Moyal49}.\\
\indent From these studies, both for the commutative case and for the non-commutative one no news have emerged as concerns the above-mentioned phenomenology of confinement. Anyway, in the latter case, they have predicted that the density of energy $\mathcal{E}$ will not have a domain of definition: in fact if the quantities \mbox{$(D_{\mu}\phi)^{\dagger}(D^{\mu}\phi)$} and \mbox{$(\phi^{\dagger}\phi-\phi^{2}_{0})^{2}$} are defined on $\mathcal{T}^{2}$, the quantity \mbox{$\frac{1}{4}F_{\mu\nu}F^{\mu\nu}$} is defined, instead, on $\tilde{\mathcal{T}}^{2}$ with periods scaled by a factor depending on $\theta$ in comparison with those of $\mathcal{T}^{2}$.
In addition, a rescaling of the coupling constant $g$ \mbox{by $\theta$ was expected.}\\ %
Moreover, having provided an explicit expression of $A_{\mu}$ and $\phi$ in terms of the coordinates in both contexts, it has favoured a direct control on the zeroes of the quantity $|\phi|^{2}$ (considered as the ``positions'' of the vortices). This fact, if on one side --- in the commutative case --- has led to get the metrics that describes the multivortex dynamics~\citep{Gonzalez07}, on the other side --- in the non-commutative case ---  it has shown how the non-commutativity of coordinates prevents $|\phi|^{2}$ from being null just in those points where in the commutative case, instead, such quantity is equal to zero.\\

In this paper, as we said in the Abstract, a new approach to non-commutativity will be introduced that will lead, inter alia, to redefine the role of $\theta$ in the \mbox{above-mentioned} \emph{dual superconductivity model} compared to~\citep{Forgacs2005,Lozano:2006xn}. \\
\indent Bearing this framework, this paper is organized as follows.\\ 
\indent In section~\ref{sec:twistc} we will recall~\citep{Baal82,Arroyo98} the typical equation (called \emph{consistency equation}) that characterizes the twist matrices $\Omega_{\mu}$ in the usual context in which the algebra of coordinates is commutative.\\
\indent In section~\ref{sec:twistnc}, it will be generalized by introducing the non-commutative coordinates $\hat{x}$ and $\hat{y}$ according to a new approach to the non-commutativity of space: the \emph{Weyl map} will be replaced by \emph{exponential mapping} typical of the studies of differential geometry~\citep{Helg78}, i.e.\,$(\hat{x},\hat{y})\equiv (x\bm{\hat{x}},y\bm{\hat{y}})$ with $x,y\in\R$, called \emph{normal coordinates}, and $\bm{\hat{x}},\bm{\hat{y}}$ base of vector space relative to $\mathcal{T}^{2}$. \\
It will lead to a systematic classification of the $\Omega_{\mu}$ and so of the fiber bundles \mbox{$\mathcal{T}^{2}\times SU(N)/Z_{N}$} for generic $N$ and for \mbox{$[\hat{\mu},\hat{\nu}]$} proportional or not to $\hat{\mathbbm{1}}$: in particular, this proportionality will depend on the homogeneity of $\mathcal{T}^{2}$.\\ 
The general definition of non-commutative torus $\mathcal{T}^{2}_{\nc}$ given in~\citep{Bondia2001} will be adapted to our context. In this regard, we will see how the difference between $\mathcal{T}^{2}_{\nc}$ (non-commutative torus) and $\mathcal{T}^{2}_{\cc}$ (commutative torus) lies in general in a ``widening'' of the set of twist matrices when we move from $\mathcal{T}^{2}_{\nc}$ to $\mathcal{T}^{2}_{\cc}$. \\
Three results will be the most remarkable: the first is that considering the $\mathcal{T}^{2}_{\cc}$ does not necessarily mean $[\hat{\mu},\hat{\nu}]=0$: we do not get $\mathcal{T}^{2}_{\cc}$ from $\mathcal{T}^{2}_{\nc}$ setting $\theta$ on zero as predicted in~\citep{Forgacs2005,Lozano:2006xn}; the second is that, for $\mathcal{T}^{2}_{\nc}$ and $\mathcal{T}^{2}_{\cc}$, in case $[\bm{\hat{x}},\bm{\hat{y}}]\propto\hat{\mathbbm{1}}$, i.e.  $[\bm{\hat{x}},\bm{\hat{y}}]=\iu\theta\hat{\mathbbm{1}}$, there is a different relation between \mbox{$\exp(\iu \theta\hat{\mathbbm{1}}I)$}, with $I$ the N-dimensional identity matrix, and $e$ (neutral element of $SU(N)$): while for $\mathcal{T}^{2}_{\cc}$ we have that \mbox{$\exp(\iu \theta\hat{\mathbbm{1}}I)=e$}, for $\mathcal{T}^{2}_{\nc}$ instead we have that \mbox{$\exp(\iu \theta\hat{\mathbbm{1}}I)\neq e$} (therefore we will have that $\mathcal{T}^{2}_{\cc}$ ($\mathcal{T}^{2}_{\nc}$) will be characterized by a specific non-commutativity parameter $\theta_{\cc}$ ($\theta_{\nc}$)). The third is that the $\Omega_{\mu}$'s proposed by~\citep{Forgacs2005,Lozano:2006xn} is not suitable for $\mathcal{T}^{2}_{\nc}$ (non-commutative torus) but only for $\mathcal{T}^{2}_{\cc}$: in fact they will result to belong to a class which is suitable to represent the twist matrices only after the above-mentioned ``widening''.\\ 
\indent In the end, in section~\ref{sec:lagrU1} we will go through the case $U(1)$ studying the lagrangian density~\eqref{eqn:dL} with $r_{\spt{\theta}}=\theta_{\cc}/\theta_{\nc}\in\Q\0$. Some analytical expressions will be shown for the fields $\phi$ and $A_{\mu}$ in terms of $\hat{x}$ and $\hat{y}$, with $[\bm{\hat{x}},\bm{\hat{y}}]=\iu\theta\hat{\mathbbm{1}}$, which are solutions of the TBC whose $\Omega_{\mu}$'s belong to a class found in section~\ref{sec:twistnc} compatible with the property of homogeneity of $\mathcal{T}^{2}_{\nc}$.
With such expressions for $\phi$ and $A_{\mu}$, we will show that $\mathcal{E}$ has $\mathcal{T}^{2}_{\nc}$ as a domain, differently from what happens in~\citep{Forgacs2005,Lozano:2006xn} in which, as we explained above, has not got any domain of definition. We will evaluate then $E$ rewriting it \textit{à la} Bogomolny~\citep{Bogomolny1976} for $\lambda=g^{2}/2$ (point of Bogomolny) and for minimum configurations without solving any system of equations (as it happens in~\citep{Manton2004,Arroyo04,Forgacs2005,Lozano:2006xn,EWeinberg2012}). In this regard, we will show how the II BPS eq is incompatible here with the toric geometry and we will find how  $E$ depends only on $r_{\spt{\theta}}$ and on the periods $(a_{1}, a_{2})$ of $\mathcal{T}^{2}_{\nc}$, so that \mbox{$E\rightarrow+\infty$} for \mbox{$a_{1},a_{2} \rightarrow+\infty$}, and not uniquely on $n$ (as it happens in refs.\,\citep{Baal82,Manton2004,Arroyo04,Forgacs2005,Lozano:2006xn,EWeinberg2012}).   
With $A_{\mu}$, we will show how no rescaling of $g$ of a factor depending on $\theta$ is predicted (as it happens in refs.\,\citep{Forgacs2005,Lozano:2006xn}) and we will calculate the flux $\mathcal{F}$ related to $q_{e}$ showing how it does not depend on $n$ --- as~\citep{Baal82,Manton2004,Arroyo04,Forgacs2005,Lozano:2006xn,EWeinberg2012} claim --- but uniquely on $r_{\spt{\theta}}$. We will have thus a fractionary $q_{e}$. We will show then how $r_{\spt{\theta}}$ determines the order of a cyclic $\tau_{r_{\spt{\theta}}}\subset U(1)$ generated by $\e^{\iu\pi\theta_{\nc}}$.\\
Finally, we will show how, instead of the vacuum expectation value $\Braket{\phi}$, $r_{\spt{\theta}}$ can be considered as an order parameter for the system phase transition. We will find thus a relation between $r_{\spt{\theta}}$ and the masses of the particle spectrum related to $\phi$ and $A_{\mu}$. In this regard, we will predict a double mechanism of Higgs according to which also the usual massless Goldstone boson acquires, in general, a non null mass depending on $r_{\spt{\theta}}$. Showing then how $r_{\spt{\theta}}$ can vary in time, we will conclude that such masses can vary in time in a quantized way. \\ 
\indent In the Appendix~\ref{sec:AppFintT} there is an explicit calculation of how starting from $\mathscr{F}$, i.e.\,a specific set of $\phi$, we can find some solutions of TBC related to $\phi$.\\
\indent In the Appendix~\ref{sec:AppErg} there is the proof of how a suitable class of fields comes to be minimum for $E$, with \mbox{$\lambda=g^{2}/2$}, rewritten \textit{à la} Bogomolny~\citep{Bogomolny1976}, without solving any system of equations.\\

\indent In the article, with the expression ``translations on $\mathcal{T}^{2}$'',
we refer to the elements of the group according to which $\mathcal{T}^{2}$ is homogeneous~\citep{Helg78}. Moreover, the differential calculus is set according to what is indicated in~\citep{Madore1995}.


\section{Abelian twist matrices: commutative case}
\label{sec:twistc}
\noindent The requirement of periodicity for the gauge invariant quantities of~\eqref{eqn:dL} provides the following definition\footnote{When a coordinate in the argument of a function is fixed to a specific value (for example 
$x_{\mu}=a_{\mu}$) the value of the other is understood to be arbitrary.} for TBC 
\begin{subequations}
\label{eqn:period}
\begin{align}
\phi(x_{\mu}=a_{\mu})&=\Omega_{\mu}\phi(x_{\mu}=0),\label{eqn:phi}\\
A_{\lambda}(x_{\mu}=a_{\mu})&=\Omega_{\mu}A_{\lambda}(x_{\mu}=0)\Omega^{-1}_{\mu}+\frac{i}{g}\Omega_{\mu}\partial_{\lambda}\Omega^{-1}_{\mu},\label{eqn:perigaug}
\end{align}
\end{subequations} 
with $0\leq x_{\mu}\leq a_{\mu}$, $\mu,\lambda=1,2$ ($a_{\mu}$'s indicate the periods of $\mathcal{T}^{2}$) and $\Omega_{\mu}$ with values in $SU(N)/Z_{N}$ (see Introduction). So, considering the two ways in which \mbox{$\phi(x_{\mu}=a_{\mu},x_{\nu}=a_{\nu})$} can be expressed in terms of \mbox{$\phi(x_{\mu}=0,x_{\nu}=0)$} through \mbox{$\phi(x_{\mu}=a_{\mu},x_{\nu}=0)$} and $\phi(x_{\mu}=0,x_{\nu}=a_{\nu})$, we get the equation that characterizes the $\Omega_{\mu}$'s:
\begin{equation}
\label{eqn:consist}
Z_{\mu\nu}=\Omega_{\mu}(x_{\nu}=a_{\nu})\Omega_{\nu}(x_{\mu}=0)\Omega^{-1}_{\mu}(x_{\nu}=0)\Omega^{-1}_{\nu}(x_{\mu}=a_{\mu})
\end{equation}
where $Z_{\mu\nu}$ is an element of the center $Z_{N}$ of $SU(N)$ of the form
\begin{equation}
\label{eqn:zmunug}
Z_{\mu\nu}=e^{\iu z_{\mu\nu}}, \quad \mathrm{with}\, z_{\mu\nu}=2\pi\frac{\eta_{\mu\nu}}{N}I,
\end{equation}
where $I$ is the N-dimensional identity matrix and $\eta_{\mu\nu}\in \Z(\mathrm{mod}\, N)$ \mbox{with $\eta_{\mu\nu}=-\eta_{\nu\mu}$.} This integer, independent of the coordinates, is also gauge invariant as it can be easily verified by noting that 
 $Z_{\mu\nu}$ is invariant under an arbitrary gauge transformation $\Omega$ of $\Omega_{\mu}$:
\begin{equation}
\label{eqn:Ogauge}
\Omega'_{\mu}=\Omega(x_{\mu}=a_{\mu})\Omega_{\mu}\Omega^{-1}(x_{\mu}=0).
\end{equation}
And also, considering that $\pi_{1}(SU(N)/Z_{N})\cong Z_{N}$, we can understand how $\eta_{\mu\nu}$ is an index that classifies the maps from a loop in $\mathcal{T}^{2}$ into a loop in $SU(N)/Z_{N}$. In particular, it specifies by which element of $Z_{N}$ a \emph{jump} occurs~\citep{Marmo2010} at $(a_{\mu},a_{\nu})$ in each class (element of the 
first homotopy group $\pi_{1}(SU(N)/Z_{N})$) of these maps.\\
A more accurate analysis then shows how these maps are defined by the transition functions of the principal bundle~\citep{Eguchi80} $\mathcal{T}^{2}\times SU(N)/Z_{N}$ and how the $\Omega_{\mu}$'s are but multiple transition functions~\citep{Baal82}.\\
\indent Then, by taking into account the \emph{consistency equation}~\eqref{eqn:consist}, it is proved~\citep{Baal82} that an arbitrary $\Omega_{\mu}$ assumes the following general expression
\begin{equation}
\label{eqn:prab}
\Omega_{\mu}=U(x_{\mu}=a_{\mu})\Omega^{(ab)}_{\mu}U^{-1}(x_{\mu}=0)
\end{equation}
where $\Omega^{(ab)}_{\mu}$ can be defined as 
\begin{equation}
\label{eqn:ab1} 
\Omega^{(ab)}_{\mu}=\exp\Bigl(\frac{\pi\iu}{N}\frac{\eta_{\mu\nu}x_{\nu}}{a_{\nu}}T_{1}\Bigr),
\end{equation}
$U$ is a gauge function on the edge of the \emph{core} and $T_{1}$ is a generator of $H=U(1)^{N-1}$, i.e.\,the maximal abelian (Cartan) subalgebra of $SU(N)$.  
We represent $T_{a}$, with $\tr(T_{a})=0$ for $a=1,2,\ldots, N-1$, as: $T_{a}=\mathrm{diag}(1,1, \ldots,1, -N+a,0,\ldots,0)$ where the first $N-a$ entries equal to $1$.\\  
However, if the purpose is to look for the minimum energy configurations satisfying the eqs.\,\eqref{eqn:period}, we will consider, as mentioned in the Introduction, only the \emph{abelian projection} $\Omega^{(ab)}_{\mu}$ of $\Omega_{\mu}$. 
References~\citep{Arroyo98,Arroyo04,Gonzalez07} can be seen in this regard where, in particular, \eqref{eqn:ab1} is adopted to represent the \emph{abelian projections} of $\Omega_{\mu}$'s. Anyway this representation of $\Omega^{(ab)}_{\mu}$'s could be adopted:
\begin{equation}
\label{eqn:abrev} 
\Omega^{(ab)}_{\mu}=\exp\Bigl(\frac{\pi\iu}{N}\eta_{\mu\nu}\Bigl(\frac{x_{\nu}}{a_{\nu}}+\frac{x_{\mu}}{a_{\mu}}\Bigr)T_{1}\Bigr).
\end{equation}
As matter of fact it is also a solution of~\eqref{eqn:consist} translated to any point $x\equiv(x_{1},x_{2})$ on $\mathcal{T}^{2}\colon$
\begin{equation}
\label{eqn:periodtr}
Z_{\mu\nu}=\Omega_{\mu}(a_{\nu}+x)\Omega_{\nu}(0+x)\Omega^{-1}_{\mu}(0+x)\Omega^{-1}_{\nu}(a_{\mu}+x).
\end{equation}
The difference, as can be noted, is that while the~\eqref{eqn:abrev} contemplates a dependence 
of $\Omega_{\mu}$ ($\Omega_{\nu}$) on translations along the axis $\mu$ (the axis $\nu$), the~\eqref{eqn:ab1} instead contemplates 
independence from them.\\
\indent However, an arbitrariness in the choice between~\eqref{eqn:abrev} and~\eqref{eqn:ab1} as a solution
of~\eqref{eqn:periodtr} is noted.
In this regard, in the next section, we will show how this arbitrariness  has its origin in a ``widening'' that presents the class of $\Omega_{\mu}$'s, which is suitable to represent the twist matrices for the non-commutative torus $\mathcal{T}^{2}_{\nc}$, when we move from $\mathcal{T}^{2}_{\nc}\rightarrow\mathcal{T}^{2}_{\cc}$ (commutative torus).


\section{Abelian twist matrices: non-commutative case}
\label{sec:twistnc}
\noindent We will present now a more general form for the equations of section~\ref{sec:twistc}.\\ 
To this regard, we should recall that, in general, with the expression $f(x)$ (with $x\in\mathbb{R}^{n}$
n-tuple of coordinated functions on a neighborhood $I_{O}$ of the origin $O$ of a manifold $M$) we actually indicate the 
composition $f\circ\psi^{-1}(x)$ for a generic homeomorphism $\psi$ on $I_{O}$ to a neighborhood of the origin of $\mathbb{R}^{n}$.\\ 
However we can use a different homeomorphism, that we call $\epsilon$, that maps the points of $I_{O}$ into a neighborhood $N_{O}$ of the origin $\hat{0}$ of the tangent vector space $M_{O}$ at $O$, whose elements (the \emph{tangent vectors}) are but derivations of the algebra of the functions on $I_{O}$~\citep{Helg78}. 
This is in complete analogy with the representation of a Lie group $G (\equiv M)$ that, via the \emph{exponential mapping}, can be put in correspondence with its algebra $\mathfrak{g}$ (for example cf.\,eq.\,\eqref{eqn:ab1} where, in particular, $G=H$ and the $T_{a}$ are a basis of $\mathfrak{g}$ of $H$).\\
\indent If therefore we consider $\mathcal{T}^{2}$ as $M$ and $I_{O}$ as a patch which covers the whole torus~\citep{Baal82,Arroyo98} --- such assumption is possible because $\mathcal{T}^{2}\equiv S^{1}\times S^{1}$ with $S^{1}$ is a unit circle --- we can use $\epsilon$ (the expression $f(x)$ is replaced by  $f(\hat{x})\equiv f\circ\epsilon^{-1}(\hat{x})$ with $\hat{x}\in N_{O}$) and thus the eq.\,\eqref{eqn:periodtr} admits the following rewriting 
\begin{equation}
\label{eqn:rev'}
Z_{\mu\nu}=\Omega_{\mu}(\hat{x}+\hat{a}_{\nu})\Omega_{\nu}(\hat{x}+\hat{0})\Omega^{-1}_{\mu}(\hat{x}+\hat{0})\Omega^{-1}_{\nu}(\hat{x}+\hat{a}_{\mu})
\end{equation} 
with $\hat{x}=(x_{\mu}\hat{\mu},x_{\nu}\hat{\nu})$, where $(x_{\mu},x_{\nu})$ indicate, in this case, the so-called \emph{normal coordinates}~\citep{Helg78} at $O$. 
As in section~\ref{sec:twistc}, they parametrize a ``translation'' to a generic point $p$ on $\mathcal{T}^{2}$ while $\hat{\mu}$ and $\hat{\nu}$ represent a basis in $N_{O}$. 
In this regard, we see that $\hat{\mu}$ and $\hat{\nu}$ can also be considered the two independent vectors that generate the lattice $\Lambda (\equiv N_{O})$ that ``fixes the quotient'' of the plan $\mathbb{R}^{2}$ in the definition of the torus: $\mathcal{T}^{2}=\mathbb{R}^{2}/\Lambda$.\\
The \emph{consistency equation} we have obtained, eq.\,\eqref{eqn:rev'}, is thus more general than~\eqref{eqn:periodtr}: 
\mbox{we can indeed consider non-commutative the product defined in $N_{O}$.}\pagebreak\\
\indent It's interesting to understand what novelties this choice brings to the study of fiber bundle $\mathcal{B}=\mathcal{T}^{2}\times SU(N)/Z_{N}$.\\
\indent Let's start with the analysis of the r.h.s.\,of eq.\,\eqref{eqn:rev'}. As, by construction, such r.h.s.\,\,is invariant under translations  (see section~\ref{sec:twistc}), then we begin to look for the conditions under which this can be represented invariant  for $\hat{x}$. In this regard, denoted by  $[[\Omega_{\mu},\Omega_{\nu}]]$ the quantity $\Omega_{\mu}(\hat{a}_{\nu})\Omega_{\nu}(\hat{0})\Omega^{-1}_{\mu}(\hat{0})\Omega^{-1}_{\nu}(\hat{a}_{\mu})$,
the following lemma is useful 
\begin{lem}
\label{lem:OeqOab}
Given an arbitrary $(\Omega_{\mu},\Omega_{\nu})$, 
\begin{equation*}
[[\Omega_{\mu},\Omega_{\nu}]] \,\,\mathit{is\,independent\,of}\,\, \hat{x} \Leftrightarrow [[\Omega^{(ab)}_{\mu},\Omega^{(ab)}_{\nu}]] \,\,\mathit{is\,independent\,of}\,\, \hat{x} 
\end{equation*}
with $(\Omega^{(ab)}_{\mu},\Omega^{(ab)}_{\nu})$ pair of abelian projections of $(\Omega_{\mu},\Omega_{\nu})$.
\end{lem}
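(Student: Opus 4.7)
My plan is to exploit the representation~\eqref{eqn:prab} of each twist matrix as a gauge transform of its abelian projection. Read in the non-commutative language of section~\ref{sec:twistnc}, this decomposition becomes
\[
\Omega_\mu(\hat{x}) \;=\; U(\hat{x}+\hat{a}_\mu)\,\Omega^{(ab)}_\mu(\hat{x})\,U^{-1}(\hat{x}),
\]
and analogously for $\Omega_\nu$, so the only extra $\hat{x}$-dependence carried by $\Omega_\mu$ on top of $\Omega^{(ab)}_\mu$ comes from the boundary gauge function $U$. The idea is to substitute this into the translated four-factor product on the r.h.s.\ of~\eqref{eqn:rev'} and track what is left after the obvious pairwise cancellations.

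Writing out
\[
\Omega_\mu(\hat{x}+\hat{a}_\nu)\,\Omega_\nu(\hat{x})\,\Omega^{-1}_\mu(\hat{x})\,\Omega^{-1}_\nu(\hat{x}+\hat{a}_\mu)
\]
via the above decomposition (together with the corresponding expansions of the inverses), the $U^{-1}$ sitting to the right of $\Omega_\mu(\hat{x}+\hat{a}_\nu)$ cancels the $U$ to the left of $\Omega_\nu(\hat{x})$ --- both evaluated at $\hat{x}+\hat{a}_\nu$ --- and analogous cancellations take place at each of the two remaining junctions. These manipulations are purely algebraic, depending only on the left--right matching of $U$-arguments, and are insensitive to whether the basis vectors $\hat\mu,\hat\nu$ commute in $N_O$. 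What survives is a local conjugation
\[
[[\Omega_\mu,\Omega_\nu]](\hat{x}) \;=\; U(\hat{x}+\hat{a}_\nu+\hat{a}_\mu)\,[[\Omega^{(ab)}_\mu,\Omega^{(ab)}_\nu]](\hat{x})\,U^{-1}(\hat{x}+\hat{a}_\mu+\hat{a}_\nu)
\]
by the value of $U$ at the ``top corner'' of the translated plaquette.

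To close, I invoke the fact that the abelian cocycle $[[\Omega^{(ab)}_\mu,\Omega^{(ab)}_\nu]]$ takes values in the center $Z_N\subset SU(N)$: the abelian projections live in the Cartan torus generated by $T_1$, and a direct computation from~\eqref{eqn:ab1} or~\eqref{eqn:abrev} returns an element of the form $\e^{2\pi\iu\,\eta_{\mu\nu}/N}\,I$, which is central. Conjugation of a central element by any group element is trivial, hence $[[\Omega_\mu,\Omega_\nu]](\hat{x})$ and $[[\Omega^{(ab)}_\mu,\Omega^{(ab)}_\nu]](\hat{x})$ coincide as functions of $\hat{x}$, and the claimed equivalence of their $\hat{x}$-independence follows at once.

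The step I expect to be most delicate is precisely this centrality claim in the genuinely non-commutative setting: one has to verify that $[[\Omega^{(ab)}_\mu,\Omega^{(ab)}_\nu]]$ still lands in $Z_N$ when $[\hat\mu,\hat\nu]\neq 0$, since \emph{a priori} corrections of order $\theta$ could displace it out of the center. One also has to be sure that $\hat{a}_\nu+\hat{a}_\mu$ and $\hat{a}_\mu+\hat{a}_\nu$ designate the same element of $N_O$; this is a statement about addition in the tangent vector space (commutative by the vector-space axioms) and \emph{not} about the algebra product in $N_O$ (which need not be commutative), consistent with the distinction emphasized in section~\ref{sec:twistnc}.
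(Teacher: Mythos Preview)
Your argument is essentially the paper's proof written out in full: the paper simply asserts that $[[\Omega_\mu,\Omega_\nu]]$ is gauge invariant (cf.\ eq.~\eqref{eqn:Ogauge}) and that the abelian projection is a particular gauge transformation (cf.\ eq.~\eqref{eqn:prab}), concluding directly that $[[\Omega_\mu,\Omega_\nu]]=[[\Omega^{(ab)}_\mu,\Omega^{(ab)}_\nu]]$; you unpack the same gauge-invariance statement by substituting the decomposition, exhibiting the telescoping cancellations of the $U$'s, and then invoking centrality to kill the residual conjugation. The concern you flag about whether $[[\Omega^{(ab)}_\mu,\Omega^{(ab)}_\nu]]$ remains central when $[\hat\mu,\hat\nu]\neq 0$ is fair, but note that the paper's proof rests on exactly the same assumption (the gauge invariance of $Z_{\mu\nu}$ quoted from section~\ref{sec:twistc} already uses that $Z_{\mu\nu}\in Z_N$), so this is not a gap peculiar to your route.
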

\begin{proof}
It’s sufficient to note that being  $[[\Omega_{\mu},\Omega_{\nu}]]$ gauge invariant  (cf.\,eq.\,\eqref{eqn:Ogauge}), in particular, it is invariant under \emph{abelian projection} of  $\Omega_{\mu}$ (cf.\,eq.\,\eqref{eqn:prab}), \mbox{for which}
\begin{equation}
\label{eqn:OeqOab}
[[\Omega_{\mu},\Omega_{\nu}]] = [[\Omega^{(ab)}_{\mu},\Omega^{(ab)}_{\nu}]].
\end{equation}
\end{proof}
The focus shifts thus from $\mathcal{B}$ to the principal fiber bundle \mbox{$\mathcal{B}^{{\spt{H}}}=\mathcal{T}^{2}\times H$} and, in particular, to finding the conditions under which  $[[\Omega^{(ab)}_{\mu},\Omega^{(ab)}_{\nu}]]$ preserves the invariance under translations. In this regard, it is useful to redefine the set of the \emph{abelian projections} $\Omega^{(ab)}_{\mu}$'s $\mathscr{P}_{ab}=\{\Omega_{\mu}\colon \Omega_{\mu}=\exp\{\iu f^{a}_{\mu}(x)T_{a}\}$, with $T_{a}$ generators of $H$, in this way 
\begin{defask}
\label{def:Pab2}
$\mathscr{P}_{ab}=\{\Omega_{\mu}\colon [\Omega_{\mu}]_{ij}=\delta_{ij}e^{\iu f_{\mu,i}(\hat{x})}\}$, with $i,j=1, \ldots, N$,\\ \mbox{$\sum^{N}_{i=1}f_{\mu,i}(\hat{x})=\hat{0}$}, $f_{\mu,i}\colon N_{O}\rightarrow M_{O}$ and where $\delta$ is the Kronecker delta. 
\end{defask}
\noindent As we can see, the definitions are equivalent if we put $\sum^{N-1}_{a=1}[T_{a}]_{ii}f^{a}_{\mu}:=f_{\mu,i}$.\\
\indent Let's consider now the following subset of $\mathscr{P}_{ab}\times\mathscr{P}_{ab}$
\begin{defask}
\label{def:CinPxP}
$\mathscr{C}=\{(\Omega_{\mu},\Omega_{\nu})\colon (\Omega_{\mu},\Omega_{\nu})\in\mathscr{P}_{ab}\times\mathscr{P}_{ab}\,\,\mathrm{with}\,\, f_{\mu,i}\,\,\mathrm{and}\,\,f_{\nu,i}\,\,\mathrm{satisfying}\\
\mathrm{the\,\, equations}\,\, (\mathscr{C}_{1}), (\mathscr{C}_{2}), (\mathscr{C}_{3})\}$ below

\vspace{-1.15cm}

{\fontsize{8.9}{23}\selectfont
\begin{align}
\!\mathlarger{(\mathscr{C}_{1})} \,\,\,& f_{\mu,i}(\hat{x}+\hat{a}_{\nu})=\mathlarger{\sum}^{\infty}_{n=1}\sum^{n}_{k=1}\frac{(-1)^{k-1}}{k}\hspace{-.2cm}\sum_{\substack{
            (i_{1},j_{1}),\ldots,(i_{k},j_{k})\neq (0,0)\\
            i_{1}+j_{1}+\dots+i_{k}+j_{k}=n}}\frac{[f^{i_{1}}_{\mu,i}(\hat{a}_{\nu})f^{j_{1}}_{\mu,i}(\hat{x}) \cdots f^{i_{k}}_{\mu,i}(\hat{a}_{\nu})f^{j_{k}}_{\mu,i}(\hat{x})]}{n\cdot i_{1}!j_{1}!\cdots i_{k}!j_{k}!},\notag\\
\!\mathlarger{(\mathscr{C}_{2})} \,\,\,& f_{\nu,i}(\hat{x}+\hat{a}_{\mu})=\mathlarger{\sum}^{\infty}_{n=1}\sum^{n}_{k=1}\frac{(-1)^{k-1}}{k}\hspace{-.2cm}\sum_{\substack{
            (i_{1},j_{1}),\ldots,(i_{k},j_{k})\neq (0,0)\\
            i_{1}+j_{1}+\dots+i_{k}+j_{k}=n}}\frac{[f^{i_{1}}_{\nu,i}(\hat{a}_{\mu})f^{j_{1}}_{\nu,i}(\hat{x}) \cdots f^{i_{k}}_{\nu,i}(\hat{a}_{\mu})f^{j_{k}}_{\nu,i}(\hat{x})]}{n\cdot i_{1}!j_{1}!\cdots i_{k}!j_{k}!},\notag\\
\!\mathlarger{(\mathscr{C}_{3})} \,\,\,& [f_{\mu,i}(\hat{x}),f_{\nu,i}(\hat{x})]=0\notag
\end{align}}
\!\!where {\small{$[f^{i_{1}}_{\mu,i}(\hat{a}_{\nu})f^{j_{1}}_{\mu,i}(\hat{x}) \cdots f^{i_{k}}_{\mu,i}(\hat{a}_{\nu})f^{j_{k}}_{\mu,i}(\hat{x})]$}} indicates \mbox{the right nested commutator based}\\ on the
word {\small{$f^{i_{1}}_{\mu,i}(\hat{a}_{\nu})f^{j_{1}}_{\mu,i}(\hat{x}) \cdots f^{i_{k}}_{\mu,i}(\hat{a}_{\nu})f^{j_{k}}_{\mu,i}(\hat{x})$}}
(and similarly for {\small{$[f^{i_{1}}_{\nu,i}(\hat{a}_{\mu})f^{j_{1}}_{\nu,i}(\hat{x})\cdots f^{i_{k}}_{\nu,i}(\hat{a}_{\mu})f^{j_{k}}_{\nu,i}(\hat{x})]$}}). 
\end{defask}
\pagebreak
We find that 
\begin{lem}
\label{lem:defC}
\begin{equation*}
[[\Omega^{(ab)}_{\mu},\Omega^{(ab)}_{\nu}]] \,\,\mathit{is\,independent\,of}\,\, \hat{x}\Leftrightarrow \exists\; (\Omega^{(ab)}_{\mu},\Omega^{(ab)}_{\nu})\in\mathscr{C}.
\end{equation*} 
\end{lem}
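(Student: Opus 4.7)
The plan is to reduce both implications to a diagonal, entry-by-entry computation, using the Baker--Campbell--Hausdorff (BCH) formula as the engine that turns conditions $(\mathscr{C}_{1}),(\mathscr{C}_{2})$ into factorization identities for the exponentials and condition $(\mathscr{C}_{3})$ into a commutation statement.

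First I would unpack what $[[\Omega^{(ab)}_{\mu},\Omega^{(ab)}_{\nu}]]$ looks like at a generic point $\hat{x}$. By Definition~\ref{def:Pab2}, each $\Omega^{(ab)}_{\rho}$ is diagonal with entries $[\Omega^{(ab)}_{\rho}]_{ii}=e^{\iu f_{\rho,i}(\hat{x})}$, so the bracket is diagonal with $i$th entry
\begin{equation*}
\mathcal{P}_{i}(\hat{x})=e^{\iu f_{\mu,i}(\hat{x}+\hat{a}_{\nu})}\,e^{\iu f_{\nu,i}(\hat{x})}\,e^{-\iu f_{\mu,i}(\hat{x})}\,e^{-\iu f_{\nu,i}(\hat{x}+\hat{a}_{\mu})}.
\end{equation*}
Hence $[[\Omega^{(ab)}_{\mu},\Omega^{(ab)}_{\nu}]]$ is $\hat{x}$-independent iff $\mathcal{P}_{i}$ is $\hat{x}$-independent for every $i=1,\dots,N$, and the whole lemma reduces to this scalar-like statement.

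For the ($\Leftarrow$) direction I would read $(\mathscr{C}_{1})$ and $(\mathscr{C}_{2})$ as precisely the BCH series: substituting $A=\iu f_{\mu,i}(\hat{a}_{\nu})$, $B=\iu f_{\mu,i}(\hat{x})$ into the BCH expansion of $\log(e^{A}e^{B})$ one recognizes $(\mathscr{C}_{1})$ as the identity $e^{\iu f_{\mu,i}(\hat{x}+\hat{a}_{\nu})}=e^{\iu f_{\mu,i}(\hat{a}_{\nu})}\,e^{\iu f_{\mu,i}(\hat{x})}$, and analogously $(\mathscr{C}_{2})$ gives $e^{\iu f_{\nu,i}(\hat{x}+\hat{a}_{\mu})}=e^{\iu f_{\nu,i}(\hat{x})}\,e^{\iu f_{\nu,i}(\hat{a}_{\mu})}$. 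Plugging these into $\mathcal{P}_{i}(\hat{x})$ yields
\begin{equation*}
\mathcal{P}_{i}(\hat{x})=e^{\iu f_{\mu,i}(\hat{a}_{\nu})}\bigl(e^{\iu f_{\mu,i}(\hat{x})}\,e^{\iu f_{\nu,i}(\hat{x})}\,e^{-\iu f_{\mu,i}(\hat{x})}\,e^{-\iu f_{\nu,i}(\hat{x})}\bigr)e^{-\iu f_{\nu,i}(\hat{a}_{\mu})},
\end{equation*}
and condition $(\mathscr{C}_{3})$ makes the two inner one-parameter exponentials commute, collapsing the parenthesized block to the identity. Thus $\mathcal{P}_{i}(\hat{x})=e^{\iu f_{\mu,i}(\hat{a}_{\nu})}e^{-\iu f_{\nu,i}(\hat{a}_{\mu})}$, which is manifestly $\hat{x}$-independent.

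For the ($\Rightarrow$) direction I would argue that setting $\hat{x}=\hat{0}$ (and recalling $f_{\rho,i}(\hat{0})=0$ since $\epsilon$ maps the origin to $\hat{0}\in M_{O}$) pins the constant value of $\mathcal{P}_{i}$ to $e^{\iu f_{\mu,i}(\hat{a}_{\nu})}e^{-\iu f_{\nu,i}(\hat{a}_{\mu})}$, so the constancy equation becomes
\begin{equation*}
e^{\iu f_{\mu,i}(\hat{x}+\hat{a}_{\nu})}\,e^{\iu f_{\nu,i}(\hat{x})}\,e^{-\iu f_{\mu,i}(\hat{x})}\,e^{-\iu f_{\nu,i}(\hat{x}+\hat{a}_{\mu})}=e^{\iu f_{\mu,i}(\hat{a}_{\nu})}e^{-\iu f_{\nu,i}(\hat{a}_{\mu})}\quad\forall\hat{x}.
\end{equation*}
Taking BCH logarithms of both sides and regarding $\hat{x},\hat{a}_{\mu},\hat{a}_{\nu}$ as independent generators of the (possibly non-commutative) algebra on $N_{O}$, the $\hat{x}$-free part of the identity forces the two BCH factorizations $(\mathscr{C}_{1})$ and $(\mathscr{C}_{2})$, while the lowest non-trivial $\hat{x}$-dependent bracket that survives on the left is $[\iu f_{\mu,i}(\hat{x}),\iu f_{\nu,i}(\hat{x})]$; its cancellation is equivalent to $(\mathscr{C}_{3})$, after which the whole residue vanishes order by order by the uniqueness of the BCH series.

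The hard part is the $(\Rightarrow)$ direction: one must argue that no accidental cancellation of higher-order nested commutators can absorb a violation of $(\mathscr{C}_{1})$--$(\mathscr{C}_{3})$. I expect to handle this inductively on the total word length $n$ in the BCH expansion, using the algebraic independence of $\hat{x}$ from $\hat{a}_{\mu},\hat{a}_{\nu}$ (which follows from the freedom to translate and the homogeneity discussion of $\mathcal{T}^{2}$) to isolate, at each order, exactly the three families of terms in $(\mathscr{C}_{1}),(\mathscr{C}_{2}),(\mathscr{C}_{3})$.
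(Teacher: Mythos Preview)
Your approach is essentially the paper's: reduce to the diagonal entries, recognize that $(\mathscr{C}_{1})$ and $(\mathscr{C}_{2})$ are exactly the Dynkin form of BCH applied to $\log\bigl(e^{\iu f_{\rho,i}(\hat a_{\sigma})}e^{\iu f_{\rho,i}(\hat x)}\bigr)$, so they are equivalent to the factorizations $\Omega^{(ab)}_{\mu}(\hat a_{\nu}+\hat x)=\Omega^{(ab)}_{\mu}(\hat a_{\nu})\Omega^{(ab)}_{\mu}(\hat x)$ and $\Omega^{(ab)}_{\nu}(\hat a_{\mu}+\hat x)=\Omega^{(ab)}_{\nu}(\hat a_{\mu})\Omega^{(ab)}_{\nu}(\hat x)$, while $(\mathscr{C}_{3})$ is equivalent to $[\Omega^{(ab)}_{\mu}(\hat x),\Omega^{(ab)}_{\nu}(\hat x)]=0$; the paper states exactly these three exponential identities (its eqs.~(3.5a)--(3.5c)) as the intermediate step and then invokes C--B--H--D. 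One small slip: the factorization you write for $(\mathscr{C}_{2})$ has the factors in the wrong order (the word in $(\mathscr{C}_{2})$ starts with $f_{\nu,i}(\hat a_{\mu})$, so BCH gives $e^{\iu f_{\nu,i}(\hat x+\hat a_{\mu})}=e^{\iu f_{\nu,i}(\hat a_{\mu})}e^{\iu f_{\nu,i}(\hat x)}$); your subsequent displayed expression for $\mathcal{P}_{i}(\hat x)$ is in fact consistent with the correct order, so this is only a typo, not a logical error.
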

\begin{proof}
It's sufficient to note that $[[\Omega^{(ab)}_{\mu},\Omega^{(ab)}_{\nu}]]$ is independent of $\hat{x}$ if{}f 
\begin{subequations}
\label{eqn:nnlinOab}
\begin{align}
&\Omega^{(ab)}_{\mu}(\hat{a}_{\nu}+\hat{x})=\Omega^{(ab)}_{\mu}(\hat{a}_{\nu})\Omega^{(ab)}_{\mu}(\hat{x}),\label{eqn:Omuab}\\
&\Omega^{(ab)}_{\nu}(\hat{a}_{\mu}+\hat{x})=\Omega^{(ab)}_{\nu}(\hat{a}_{\mu})\Omega^{(ab)}_{\nu}(\hat{x}),\label{eqn:Onuab}\\
&[\Omega^{(ab)}_{\mu}(\hat{x}),\Omega^{(ab)}_{\nu}(\hat{x})]=\hat{0}\label{eqn:cOmunuab}.
\end{align}
\end{subequations}
Representing now the $\Omega^{(ab)}_{\mu}$'s in terms of $f_{\mu,i}$ according to the Definition~\ref{def:Pab2} and applying the Campbell-Baker-Hausdorf{}f-Dynkin formula (C-B-H-D)~\citep{Dynkin47}, we obtain $(\mathscr{C}_{1}), (\mathscr{C}_{2})$ and $(\mathscr{C}_{3})$, respectively.
\end{proof}
We see, therefore, that to represent the r.h.s.\,\,of eq.\,\eqref{eqn:rev'} invariant under translations $\hat{x}$, it’s sufficient to find an element of $\mathscr{C}$.\\
We are thus interested in the search for solutions of  $(\mathscr{C}_{i})$'s. In this regard, we define $\mathscr{D}\subset\mathscr{P}_{ab}\times\mathscr{P}_{ab}$  
\begin{defask}
\label{def:D}
\mbox{$\mathscr{D}=\{(\Omega_{\mu},\Omega_{\nu})\colon (\Omega_{\mu},\Omega_{\nu})\in \mathscr{P}_{ab}\times\mathscr{P}_{ab}$} such that\\
$[\Omega_{\mu}(\hat{x},\hat{y})]_{ij}=\delta_{ij}e^{\iu\beta_{\mu,i}(x_{\nu})\hat{\nu}}e^{\iu\alpha_{\mu,i}(x_{\mu})\hat{\mu}}, [\Omega_{\nu}(\hat{x},\hat{y})]_{ij}=\delta_{ij}e^{\iu\alpha_{\nu,i}(x_{\mu})\hat{\mu}}e^{\iu\beta_{\nu,i}(x_{\nu})\hat{\nu}}\}$ with {\small{$i,j=1, \ldots, N$}}, {\small{$\alpha_{\mu,i},\beta_{\mu,i},\alpha_{\nu,i},\beta_{\nu,i}\colon\R\rightarrow\R$}}, 
\end{defask}
\hspace{-1.5em}and consider its subsets $\mathscr{D}_{\spt{\beta\alpha}}\,, \mathscr{D}_{\spt{\alpha\beta}}$ 
\begin{defdp}
\label{def:Dab}
$\mathscr{D}_{\spt{\beta\alpha}}=\{(\Omega_{\mu},\Omega_{\nu})\colon (\Omega_{\mu},\Omega_{\nu})\in \mathscr{D}\,\, \mathrm{with}\,\, \beta_{\mu,i},\alpha_{\nu,i}\neq 0\}$,\\ $\mathscr{D}_{\spt{\alpha\beta}}=\{(\Omega_{\mu},\Omega_{\nu})\colon (\Omega_{\mu},\Omega_{\nu})\in \nolinebreak[4]\mathscr{D}\,\, \mathrm{with}\,\, \alpha_{\mu,i},\beta_{\nu,i}\neq 0\}$.
\end{defdp}
Furthermore, we will continue the discussion for $N\geq{2}$. Moreover, we are implicitly under that hypothesis, since we consider non-zero the functions $\alpha_{\mu,i}$'s and $\beta_{\mu,i}$'s in $\mathscr{D}_{\spt{\beta\alpha}}$ and $\mathscr{D}_{\spt{\alpha\beta}}$.\footnote{\mbox{The case $N=1$ is characterized by null $\alpha_{\mu,i}$'s and $\beta_{\mu,i}$'s. We will have thus that $\mathscr{C}\cap\mathscr{D}=\mathscr{D}$.}}\\     
For instance, we look for which conditions $\mathscr{C}\cap\mathscr{D}_{\spt{\beta\alpha}}$ is different from the void set. 
But first let's define the following subset of $\mathscr{D}_{\spt{\beta\alpha}}$
\begin{defdp}
\label{def:D*ab}
$\mathscr{D}^{*}_{\spt{\beta\alpha}}=\{(\Omega_{\mu},\Omega_{\nu})\colon (\Omega_{\mu},\Omega_{\nu})\in \mathscr{D}_{\spt{\beta\alpha}}$ with \mbox{$\alpha_{\mu,i},\beta_{\mu,i},\alpha_{\nu,i},\beta_{\nu,i}$}\linebreak[4] linear
functions satisfying the equations $(\mathscr{D}^{*}_{{\spt{\beta\alpha}},1})\}$ below
\begin{equation}
\mathlarger{(\mathscr{D}^{*}_{{\spt{\beta\alpha}},1})} \,\,\, \alpha_{\mu,i}(x_{\mu})\beta_{\nu,i}(x_{\nu})-\beta_{\mu,i}(x_{\nu})\alpha_{\nu,i}(x_{\mu})=0,\,\forall i=1, \ldots, N.\notag
\end{equation}
\end{defdp}
\pagebreak
We find that\footnote{With \vl \,$\propto$\! \vr \, we mean ``proportional to''.}
\begin{lem}
\label{lem:D*ba}
$\mathscr{C}\cap\mathscr{D}_{\spt{\beta\alpha}}\neq\emptyset$ if{}f $\mathscr{D}^{*}_{\spt{\beta\alpha}}\neq\emptyset$ and on $\mathcal{T}^{2}$ the condition \emph{(}$\mathcal{B}^{\spt{H}}_{1}$\emph{)}\, $[\hat{\mu},\hat{\nu}]\propto\hat{\mathbbm{1}}$ holds,
where $\hat{\mathbbm{1}}\in M_{O}$ is such that $[\hat{a},\hat{\mathbbm{1}}]=0\;\; \forall\; \hat{a}\in N_{O}$.
\end{lem}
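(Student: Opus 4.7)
The plan is to use the previous lemma so that membership in $\mathscr{C}$ reduces to the three identities \eqref{eqn:Omuab}, \eqref{eqn:Onuab}, \eqref{eqn:cOmunuab}, and then to exploit the diagonal exponential form of elements of $\mathscr{D}_{\spt{\beta\alpha}}$ together with a Weyl-type Baker--Campbell--Hausdorff--Dynkin (BCHD) manipulation, which is available precisely when $(\mathcal{B}^{\spt{H}}_{1})$ holds.

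For the direction $(\Leftarrow)$, I would pick $(\Omega_{\mu},\Omega_{\nu})\in\mathscr{D}^{*}_{\spt{\beta\alpha}}$ and check the three identities directly. In \eqref{eqn:Omuab}, the two factors $e^{\iu\beta_{\mu,i}(a_{\nu})\hat{\nu}}$ and $e^{\iu\beta_{\mu,i}(x_{\nu})\hat{\nu}}$ commute as elements of the same one-parameter subgroup, and $\alpha_{\mu,i}(0)=0$ by linearity, so the right-hand side collapses to $e^{\iu(\beta_{\mu,i}(a_{\nu})+\beta_{\mu,i}(x_{\nu}))\hat{\nu}}e^{\iu\alpha_{\mu,i}(x_{\mu})\hat{\mu}}$; linearity of $\beta_{\mu,i}$ identifies this with the left-hand side, and \eqref{eqn:Onuab} is symmetric. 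For \eqref{eqn:cOmunuab}, condition $(\mathcal{B}^{\spt{H}}_{1})$ gives $[\hat{\mu},\hat{\nu}]=\iu\theta\hat{\mathbbm{1}}$ with $\hat{\mathbbm{1}}$ central, so the BCHD series truncates to the Weyl relation $e^{\iu a\hat{\mu}}e^{\iu b\hat{\nu}}=e^{-\iu ab\theta\hat{\mathbbm{1}}}e^{\iu b\hat{\nu}}e^{\iu a\hat{\mu}}$. Applying this twice, both products $[\Omega_{\mu}]_{ii}[\Omega_{\nu}]_{ii}$ and $[\Omega_{\nu}]_{ii}[\Omega_{\mu}]_{ii}$ reduce to the common ordered form $e^{-\iu(\cdot)\theta\hat{\mathbbm{1}}}e^{\iu(\beta_{\mu,i}+\beta_{\nu,i})\hat{\nu}}e^{\iu(\alpha_{\mu,i}+\alpha_{\nu,i})\hat{\mu}}$, whose central prefactors differ exactly by $\alpha_{\mu,i}(x_{\mu})\beta_{\nu,i}(x_{\nu})-\beta_{\mu,i}(x_{\nu})\alpha_{\nu,i}(x_{\mu})$; vanishing of this expression is $(\mathscr{D}^{*}_{\spt{\beta\alpha},1})$, and \eqref{eqn:cOmunuab} follows.

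For the direction $(\Rightarrow)$, starting from a pair $(\Omega_{\mu},\Omega_{\nu})\in\mathscr{C}\cap\mathscr{D}_{\spt{\beta\alpha}}$, I would first extract linearity of $\alpha_{\mu,i},\beta_{\mu,i},\alpha_{\nu,i},\beta_{\nu,i}$ from $(\mathscr{C}_{1}),(\mathscr{C}_{2})$: peeling off the common factor $e^{\iu\alpha_{\mu,i}(x_{\mu})\hat{\mu}}$ in \eqref{eqn:Omuab} isolates the additive relation $\beta_{\mu,i}(x_{\nu}+a_{\nu})=\beta_{\mu,i}(a_{\nu})+\beta_{\mu,i}(x_{\nu})$, which by Cauchy's equation (under the tacit smoothness of the sections) forces $\beta_{\mu,i}$ linear and simultaneously $\alpha_{\mu,i}(0)=0$; an analogous shift along $\hat{\mu}$ and the symmetric analysis of $(\mathscr{C}_{2})$ handle the remaining three functions. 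Then I would plug the linear ansatz into $(\mathscr{C}_{3})$ and expand both sides via BCHD in nested commutators of $\hat{\mu},\hat{\nu}$: the leading term proportional to $[\hat{\mu},\hat{\nu}]$ reproduces $(\mathscr{D}^{*}_{\spt{\beta\alpha},1})$, while the next orders carry $[[\hat{\mu},\hat{\nu}],\hat{\mu}]$ and $[[\hat{\mu},\hat{\nu}],\hat{\nu}]$ multiplied by polynomials in $x_{\mu},x_{\nu}$ whose coefficients are determined by the slopes of the linear data. Requiring these polynomial identities to hold for all admissible slopes forces the nested commutators to vanish in $M_{O}$, i.e.\ $[\hat{\mu},\hat{\nu}]$ commutes with every $\hat{a}\in N_{O}$, which is $(\mathcal{B}^{\spt{H}}_{1})$.

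The main obstacle I anticipate is this last step: one has to rule out accidental cancellations among the multi-commutator terms in the BCHD expansion of $(\mathscr{C}_{3})$. Concretely, the coefficients of $[[\hat{\mu},\hat{\nu}],\hat{\mu}]$ and $[[\hat{\mu},\hat{\nu}],\hat{\nu}]$ are bilinear in the slopes of $\alpha$ and $\beta$, and only by ranging over all admissible choices in $\mathscr{D}_{\spt{\beta\alpha}}$ do these contributions decouple order by order; only then is centrality of $[\hat{\mu},\hat{\nu}]$ genuinely forced, rather than merely sufficient.
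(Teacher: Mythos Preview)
Your proposal is essentially the same argument as the paper's, phrased at the group level (the three identities \eqref{eqn:Omuab}--\eqref{eqn:cOmunuab}) rather than at the Lie-algebra level (the conditions $(\mathscr{C}_1)$--$(\mathscr{C}_3)$ on the $f_{\mu,i}$); since Lemma~\ref{lem:defC} makes these equivalent, the two routes coincide. The paper extracts linearity by comparing the BCHD expansions \eqref{eqn:fespl} and \eqref{eqn:fCBH} order by order, whereas you peel off the common exponential factor and invoke Cauchy's equation---these yield the same constraints. Your flagged obstacle about ruling out accidental cancellations in the nested-commutator expansion of $(\mathscr{C}_3)$ is exactly what the paper handles by appealing to ``linear independence of the various addends'' in \eqref{eqn:[fmu,fnu]}; neither treatment spells this out further, so your caution is appropriate and matches the paper's level of rigor. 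One small point: your phrase ``an analogous shift along $\hat{\mu}$'' for $\Omega_{\mu}$ is not among the hypotheses (only the $\hat{a}_{\nu}$-shift \eqref{eqn:Omuab} and the $\hat{a}_{\mu}$-shift \eqref{eqn:Onuab} are available), so linearity of $\alpha_{\mu,i}$ and $\beta_{\nu,i}$ must come from the higher-order BCHD comparison rather than from a separate Cauchy argument---the paper's order-by-order claim covers this, and you should route through it rather than invoke an extra shift.
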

\begin{proof}
It's sufficient to apply the $(\mathscr{C}_{i})$'s to the representation of $f_{\mu,i}$'s given by the elements of $\mathscr{D}_{\spt{\beta\alpha}}$:
{\fontsize{9.5}{12.5}\selectfont
\begin{subequations}
\label{eqn:rapDf}
\begin{align}
\hspace{-1,5cm}f_{\mu,i}(\hat{x})&:=\mathlarger{\sum}^{\infty}_{n=1}\sum^{n}_{k=1}\frac{(-1)^{k-1}}{k}\hspace{-0,45cm}\sum_{\substack{
            (i_{1},j_{1}),\ldots,(i_{k},j_{k})\neq (0,0)\\
            i_{1}+j_{1}+\dots+i_{k}+j_{k}=n}}\hspace{-0,5cm}\frac{\iu^{n-1}[(\beta_{\mu,i}(x_{\nu})\hat{\nu})^{i_{1}}(\alpha_{\mu,i}(x_{\mu})\hat{\mu})^{j_{1}} \cdots (\beta_{\mu,i}(x_{\nu})\hat{\nu})^{i_{k}}(\alpha_{\mu,i}(x_{\mu})\hat{\mu})^{j_{k}}]}{n\cdot i_{1}!j_{1}!\cdots i_{k}!j_{k}!},\label{eqn:rapDf1}\\
\hspace{-1,5cm}f_{\nu,i}(\hat{x})&:=\mathlarger{\sum}^{\infty}_{n=1}\sum^{n}_{k=1}\frac{(-1)^{k-1}}{k}\hspace{-0,45cm}\sum_{\substack{
            (i_{1},j_{1}),\ldots,(i_{k},j_{k})\neq (0,0)\\
            i_{1}+j_{1}+\dots+i_{k}+j_{k}=n}}\hspace{-0,5cm}\frac{\iu^{n-1}[(\alpha_{\nu,i}(x_{\mu})\hat{\mu})^{i_{1}}(\beta_{\nu,i}(x_{\nu})\hat{\nu})^{j_{1}} \cdots (\alpha_{\nu,i}(x_{\mu})\hat{\mu})^{i_{k}}(\beta_{\nu,i}(x_{\nu})\hat{\nu})^{j_{k}}]}{n\cdot i_{1}!j_{1}!\cdots i_{k}!j_{k}!}\label{eqn:rapDf2}
\end{align}
\end{subequations}}
\!\!where (C-B-H-D) has been used. In particular, from $(\mathscr{C}_{1})$ and $(\mathscr{C}_{2})$, we obtain the request for linearity on $\alpha_{\mu,i}$'s and $\beta_{\mu,i}$'s. For example, since the comparison of both sides of $(\mathscr{C}_{1})$ to the first order, corresponding respectively to 
{\fontsize{10}{12.5}\selectfont
\begin{equation}
\label{eqn:fespl}
\begin{split}
\hspace{-1cm}f_{\mu,i}(\hat{x}+\hat{a}_{\nu})&= \alpha_{\mu,i}(x_{\mu})\hat{\mu}+\beta_{\mu,i}(x_{\nu}+a_{\nu})\hat{\nu} +\\
\hspace{-1cm}&\hspace{0,45cm}-\frac{\iu}{2} \alpha_{\mu,i}(x_{\mu})\beta_{\mu,i}(x_{\nu}+a_{\nu})[\hat{\mu},\hat{\nu}]+\\
\hspace{-1cm}&\hspace{0,45cm}-\frac{1}{12}\alpha^{2}_{\mu,i}(x_{\mu})\beta_{\mu,i}(x_{\nu}+a_{\nu})[\hat{\mu},[\hat{\mu},\hat{\nu}]]+\\
\hspace{-1cm}&\hspace{0,45cm}-\frac{1}{12}\alpha_{\mu,i}(x_{\mu})\beta^{2}_{\mu,i}(x_{\nu}+a_{\nu})[\hat{\nu},[\hat{\nu},\hat{\mu}]]+\ldots
\end{split}
\end{equation}}
\hspace{-.37em}and to
{\fontsize{8,2}{13}\selectfont
\begin{equation}
\label{eqn:fCBH}
\begin{split}
\hspace{-1,85cm}&\bigl(\alpha_{\mu,i}(0)+\alpha_{\mu,i}(x_{\mu})\bigr)\hat{\mu}+\bigl(\beta_{\mu,i}(x_{\nu})+\beta_{\mu,i}(a_{\nu})\bigr)\hat{\nu} +\\
\hspace{-1,85cm}&-\frac{\iu}{2} \bigl(\alpha_{\mu,i}(0)\beta_{\mu,i}(a_{\nu})+\alpha_{\mu,i}(x_{\mu})\beta_{\mu,i}(x_{\nu})+\alpha_{\mu,i}(x_{\mu})\beta_{\mu,i}(a_{\nu})-\alpha_{\mu,i}(0)\beta_{\mu,i}(x_{\nu})\bigr)[\hat{\mu},\hat{\nu}]+\\
\hspace{-1,85cm}&-\frac{1}{12}\Bigl(\alpha^{2}_{\mu,i}(0)\bigl(\beta_{\mu,i}(x_{\nu})+\beta_{\mu,i}(a_{\nu})\bigr)+\alpha^{2}_{\mu,i}(x_{\mu})\bigl(\beta_{\mu,i}(x_{\nu})+\beta_{\mu,i}(a_{\nu})\bigr)-2\alpha_{\mu,i}(0)\alpha_{\mu,i}(x_{\mu})\bigl(2\beta_{\mu,i}(x_{\nu})-\beta_{\mu,i}(a_{\nu})\bigr)\Bigr)[\hat{\mu},[\hat{\mu},\hat{\nu}]]+\\
\hspace{-1,85cm}&-\frac{1}{12}\Bigl(\alpha_{\mu,i}(0)\bigl(\beta^{2}_{\mu,i}(x_{\nu})+\beta^{2}_{\mu,i}(a_{\nu})-4\beta_{\mu,i}(x_{\nu})\beta_{\mu,i}(a_{\nu})\bigr)+\alpha_{\mu,i}(x_{\mu})\bigl(\beta^{2}_{\mu,i}(x_{\nu})+\beta^{2}_{\mu,i}(a_{\nu})+2\beta_{\mu,i}(x_{\nu})\beta_{\mu,i}(a_{\nu})\bigr)\Bigr)[\hat{\nu},[\hat{\nu},\hat{\mu}]]+\ldots , 
\end{split}
\end{equation}}
\!\!\!we notice that they are equal if{}f $\alpha_{\mu,i}$ and $\beta_{\mu,i}$ are linear functions in their respective arguments. The same consideration for $\alpha_{\nu,i}$ and $\beta_{\nu,i}$ whose request of linearity is obtained from $(\mathscr{C}_{2})$. \\
From $(\mathscr{C}_{3})$ we obtain the $(\mathscr{D}^{*}_{{\spt{\beta\alpha}},1})$ and the $(\mathcal{B}^{\spt{H}}_{1})$. As a matter of fact, considering
{\fontsize{8.5}{12.5}\selectfont
\begin{equation}
\label{eqn:[fmu,fnu]}
\begin{split}
\hspace{-0cm}[f_{\mu,i}(\hat{x}),f_{\nu,i}(\hat{x})]=&\bigl(\alpha_{\mu,i}(x_{\mu})\beta_{\nu,i}(x_{\nu})-\alpha_{\nu,i}(x_{\mu})\beta_{\mu,i}(x_{\nu})\bigr)[\hat{\mu},\hat{\nu}] +\\
\hspace{-0cm}&+\frac{\iu}{2}\alpha_{\mu,i}(x_{\mu})\alpha_{\nu,i}(x_{\mu})\bigl(\beta_{\mu,i}(x_{\nu})+\beta_{\nu,i}(x_{\nu})\bigr)[\hat{\mu},[\hat{\mu},\hat{\nu}]]+\\
\hspace{-0cm}&-\frac{\iu}{2}\beta_{\mu,i}(x_{\nu})\beta_{\nu,i}(x_{\nu})\bigl(\alpha_{\mu,i}(x_{\mu})+\alpha_{\nu,i}(x_{\mu})\bigr)[\hat{\nu},[\hat{\nu},\hat{\mu}]] +\\
\hspace{-0cm}&-\frac{\alpha_{\mu,i}(x_{\mu})\alpha^{2}_{\nu}(x_{\mu})\beta_{\nu}(x_{\nu})}{12}[\hat{\mu},[\hat{\mu},[\hat{\mu},\hat{\nu}]]]+\\
\hspace{-0cm}&+\frac{\alpha_{\nu,i}(x_{\mu})\beta_{\nu,i}(x_{\nu})}{12}\bigl(\alpha_{\mu,i}(x_{\mu})\beta_{\nu,i}(x_{\nu})-\alpha_{\nu,i}(x_{\mu})\beta_{\mu,i}(x_{\nu})\bigr)[\hat{\mu},[\hat{\nu},[\hat{\mu},\hat{\nu}]]] +\\
\hspace{-0cm}&-\frac{\alpha_{\nu,i}(x_{\mu})\beta_{\mu}(x_{\nu})\beta^{2}_{\nu}(x_{\nu})}{12}[\hat{\nu},[\hat{\nu},[\hat{\nu},\hat{\mu}]]] +\ldots ,
\end{split}
\end{equation}} 
\!\!from the linear independence of the various addends, we get that $[f_{\mu,i}(\hat{x}),f_{\nu,i}(\hat{x})]$ is null if{}f $(\mathscr{D}^{*}_{{\spt{\beta\alpha}},1})$ and $(\mathcal{B}^{\spt{H}}_{1})$ occur. 
\end{proof}

\begin{remark}
\label{rmk:5ord}
At first, we have tried to obtain $(\mathcal{B}^{\spt{H}}_{1})$ not from $(\mathscr{C}_{3})$ but testing the validity of $(\mathscr{C}_{1})$ in the various orders of~\eqref{eqn:rapDf1}. A validity of such equation was found up to the fifth order without getting thus any condition on $[\hat{\mu},\hat{\nu}]$!\footnote{In the calculation we have used the following identities:\,\scriptsize{$[\hat{\nu},[\hat{\mu},[\hat{\nu},\hat{\mu}]]]=[\hat{\mu},[\hat{\nu},[\hat{\nu},\hat{\mu}]]]$, $[[\hat{\mu},\hat{\nu}],[\hat{\mu},[\hat{\mu},\hat{\nu}]]]=[\hat{\mu},[\hat{\nu},[\hat{\mu},[\hat{\mu},\hat{\nu}]]]]-[\hat{\nu},[\hat{\mu},[\hat{\mu},[\hat{\mu},\hat{\nu}]]]]$ and $[[\hat{\mu},\hat{\nu}],[\hat{\nu},[\hat{\mu},\hat{\nu}]]]=-[\hat{\mu},[\hat{\nu},[\hat{\nu},[\hat{\nu},\hat{\mu}]]]]+[\hat{\nu},[\hat{\mu},[\hat{\nu},[\hat{\nu},\hat{\mu}]]]]$}.} 
\end{remark}
This leads to formulate the following conjecture that, in general, may happen to be interesting for the studies on (C-B-H-D):
\begin{conj}
\label{conj:nnlin}
Given \mbox{$\mathcal{T}^{2}$} with \mbox{$[\hat{\mu},\hat{\nu}]\neq 0$}, a class of solutions for the equations~\eqref{eqn:Omuab} and~\eqref{eqn:Onuab} is represented by the set $\mathscr{D}^{\ell}=\{(\Omega_{\mu},\Omega_{\nu})\colon (\Omega_{\mu},\Omega_{\nu})\in \mathscr{D}$\linebreak[4] with linear functions $\small{\alpha_{\mu,i},\beta_{\mu,i},\alpha_{\nu,i},\beta_{\nu,i}}\}$ with $\mathscr{D}$ as above.  
\end{conj}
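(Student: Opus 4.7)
The plan is to bypass the BCH series used in Lemma~\ref{lem:D*ba} entirely and verify the two equations~\eqref{eqn:Omuab} and~\eqref{eqn:Onuab} directly at the level of the exponential factors appearing in Definition~\ref{def:D}. The reason this should work is that the specific ordering in $\mathscr{D}$ places the factor that depends on the translation argument on the outside, so that combining the exponentials requires only the commutativity of scalar multiples of a single generator with itself, never of $\hat{\mu}$ with $\hat{\nu}$. This is precisely why no condition of the form $(\mathcal{B}^{\spt{H}}_{1})$ is expected to appear, in agreement with Remark~\ref{rmk:5ord}.

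First I would write out, for $(\Omega_{\mu},\Omega_{\nu})\in\mathscr{D}^{\ell}$ and a fixed diagonal index $i$, the two sides of~\eqref{eqn:Omuab} using Definition~\ref{def:D}:
\begin{equation*}
[\Omega^{(ab)}_{\mu}(\hat{x}+\hat{a}_{\nu})]_{ii}=e^{\iu\beta_{\mu,i}(x_{\nu}+a_{\nu})\hat{\nu}}\,e^{\iu\alpha_{\mu,i}(x_{\mu})\hat{\mu}},
\end{equation*}
\begin{equation*}
[\Omega^{(ab)}_{\mu}(\hat{a}_{\nu})]_{ii}\,[\Omega^{(ab)}_{\mu}(\hat{x})]_{ii}=e^{\iu\beta_{\mu,i}(a_{\nu})\hat{\nu}}\,e^{\iu\alpha_{\mu,i}(0)\hat{\mu}}\,e^{\iu\beta_{\mu,i}(x_{\nu})\hat{\nu}}\,e^{\iu\alpha_{\mu,i}(x_{\mu})\hat{\mu}}.
\end{equation*}
Interpreting ``linear'' in the strict sense gives $\alpha_{\mu,i}(0)=\beta_{\mu,i}(0)=0$, so the middle factor on the right collapses to the identity. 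The two adjacent exponentials of $\hat{\nu}$ then combine by the trivial identity $e^{A\hat{\nu}}e^{B\hat{\nu}}=e^{(A+B)\hat{\nu}}$ (since $[A\hat{\nu},B\hat{\nu}]=0$ for scalars $A,B$), yielding $e^{\iu(\beta_{\mu,i}(a_{\nu})+\beta_{\mu,i}(x_{\nu}))\hat{\nu}}e^{\iu\alpha_{\mu,i}(x_{\mu})\hat{\mu}}$. Additivity of the linear $\beta_{\mu,i}$ turns the sum in the exponent into $\beta_{\mu,i}(x_{\nu}+a_{\nu})$, matching the left side.

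Next I would repeat the same computation symmetrically for~\eqref{eqn:Onuab}, noting that the placement of $e^{\iu\alpha_{\nu,i}(x_{\mu})\hat{\mu}}$ as the \textbf{outer} factor in $[\Omega_{\nu}]_{ii}$ is exactly what allows the translation by $\hat{a}_{\mu}$ to reduce, by the same scalar argument on $\hat{\mu}$ and linearity of $\alpha_{\nu,i}$, to the required identity. Combining the two verifications across all diagonal entries $i=1,\dots,N$ (and using the trivial sum constraint $\sum_i f_{\mu,i}=\hat{0}$ of Definition~\ref{def:Pab2}) yields the conjecture.

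The main conceptual point — rather than a technical obstacle — is justifying that this elementary argument captures what looks, from~\eqref{eqn:rapDf1}–\eqref{eqn:rapDf2} and $(\mathscr{C}_{1}),(\mathscr{C}_{2})$, like an infinite tower of conditions involving nested commutators of $\hat{\mu}$ and $\hat{\nu}$. The resolution is that the BCH series only enters when one passes to $f_{\mu,i}$ by taking a single logarithm of the whole product $e^{\iu\beta_{\mu,i}\hat{\nu}}e^{\iu\alpha_{\mu,i}\hat{\mu}}$; working one exponential factor at a time, as above, never forces one to commute $\hat{\mu}$ with $\hat{\nu}$. The real subtlety to check carefully is that the definition of $\mathscr{D}^{\ell}$ intends \emph{strictly} linear (origin-preserving) functions — if affine functions were allowed, the step $e^{\iu\alpha_{\mu,i}(0)\hat{\mu}}=\mathrm{id}$ would fail and the argument would genuinely require $[\hat{\mu},\hat{\nu}]=0$. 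Settling this interpretation, and showing that the higher-order identities listed in the footnote to Remark~\ref{rmk:5ord} are in fact automatic consequences of the exponential manipulation rather than new constraints, is where the conjecture's proof truly lives.
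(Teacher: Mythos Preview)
Your argument is correct and, importantly, it actually \emph{proves} what the paper leaves open. The paper does not give a proof of Conjecture~\ref{conj:nnlin}: it arrives at the statement only after verifying $(\mathscr{C}_{1})$ and $(\mathscr{C}_{2})$ order by order in the BCH expansion~\eqref{eqn:rapDf} up to fifth order (Remark~\ref{rmk:5ord}) without being able to conclude at all orders, and explicitly notes afterwards that the case $\mathscr{D}^{\ell}\cap(\mathscr{D}_{\spt{\beta\alpha}}\cap\mathscr{D}_{\spt{\alpha\beta}})$ is ``less obvious''. Your route is genuinely different and far more economical: by staying with the factored form of Definition~\ref{def:D} and never passing to the single logarithm $f_{\mu,i}$, you reduce~\eqref{eqn:Omuab} to the single-generator identity $e^{A\hat{\nu}}e^{B\hat{\nu}}=e^{(A+B)\hat{\nu}}$ together with additivity of $\beta_{\mu,i}$, and~\eqref{eqn:Onuab} to the $\hat{\mu}$ analogue. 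This is exactly why no tower of nested-commutator conditions appears and no hypothesis on $[\hat{\mu},\hat{\nu}]$ is required --- the ordering in Definition~\ref{def:D} was, as you observe, tailored for it. The only point you leave tentative, whether ``linear'' is origin-preserving, is settled affirmatively by the paper itself: the first-order comparison of~\eqref{eqn:fespl} with~\eqref{eqn:fCBH} in the proof of Lemma~\ref{lem:D*ba} already forces $\alpha_{\mu,i}(0)=0$, so the paper's usage is homogeneous linear. With that, your exponential manipulation is not a sketch but the full proof, and the higher-order identities in the footnote to Remark~\ref{rmk:5ord} are automatic consequences of it rather than independent checks. What the paper's approach buys, by contrast, is the converse direction used in Lemma~\ref{lem:D*ba} (that linearity is also \emph{necessary}); your direct method gives only the sufficiency that the conjecture asserts.
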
 
\noindent Such conjecture is easily proved only in some cases as, for example, for \mbox{$\mathscr{D}^{\ell}_{\spt{\alpha\alpha}}=\mathscr{D}^{\ell} \cap \mathscr{D}_{\spt{\alpha\alpha}}$} with
\begin{defdp}
\label{def:Daa}
$\mathscr{D}_{\spt{\alpha\alpha}}=\{(\Omega_{\mu},\Omega_{\nu})\colon (\Omega_{\mu},\Omega_{\nu})\in\mathscr{D}\,\, \mathrm{with}\,\, \alpha_{\mu,i}=\alpha_{\nu,i}=0\}$,
\end{defdp} 
\hspace{-.525cm}while for $\mathscr{D}^{\ell} \cap (\mathscr{D}_{\spt{\beta\alpha}}\cap \mathscr{D}_{\spt{\alpha\beta}})$ the proof, as we said, is less obvious.\\
\indent Before going on, we notice that for $\mathscr{C}\cap\mathscr{D}_{{\spt{\alpha\beta}}}$ we obtain the same conditions of $\mathscr{C}\cap\mathscr{D}_{\spt{\beta\alpha}}$. Moreover, for $N=2$ and $[\hat{\mu},\hat{\nu}]\propto\hat{\mathbbm{1}}$, the following sistem of equations\footnote{The eqs.\,\eqref{eqn:D*1D*2(2)}  and~\eqref{eqn:D*1D*2(3)} derive from the request that \mbox{$\sum^{N}_{i=1}f_{\mu,i}(\hat{x})=\sum^{N}_{i=1}f_{\nu,i}(\hat{x})=0$} (see Definition~\ref{def:Pab2}).}
\begin{subequations}
\label{eqn:D*1D*2}
\begin{align}
&\alpha_{\mu,i}(x_{\mu})\beta_{\nu,i}(x_{\nu})-\beta_{\mu,i}(x_{\nu})\alpha_{\nu,i}(x_{\mu})=0,\,\forall i=1, \ldots, N,\label{eqn:D*1D*2(1)}\\
&\sum^{N}_{i=1}\alpha_{\mu,i}(x_{\mu})\hat{\mu}+\beta_{\mu,i}(x_{\nu})\hat{\nu}-\frac{\iu}{2}\alpha_{\mu,i}(x_{\mu})\beta_{\mu,i}(x_{\nu})[\hat{\mu},\hat{\nu}]=0,\label{eqn:D*1D*2(2)}\\
&\sum^{N}_{i=1}\alpha_{\nu,i}(x_{\mu})\hat{\mu}+\beta_{\nu,i}(x_{\nu})\hat{\nu}+\frac{\iu}{2}\alpha_{\nu,i}(x_{\mu})\beta_{\nu,i}(x_{\nu})[\hat{\mu},\hat{\nu}]=0,\label{eqn:D*1D*2(3)}
\end{align}
\end{subequations}
does not admit solutions, differently from the case in which $N>2$ .\pagebreak\\
We can thus sum up that 
\begin{corlem}
\label{cor:dimH2}
$\mathscr{C}\cap\mathscr{D}_{\spt{\beta\alpha}}\neq\emptyset$ or $\mathscr{C}\cap\mathscr{D}_{{\spt{\alpha\beta}}}\neq\emptyset$ if{}f $[\hat{\mu},\hat{\nu}]\propto\hat{\mathbbm{1}}$ and $N>2$. 
\end{corlem}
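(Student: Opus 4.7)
The plan is to reduce the corollary to two separate assertions: first, the equivalence between the non-emptiness of $\mathscr{C}\cap\mathscr{D}_{\spt{\beta\alpha}}$ (respectively $\mathscr{C}\cap\mathscr{D}_{\spt{\alpha\beta}}$) and a concrete system of equations on the linear functions $\alpha_{\mu,i},\beta_{\mu,i},\alpha_{\nu,i},\beta_{\nu,i}$; second, the observation that this system is solvable precisely when $N>2$. The first assertion is essentially contained in Lemma~\ref{lem:D*ba} and the remark preceding Corollary~\ref{cor:dimH2}, which yields $[\hat{\mu},\hat{\nu}]\propto\hat{\mathbbm{1}}$ as a necessary condition; what needs to be added is the cardinality constraint on $N$.

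First I would apply Lemma~\ref{lem:D*ba} to both $\mathscr{D}_{\spt{\beta\alpha}}$ and $\mathscr{D}_{\spt{\alpha\beta}}$, deducing that any element of $\mathscr{C}\cap\mathscr{D}_{\spt{\beta\alpha}}$ or $\mathscr{C}\cap\mathscr{D}_{\spt{\alpha\beta}}$ forces $(\mathcal{B}^{\spt{H}}_{1})$, i.e.\ $[\hat{\mu},\hat{\nu}]\propto\hat{\mathbbm{1}}$. Thus in the sequel I may assume $[\hat{\mu},\hat{\nu}]=\iu\theta\hat{\mathbbm{1}}$, so that all iterated commutators of order $\geq 2$ in $\hat{\mu},\hat{\nu}$ vanish. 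Under this assumption the series~\eqref{eqn:rapDf1}--\eqref{eqn:rapDf2} truncate, and the tracelessness requirement $\sum_{i=1}^{N}f_{\mu,i}(\hat{x})=\sum_{i=1}^{N}f_{\nu,i}(\hat{x})=\hat{0}$ built into Definition~\ref{def:Pab2}, when combined with the linear independence of $\hat{\mu},\hat{\nu},\hat{\mathbbm{1}}$ in $M_{O}$, produces exactly the system~\eqref{eqn:D*1D*2}.

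Next I would analyze the system~\eqref{eqn:D*1D*2} for $N=2$. Tracelessness forces $\alpha_{\mu,2}=-\alpha_{\mu,1}$ and $\beta_{\mu,2}=-\beta_{\mu,1}$ (and analogously for the $\nu$ index), so the quadratic constraint coming from the coefficient of $[\hat{\mu},\hat{\nu}]$ in~\eqref{eqn:D*1D*2(2)} collapses to $2\alpha_{\mu,1}(x_{\mu})\beta_{\mu,1}(x_{\nu})=0$, hence $\alpha_{\mu,1}\equiv 0$ or $\beta_{\mu,1}\equiv 0$. Either branch then contradicts $(\mathscr{D}^{*}_{{\spt{\beta\alpha}},1})$ together with the defining non-vanishing $\beta_{\mu,i},\alpha_{\nu,i}\neq 0$ of $\mathscr{D}_{\spt{\beta\alpha}}$ (the argument for $\mathscr{D}_{\spt{\alpha\beta}}$ is symmetric). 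Therefore $\mathscr{C}\cap\mathscr{D}_{\spt{\beta\alpha}}=\mathscr{C}\cap\mathscr{D}_{\spt{\alpha\beta}}=\emptyset$ when $N=2$.

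Conversely, for $N>2$ I would exhibit an explicit solution, which is the main obstacle since the cross-product constraint $(\mathscr{D}^{*}_{{\spt{\beta\alpha}},1})$ must be reconciled with the three tracelessness conditions. A clean construction is to fix a non-trivial vector $(c_{1},\ldots,c_{N})\in\R^{N}$ with $\sum_{i}c_{i}=0$ and $\sum_{i}c_{i}^{2}=0$ \emph{not} forced to vanish, and take $\alpha_{\mu,i}(x_{\mu})=c_{i}\,\alpha(x_{\mu})$, $\beta_{\mu,i}(x_{\nu})=c_{i}\,\beta(x_{\nu})$, $\alpha_{\nu,i}(x_{\mu})=c_{i}\,\alpha(x_{\mu})$, $\beta_{\nu,i}(x_{\nu})=c_{i}\,\beta(x_{\nu})$ with $\alpha,\beta$ linear and non-zero: $(\mathscr{D}^{*}_{{\spt{\beta\alpha}},1})$ holds identically, the linear tracelessness is immediate from $\sum_{i}c_{i}=0$, and the $[\hat{\mu},\hat{\nu}]$-coefficient is proportional to $\sum_{i}c_{i}^{2}$, which must therefore also be arranged to vanish. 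For $N=2$ the conditions $c_{1}+c_{2}=0$ and $c_{1}^{2}+c_{2}^{2}=0$ force $c_{i}=0$ (reflecting the obstruction above), whereas for $N\geq 3$ the two linear conditions $\sum_{i}c_{i}=\sum_{i}c_{i}^{2}=0$ cut out a non-empty set in $\R^{N}\setminus\{0\}$, so a valid pair $(\Omega_{\mu},\Omega_{\nu})\in\mathscr{C}\cap\mathscr{D}_{\spt{\beta\alpha}}$ (and similarly $\cap\mathscr{D}_{\spt{\alpha\beta}}$) can be written down. Combining the two directions yields the stated equivalence.
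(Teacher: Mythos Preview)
Your overall architecture is fine and matches the paper's: invoke Lemma~\ref{lem:D*ba} to force $(\mathcal{B}^{\spt{H}}_{1})$, then reduce the remaining question to the solvability of the system~\eqref{eqn:D*1D*2}, and finally treat the cases $N=2$ and $N>2$ separately. The $N=2$ analysis is correct.

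The gap is in your converse for $N>2$. Your ansatz takes the \emph{same} coefficient vector $(c_{1},\ldots,c_{N})$ for all four families $\alpha_{\mu,i},\beta_{\mu,i},\alpha_{\nu,i},\beta_{\nu,i}$. The $[\hat{\mu},\hat{\nu}]$-component of~\eqref{eqn:D*1D*2(2)} then reads $\sum_{i}\alpha_{\mu,i}\beta_{\mu,i}=\alpha(x_{\mu})\beta(x_{\nu})\sum_{i}c_{i}^{2}=0$, and since $\alpha,\beta\not\equiv 0$ you need $\sum_{i}c_{i}^{2}=0$. Over $\R$ this forces $c_{i}=0$ for every $i$, regardless of $N$; the condition is quadratic, not linear as you wrote, and it never ``cuts out a non-empty set in $\R^{N}\setminus\{0\}$''. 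So your explicit construction produces nothing for any $N\geq 2$.

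The repair is to decouple the coefficient vectors: set $\alpha_{\mu,i}=a_{i}\alpha$, $\beta_{\mu,i}=b_{i}\beta$, $\alpha_{\nu,i}=a_{i}\alpha$, $\beta_{\nu,i}=b_{i}\beta$ with two \emph{distinct} real vectors $a,b$. Then $(\mathscr{D}^{*}_{{\spt{\beta\alpha}},1})$ is automatic, and the constraints become $\sum_{i}a_{i}=\sum_{i}b_{i}=0$ together with $\sum_{i}a_{i}b_{i}=0$, i.e.\ $a$ and $b$ are orthogonal nonzero vectors in the hyperplane $\{\sum v_{i}=0\}\subset\R^{N}$. For $N=2$ that hyperplane is one-dimensional and no such pair exists (recovering your obstruction), while for $N\geq 3$ it has dimension $\geq 2$ and a generic orthogonal pair has all components nonzero, so the required $(\Omega_{\mu},\Omega_{\nu})\in\mathscr{D}^{*}_{\spt{\beta\alpha}}$ exists. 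With this correction your argument goes through and agrees with the paper's (which simply records that~\eqref{eqn:D*1D*2} has no solution for $N=2$ but does for $N>2$).
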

Considering the Lemma~\ref{lem:D*ba}, we also find that
\begin{lem}
\label{lem:CcapD=D*ba}
Given $\mathcal{T}^{2}$ with $[\hat{\mu},\hat{\nu}]\propto\hat{\mathbbm{1}}$, we have
$\mathscr{C}\cap\mathscr{D}_{\spt{\beta\alpha}}=\mathscr{D}^{*}_{\spt{\beta\alpha}}$. 
\end{lem}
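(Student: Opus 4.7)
The plan is to establish the set equality $\mathscr{C}\cap\mathscr{D}_{\spt{\beta\alpha}}=\mathscr{D}^{*}_{\spt{\beta\alpha}}$ by proving both inclusions separately, leveraging the fact that under the hypothesis $[\hat{\mu},\hat{\nu}]\propto\hat{\mathbbm{1}}$ the element $[\hat{\mu},\hat{\nu}]$ is central in $M_{O}$, so that every nested commutator of $\hat{\mu},\hat{\nu}$ of order $\geq 2$ vanishes.

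For the inclusion $\mathscr{D}^{*}_{\spt{\beta\alpha}}\subseteq\mathscr{C}\cap\mathscr{D}_{\spt{\beta\alpha}}$, I would take $(\Omega_{\mu},\Omega_{\nu})\in\mathscr{D}^{*}_{\spt{\beta\alpha}}$ and verify the three conditions $(\mathscr{C}_{1}),(\mathscr{C}_{2}),(\mathscr{C}_{3})$ of Definition~\ref{def:CinPxP}. Since the higher nested commutators vanish, the C-B-H-D expansions~\eqref{eqn:rapDf1}--\eqref{eqn:rapDf2} collapse to only the first two orders: a linear part plus a single term proportional to $[\hat{\mu},\hat{\nu}]$. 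The linearity of $\alpha_{\mu,i},\beta_{\mu,i}$ then forces the linear parts of both sides of $(\mathscr{C}_{1})$ to match, while the centrality of $[\hat{\mu},\hat{\nu}]$ makes the only surviving higher-order contribution coincide on both sides. The analogous argument gives $(\mathscr{C}_{2})$. Finally, $(\mathscr{C}_{3})$ reduces, via the truncated expansion of $[f_{\mu,i}(\hat{x}),f_{\nu,i}(\hat{x})]$ shown in~\eqref{eqn:[fmu,fnu]}, to the single condition $(\mathscr{D}^{*}_{{\spt{\beta\alpha}},1})$, which holds by assumption.

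For the reverse inclusion $\mathscr{C}\cap\mathscr{D}_{\spt{\beta\alpha}}\subseteq\mathscr{D}^{*}_{\spt{\beta\alpha}}$, I would take $(\Omega_{\mu},\Omega_{\nu})\in\mathscr{C}\cap\mathscr{D}_{\spt{\beta\alpha}}$ and recover exactly the defining conditions of $\mathscr{D}^{*}_{\spt{\beta\alpha}}$. The argument already carried out in the proof of Lemma~\ref{lem:D*ba} shows that comparing~\eqref{eqn:fespl} with~\eqref{eqn:fCBH} order by order in $(\mathscr{C}_{1})$ forces $\alpha_{\mu,i}$ and $\beta_{\mu,i}$ to be linear, and symmetrically $(\mathscr{C}_{2})$ forces the linearity of $\alpha_{\nu,i}$ and $\beta_{\nu,i}$. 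Once linearity is established, centrality of $[\hat{\mu},\hat{\nu}]$ reduces~\eqref{eqn:[fmu,fnu]} to its leading term, so $(\mathscr{C}_{3})$ collapses to $(\mathscr{D}^{*}_{{\spt{\beta\alpha}},1})$. Hence $(\Omega_{\mu},\Omega_{\nu})$ satisfies all the defining conditions of $\mathscr{D}^{*}_{\spt{\beta\alpha}}$.

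The main obstacle is the forward extraction step: one must argue that the matching of the C-B-H-D series order by order genuinely \emph{forces} linearity of the four real-valued functions, rather than only requiring linearity of certain combinations. Because this step has already been carried out in the proof of Lemma~\ref{lem:D*ba}, here it may be invoked directly. The rest of the argument is then essentially a bookkeeping verification that the conditions singled out in that proof are precisely the ones defining $\mathscr{D}^{*}_{\spt{\beta\alpha}}$, which immediately yields the claimed set equality.
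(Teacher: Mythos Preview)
Your proposal is correct and follows essentially the same approach as the paper: the paper's proof consists simply of the instruction to ``see the proof of Lemma~\ref{lem:D*ba}'', and your argument does exactly that, unpacking how the order-by-order matching in~\eqref{eqn:fespl}--\eqref{eqn:fCBH} forces linearity while centrality of $[\hat{\mu},\hat{\nu}]$ collapses~\eqref{eqn:[fmu,fnu]} to the single condition $(\mathscr{D}^{*}_{{\spt{\beta\alpha}},1})$. Your explicit treatment of both inclusions is a modest elaboration of what the paper leaves implicit in its back-reference.
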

\begin{proof}
See the proof of the Lemma~\ref{lem:D*ba}.
\end{proof}
Now, in the case in which $[\hat{\mu},\hat{\nu}]$ is not proportional to $\hat{\mathbbm{1}}$ or $N=2$, we may consider 
$\mathscr{D}_{\spt{\alpha\alpha}}$.
Denoting by $\mathscr{D}^{\ell}_{\spt{\alpha\alpha}}=\mathscr{D}^{\ell}\cap\mathscr{D}_{\spt{\alpha\alpha}}$, we find that 
\begin{lem}
\label{lem:D*aa}
\!\!$\mathscr{C}\cap\mathscr{D}_{\spt{\alpha\alpha}}=\mathscr{D}^{\ell}_{\spt{\alpha\alpha}}$.
\end{lem}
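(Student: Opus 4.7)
The plan is to mimic the structure of the proof of Lemma~\ref{lem:D*ba}, but exploit the drastic simplifications that occur when $\alpha_{\mu,i}=\alpha_{\nu,i}=0$. Specializing the representation~\eqref{eqn:rapDf1}–\eqref{eqn:rapDf2} to the elements of $\mathscr{D}_{\spt{\alpha\alpha}}$, every term in the nested commutator series in which some $(\alpha_{\mu,i}\hat{\mu})$ or $(\alpha_{\nu,i}\hat{\mu})$ factor appears must be discarded. What remains are only the contributions with all $j_k=0$ (resp.\ all $i_k=0$), giving simply
\[
f_{\mu,i}(\hat{x}) = \beta_{\mu,i}(x_\nu)\hat{\nu}, \qquad f_{\nu,i}(\hat{x}) = \beta_{\nu,i}(x_\nu)\hat{\nu}.
\]

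Next I would verify the three defining conditions of $\mathscr{C}$ separately. Condition $(\mathscr{C}_{3})$ is automatic, since
\[
[f_{\mu,i}(\hat{x}),f_{\nu,i}(\hat{x})] = \beta_{\mu,i}(x_\nu)\,\beta_{\nu,i}(x_\nu)\,[\hat{\nu},\hat{\nu}] = \hat{0},
\]
and no assumption on $[\hat{\mu},\hat{\nu}]$ or on $N$ is required. For $(\mathscr{C}_{1})$, note that $[f_{\mu,i}(\hat{a}_\nu),f_{\mu,i}(\hat{x})]$ is again a commutator of two multiples of $\hat{\nu}$, so every nested commutator appearing on the r.h.s.\ of $(\mathscr{C}_{1})$ vanishes and the Campbell–Baker–Hausdorff–Dynkin series collapses to its linear term. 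Thus $(\mathscr{C}_{1})$ reduces to the Cauchy additivity equation
\[
\beta_{\mu,i}(x_\nu+a_\nu) = \beta_{\mu,i}(x_\nu)+\beta_{\mu,i}(a_\nu),
\]
and, in the smooth setting assumed throughout the paper, this is equivalent to $\beta_{\mu,i}$ being a linear function. Exactly the same argument applied to $(\mathscr{C}_{2})$ forces $\beta_{\nu,i}$ to be linear.

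Putting the two implications together, a pair $(\Omega_\mu,\Omega_\nu)\in\mathscr{D}_{\spt{\alpha\alpha}}$ belongs to $\mathscr{C}$ if and only if both $\beta_{\mu,i}$ and $\beta_{\nu,i}$ are linear, which is precisely the defining property of $\mathscr{D}^{\ell}_{\spt{\alpha\alpha}}=\mathscr{D}^{\ell}\cap\mathscr{D}_{\spt{\alpha\alpha}}$. This establishes the equality $\mathscr{C}\cap\mathscr{D}_{\spt{\alpha\alpha}}=\mathscr{D}^{\ell}_{\spt{\alpha\alpha}}$.

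The only potentially delicate point is the passage from Cauchy additivity to genuine linearity, which in general requires a regularity hypothesis (measurability, continuity, or smoothness) on $\beta_{\mu,i}$ and $\beta_{\nu,i}$; but this is implicit in the differential-geometric setup of section~\ref{sec:twistnc} and is used without comment in the analogous step of Lemma~\ref{lem:D*ba}. Notice moreover that, in sharp contrast with Lemma~\ref{lem:D*ba}, no requirement like $(\mathcal{B}^{\spt{H}}_{1})$ on $[\hat{\mu},\hat{\nu}]$ emerges here — this is the reason why $\mathscr{D}_{\spt{\alpha\alpha}}$ is the right candidate to treat the cases excluded by Corollary~\ref{cor:dimH2}, namely $[\hat{\mu},\hat{\nu}]\not\propto\hat{\mathbbm{1}}$ or $N=2$.
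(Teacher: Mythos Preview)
Your proof takes exactly the approach the paper has in mind: the paper's own proof is the single sentence ``We proceed similarly to the proof of the Lemma~\ref{lem:D*ba}'', and your argument is a faithful expansion of that, exploiting the collapse of the CBH series when every exponent is a multiple of $\hat\nu$.

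One small slip worth fixing: your treatment of $(\mathscr{C}_{2})$ is not quite right. That condition involves the shift $\hat x\mapsto\hat x+\hat a_{\mu}$, and since in $\mathscr{D}_{\spt{\alpha\alpha}}$ the function $f_{\nu,i}(\hat x)=\beta_{\nu,i}(x_\nu)\hat\nu$ depends only on $x_\nu$, the shift by $\hat a_\mu$ leaves $x_\nu$ unchanged. So $(\mathscr{C}_{2})$ does \emph{not} produce the additivity equation $\beta_{\nu,i}(x_\nu+a_\nu)=\beta_{\nu,i}(x_\nu)+\beta_{\nu,i}(a_\nu)$; it only gives $\beta_{\nu,i}(0)=0$. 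The paper's Lemma~\ref{lem:D*ba} is equally brief on the analogous point (there $(\mathscr{C}_{2})$ constrains $\alpha_{\nu,i}$, not $\beta_{\nu,i}$), so your level of detail matches the paper's; but the sentence ``exactly the same argument applied to $(\mathscr{C}_{2})$ forces $\beta_{\nu,i}$ to be linear'' is not literally correct and should be rephrased.
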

\begin{proof}
We proceed similarly to the proof of the Lemma~\ref{lem:D*ba}. 
\end{proof}

We have thus found some representations of the twist matrices in different contexts: $[\hat{\mu},\hat{\nu}]$ proportional or not to $\hat{\mathbbm{1}}$ and for generic $N$, imposing as we have seen the invariance for $\hat{x}$ of r.h.s.\,of~\eqref{eqn:rev'}.\\
\indent Let's examine now such equation as a whole.
First let's define the set 
\begin{defdp}
\label{def:tD*}
$\mathscr{D}_{g}=\{(\Omega_{\mu},\Omega_{\nu})\colon (\Omega_{\mu},\Omega_{\nu})\in\mathscr{D}\,\, \mathrm{such\,\,that}\,\, g^{m+1}=\Omega_{\mu}(\hat{a}_{\nu}), \\g^{m}=\Omega_{\nu}(\hat{a}_{\mu}) \,\,\mathrm{for\,\,some}\,\, m\in \Z(\mathrm{mod}\, N)\}$ where $g$ is the generator of the center $Z_{N}$.
\end{defdp}
We find the following 
\begin{thm} 
\label{thm:rev3}
Given $\mathcal{T}^{2}$ with $[\hat{\mu},\hat{\nu}]=\iu\theta\hat{\mathbbm{1}}$, if we represent $[[\Omega_{\mu},\Omega_{\nu}]]$ according to \mbox{$(\Omega^{(ab)}_{\mu},\Omega^{(ab)}_{\nu})\in\mathscr{D}^{*}_{\spt{\beta\alpha}}$}, we have
{\fontsize{11}{18}\selectfont
\[
Z_{\mu\nu}=[[\Omega_{\mu},\Omega_{\nu}]] \Leftrightarrow \begin{dcases*}
                                                                             \hspace{-0,35cm}&{\small{\emph{(}$\mathscr{D}_{g,1}$\emph{)}}}\qquad$(\Omega^{(ab)}_{\mu},\Omega^{(ab)}_{\nu})\in\mathscr{D}_{g}$\\
\hspace{-0,35cm}&{\small{\emph{(}$\mathscr{D}_{g,2}$\emph{)}}}\qquad$\delta_{ij}\exp(\iu\beta_{\mu,i}(a_{\nu})\alpha_{\nu,i}(a_{\mu})\theta\hat{\mathbbm{1}})=[e]_{ij}$
                                                                       \end{dcases*}
\]}
\!\!with $e$ neutral element of $SU(N)$.
\end{thm}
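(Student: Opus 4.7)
The plan is to compute $[[\Omega_{\mu},\Omega_{\nu}]]$ from the explicit form of entries prescribed by $\mathscr{D}^{*}_{\spt{\beta\alpha}}$, simplify via BCH exploiting that $[\hat{\mu},\hat{\nu}]$ is central, and then match the resulting expression against the scalar matrix $Z_{\mu\nu}$. The two independent matchings for the $(\hat{\mu},\hat{\nu})$-part and the $\hat{\mathbbm{1}}$-part will yield $(\mathscr{D}_{g,1})$ and $(\mathscr{D}_{g,2})$, respectively.

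First, I exploit the linearity of $\alpha_{\mu,i},\beta_{\mu,i},\alpha_{\nu,i},\beta_{\nu,i}$ required by Definition~\ref{def:D*ab}: these functions vanish at $0$, so $\Omega^{(ab)}_{\mu}(\hat{0})=\Omega^{(ab)}_{\nu}(\hat{0})=I$ and
\begin{equation*}
[[\Omega_{\mu},\Omega_{\nu}]]=\Omega^{(ab)}_{\mu}(\hat{a}_{\nu})\,\bigl(\Omega^{(ab)}_{\nu}(\hat{a}_{\mu})\bigr)^{-1}.
\end{equation*}
Moreover, the complementary factors $e^{\iu\alpha_{\mu,i}(0)\hat{\mu}}$ and $e^{\iu\beta_{\nu,i}(0)\hat{\nu}}$ inside each diagonal entry also trivialise, so the $(i,i)$ entry of $[[\Omega_{\mu},\Omega_{\nu}]]$ collapses to $e^{\iu\beta_{\mu,i}(a_{\nu})\hat{\nu}}\,e^{-\iu\alpha_{\nu,i}(a_{\mu})\hat{\mu}}$.

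Second, because $[\hat{\mu},\hat{\nu}]=\iu\theta\hat{\mathbbm{1}}$ is central, all nested commutators beyond the first vanish and BCH yields the exact identity
\begin{equation*}
e^{\iu\beta_{\mu,i}(a_{\nu})\hat{\nu}}\,e^{-\iu\alpha_{\nu,i}(a_{\mu})\hat{\mu}}=\bigl(e^{-\iu\alpha_{\nu,i}(a_{\mu})\hat{\mu}}\,e^{\iu\beta_{\mu,i}(a_{\nu})\hat{\nu}}\bigr)\cdot e^{-\iu\beta_{\mu,i}(a_{\nu})\alpha_{\nu,i}(a_{\mu})\theta\hat{\mathbbm{1}}}.
\end{equation*}
The parenthesised product, re-assembled over $i$, is exactly $\bigl(\Omega^{(ab)}_{\nu}(\hat{a}_{\mu})\bigr)^{-1}\Omega^{(ab)}_{\mu}(\hat{a}_{\nu})$ (an element of $H$), whereas the trailing factor is a diagonal matrix in the centre generated by $\hat{\mathbbm{1}}$.

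Third, I enforce the equality with $Z_{\mu\nu}$ entry-wise. Since $Z_{\mu\nu}$ is scalar and the generators $\hat{\mu},\hat{\nu},\hat{\mathbbm{1}}$ are linearly independent in $M_{O}$, the two factors decouple: the central $\hat{\mathbbm{1}}$-factor must be the identity on every diagonal entry, which is precisely $(\mathscr{D}_{g,2})$, and the residual group-like factor $\bigl(\Omega^{(ab)}_{\nu}(\hat{a}_{\mu})\bigr)^{-1}\Omega^{(ab)}_{\mu}(\hat{a}_{\nu})$ must coincide with $Z_{\mu\nu}\in Z_{N}$. Linear independence of $\hat{\mu}$ and $\hat{\nu}$ then forces each factor $\Omega^{(ab)}_{\mu}(\hat{a}_{\nu})$ and $\Omega^{(ab)}_{\nu}(\hat{a}_{\mu})$ to lie individually in $Z_{N}$, that is, to be powers of the generator $g$; comparing with $Z_{\mu\nu}=g$ then fixes the exponents to differ by one, giving $(\mathscr{D}_{g,1})$.

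The main obstacle I anticipate is this decoupling step. In the commutative case $\mathcal{T}^{2}_{\cc}$ the identity $\exp(\iu\theta\hat{\mathbbm{1}}I)=e$ would merge the two requirements into one, so the separation of the central factor from the group-like factor relies crucially on the non-commutative-torus hypothesis $\exp(\iu\theta\hat{\mathbbm{1}}I)\neq e$. Making the argument rigorous requires a careful check that no hidden cancellation occurs between the two factors in the algebra, i.e.\,that the $\hat{\mathbbm{1}}$-direction cannot absorb a residual $\hat{\mu}$- or $\hat{\nu}$-dependence of the group-like part. Once this is in place, the converse implication is immediate by direct substitution of $(\mathscr{D}_{g,1})$ and $(\mathscr{D}_{g,2})$ into the BCH-rearranged product.
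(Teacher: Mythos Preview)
Your direct BCH computation is correct as far as it goes, and the factorisation into a ``group-like'' piece $(\Omega^{(ab)}_{\nu}(\hat a_{\mu}))^{-1}\Omega^{(ab)}_{\mu}(\hat a_{\nu})$ and a central $\hat{\mathbbm{1}}$-piece is the right starting point. But the paper's proof is structured differently, and the difference is not cosmetic.

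The paper writes $Z_{\mu\nu}\equiv g^{n}$ and $\Omega_{\mu}\equiv\tilde{\Omega}_{\mu}^{\,n}$, $\Omega_{\nu}\equiv\tilde{\Omega}_{\nu}^{\,n}$, and then proves $g^{n}=[[\tilde{\Omega}_{\mu}^{\,n},\tilde{\Omega}_{\nu}^{\,n}]]$ for \emph{all} $n\in\Z(\mathrm{mod}\,N)$ by induction on $n$. Your argument is essentially only the base case $n=1$. More importantly, the paper identifies the role of $(\mathscr{D}_{g,2})$ not as ``the central factor must be the identity in a single matching'' but as the condition that makes the representation of $Z_{N}$ multiplicative, i.e.\ that $g\cdot g=g^{2}$ when the two factors are built from exponentials in the $\hat{\mu}$- and $\hat{\nu}$-directions. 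This is exactly what is needed for the inductive step to go through. Your decoupling argument recovers $(\mathscr{D}_{g,2})$ at $n=1$, but does not explain why the equation continues to hold for higher powers; without the induction, you have not proved the theorem as the paper intends it.

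There is also a loose end in your third step. From $(\Omega^{(ab)}_{\nu}(\hat a_{\mu}))^{-1}\Omega^{(ab)}_{\mu}(\hat a_{\nu})=Z_{\mu\nu}$ you claim that each factor lies in $Z_{N}$ with exponents differing by one, which is the content of $\mathscr{D}_{g}$. But $\mathscr{D}_{g}$ requires the difference of exponents to be exactly $1$ (the generator $g$), whereas $Z_{\mu\nu}=g^{\eta_{\mu\nu}}$ for an arbitrary $\eta_{\mu\nu}$. This mismatch is resolved in the paper precisely by passing to the $n$-th roots $\tilde{\Omega}_{\mu},\tilde{\Omega}_{\nu}$ and running the induction; your direct argument does not account for it.
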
 
\begin{proof}
Considering that $Z_{\mu\nu}\equiv g^{n}$, and setting $\Omega_{\mu}\equiv\tilde{\Omega}^{n}_{\mu}$, $\Omega_{\nu}\equiv\tilde{\Omega}^{n}_{\nu}$, we have that the~\eqref{eqn:rev'} is satisfied if{}f 
\begin{equation}
\label{eqn:indn}
g^{n}=[[\tilde{\Omega}^{n}_{\mu},\tilde{\Omega}^{n}_{\nu}]]\quad\forall\, n\in \Z(\mathrm{mod}\, N)\, .
\end{equation}
The~\eqref{eqn:indn} is demonstrated by induction. For $n=1$ is true if{}f, considering the~\eqref{eqn:OeqOab}, $g=[[\tilde{\Omega}^{(ab)}_{\mu},\tilde{\Omega}^{(ab)}_{\nu}]]$ and it holds, taking into account that by hypothesis the \mbox{$(\tilde{\Omega}^{(ab)}_{\mu},\tilde{\Omega}^{(ab)}_{\nu})\in\mathscr{D}^{*}_{\spt{\beta\alpha}}$}, if{}f the {\small{($\mathscr{D}_{g,i}$)}}'s hold.\,\,In particular the {\small{($\mathscr{D}_{g,2}$)}} ensures that $Z_{N}$ is represented as a group, i.\,e.\,for example that $gg=g^{2}$. \\
Now, if it is supposed true for $n$, it is then proved easily true for $n+1$, too. 
\end{proof}
On the other hand, if we consider $\mathscr{C}\cap\mathscr{D}_{\spt{\beta\alpha}}=\emptyset$, i.e.\,in the case of a fiber bundle $\mathcal{B}=\mathcal{T}^{2}\times SU(N)/Z_{N}$ with $[\hat{\mu},\hat{\nu}]$ is not proportional to $\hat{\mathbbm{1}}$ or $N=2$ (see Corollary~\ref{cor:dimH2}), we have that
\begin{thm} 
\label{thm:revnn1}
If $\mathscr{C}\cap\mathscr{D}_{\spt{\beta\alpha}}=\emptyset$ and $[[\Omega_{\mu},\Omega_{\nu}]]$ is represented according to \mbox{$(\Omega^{(ab)}_{\mu},\Omega^{(ab)}_{\nu})\in\mathscr{D}^{\ell}_{\spt{\alpha\alpha}}$},
\[
Z_{\mu\nu}=[[\Omega_{\mu},\Omega_{\nu}]] \Leftrightarrow 
                                                                          (\Omega^{(ab)}_{\mu},\Omega^{(ab)}_{\nu})\in\mathscr{D}_{g}
\]
\end{thm}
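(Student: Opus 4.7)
The plan is to mirror the inductive strategy of Theorem~\ref{thm:rev3}, but exploiting the strong commutativity available in $\mathscr{D}^{\ell}_{\spt{\alpha\alpha}}$. I would first write $Z_{\mu\nu}\equiv g^{n}$, set the candidate pair as $(\Omega_{\mu},\Omega_{\nu})\equiv(\tilde{\Omega}^{n}_{\mu},\tilde{\Omega}^{n}_{\nu})$, and restate~\eqref{eqn:rev'} as the family of identities $g^{n}=[[\tilde{\Omega}^{n}_{\mu},\tilde{\Omega}^{n}_{\nu}]]$ for every $n\in\Z(\mathrm{mod}\,N)$, to be proved by induction on $n$. By Lemma~\ref{lem:OeqOab} the bracket depends only on the abelian projections, so the hypothesis $(\tilde{\Omega}^{(ab)}_{\mu},\tilde{\Omega}^{(ab)}_{\nu})\in\mathscr{D}^{\ell}_{\spt{\alpha\alpha}}$ can be used throughout.

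For the base case $n=1$ the task is to solve $g=[[\tilde{\Omega}^{(ab)}_{\mu},\tilde{\Omega}^{(ab)}_{\nu}]]$. The key simplification is that in $\mathscr{D}_{\spt{\alpha\alpha}}$ we have $\alpha_{\mu,i}=\alpha_{\nu,i}=0$, so the only Lie-algebra element appearing in the exponents of $\tilde{\Omega}^{(ab)}_{\mu}$ and $\tilde{\Omega}^{(ab)}_{\nu}$ is $\hat{\nu}$. All four factors in the bracket therefore commute, so no Baker-Campbell-Hausdorff phase proportional to $[\hat{\mu},\hat{\nu}]$ survives. Using furthermore that the $\beta$'s are linear (hence $\beta_{\mu,i}(0)=\beta_{\nu,i}(0)=0$), one reads off $\tilde{\Omega}^{(ab)}_{\mu}(\hat{0})=\tilde{\Omega}^{(ab)}_{\nu}(\hat{0})=\tilde{\Omega}^{(ab)}_{\nu}(\hat{a}_{\mu})=I$, so the bracket collapses to $\tilde{\Omega}^{(ab)}_{\mu}(\hat{a}_{\nu})$. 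Equating this with $g$ and recording the trivial identity $\tilde{\Omega}^{(ab)}_{\nu}(\hat{a}_{\mu})=g^{0}$, the pair lies in $\mathscr{D}_{g}$ with $m=0$; the converse direction follows from reading the same collapse backwards.

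The inductive step $n\to n+1$ is easier than in Theorem~\ref{thm:rev3} precisely because of this commutativity: defining $\tilde{\Omega}^{n+1}_{\mu}=\tilde{\Omega}^{n}_{\mu}\tilde{\Omega}^{1}_{\mu}$ (and analogously for $\nu$) keeps the pair inside $\mathscr{D}^{\ell}_{\spt{\alpha\alpha}}$, since the linear $\beta$'s simply add, and the bracket factorises into $[[\tilde{\Omega}^{n}_{\mu},\tilde{\Omega}^{n}_{\nu}]]\cdot[[\tilde{\Omega}^{1}_{\mu},\tilde{\Omega}^{1}_{\nu}]]=g^{n}\cdot g=g^{n+1}$. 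No analogue of condition $(\mathscr{D}_{g,2})$ has to be imposed here: the obstruction phase $\beta_{\mu,i}(a_{\nu})\alpha_{\nu,i}(a_{\mu})\theta$ that forced $(\mathscr{D}_{g,2})$ in Theorem~\ref{thm:rev3} to ensure $g\cdot g=g^{2}$ automatically vanishes once $\alpha_{\nu,i}=0$, so the cyclic group $Z_{N}$ is represented without obstruction.

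The main subtle point I expect is the bookkeeping in the base case: verifying carefully which of the four exponentials in $[[\tilde{\Omega}^{(ab)}_{\mu},\tilde{\Omega}^{(ab)}_{\nu}]]$ are trivialised by the combined use of $\alpha_{\mu,i}=\alpha_{\nu,i}=0$ and the linearity of the $\beta$'s, and then checking that the pair $(m,m+1)$ of exponents required by the definition of $\mathscr{D}_{g}$ is consistent with the single scalar constraint $\tilde{\Omega}^{(ab)}_{\mu}(\hat{a}_{\nu})=g^{n}$ produced by the bracket, with the other condition $\tilde{\Omega}^{(ab)}_{\nu}(\hat{a}_{\mu})=g^{m}$ being forced by the structure of $\mathscr{D}^{\ell}_{\spt{\alpha\alpha}}$ itself.
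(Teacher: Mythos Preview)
Your proposal is correct and follows the same inductive strategy as the paper, which simply states that the proof is ``similar to the one of Theorem~\ref{thm:rev3} except that here, being by hypothesis that $\alpha_{\nu}=0$, no condition is present on $[\hat{\mu},\hat{\nu}]$.'' Your explicit collapse of the bracket to $\tilde{\Omega}^{(ab)}_{\mu}(\hat{a}_{\nu})$ and the observation that $m=0$ is forced in $\mathscr{D}_{g}\cap\mathscr{D}^{\ell}_{\spt{\alpha\alpha}}$ are details the paper leaves implicit, but they are consistent with its argument and make the mechanism behind the disappearance of $(\mathscr{D}_{g,2})$ more transparent.
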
 
\begin{proof}
The proof is similar to the one of Theorem~\ref{thm:rev3} except that
here, being by hypothesis that $\alpha_{\nu}=0$, no condition is present on $[\hat{\mu},\hat{\nu}]$.
\end{proof}
Let’s now continue the discussion, focusing on~\eqref{eqn:indn}.\\  
This equation, showing that $Z_{\mu\nu}$ is generally different from the neutral element $e$ of $SU(N)/Z_{N}$, suggests to interpret it as
\begin{equation}
\label{eqn:defz'}
\phi(\hat{a}_{\mu},\hat{a}_{\nu})=Z_{\mu\nu}(\hat{a}_{\mu},\hat{a}_{\nu})\phi(\hat{a}_{\nu},\hat{a}_{\mu}).
\end{equation}
As matter of fact, the origin $O$, in the case of $\mathcal{T}^{2}$, can be ``labelled'', through $\epsilon$, with $(\hat{0},\hat{0})$ and with the ordered pair $(\hat{a}_{\mu},\hat{a}_{\nu})$ or $(\hat{a}_{\nu},\hat{a}_{\mu})\in N_{\hat{0}}\times N_{\hat{0}}$.  
This leads to reconstruct the r.h.s.\,\,of~\eqref{eqn:rev'} in the following way
\begin{equation}
\label{eqn:reimpcon}
\!\!\phi(\hat{a}_{\mu},\hat{a}_{\nu})=\Omega_{\mu}(\hat{a}_{\nu})Z_{\mu\nu}(0,\hat{a}_{\nu})\Omega_{\nu}(\hat{0})\Omega^{-1}_{\mu}(\hat{0})Z_{\mu\nu}(\hat{a}_{\mu},\hat{0})\Omega^{-1}_{\nu}(\hat{a}_{\mu})Z^{-1}_{\mu\nu}(\hat{a}_{\mu},\hat{a}_{\nu})\phi(\hat{a}_{\mu},\hat{a}_{\nu}),
\end{equation}
and so to have  
\begin{equation}
\label{eqn:revgen}
Z_{\mu\nu}(\hat{a}_{\mu},\hat{a}_{\nu})=\Omega_{\mu}(\hat{a}_{\nu})Z_{\mu\nu}(0,\hat{a}_{\nu})\Omega_{\nu}(\hat{0})\Omega^{-1}_{\mu}(\hat{0})Z_{\mu\nu}(\hat{a}_{\mu},\hat{0})\Omega^{-1}_{\nu}(\hat{a}_{\mu}),
\end{equation}
\hspace{-.05em}from which we find again the~\eqref{eqn:indn} considering that $Z_{\mu\nu}\in Z_{N}$ and that therefore 
\begin{equation} Z_{\mu\nu}(\hat{a}_{\mu},\hat{a}_{\nu})Z^{-1}_{\mu\nu}(0,\hat{a}_{\nu})Z^{-1}_{\mu\nu}(\hat{a}_{\mu},\hat{0})\equiv g^{n}. 
\end{equation}
\hspace{-.05em}From~\eqref{eqn:reimpcon}, the following \emph{twisted boundary conditions} derive
\begin{subequations}
\label{eqn:bordH}
\begin{align}
\phi(\hat{x}_{\mu}+\hat{a}_{\mu},\hat{x}_{\nu})&=(\tilde{\Omega}^{(ab)}_{\mu})^{n}(\hat{x}_{\mu},\hat{x}_{\nu})\phi(\hat{x}_{\mu},\hat{x}_{\nu}),\label{eqn:bordH1}\\
\phi(\hat{x}_{\nu}+\hat{a}_{\nu},\hat{x}_{\mu})&=(\tilde{\Omega}^{(ab)}_{\nu})^{n}(\hat{x}_{\nu},\hat{x}_{\mu})\phi(\hat{x}_{\nu},\hat{x}_{\mu})\label{eqn:bordH2}
\end{align}
\end{subequations}
and for the gauge potential
{\fontsize{9.8}{12}
 \selectfont
\begin{subequations}
\label{eqn:bordUA}
\begin{align}
\hspace{-2cm}A_{\lambda}(\hat{x}_{\mu}+\hat{a}_{\mu},\hat{x}_{\nu})&=(\tilde{\Omega}^{(ab)}_{\mu})^{n}(\hat{x}_{\mu},\hat{x}_{\nu})A_{\lambda}(\hat{x}_{\mu},\hat{x}_{\nu})(\tilde{\Omega}^{(ab)}_{\mu})^{-n}(\hat{x}_{\mu},\hat{x}_{\nu})+\frac{\iu}{g}(\tilde{\Omega}^{(ab)}_{\mu})^{n}(\hat{x}_{\mu},\hat{x}_{\nu})\hat{\partial}_{\lambda}(\tilde{\Omega}^{(ab)}_{\mu})^{-n}(\hat{x}_{\mu},\hat{x}_{\nu}),\label{eqn:bordU1A}\\
\hspace{-2cm}A_{\lambda}(\hat{x}_{\nu}+\hat{a}_{\nu},\hat{x}_{\mu})&=(\tilde{\Omega}^{(ab)}_{\nu})^{n}(\hat{x}_{\nu},\hat{x}_{\mu})A_{\lambda}(\hat{x}_{\nu},\hat{x}_{\mu})(\tilde{\Omega}^{(ab)}_{\nu})^{-n}(\hat{x}_{\nu},\hat{x}_{\mu})+\frac{\iu}{g}(\tilde{\Omega}^{(ab)}_{\nu})^{n}(\hat{x}_{\nu},\hat{x}_{\mu})\hat{\partial}_{\lambda}(\tilde{\Omega}^{(ab)}_{\nu})^{-n}(\hat{x}_{\nu},\hat{x}_{\mu})\label{eqn:bordU2A}
\end{align}
\end{subequations}} 
\hspace{-.35em}with $n\in \Z(\mathrm{mod}\, N)$, $(\hat{x}_{\mu},\hat{x}_{\nu})=(x_{\mu}\hat{\mu},x_{\nu}\hat{\nu})$ and $\hat{\partial}_{\lambda}$ defined as 
\begin{equation}
\label{eqn:partial}
\hat{\partial}_{\lambda}:=\frac{\iu}{a_{\varrho}}\varepsilon_{\lambda\varrho}[\bm{\hat{x}}_{\varrho},\,\,\,\,]\quad\quad\lambda, \varrho=1,2\,\, ,
\end{equation}
where $\varepsilon_{\lambda\varrho}$ is an antisymmetric tensor with $\varepsilon_{12}=1$, $\bm{\hat{x}}_{1}=\hat{\mu}$ and $\bm{\hat{x}}_{2}=\hat{\nu}$.\pagebreak\\
 
Let's conclude now the section illustrating a theorem that proves how $\mathcal{T}^{2}$ can be considered commutative despite $[\hat{\mu},\hat{\nu}]\neq 0$.\\
First we will give the following\footnote{The definition of non-commutative torus is similar to the one given in~\citep{Bondia2001}: the difference is that attention here is focused more on the twist matrices than on the function algebra that derives from them.}
\begin{defask}
The torus $\mathcal{T}^{2}$ is said \emph{commutative}, and is indicated with $\mathcal{T}^{2}_{\cc}$, if{}f
\begin{equation}
\label{eqn:defTc}
\Omega^{(ab)}_{\mu}\Omega^{(ab)}_{\nu}(\Omega^{(ab)}_{\nu}\Omega^{(ab)}_{\mu})^{-1}=e,
\end{equation}
where $e$ is the neutral element of $SU(N)$, otherwise it is said \emph{non-commutative} and it is indicated with $\mathcal{T}^{2}_{\nc}$.
\end{defask}
We note thus that for $\mathcal{T}^{2}_{\cc}$, to represent $[[\Omega_{\mu},\Omega_{\nu}]]$ invariant for $\hat{x}$, the condition~\eqref{eqn:cOmunuab}, and so $(\mathscr{C}_{3})$, does not exist. 
In other words, renaming $\mathscr{C}$ as $\mathscr{C}_{\spt{\mathcal{T}^{2}_{{\nc}}}}$, we will have that  
\begin{equation}
\label{eq:allarg}
\mathscr{C}_{\spt{\mathcal{T}^{2}_{{\nc}}}}\subset \mathscr{C}_{\spt{\mathcal{T}^{2}_{{\cc}}}} \,\,\, ,
\end{equation}
where $\mathscr{C}_{\spt{\mathcal{T}^{2}_{{\cc}}}}=\{(\Omega_{\mu},\Omega_{\nu})\colon (\Omega_{\mu},\Omega_{\nu})\in\mathscr{P}_{ab}\,\,\mathrm{with}\,\, f_{\mu,i}\,\,\mathrm{and}\,\,f_{\nu,i}$ satisfying the equations $(\mathscr{C}_{1}), (\mathscr{C}_{2})\}$.\\
As concerns the research for an explicit realization of an element of $\mathscr{C}_{\spt{\mathcal{T}^{2}_{{\cc}}}}$ we can state the following 
\begin{conj}
\label{conj:T2cdell}
Given \mbox{$\mathcal{T}^{2}_{\cc}$} with \mbox{$[\hat{\mu},\hat{\nu}]\neq 0$}, 
\begin{equation}
\mathscr{C}_{\spt{\mathcal{T}^{2}_{{\cc}}}}\cap\mathscr{D}=\mathscr{D}^{\ell} 
\end{equation}
with $\mathscr{D}$ and $\mathscr{D}^{\ell}$ as above.
\end{conj}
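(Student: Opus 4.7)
The plan is to split the conjectured equality into the two inclusions $\mathscr{D}^{\ell} \subseteq \mathscr{C}_{\spt{\mathcal{T}^{2}_{{\cc}}}}\cap\mathscr{D}$ and $\mathscr{C}_{\spt{\mathcal{T}^{2}_{{\cc}}}}\cap\mathscr{D} \subseteq \mathscr{D}^{\ell}$, exploiting the reformulation used in the proof of Lemma~\ref{lem:defC}: the Dynkin-series conditions $(\mathscr{C}_1)$ and $(\mathscr{C}_2)$ are equivalent to the group-level factorization identities~\eqref{eqn:Omuab} and~\eqref{eqn:Onuab} for the abelian projections.

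For the inclusion $\mathscr{D}^{\ell} \subseteq \mathscr{C}_{\spt{\mathcal{T}^{2}_{{\cc}}}}\cap\mathscr{D}$ a direct calculation suffices: writing $\alpha_{\mu,i}(x_\mu)=A_{\mu,i}x_\mu$ and $\beta_{\mu,i}(x_\nu)=B_{\mu,i}x_\nu$ for an arbitrary element of $\mathscr{D}^{\ell}$, and using that $\hat\nu$ commutes with itself together with $\alpha_{\mu,i}(0)=0$,
\begin{equation*}
[\Omega^{(ab)}_{\mu}(\hat{x}+\hat{a}_{\nu})]_{ii}= e^{\iu B_{\mu,i}(x_\nu+a_\nu)\hat\nu}\,e^{\iu A_{\mu,i}x_\mu\hat\mu}= e^{\iu B_{\mu,i}a_\nu\hat\nu}\,e^{\iu B_{\mu,i}x_\nu\hat\nu}\,e^{\iu A_{\mu,i}x_\mu\hat\mu}=[\Omega^{(ab)}_{\mu}(\hat{a}_{\nu})]_{ii}\,[\Omega^{(ab)}_{\mu}(\hat{x})]_{ii},
\end{equation*}
which is $(\mathscr{C}_1)$ at the group level; the analogous computation for $\Omega^{(ab)}_{\nu}$ (with the reversed factor order) yields $(\mathscr{C}_2)$. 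Notably, no hypothesis on $[\hat\mu,\hat\nu]$ is invoked — as Remark~\ref{rmk:5ord} already anticipates.

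For the reverse inclusion I would adapt the order-by-order strategy from the proof of Lemma~\ref{lem:D*ba}: expand both sides of $(\mathscr{C}_1)$ in the basis of iterated commutators in $\hat\mu,\hat\nu$ starting from~\eqref{eqn:rapDf1}. Matching coefficients at first order forces $\alpha_{\mu,i}(0)=0$ from the $\hat\mu$-terms and the Cauchy-type functional equation $\beta_{\mu,i}(x_\nu+a_\nu)=\beta_{\mu,i}(x_\nu)+\beta_{\mu,i}(a_\nu)$ from the $\hat\nu$-terms; under the natural continuity assumption implicit in the class $\mathscr{D}$, the latter forces $\beta_{\mu,i}$ to be linear through the origin. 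The symmetric analysis of $(\mathscr{C}_2)$ handles $\alpha_{\nu,i}$ and $\beta_{\nu,i}$.

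The hard part will be closing two gaps rigorously. First, the first-order Cauchy equation at the single shift $a_\nu$ is consistent with $a_\nu$-periodic corrections to $\beta_{\mu,i}$, so linearity follows only if one invokes a regularity hypothesis on the functions parametrizing $\mathscr{D}$ or uses matching at a suitable higher order to rule such corrections out. Second — and more substantial — is verifying that once linearity is imposed, every higher-order CBH coefficient matches on both sides of $(\mathscr{C}_1)$ and $(\mathscr{C}_2)$ without producing any residual condition on $[\hat\mu,\hat\nu]$; Remark~\ref{rmk:5ord} completes this only through fifth order, and direct induction is awkward since independent nested commutators (modulo Jacobi) proliferate with the order. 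The cleanest route is to recycle the closed-form computation of the first inclusion as a structural argument: linearity with $\alpha_{\mu,i}(0)=0$ already implies~\eqref{eqn:Omuab} exactly at the group level, and since $(\mathscr{C}_1)$ is equivalent order-by-order to~\eqref{eqn:Omuab}, the higher-order CBH matching is automatic — so the effective content of direction~2 reduces to the first-order matching.
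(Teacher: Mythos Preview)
First, note that the paper does \emph{not} prove this statement: it is explicitly labelled a conjecture, and Remark~\ref{rmk:wide} says only that it is ``easily proved'' for subsets like $\mathscr{D}_{\spt{\alpha\alpha}}$ while for $\mathscr{D}_{\spt{\beta\alpha}}\cap\mathscr{D}_{\spt{\alpha\beta}}$ with $[\hat\mu,\hat\nu]$ not proportional to $\hat{\mathbbm{1}}$ ``the proof is less obvious''. So there is no paper proof to compare against --- you are attempting to settle an open conjecture.

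Your forward inclusion $\mathscr{D}^{\ell}\subseteq\mathscr{C}_{\spt{\mathcal{T}^{2}_{\cc}}}\cap\mathscr{D}$ is correct, and more: your group-level computation actually proves Conjecture~\ref{conj:nnlin} outright, bypassing the order-by-order CBH verification that the paper carries out only through fifth order (Remark~\ref{rmk:5ord}). The point the paper misses is precisely the one you exploit: because elements of $\mathscr{D}$ come factored as $e^{\iu\beta\hat\nu}e^{\iu\alpha\hat\mu}$, and the shift by $\hat a_\nu$ affects only the $\beta$-factor, the identity~\eqref{eqn:Omuab} follows from $[\hat\nu,\hat\nu]=0$ alone once $\alpha_{\mu,i}(0)=0$ and $\beta_{\mu,i}$ is additive. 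This is a genuine simplification over the paper's approach.

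However, that same group-level argument undermines your reverse inclusion, and in a way you do not fully diagnose. Your own computation shows that~\eqref{eqn:Omuab} holds for \emph{any} $\alpha_{\mu,i}$ with $\alpha_{\mu,i}(0)=0$, linear or not: after cancelling $e^{\iu\alpha_{\mu,i}(x_\mu)\hat\mu}$ from the right, the remaining identity involves only $\beta_{\mu,i}$. Concretely, $\alpha_{\mu,i}(x_\mu)=x_\mu^{3}$ with $\beta_{\mu,i}$ linear satisfies $(\mathscr{C}_1)$ exactly, for arbitrary $[\hat\mu,\hat\nu]$. Symmetrically, $(\mathscr{C}_2)$ leaves $\beta_{\nu,i}$ unconstrained beyond $\beta_{\nu,i}(0)=0$. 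So the gap is not the one you flag (periodic corrections to $\beta_{\mu,i}$, or higher-order CBH matching): your ``recycling'' argument is correct that higher orders impose nothing further, but that cuts the wrong way --- it means $(\mathscr{C}_1),(\mathscr{C}_2)$ alone cannot force linearity of $\alpha_{\mu,i}$ or $\beta_{\nu,i}$. As stated, the reverse inclusion appears to be false, and with it the conjecture; at minimum, the paper's claim in the proof of Lemma~\ref{lem:D*ba} that equality of~\eqref{eqn:fespl} and~\eqref{eqn:fCBH} forces linearity of $\alpha_{\mu,i}$ deserves scrutiny.
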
 
\begin{remark}
\label{rmk:wide}
Such a conjecture is easily proved for the subsets of $\mathscr{D}$ as $\mathscr{D}_{\spt{\alpha\alpha}}$ while for the ones like $\mathscr{D}_{\spt{\beta\alpha}}\cap \mathscr{D}_{\spt{\alpha\beta}}$ with $[\hat{\mu},\hat{\nu}]$ not proportional to  $\hat{\mathbbm{1}}$, the proof is less obvious (see Remark~\ref{rmk:5ord}).\\
Moreover, as it was predictable (cf.\,eq.\,\eqref{eq:allarg}), in the case in which $[\hat{\mu},\hat{\nu}]\propto\hat{\mathbbm{1}}$, the subset of $\mathscr{D}_{\spt{\beta\alpha}}$ that represents $[[\Omega^{(ab)}_{\mu},\Omega^{(ab)}_{\nu}]]$ invariant for $\hat{x}$ has ``widened'' to include, for example, also the set\footnote{With $\sim$ we denote the set-theoretic operation of complementation.} $\tilde{\mathscr{D}}^{\ell}_{\spt{\beta\alpha}}\subset \mathscr{D}^{\ell}_{\spt{\beta\alpha}}\!\sim\!\mathscr{D}^{*}_{\spt{\beta\alpha}}$, with $\mathscr{D}^{\ell}_{\spt{\beta\alpha}}=\mathscr{D}^{\ell}\cap\mathscr{D}_{\spt{\beta\alpha}}$ and $\tilde{\mathscr{D}}^{\ell}_{\spt{\beta\alpha}}=\mathscr{D}^{\ell}\cap\tilde{\mathscr{D}}_{\spt{\beta\alpha}}$ where
\mbox{$\tilde{\mathscr{D}}_{\spt{\beta\alpha}}=\{(\Omega_{\mu},\Omega_{\nu})\colon (\Omega_{\mu},\Omega_{\nu})\in\mathscr{D}_{\spt{\beta\alpha}}\,\, \mathrm{with}\,\, \alpha_{\mu,i},\beta_{\nu,i}= 0\}$}, that before was excluded (see figure~\ref{fig:VennW2}). From this point of view, we can thus reinterpret the arbitrariness we found at the end of section~\ref{sec:twistc} in the choice between~\eqref{eqn:ab1} and~\eqref{eqn:abrev} as a consequence of this ``widening''  due to a transition $\mathcal{T}^{2}_{\nc}\rightarrow\mathcal{T}^{2}_{\cc}$ and ultimately, as we will see later, due to a difference in the value of $\theta$ between $\mathcal{T}^{2}_{\nc}$ and $\mathcal{T}^{2}_{\cc}$.
\end{remark}

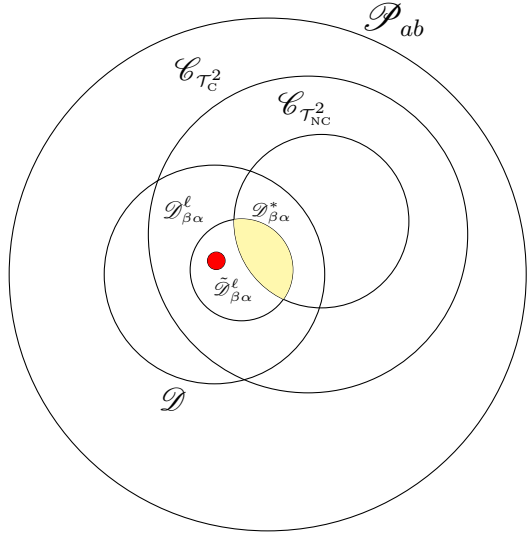
\begin{figure}
\centering
\begin{tikzpicture}[scale=0.5]
    \draw \firstcircle node at (3.3,6.75) {\fontsize{13.5}{12}\selectfont{$\mathscr{P}_{ab}$}};
    \draw \secondcircle node at (-2.5,-3.3)  {\fontsize{12}{12}\selectfont{$\mathscr{D}$}};
    \draw \thirdcircle node at (1,4.4) {\fontsize{12}{12}\selectfont{$\mathscr{C}_{\spt{\mathcal{T}^{2}_{{\nc}}}}$}};
    \draw \fourthcircle node at (-2.2,1.7) {\fontsize{9}{12}\selectfont{$\mathscr{D}^{\ell}_{\spt{\beta\alpha}}$}};
     \draw \fifthcircle node at (-0.9,-.43){\fontsize{6.8}{12}\selectfont{$\tilde{\mathscr{D}}^{\ell}_{\spt{\beta\alpha}}$}};
     \draw \sixthcircle node at (-1.8,5.3){\fontsize{13}{12}\selectfont{$\mathscr{C}_{\spt{\mathcal{T}^{2}_{\cc}}}$}};
       \node[anchor=south] at (.1,1.1) {\fontsize{7}{12}\selectfont{$\mathscr{D}^{*}_{\spt{\beta\alpha}}$}};
        \begin{scope}
      \clip \thirdcircle;
      \fill[yellow!40] \fourthcircle;
     \end{scope}
     \begin{scope}
     \fill[red] \fifthcircle;
    \end{scope}

\end{tikzpicture}
\caption{``Widening'' of $\mathscr{C}_{\spt{\mathcal{T}^{2}_{{\nc}}}}$ in $\mathscr{P}_{ab}$ after the transition $\mathcal{T}^{2}_{\nc}\rightarrow\mathcal{T}^{2}_{\cc}$ in the case \mbox{$[\hat{\mu},\hat{\nu}]\propto\hat{\mathbbm{1}}$}. The $(\Omega_{\mu},\Omega_{\nu})$ proposed in~\citep{Forgacs2005,Lozano:2006xn}, generalizations in a non-commutative context of~\eqref{eqn:ab1}, belong to $\tilde{\mathscr{D}}^{\ell}_{\spt{\beta\alpha}}$: they are not so suitable to represent the twist matrices for $\mathcal{T}^{2}_{\nc}$ but only for $\mathcal{T}^{2}_{\cc}$.}
\label{fig:VennW2}
\end{figure}
\pagebreak 
We explain now the above-mentioned theorem with which we opened this discussion about $\mathcal{T}^{2}_{\cc}$. We find that
\begin{thm} 
\label{teoth}
Given \mbox{$\mathcal{T}^{2}$} with \mbox{$[\hat{\mu},\hat{\nu}]=\iu\theta\hat{\mathbbm{1}}$}, where $\theta\in\R$, $\mathcal{T}^{2}$ is commutative if{}f
\begin{equation}
\label{eqn:th1e}
\exp(\iu \theta\hat{\mathbbm{1}}I)=e
\end{equation}
where $I$ is the N-dimensional identity matrix.
\end{thm}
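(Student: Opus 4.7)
The plan is to evaluate the product $\Omega^{(ab)}_{\mu}\Omega^{(ab)}_{\nu}(\Omega^{(ab)}_{\nu}\Omega^{(ab)}_{\mu})^{-1}$ entrywise via the Campbell-Baker-Hausdorff-Dynkin formula, use the centrality of $\hat{\mathbbm{1}}$ to truncate the resulting series, and then read off the equivalence with the condition $\exp(\iu\theta\hat{\mathbbm{1}}I)=e$. By Definition~\ref{def:Pab2} the $\Omega^{(ab)}_{\mu}$'s are diagonal, hence so is the product in question; its diagonal entries are of the form $e^{\iu f_{\mu,i}}e^{\iu f_{\nu,i}}e^{-\iu f_{\mu,i}}e^{-\iu f_{\nu,i}}$.

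First I would observe that the hypothesis $[\hat{\mu},\hat{\nu}]=\iu\theta\hat{\mathbbm{1}}$, together with the centrality of $\hat{\mathbbm{1}}$ in $M_{O}$ (i.e.\,$[\hat{a},\hat{\mathbbm{1}}]=0$ for every $\hat{a}\in N_{O}$), implies that every iterated commutator of order $\geq 2$ in the letters $\hat{\mu},\hat{\nu}$ vanishes. Consequently the BCH expansion terminates at the bilinear correction, and the only surviving correction $[\iu f_{\mu,i},\iu f_{\nu,i}]/2$ commutes with everything. A short computation, completely analogous to~\eqref{eqn:[fmu,fnu]}, then identifies $[f_{\mu,i},f_{\nu,i}]$ with a scalar multiple $c_{i}(x_{\mu},x_{\nu})\,\iu\theta\hat{\mathbbm{1}}$, so that each diagonal entry of the product reduces to $\exp\bigl(-c_{i}(x_{\mu},x_{\nu})\,\iu\theta\hat{\mathbbm{1}}\bigr)$.

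For the $(\Leftarrow)$ direction I would argue that if $\exp(\iu\theta\hat{\mathbbm{1}}I)=e$, equivalently $e^{\iu\theta\hat{\mathbbm{1}}}=1$, then $e^{k\,\iu\theta\hat{\mathbbm{1}}}=1$ for every integer $k$, and in particular the diagonal entries computed above all collapse to $1$, so the whole matrix equals $e$ and $\mathcal{T}^{2}$ is commutative in the sense of~\eqref{eqn:defTc}. For the $(\Rightarrow)$ direction I would specialize to a pair $(\Omega^{(ab)}_{\mu},\Omega^{(ab)}_{\nu})\in\mathscr{D}$ producing a unit coefficient $c_{i}$ on at least one diagonal entry, which is possible by choosing suitable linear $\alpha$'s and $\beta$'s compatible with the consistency equation~\eqref{eqn:rev'}; commutativity of $\mathcal{T}^{2}$ then forces $e^{\iu\theta\hat{\mathbbm{1}}}=1$, i.e.\,$\exp(\iu\theta\hat{\mathbbm{1}}I)=e$.

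The main obstacle I anticipate is the bookkeeping of the BCH truncation for $\Omega^{(ab)}_{\mu}$ in the form dictated by Definition~\ref{def:D}, where $[\Omega^{(ab)}_{\mu}]_{ii}$ is itself a product $e^{\iu\beta_{\mu,i}\hat{\nu}}e^{\iu\alpha_{\mu,i}\hat{\mu}}$ rather than a single exponential: one needs to regroup these factors carefully before extracting the coefficient $c_{i}$. The centrality of $[\hat{\mu},\hat{\nu}]$ keeps the relevant nested brackets finite in number and guarantees the final expression is a clean exponential in $\iu\theta\hat{\mathbbm{1}}$, so once the coefficient is isolated the equivalence with $\exp(\iu\theta\hat{\mathbbm{1}}I)=e$ is immediate.
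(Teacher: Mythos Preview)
Your approach is essentially the paper's: compute the group commutator $\Omega^{(ab)}_{\mu}\Omega^{(ab)}_{\nu}(\Omega^{(ab)}_{\nu}\Omega^{(ab)}_{\mu})^{-1}$ entrywise via BCH, use the centrality of $[\hat{\mu},\hat{\nu}]=\iu\theta\hat{\mathbbm{1}}$ to truncate, obtain on each diagonal entry a power of $\exp(\iu\theta\hat{\mathbbm{1}})$, and read off the criterion. The paper carries this out directly for the class $\tilde{\mathscr{D}}^{\ell}_{\spt{\beta\alpha}}$ (so your ``specialize to unit coefficient'' move for the $\Rightarrow$ direction is already built into its choice), arriving at the diagonal entry $\bigl(\exp(\iu\theta\hat{\mathbbm{1}})\bigr)^{\alpha_{\nu,i}(x)\beta_{\mu,i}(y)}$ and asserting the equivalence with~\eqref{eqn:th1e}.

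There is a gap in your $\Leftarrow$ step. You pass from $e^{\iu\theta\hat{\mathbbm{1}}}=1$ to $e^{k\,\iu\theta\hat{\mathbbm{1}}}=1$ for every \emph{integer} $k$ and then write ``in particular the diagonal entries computed above all collapse to $1$''. But your coefficient $c_{i}(x_{\mu},x_{\nu})$ is a real (bilinear) function of the normal coordinates, not an integer; in the paper's explicit case it is $\alpha_{\nu,i}(x)\beta_{\mu,i}(y)$, which varies continuously. The integer-power fact therefore does not yield the conclusion. The paper's device is to regard the entry as a \emph{power of the fixed group element} $\exp(\iu\theta\hat{\mathbbm{1}})$: once that element equals the neutral $e$, every power of it is $e$, and~\eqref{eqn:defTc} holds identically in the coordinates. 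Replacing your integer-power sentence with this observation repairs the argument and makes it coincide with the paper's.
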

\begin{proof}
The quantity {\small{$\Omega^{(ab)}_{\mu}\Omega^{(ab)}_{\nu}(\Omega^{(ab)}_{\nu}\Omega^{(ab)}_{\mu})^{-1}$}}, calculated for example according to $(\Omega_{\mu},\Omega_{\nu})$ in $\tilde{\mathscr{D}}^{\ell}_{\spt{\beta\alpha}}$ as above, is equal to:
\begin{equation}
\label{eqn:thDdba}
\bigl(\exp(\iu \theta\hat{\mathbbm{1}})\bigr)^{\alpha_{\nu,i}(x)\beta_{\mu,i}(y)}\quad \mathrm{for\,\,each}\,\,i=1, \ldots, N\,.\notag
\end{equation} 
As we can see, such quantity is equal to $[e]_{ii}$ if{}f~\eqref{eqn:th1e} holds.
\end{proof}
From this theorem it follows that  
\begin{corth}
\label{cor:thkerom}
In the hypotheses above,
\begin{center}
$\mathcal{T}^{2}$ is commutative if{}f  $\theta\in \ker(\pi_{\spt{C}})$ 
\end{center}
with $\pi_{\spt{C}}\in \mathscr{H}=\{\pi_{\spt{C}}\colon \pi_{\spt{C}}(t):=\exp(\iu t\hat{\mathbbm{1}}I)$ homomorphism where $t\in\R/C$ and\linebreak[4] $\pi_{\spt{C}}(t)\in H$ with $C\subset\R$, such that $\pi_{\spt{C}}(c)=e\, ,\,\,\forall c\in C\} $ .
\end{corth}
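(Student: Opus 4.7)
The plan is to reduce the statement directly to Theorem~\ref{teoth}: under the stated hypothesis $[\hat{\mu},\hat{\nu}]=\iu\theta\hat{\mathbbm{1}}$, that theorem already characterizes commutativity of $\mathcal{T}^{2}$ by the single equation $\exp(\iu\theta\hat{\mathbbm{1}} I)=e$, so all that remains is to translate this equation into membership in the kernel of some $\pi_{\spt{C}}\in\mathscr{H}$.

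First, I would verify that the assignment $t\mapsto \exp(\iu t\hat{\mathbbm{1}} I)$ is a well-defined group homomorphism $\R\to H$. This is where the centrality of $\hat{\mathbbm{1}}$ (guaranteed by the hypothesis $[\hat{a},\hat{\mathbbm{1}}]=0$ for every $\hat{a}\in N_{O}$, used already in Lemma~\ref{lem:D*ba}) comes in: exponentials of commuting elements factor, giving $\exp(\iu(t+s)\hat{\mathbbm{1}} I)=\exp(\iu t\hat{\mathbbm{1}} I)\exp(\iu s\hat{\mathbbm{1}} I)$. Its kernel is an additive subgroup of $\R$, and for any $C\subset\R$ on which the map vanishes identically, it descends to a homomorphism $\pi_{\spt{C}}\colon\R/C\to H$; by construction each such $\pi_{\spt{C}}$ belongs to $\mathscr{H}$.

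Then the two implications are immediate. For $(\Rightarrow)$, if $\mathcal{T}^{2}$ is commutative Theorem~\ref{teoth} yields $\exp(\iu\theta\hat{\mathbbm{1}} I)=e$; choosing any subgroup $C$ contained in the kernel of the underlying $\R$-homomorphism and containing $\theta$ (for instance $C=\theta\Z$) gives a $\pi_{\spt{C}}\in\mathscr{H}$ with $\pi_{\spt{C}}(\theta)=e$, so $\theta\in\ker(\pi_{\spt{C}})$. For $(\Leftarrow)$, if $\theta\in\ker(\pi_{\spt{C}})$ for some $\pi_{\spt{C}}\in\mathscr{H}$, unfolding the definition of $\pi_{\spt{C}}$ yields $\exp(\iu\theta\hat{\mathbbm{1}} I)=\pi_{\spt{C}}(\theta)=e$, and Theorem~\ref{teoth} forces $\mathcal{T}^{2}$ to be commutative.

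The only genuinely delicate point is that $\mathscr{H}$ is defined as a family of homomorphisms on quotients $\R/C$ rather than on $\R$ itself, so I would be careful to verify that the descent to $\R/C$ is compatible with the condition $\pi_{\spt{C}}(c)=e$ for every $c\in C$ imposed in the definition; this is what allows the freedom in the choice of $C$ in the forward direction and makes the corollary a clean rephrasing of Theorem~\ref{teoth}. No additional computation beyond what is contained in that theorem is required.
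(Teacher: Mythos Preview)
Your proposal is correct and follows essentially the same strategy as the paper: reduce to Theorem~\ref{teoth} and reinterpret the condition $\exp(\iu\theta\hat{\mathbbm{1}} I)=e$ as membership in the kernel of a homomorphism in $\mathscr{H}$.

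The one substantive difference is that the paper's proof is not purely abstract: it fixes the concrete representation $e^{\iu\theta\hat{\mathbbm{1}}}=e^{\iu\pi\theta}$ and reads off from Euler's formula that the relevant $\pi_{\spt{C}}$ is $\pi_{\spt{[0]}}$ with $[0]\in\R/2\Z$, i.e.\ the kernel is explicitly $2\Z$. Your argument instead handles the descent to $\R/C$ in full generality and leaves $C$ unspecified (taking e.g.\ $C=\theta\Z$ in the forward direction). Your version is tidier as a proof of the corollary as stated, while the paper's computation buys the explicit identification $\theta_{\cc}\in[0]\setminus\{0\}$ that is used repeatedly downstream (Corollary~\ref{cor:rth}, the discussion of $r_{\spt{\theta}}$, etc.). If you want your write-up to serve the same purpose in the paper, you should append that explicit computation.
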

\begin{proof}
Setting $e^{\iu \theta\hat{\mathbbm{1}}}=e^{\iu\pi\theta}$ and considering that $e^{\iu\pi\theta}=\cos(\pi\theta)+\iu \sin(\pi\theta)$, from~\eqref{eqn:th1e} we will have that $\theta\in\ker(\pi_{\spt{[0]}})$ with $[0]\in\R/2\Z$.
\end{proof}
\noindent We can see explicitly how $\mathcal{T}^{2}_{\cc}$ is not necessarily characterized by a null value of $\theta$ !\pagebreak \\

As concerns the~\eqref{eqn:rev'} we can say that 
\begin{thm} 
\label{thm:rev3Tc}
Given \mbox{$\mathcal{T}^{2}_{\cc}$}, if we represent $[[\Omega_{\mu},\Omega_{\nu}]]$ according to \mbox{$(\Omega^{(ab)}_{\mu},\Omega^{(ab)}_{\nu})\in\mathscr{D}^{\ell}_{\beta\alpha}$},
{\fontsize{11}{18}\selectfont
\[
Z_{\mu\nu}=[[\Omega_{\mu},\Omega_{\nu}]] \Leftrightarrow \begin{dcases*}
                                                                             \hspace{-0,35cm}&{\small{\emph{(}$\mathscr{D}_{g,\cc_{1}}$\emph{)}}}\qquad$(\Omega^{(ab)}_{\mu},\Omega^{(ab)}_{\nu})\in\mathscr{D}_{g}$\\
\hspace{-0,35cm}&{\small{\emph{(}$\mathscr{D}_{g,\cc_{2}}$\emph{)}}}\qquad$[\hat{\mu},\hat{\nu}]\propto\hat{\mathbbm{1}}$
                                                                       \end{dcases*}
\]}
\end{thm}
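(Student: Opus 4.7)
The plan is to mirror the structure of the proof of Theorem~\ref{thm:rev3}, with the role of $\mathscr{D}^{*}_{\spt{\beta\alpha}}$ replaced by the larger class $\mathscr{D}^{\ell}_{\spt{\beta\alpha}}$ that becomes available after the ``widening'' \mbox{$\mathscr{C}_{\spt{\mathcal{T}^{2}_{\nc}}}\subset\mathscr{C}_{\spt{\mathcal{T}^{2}_{\cc}}}$} discussed in Remark~\ref{rmk:wide}. First, setting $Z_{\mu\nu}\equiv g^{n}$ and writing $\Omega_{\mu}\equiv\tilde{\Omega}^{n}_{\mu}$, $\Omega_{\nu}\equiv\tilde{\Omega}^{n}_{\nu}$, I would reduce the consistency equation~\eqref{eqn:rev'} to showing
\begin{equation*}
g^{n}=[[\tilde{\Omega}^{n}_{\mu},\tilde{\Omega}^{n}_{\nu}]]\quad\forall\, n\in\Z(\mathrm{mod}\, N)
\end{equation*}
and then proceed by induction on $n$, exactly as in the proof of Theorem~\ref{thm:rev3}.

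For the base step $n=1$, I would invoke~\eqref{eqn:OeqOab} to reduce to $g=[[\tilde{\Omega}^{(ab)}_{\mu},\tilde{\Omega}^{(ab)}_{\nu}]]$, and then expand $[[\tilde{\Omega}^{(ab)}_{\mu},\tilde{\Omega}^{(ab)}_{\nu}]]$ using the explicit form of elements of $\mathscr{D}^{\ell}_{\spt{\beta\alpha}}$, in which the functions $\alpha_{\mu,i},\beta_{\mu,i},\alpha_{\nu,i},\beta_{\nu,i}$ are linear. A C-B-H-D expansion parallel to~\eqref{eqn:[fmu,fnu]} (but now \emph{without} imposing $(\mathscr{C}_{3})$, since on $\mathcal{T}^{2}_{\cc}$ the condition~\eqref{eqn:cOmunuab} is absent by definition~\eqref{eqn:defTc}) would give a combination of the boundary values $\alpha_{\mu,i}(a_{\mu}),\beta_{\mu,i}(a_{\nu}),\alpha_{\nu,i}(a_{\mu}),\beta_{\nu,i}(a_{\nu})$ multiplying nested commutators in $\hat{\mu},\hat{\nu}$. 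Evaluating first at $\hat{x}=\hat{0}$, the requirement that the resulting matrix equals the central element $g\in Z_{N}$ would force $\tilde{\Omega}^{(ab)}_{\mu}(\hat{a}_{\nu})=g^{m+1}$ and $\tilde{\Omega}^{(ab)}_{\nu}(\hat{a}_{\mu})=g^{m}$, giving exactly condition $(\mathscr{D}_{g,\cc_{1}})$, i.e.\ $(\Omega^{(ab)}_{\mu},\Omega^{(ab)}_{\nu})\in\mathscr{D}_{g}$.

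Next, since $g$ is a scalar multiple of the identity matrix $I$, every surviving nested commutator in the expansion must itself be a multiple of $\hat{\mathbbm{1}}$ in order to land in the center. Because $\hat{\mathbbm{1}}$ is characterized (see Lemma~\ref{lem:D*ba}) by the property $[\hat{a},\hat{\mathbbm{1}}]=0$ for all $\hat{a}\in N_{O}$, the only way for the infinite BCH tower of commutators $[\hat{\mu},[\hat{\mu},\hat{\nu}]], [\hat{\nu},[\hat{\nu},\hat{\mu}]],\dots$ to telescope into a central element is that $[\hat{\mu},\hat{\nu}]\propto\hat{\mathbbm{1}}$, i.e.\ condition $(\mathscr{D}_{g,\cc_{2}})$. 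Conversely, substituting $(\mathscr{D}_{g,\cc_{1}})$ and $(\mathscr{D}_{g,\cc_{2}})$ back into the expansion makes every higher-order commutator vanish (since $[\hat{a},\hat{\mathbbm{1}}]=0$) and leaves the first-order term, which by $(\mathscr{D}_{g,\cc_{1}})$ reproduces $g$. The inductive step $n\rightarrow n+1$ is then immediate: using the defining property~\eqref{eqn:defTc} of $\mathcal{T}^{2}_{\cc}$ to freely rearrange the abelian projections, one multiplies the identity at level $n$ by the base case to obtain the identity at level $n+1$.

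The main obstacle I expect is the third paragraph: controlling the \emph{infinite} C-B-H-D series for a generic element of $\mathscr{D}^{\ell}_{\spt{\beta\alpha}}$ (in particular, elements of $\tilde{\mathscr{D}}^{\ell}_{\spt{\beta\alpha}}\!\sim\!\mathscr{D}^{*}_{\spt{\beta\alpha}}$, which are genuinely new in the passage $\mathcal{T}^{2}_{\nc}\rightarrow\mathcal{T}^{2}_{\cc}$) and rigorously deducing that \emph{no} configuration of linear data can make all nested commutators of order $\geq 2$ cancel unless $[\hat{\mu},\hat{\nu}]\propto\hat{\mathbbm{1}}$. This is essentially the content of Conjecture~\ref{conj:T2cdell}, and as Remark~\ref{rmk:5ord} already signals, the calculation has only been verified explicitly up to a finite (fifth) order; granting that conjecture — or at least its consequence for the specific subclass $\mathscr{D}^{\ell}_{\spt{\beta\alpha}}$ used here — makes the argument complete, while otherwise one is forced to proceed case-by-case on the orders of the expansion.
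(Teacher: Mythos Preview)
Your overall inductive skeleton following Theorem~\ref{thm:rev3} is the right idea, and the identification of $(\mathscr{D}_{g,\cc_1})$ is fine. However, your derivation of $(\mathscr{D}_{g,\cc_2})$ diverges from the paper in two significant ways.

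First, you locate the origin of $[\hat\mu,\hat\nu]\propto\hat{\mathbbm{1}}$ in the base case, arguing that the nested-commutator tower must ``telescope into a central element'' of $SU(N)$. This conflates two structures: the nested commutators $[\hat\mu,\hat\nu],[\hat\mu,[\hat\mu,\hat\nu]],\dots$ live in $M_{O}$ (the tangent-space algebra of the torus), not in $\mathfrak{su}(N)$; the abelian projections are already diagonal matrices, so centrality in $SU(N)$ is not the operative constraint. In the paper, $(\mathscr{D}_{g,\cc_2})$ arises instead from the \emph{cyclicity} requirement $gg=g^{2}$ --- exactly where $(\mathscr{D}_{g,2})$ appeared in the proof of Theorem~\ref{thm:rev3}. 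The explicit computation is~\eqref{eqn:eTCDab}:
\[
[e]_{ii}=\exp\!\Bigl(\beta_{\mu,i}(a_{\nu})\alpha_{\nu,i}(a_{\mu})[\hat\mu,\hat\nu]-\tfrac{\iu}{2}\beta_{\mu,i}(a_{\nu})\alpha_{\nu,i}(a_{\mu})\bigl(\alpha_{\nu,i}(a_{\mu})[\hat\mu,[\hat\mu,\hat\nu]]+\beta_{\mu,i}(a_{\nu})[\hat\nu,[\hat\nu,\hat\mu]]\bigr)+\cdots\Bigr),
\]
and it is the linear independence, \emph{in $M_{O}$}, of the nested commutators in this exponent that forces $[\hat\mu,\hat\nu]\propto\hat{\mathbbm{1}}$.

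Second, and crucially, your argument never invokes Theorem~\ref{teoth}. Once $[\hat\mu,\hat\nu]=\iu\theta\hat{\mathbbm{1}}$, all higher nested commutators vanish, but the surviving first-order term $\exp\bigl(\beta_{\mu,i}(a_{\nu})\alpha_{\nu,i}(a_{\mu})\,\iu\theta\hat{\mathbbm{1}}\bigr)$ still has to equal $e$. On $\mathcal{T}^{2}_{\cc}$ this is \emph{automatic} precisely by Theorem~\ref{teoth} (i.e.\ $\exp(\iu\theta\hat{\mathbbm{1}}I)=e$), and this is exactly what distinguishes the present situation from Theorem~\ref{thm:rev3}, where the analogous condition had to be imposed separately as $(\mathscr{D}_{g,2})$. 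Because the paper's route goes through~\eqref{eqn:eTCDab} and Theorem~\ref{teoth} directly, the obstacle you flag --- reliance on Conjecture~\ref{conj:T2cdell} --- does not actually arise here; the conjecture concerns the $\hat{x}$-invariance conditions $(\mathscr{C}_{1}),(\mathscr{C}_{2})$, not the cyclicity computation that delivers $(\mathscr{D}_{g,\cc_{2}})$.
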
 
\begin{proof}
We proceed in the same way to the demonstration of the Theorem~\ref{thm:rev3} with the dif{}ference that here $gg=g^{2}$ if{}f 
\begin{equation}
\label{eqn:eTCDab}
\hspace{-1,2cm}[e]_{ii}=\exp\Bigl(\beta_{\mu,i}(a_{\nu})\alpha_{\nu,i}(a_{\mu})[\hat{\mu},\hat{\nu}]-\frac{\iu}{2}\beta_{\mu,i}(a_{\nu})\alpha_{\nu,i}(a_{\mu,i})\bigl(\alpha_{\nu,i}(a_{\mu})[\hat{\mu},[\hat{\mu},\hat{\nu}]]+\beta_{\mu,i}(a_{\nu})[\hat{\nu},[\hat{\nu},\hat{\mu}]]\bigr)+\ldots\Bigr)
\end{equation}
that is, considering the linear independence of the various terms in the exponent of l.h.s.\,of~\eqref{eqn:eTCDab} and the Theorem~\ref{teoth}, if{}f the {\small{($\mathscr{D}_{g,\cc_{2}}$)}} holds.
\end{proof}

On the other hand, 
\begin{thm} 
\label{thm:revn1Tc}
Given \mbox{$\mathcal{T}^{2}_{\cc}$} with $[\hat{\mu},\hat{\nu}]$ not proportional to $\hat{\mathbbm{1}}$, if $[[\Omega_{\mu},\Omega_{\nu}]]$ is represented according to $(\Omega^{(ab)}_{\mu},\Omega^{(ab)}_{\nu})\in\mathscr{D}^{\ell}_{\spt{\alpha\alpha}}$,
\[
Z_{\mu\nu}=[[\Omega_{\mu},\Omega_{\nu}]] \Leftrightarrow 
                                                                          (\Omega^{(ab)}_{\mu},\Omega^{(ab)}_{\nu})\in\mathscr{D}_{g} \,\, .
\]
\end{thm}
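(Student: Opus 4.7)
The plan is to adapt the proof of Theorem~\ref{thm:rev3Tc} to the present setting. Writing $Z_{\mu\nu}\equiv g^{n}$ and setting $\Omega_{\mu}\equiv \tilde{\Omega}^{n}_{\mu}$, $\Omega_{\nu}\equiv \tilde{\Omega}^{n}_{\nu}$, I would recast the consistency equation~\eqref{eqn:rev'} as the family of identities
\begin{equation*}
g^{n}=[[\tilde{\Omega}^{n}_{\mu},\tilde{\Omega}^{n}_{\nu}]]\qquad \forall\,n\in\Z(\mathrm{mod}\,N),
\end{equation*}
and establish this by induction on $n$, exactly as in~\eqref{eqn:indn}.

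The key structural observation is that for $(\Omega_{\mu},\Omega_{\nu})\in\mathscr{D}^{\ell}_{\spt{\alpha\alpha}}$ the vanishing of $\alpha_{\mu,i}$ and $\alpha_{\nu,i}$ forces both matrices to be diagonal with all exponents living in the single direction $\hat{\nu}$. Consequently all four factors appearing in $[[\Omega_{\mu},\Omega_{\nu}]]$ commute with one another \emph{regardless of} the value of $[\hat{\mu},\hat{\nu}]$. This is precisely why the extra condition $(\mathscr{D}_{g,\cc_{2}})$, which in Theorem~\ref{thm:rev3Tc} had imposed $[\hat{\mu},\hat{\nu}]\propto\hat{\mathbbm{1}}$, will \emph{not} arise here: the nested Lie brackets generated by applying the (C-B-H-D) expansion collapse trivially since everything sits in the commuting sub-algebra spanned by $\hat{\nu}$.

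For the base case $n=1$ I would exploit this commutativity to collapse $[[\tilde{\Omega}^{(ab)}_{\mu},\tilde{\Omega}^{(ab)}_{\nu}]]$, after using the linearity of the $\beta$'s (so that $\tilde{\Omega}^{(ab)}(\hat{0})=I$), to the product $\tilde{\Omega}^{(ab)}_{\mu}(\hat{a}_{\nu})\,(\tilde{\Omega}^{(ab)}_{\nu}(\hat{a}_{\mu}))^{-1}$. Demanding this to equal $g$ is then equivalent to the existence of some $m\in\Z(\mathrm{mod}\,N)$ with $\tilde{\Omega}^{(ab)}_{\mu}(\hat{a}_{\nu})=g^{m+1}$ and $\tilde{\Omega}^{(ab)}_{\nu}(\hat{a}_{\mu})=g^{m}$, which is precisely the defining property of $\mathscr{D}_{g}$. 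The inductive step is then immediate: because every factor commutes, $[[\tilde{\Omega}^{n+1}_{\mu},\tilde{\Omega}^{n+1}_{\nu}]]$ factorises as $[[\tilde{\Omega}_{\mu},\tilde{\Omega}_{\nu}]]\cdot[[\tilde{\Omega}^{n}_{\mu},\tilde{\Omega}^{n}_{\nu}]]=g\cdot g^{n}=g^{n+1}$.

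The main (essentially bookkeeping) obstacle will be checking that the diagonal $\hat{\nu}$-exponentials genuinely reproduce a $Z_{N}$ group law at every order in $n$; the decisive simplification over Theorem~\ref{thm:rev3Tc} is that no nested brackets such as $[\hat{\mu},[\hat{\mu},\hat{\nu}]]$ or $[\hat{\nu},[\hat{\nu},\hat{\mu}]]$ are ever produced, so no further restriction on the Lie-algebraic structure of the tangent space is required and the conclusion $(\Omega^{(ab)}_{\mu},\Omega^{(ab)}_{\nu})\in\mathscr{D}_{g}$ stands alone, without any companion condition on $[\hat{\mu},\hat{\nu}]$.
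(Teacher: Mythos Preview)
Your proposal is correct and follows essentially the same approach as the paper. The paper's own proof simply states that the demonstration is similar to that of Theorem~\ref{thm:revnn1} (rather than Theorem~\ref{thm:rev3Tc}, which you cite), but both routes unwind to the same induction scheme of Theorem~\ref{thm:rev3}, and your identification of the key point---that in $\mathscr{D}^{\ell}_{\spt{\alpha\alpha}}$ every exponent lies along $\hat{\nu}$ alone, so all factors commute and no constraint on $[\hat{\mu},\hat{\nu}]$ is generated---is exactly the mechanism the paper invokes (``being by hypothesis that $\alpha_{\nu}=0$, no condition is present on $[\hat{\mu},\hat{\nu}]$'').
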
 
\begin{proof}
The demonstration is similar to the one of the Theorem~\ref{thm:revnn1}. 
\end{proof}

This approach to the non-commutativity has allowed, we might say, a more systematic classification of the principal fiber bundles \mbox{$\mathcal{B}=\mathcal{T}^{2}\times SU(N)/Z_{N}\colon$} imposing the invariance under translations $\hat{x}$ of the r.h.s.\,\,of the~\eqref{eqn:rev'} (\emph{consistency equation}), both for $\mathcal{T}^{2}_{\nc}$ and for $\mathcal{T}^{2}_{\cc}$, a representation of the twist matrices has been given for generic $N$ and for $[\hat{\mu},\hat{\nu}]$ proportional or not to $\hat{\mathbbm{1}}$.\\
\indent The most remarkable result is that in case $[\hat{\mu},\hat{\nu}]\propto\hat{\mathbbm{1}}$, the~\eqref{eqn:ab1}, resumed by~\citep{Forgacs2005,Lozano:2006xn} in the non-commutative context, is not suitable to represent the twist matrices for $\mathcal{T}^{2}_{\nc}$: it belongs to $\mathscr{D}^{\ell}_{\spt{\beta\alpha}}\!\sim\!\mathscr{D}^{*}_{\spt{\beta\alpha}}$ (in fact, though it is an element of $\mathscr{D}^{\ell}_{\spt{\beta\alpha}}$, it does not satisfy the ($\mathscr{D}^{*}_{{\spt{\beta\alpha}},1}$)).
As we have seen, $\mathscr{D}^{*}_{\spt{\beta\alpha}}$ is the set of $(\Omega^{(ab)}_{\mu},\Omega^{(ab)}_{\nu})$ (generalizations of~\eqref{eqn:abrev}) whose elements are appropriate to represent the twist matrices for $\mathcal{T}^{2}_{\nc}\colon$ for $N>2$ and $[\hat{\mu},\hat{\nu}]\propto\hat{\mathbbm{1}}$, they represent {\small{$[[\Omega_{\mu},\Omega_{\nu}]]$}} invariant for $\hat{x}$ (see Corollary~\ref{cor:dimH2} and Lemma~\ref{lem:CcapD=D*ba}). The~\eqref{eqn:ab1} has resulted suitable for $\mathcal{T}^{2}_{\cc}$ (it belongs in fact to $\tilde{\mathscr{D}}^{\ell}_{\spt{\beta\alpha}}$, see Remark~\ref{rmk:wide} and figure~\ref{fig:VennW2}).\\
Moreover, we have understood that the position $[\hat{\mu},\hat{\nu}]\propto\hat{\mathbbm{1}}$ that in~\citep{Forgacs2005,Lozano:2006xn} is presented as arbitrary, may not be considered such \emph{a priori}: from the Lemma~\ref{lem:D*ba} we can see in fact how such position is necessary if, representig the twist matrices as elements of $\mathscr{D}_{\spt{\beta\alpha}}$, we want to preserve the invariance under \mbox{translations on $\mathcal{T}^{2}_{\nc}$.}\\
As concerns the cases in which $[\hat{\mu},\hat{\nu}]$ is not proportional to $\hat{\mathbbm{1}}$ or $N=2$, the $(\Omega_{\mu},\Omega_{\nu})$ to choose are those of $\mathscr{D}^{\ell}_{\spt{\alpha\alpha}}$: they can be useful when \emph{ab initio} $[\hat{\mu},\hat{\nu}]$ is set non proportional to $\hat{\mathbbm{1}}$ (see ref.\,\citep{Pachol:2013}).\pagebreak\\
\indent We have moved, then, to the study of~\eqref{eqn:rev'} as a whole.
We have understood that, for $\mathcal{T}^{2}_{\nc}$, the $(\Omega_{\mu},\Omega_{\nu})\in\mathscr{D}^{*}_{\spt{\beta\alpha}}$ that represent the twist matrices are more precisely the ones that belong also to $\mathscr{D}_{g}$ and that \mbox{$\exp(\iu \theta\hat{\mathbbm{1}}I)$} has a specific relation with $e$ (see Theorem~\ref{thm:rev3} and ($\mathscr{D}_{g,2}$)): but only in the case in which $N>2$ and $[\hat{\mu},\hat{\nu}]\propto\hat{\mathbbm{1}}$. In the other cases no relation has been found between \mbox{$\exp( [\hat{\mu},\hat{\nu}] I)$} and $e$.\\ 
\indent The~\eqref{eqn:indn} has led then to revisit the \emph{consistency equation}. So a different setting (cf.\;eqs.\,\eqref{eqn:bordH} and\,\,\eqref{eqn:bordUA}) has been laid out of typical TBC compared to that of refs.~\citep{Arroyo98,Arroyo04,Gonzalez07,Forgacs2005,Lozano:2006xn}.\\
\indent Finally, \eqref{eqn:th1e} and, more specifically the Corollary~\ref{cor:thkerom}, has proved that, differently from what~\citep{Forgacs2005,Lozano:2006xn} claim, \emph{considering $\hat{\mu}$ and $\hat{\nu}$ not commutative to each other does not necessarily mean considering $\mathcal{T}^{2}$ non-commutative}.
In other words, we do not get $\mathcal{T}^{2}_{\cc}$ from $\mathcal{T}^{2}_{\nc}$ setting $\theta$ to zero!\\
In this regard, differently from $\mathcal{T}^{2}_{\nc}$, we have found that \mbox{$\exp(\iu \theta\hat{\mathbbm{1}}I)=e$} while as for $\mathcal{T}^{2}_{\nc}$, we have learnt that the position $[\hat{\mu},\hat{\nu}]\propto\hat{\mathbbm{1}}$ may not be considered \emph{a priori} arbitrary: but this time it has been necessary to preserve the property of cyclicity of $Z_{N}$ and not the invariance under translations on $\mathcal{T}^{2}_{\cc}$ (see Theorem~\ref{thm:rev3Tc} and eq.\,\eqref{eqn:eTCDab}). 


\section{The Ginzburg-Landau lagrangian density}
\label{sec:lagrU1}

\noindent In this section we will study the typical lagrangian density of the \emph{dual superconductivity model} on the fiber bundle $\mathcal{B}_{\spt{U(1)}}=\mathcal{T}^{2}\times U(1)$ in non-commutative context according to the approach set out in section~\ref{sec:twistnc}.\\
\indent The quantity $r_{\spt{\theta}}$, that is the ratio between $\theta$ related to $\mathcal{T}^{2}_{\cc}$ and the one related to $\mathcal{T}^{2}_{\nc}$,
will show to be important. It will be considered belonging to $\Q\0$ and demonstrated to be bound to the order of a cyclic subgroup $\tau_{r_{\spt{\theta}}}$ of $U(1)$. \\
\indent The section will be divided into two paragraphs.\\
\indent  In \S\,\ref{ssec:TBCphi1T2nc}, taking advantage of solutions of TBC for the fields $\phi$ and $A_{\mu}$, we will face the question of the domain of the density of energy $\mathcal{E}$ and we will evaluate $E$ at the point of Bogomolny \mbox{(i.e.\,$\lambda=g^{2}/2$)} for minimum-energy configurations rewriting it \textit{à la} Bogomolny~\citep{Bogomolny1976} and without having to resolve, as usual, any system of equations. In this regard, we will find that $E$ depends on $r_{\spt{\theta}}$ and on $(a_{1}, a_{2})$ so that \mbox{$E\rightarrow+\infty$} for \mbox{$a_{1},a_{2} \rightarrow+\infty$}: the fact that it does not depend on $n$ will \mbox{be seen later in \S\,\ref{ssec:FandM(th)}.}\\
\indent In \S\,\ref{ssec:FandM(th)}, we will examine the influence of $\theta$ on the other observables of the model. No rescaling of $g$ by $\theta$ will be predicted and we will show the relations between $r_{\spt{\theta}}$ with $\mathcal{F}$ and the masses of the particle spectrum obtained after the $U(1)$ symmetry breaking: in particular we will see how despite the symmetry breaking we will not get in general any massless Goldstone boson.\pagebreak\\  
We will have that the quantization of $q_{e}$ will depend exclusively on $r_{\spt{\theta}}$ and not as usual on $n$. We will show moreover how $r_{\spt{\theta}}$ can vary in time and then how the abovementioned masses can vary in time in a quantized way.\\

We will consider the case $[\hat{\mu},\hat{\nu}]=\iu\theta\hat{\mathbbm{1}}$, and we will represent the twist matrices as elements of the set $\mathscr{D}^{*}_{\spt{\beta\alpha,U(1)}}$ (the analogous for $U(1)$ of $\mathscr{D}^{*}_{\spt{\beta\alpha}}$).\footnote{The previous analysis on r.h.s.\,of the~\eqref{eqn:rev'} for $SU(N)/Z_{N}$ can be easily adapted to the case $U(1)$:\,\,the only difference is that the condition $\tr(T_{a})=0$ is absent.}
Moreover, as the group center coincides with $U(1)$ itself, the index $n$ belongs to the whole $\Z$ and it describes only a winding number in the $(x,y)$-plane being ``jumps'' absent (see section~\ref{sec:twistc}).\\
From now on we will consider hemitian $\hat{\mu}$ and $\hat{\nu}$, i.e.\,$\hat{\mu}^{\dagger}=\hat{\mu}$ and $\hat{\nu}^{\dagger}=\hat{\nu}$, and we will use the following notation $\hat{\mu}\equiv\bm{\hat{x}}$ and $\hat{\nu}\equiv\bm{\hat{y}}$. Moreover, with the expression $\mathcal{T}^{2}$, we will indicate either $\mathcal{T}^{2}_{\cc}$ or $\mathcal{T}^{2}_{\nc}$ and with $\mathcal{B}^{\nc}_{\spt{U(1)}}$ ($\mathcal{B}^{\cc}_{\spt{U(1)}}$) we will indicate \mbox{$\mathcal{T}^{2}_{\nc}\times U(1)$} (\mbox{$\mathcal{T}^{2}_{\cc}\times U(1)$}).\\

Let's start with the following
\begin{defask}
\label{def:(BU(1),pi)}
The pair $(\mathcal{B}_{\spt{U(1)}},\rho)$ is $\mathcal{B}_{\spt{U(1)}}$ where the quantities \mbox{$\e^{\iu\pi (g_{i}(x_{j})\bm{\hat{x}}_{k}+g_{l}(x_{m})\bm{\hat{x}}_{n})}$}, with $g_{i}$ linear and homogeneous functions and $i,j,k,l,m,n=1,2$, \mbox{$x_{1}=x, x_{2}=y$}, are represented according to the linear and homogeneous map
\begin{equation}
\label{eqn:defrho}
\rho (\e^{\iu\pi (g_{i}(x_{j})\bm{\hat{x}}_{k}+g_{l}(x_{m})\bm{\hat{x}}_{n})})=\e^{\iu\pi (g_{i}(x_{j})+g_{l}(x_{m}))}
\end{equation}
and the $g_{i}$ are represented as $g_{i}(x_{j})=g_{i}(a_{j})\frac{x_{j}}{a_{j}}$. 
\end{defask}
\begin{remark}
\label{rmk:rhox2}
We can notice how, using (C-B-H-D), from~\eqref{eqn:defrho} we can easily get that 
\begin{equation}
\label{eqn:pi(prod)}
\rho (\e^{\iu\pi g_{i}(x_{j})\bm{\hat{x}}_{k}}\e^{\iu\pi g_{l}(x_{m})\bm{\hat{x}}_{n}})=\rho (\e^{\iu\pi g_{i}(x_{j})\bm{\hat{x}}_{k}})\rho (\e^{\iu\pi g_{l}(x_{m})\bm{\hat{x}}_{n}})\e^{-\frac{\pi^{2}}{2}g_{i}(x_{j})g_{l}(x_{m})[\bm{\hat{x}}_{k},\bm{\hat{x}}_{n}]}.
\end{equation}
From~\eqref{eqn:pi(prod)} we get then that 
{\fontsize{9.5}{10}
 \selectfont
\begin{equation}
\hspace{-0cm}\rho \Bigl(\sum^{\infty}_{r=0}\frac{(\iu \pi g_{i}(x_{j})\bm{\hat{x}}_{k})^{r}}{r!}\e^{\iu\pi g_{l}(x_{m})\bm{\hat{x}}_{n}} \Bigr)=\sum^{\infty}_{p,q=0}\frac{(\iu \pi g_{i}(x_{j}))^{p+q}}{2^{q}p!q!}(\iu\pi g_{l}(x_{m})[\bm{\hat{x}}_{k},\bm{\hat{x}}_{n}])^{q}\e^{\iu\pi g_{l}(x_{m})}
\end{equation}}
\hspace{-.31em}from which, considering the terms up to the second order (enough in this context) we will have
{\fontsize{9.5}{12.3}
 \selectfont
\begin{subequations}
\label{eqn:rho(x,x2)}
\begin{align}
\rho (\bm{\hat{x}}_{k}\,\e^{\iu\pi g_{l}(x_{m})\bm{\hat{x}}_{n}})=&\rho(\e^{\iu\pi g_{l}(x_{m})\bm{\hat{x}}_{n}})+\frac{1}{2}\rho([\bm{\hat{x}}_{k},\e^{\iu\pi g_{l}(x_{m})\bm{\hat{x}}_{n}}]),\\
\rho (\bm{\hat{x}}^{2}_{k}\,\e^{\iu\pi g_{l}(x_{m})\bm{\hat{x}}_{n}})=&\rho(\e^{\iu\pi g_{l}(x_{m})\bm{\hat{x}}_{n}})+\rho([\bm{\hat{x}}_{k},\e^{\iu\pi g_{l}(x_{m})\bm{\hat{x}}_{n}}])+\frac{1}{4}\rho([\bm{\hat{x}}_{k},[\bm{\hat{x}}_{k},\e^{\iu\pi g_{l}(x_{m})\bm{\hat{x}}_{n}}]]).
\end{align} 
\end{subequations}}
\end{remark}
\pagebreak Now, considering\footnote{We will call  $\mathcal{A}_{\theta_{\cc}}$($\mathcal{A}_{\theta_{\nc}}$) the space of functions $\mathcal{A}_{\theta}$ when it's referred to  $\mathcal{T}^{2}_{\cc}$($\mathcal{T}^{2}_{\nc}$).}
\begin{defask}
\label{def:Ath}
$\displaystyle\mathcal{A}_{\theta}=\{f\colon f \,\,\mathrm{of\,\, periods}\,\, (a_{1}, a_{2})\}$, where $f(\hat{x},\hat{y})\colon M_{O}\rightarrow \C$ with \mbox{$M=\mathcal{T}^{2}$,}
\end{defask}
\hspace{-1.5em}and replacing the integral operator with the following trace operator\footnote{The definition of trace is derived from the one given in refs.\,\citep{Forgacs2005,Lozano:2006xn}.} 
\begin{defask}
\label{def:tr}
\textsc{(trace)}\\
Given $(\mathcal{B}_{\spt{U(1)}},\rho)$, $\tr f(\hat{x},\hat{y})=\int_{\spt{\mathcal{T}^{2}}} \rho(f(\hat{x},\hat{y}))$  where $f\in\mathcal{A}_{\theta}$,  
\end{defask}
\hspace{-1.5em}we get the following
\begin{thm}
\label{thm:EBogo}
Given $(\mathcal{B}^{\nc}_{\spt{U(1)}},\rho)$, if $\tr(\phi^{\dagger}\phi)=\tr(\phi\phi^{\dagger})$, $\tr(\phi^{\dagger}\phi)^{2}=\tr(\phi\phi^{\dagger})^{2}$ and \mbox{$A^{\dagger}_{i}=A_{i}$} then $E$ can be rewritten as 
{\fontsize{9}{12,5}
  \selectfont
\begin{subequations} 
\label{eqn:erg'l}
\begin{align}
\hspace{-1.85cm}E=\tr\biggl(\frac{1}{2}(\hat{D}_{i}\phi-\iu\varepsilon_{ij}\hat{D}_{j}\phi)^{\dagger}(\hat{D}_{i}\phi-\iu\varepsilon_{ij}\hat{D}_{j}\phi)+\frac{1}{4}&\Bigl(F_{ij}-g\varepsilon_{ij}(\phi\phi^{\dagger}-\phi^{2}_{0})\Bigr)^{2}+\Bigl(\lambda -\frac{g^{2}}{2}\Bigr)(\phi^{\dagger}\phi-\phi^{2}_{0})^{2}-\frac{g}{2}\phi^{2}_{0}\varepsilon_{ij}F_{ij}+\notag\\
&+\frac{\iu}{2}\varepsilon_{ij}\bigl(\hat{\partial}_{i}(\phi^{\dagger}\hat{D}_{j}\phi)-\hat{\partial}_{j}(\phi^{\dagger}\hat{D}_{i}\phi)\bigr)+\label{eqn:erg1}\\
&-\frac{g}{2}\varepsilon_{ij}\phi^{\dagger}F^{c}_{ij}\phi +\frac{\iu}{2}g^{2}\varepsilon_{ij}\phi^{\dagger}[A_{i},A_{j}]\phi\,+\label{eqn:erg2}\\
&+\frac{g}{2}\varepsilon_{ij}F^{c}_{ij}\phi\phi^{\dagger}-\frac{\iu}{2}g^{2}\varepsilon_{ij}[A_{i},A_{j}]\phi\phi^{\dagger}\biggr)\label{eqn:erg3}
\end{align} 
\end{subequations}}
\!\!where 
\begin{equation}
\hat{D}_{i}\phi=\hat{\partial}_{i}\phi-\iu g A_{i}\phi
\end{equation}
and 
\begin{equation}
\label{eqn:[A1,A2]}
F_{ij}=F^{c}_{ij}-\iu g[A_{i},A_{j}]
\end{equation}
with $F^{c}_{ij}=\hat{\partial}_{i} A_{j}-\hat{\partial}_{j} A_{i}$.
\end{thm}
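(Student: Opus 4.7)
The plan is to substitute $E=\tr\bigl(\tfrac{1}{4}F_{ij}F_{ij}+(\hat D_i\phi)^\dagger\hat D_i\phi+\lambda(\phi^\dagger\phi-\phi_0^2)^2\bigr)$ and perform the Bogomolny completion of the square separately on the Maxwell and on the kinetic contributions, tracking every operator ordering because under the non-commutative product $\phi\phi^\dagger$ and $\phi^\dagger\phi$ are genuinely different objects. A key technical point used throughout is that by~\eqref{eqn:partial} the derivation $\hat\partial_i=\tfrac{\iu}{a_\varrho}\varepsilon_{i\varrho}[\bm{\hat x}_\varrho,\,\cdot\,]$ is inner, so it satisfies the Leibniz rule on any associative product.

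I would first rewrite the kinetic piece via the algebraic identity
\begin{equation*}
\tfrac{1}{2}(\hat D_i\phi-\iu\varepsilon_{ij}\hat D_j\phi)^\dagger(\hat D_i\phi-\iu\varepsilon_{ij}\hat D_j\phi)=(\hat D_i\phi)^\dagger\hat D_i\phi-\iu\bigl((\hat D_1\phi)^\dagger\hat D_2\phi-(\hat D_2\phi)^\dagger\hat D_1\phi\bigr).
\end{equation*}
Using $A_i^\dagger=A_i$ one has $(\hat D_i\phi)^\dagger=\hat\partial_i\phi^\dagger+\iu g\,\phi^\dagger A_i$, so Leibniz for $\hat\partial_i$ gives $(\hat D_i\phi)^\dagger\hat D_j\phi=\hat\partial_i(\phi^\dagger\hat D_j\phi)-\phi^\dagger\hat D_i\hat D_j\phi$. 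Antisymmetrising in $(i,j)$ turns the residual into $\phi^\dagger[\hat D_i,\hat D_j]\phi$, and a short direct computation from the definitions gives $[\hat D_i,\hat D_j]\phi=-\iu g F_{ij}\phi$ with $F_{ij}=F^c_{ij}-\iu g[A_i,A_j]$. Splitting $F_{ij}$ into its $F^c$ and $[A_i,A_j]$ parts delivers exactly the total-derivative piece on line~\eqref{eqn:erg1} and the $\phi^\dagger(\cdot)\phi$ contribution on line~\eqref{eqn:erg2}.

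For the Maxwell piece, expanding $\tfrac{1}{4}(F_{ij}-g\varepsilon_{ij}(\phi\phi^\dagger-\phi_0^2))^2$ with $\varepsilon_{ij}\varepsilon_{ij}=2$ and the antisymmetry of $F_{ij}$ produces
\begin{equation*}
\tfrac{1}{4}F_{ij}F_{ij}=\tfrac{1}{4}\bigl(F_{ij}-g\varepsilon_{ij}(\phi\phi^\dagger-\phi_0^2)\bigr)^2+\tfrac{g}{2}\bigl(F_{12}(\phi\phi^\dagger-\phi_0^2)+(\phi\phi^\dagger-\phi_0^2)F_{12}\bigr)-\tfrac{g^2}{2}(\phi\phi^\dagger-\phi_0^2)^2.
\end{equation*}
The $\phi_0^2$ part of the cross term yields the $-\tfrac{g}{2}\phi_0^2\varepsilon_{ij}F_{ij}$ on line~\eqref{eqn:erg1}; the symmetric pair $F_{12}\phi\phi^\dagger+\phi\phi^\dagger F_{12}$ must collapse under $\tr$ to $2F_{12}\phi\phi^\dagger$, which after substituting $F_{12}=F^c_{12}-\iu g[A_1,A_2]$ reproduces line~\eqref{eqn:erg3}; and the residual $-\tfrac{g^2}{2}(\phi\phi^\dagger-\phi_0^2)^2$ absorbs into $\lambda(\phi^\dagger\phi-\phi_0^2)^2$ to yield the $(\lambda-g^2/2)$ coefficient once the two hypotheses $\tr(\phi^\dagger\phi)=\tr(\phi\phi^\dagger)$ and $\tr((\phi^\dagger\phi)^2)=\tr((\phi\phi^\dagger)^2)$ identify $\tr((\phi\phi^\dagger-\phi_0^2)^2)$ with $\tr((\phi^\dagger\phi-\phi_0^2)^2)$. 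The main obstacle is precisely that cyclic swap $F_{12}\phi\phi^\dagger\leftrightarrow\phi\phi^\dagger F_{12}$ inside the trace: it is the only step not directly validated by the stated hypotheses and must be justified from the Hermiticity of $F_{ij}$ (which follows from $A_i^\dagger=A_i$ and the self-adjointness of $\hat\partial_i$) together with the structure $\tr=\int\rho$ of Definition~\ref{def:tr}.
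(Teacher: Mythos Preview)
Your proposal is correct and follows exactly the route the paper indicates: the paper's own proof is a one-line reference to Bogomolny's original argument with the instruction to track operator ordering, and you have carried this out explicitly. Your flagging of the residual cyclicity $\tr(F_{12}\phi\phi^{\dagger})=\tr(\phi\phi^{\dagger}F_{12})$ is a point the paper's proof does not isolate; it is indeed what makes line~\eqref{eqn:erg3} come out as written, and your suggested justification via Hermiticity of $F_{12}$ and the integral form of $\tr$ in Definition~\ref{def:tr} is the natural one.
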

\begin{proof}
Cf.\,ref.\,\citep{Bogomolny1976}: in the present context the coordinates are non-commutative by hypothesis and we have to take them into account when we consider the products between $\phi,\phi^{\dagger}$ and $A_{i}$. 
\end{proof}

\begin{remark}
\label{rmk:[A1,A2]eTr}
We have to notice how in~\eqref{eqn:[A1,A2]} the term $[A_{1},A_{2}]$ is typical of the non-commutativity of coordinates and not, as usual, of the non-abelianity of the gauge group: we are in fact considering here the group $U(1)$!
\end{remark}

The point is, are there on $\mathcal{T}^{2}_{\nc}$ any solutions to TBC that are compatible with a rewriting \textit{à la} Bogomolny of $E$?\\
In other words: are there on $\mathcal{T}^{2}_{\nc}$ any solutions to TBC that meet the hypotheses of the abovementioned theorem?\\
In the refs.\,\citep{Forgacs2005,Lozano:2006xn} the answer is negative because, as we said in the Introduction, $\mathcal{E}$ has not got a domain of definition. Despite this, $E$ is anyway evaluated, with \mbox{$\lambda=g^{2}/2$}, rewriting it \textit{à la} Bogomolny,\footnote{Each quantity that composes $\mathcal{E}$ is integrated on its natural domain of periodicity.} finding out that it depends uniquely on $n$ as in the commutative case (see refs.\,\citep{Baal82,Manton2004,Arroyo04,EWeinberg2012}). As concerns then the other hypotheses of the theorem, generalizing the expression of $\mathcal{E}$ and introducing $\phi_{1}$ and $\phi_{2}$ (built starting from $\phi$ and that meet the same TBC), they only demonstrate that (see ref.\,\citep{Forgacs2005})   
\begin{equation}
\mathrm{Tr}_{\spt{\mathcal{T}^{2}_{\nc}}}(\phi^{\dagger}_{1}\phi_{2})=\mathrm{Tr}_{\spt{\tilde{\mathcal{T}}^{2}_{\nc}}}(\phi_{2}\phi^{\dagger}_{1})
\end{equation}
where the quantity $\phi_{2}\phi^{\dagger}_{1}$ is defined there on $\tilde{\mathcal{T}}^{2}_{\nc}$ whose periods, as we have explained, are scaled by a factor depending on $\theta$ in comparison with those of $\mathcal{T}^{2}_{\nc}$. There is not any demonstration about the equality between \mbox{$\mathrm{Tr}_{\spt{\mathcal{T}^{2}_{\nc}}}(\phi^{\dagger}_{1}\phi_{2})^{2}$} and \mbox{$\mathrm{Tr}_{\spt{\tilde{\mathcal{T}}^{2}_{\nc}}}(\phi_{2}\phi^{\dagger}_{1})^{2}$}. As concerns $A_{i}$, they find a hermitian representation of it on $\tilde{\mathcal{T}}^{2}_{\nc}$.\\
\indent In the next paragraph, we will show how here the answer to the above question is affirmative. Starting from the set $\mathscr{F}_{r_{\spt{\theta}}}$ of solutions to TBC related to $\phi$, we will demonstrate that $\mathcal{E}$ has $\mathcal{T}^{2}_{\nc}$ as domain\footnote{At the origin of the discordance on the domain of $\mathcal{E}$ with the refs.\,\citep{Forgacs2005,Lozano:2006xn} there is essentially the fact that here the quantity $\hat{x}+a\hat{\mathbbm{1}}$ with $a\in\R$ is not interpreted as a translation $t_{a}$ in the direction $\hat{x}$ of a quantity $a$: in fact in the present approach $t_{a}$ means $x\bm{\hat{x}}+a\bm{\hat{x}}=(x+a)\bm{\hat{x}}$ \mbox{with $x\in\R$.}\label{fn:disc}} and that $E$ can be rewritten \textit{à la} Bogomolny. In the end we will evaluate $E$, thus rewritten, at the point of Bogomolny for minimum-energy configurations without resolving, as already stated, any system of equations.

\subsection{$\mathcal{T}^{2}_{\nc}$ as domain of $\mathcal{E}$ and the Bogomolny rewriting}
\label{ssec:TBCphi1T2nc}
\noindent In this paragraph we will consider the following generalization of $E$  
\begin{equation}
\label{eqn:tenergy'12}
E_{\spt{12}}=\int\biggl(\frac{1}{4}F_{ij}F^{ij}+(\hat{D}_{i}\phi_{1})^{\dagger}(\hat{D}^{i}\phi_{2})+\lambda(\phi^{\dagger}_{1}\phi_{2}-\phi^{2}_{0})^{2}\biggr),
\end{equation}
and the following 
\begin{thm}
\label{thm:EBogorev}
Given $(\mathcal{B}^{\nc}_{\spt{U(1)}},\rho)$, if $\tr(\phi^{\dagger}_{1}\phi_{2})=\tr(\phi_{2}\phi^{\dagger}_{1})$, $\tr(\phi^{\dagger}_{1}\phi_{2})^{2}=\tr(\phi_{2}\phi^{\dagger}_{1})^{2}$ and \mbox{$A^{\dagger}_{i}=A_{i}$} then $E_{\spt{12}}$ can be rewritten as
{\fontsize{9}{12,5}
  \selectfont
\begin{subequations} 
\label{eqn:erg'12}
\begin{align}
\hspace{-2,75cm}E_{\spt{12}}=\tr\biggl(\frac{1}{2}(\hat{D}_{i}\phi_{1}-\iu\varepsilon_{ij}\hat{D}_{j}\phi_{1})^{\dagger}(\hat{D}_{i}\phi_{2}-\iu\varepsilon_{ij}\hat{D}_{j}\phi_{2})+\frac{1}{4}&\Bigl(F_{ij}-g\varepsilon_{ij}(\phi_{2}\phi^{\dagger}_{1}-\phi^{2}_{0})\Bigr)^{2}+\Bigl(\lambda -\frac{g^{2}}{2}\Bigr)(\phi^{\dagger}_{1}\phi_{2}-\phi^{2}_{0})^{2}-\frac{g}{2}\phi^{2}_{0}\varepsilon_{ij}F_{ij}+\notag\\
&+\frac{\iu}{2}\varepsilon_{ij}\bigl(\hat{\partial}_{i}(\phi^{\dagger}_{1}\hat{D}_{j}\phi_{2})-\hat{\partial}_{j}(\phi^{\dagger}_{1}\hat{D}_{i}\phi_{2})\bigr)+\label{eqn:erg1gen}\\
&-\frac{g}{2}\varepsilon_{ij}\phi^{\dagger}_{1}F^{c}_{ij}\phi_{2} +\frac{\iu}{2}g^{2}\varepsilon_{ij}\phi^{\dagger}_{1}[A_{i},A_{j}]\phi_{2}\,+\label{eqn:erg2gen}\\
&+\frac{g}{2}\varepsilon_{ij}F^{c}_{ij}\phi_{2}\phi^{\dagger}_{1}-\frac{\iu}{2}g^{2}\varepsilon_{ij}[A_{i},A_{j}]\phi_{2}\phi^{\dagger}_{1}\biggr).\label{eqn:erg3gen}
\end{align} 
\end{subequations}}

\end{thm}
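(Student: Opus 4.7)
The plan is to mirror the Bogomolny decomposition used for Theorem~\ref{thm:EBogo}, but carry along two distinct scalar fields $\phi_1,\phi_2$ throughout, paying close attention to the ordering of all non-commuting factors. First I would expand $E_{\spt{12}}$ term by term under $\tr$, replacing the integral in~\eqref{eqn:tenergy'12} using Definition~\ref{def:tr}. The kinetic term $(\hat{D}_i\phi_1)^{\dagger}(\hat{D}^i\phi_2)$ is rewritten by the usual completion of the square:
\[
(\hat{D}_i\phi_1)^{\dagger}(\hat{D}^i\phi_2)=\tfrac12(\hat{D}_i\phi_1-\iu\varepsilon_{ij}\hat{D}_j\phi_1)^{\dagger}(\hat{D}_i\phi_2-\iu\varepsilon_{ij}\hat{D}_j\phi_2)+\tfrac{\iu}{2}\varepsilon_{ij}\bigl((\hat{D}_i\phi_1)^{\dagger}\hat{D}_j\phi_2-(\hat{D}_j\phi_1)^{\dagger}\hat{D}_i\phi_2\bigr),
\]
which produces the first summand of the statement together with a ``topological'' cross-term that I would next massage into the total derivative of~\eqref{eqn:erg1gen} and the remainders~\eqref{eqn:erg2gen}--\eqref{eqn:erg3gen}.

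Next I would process that cross-term. Using hermiticity of $A_i$ (so that $(\hat{D}_i\phi_1)^{\dagger}=\hat{\partial}_i\phi_1^{\dagger}+\iu g\phi_1^{\dagger}A_i$) and the Leibniz rule for $\hat{\partial}_i$ with respect to the non-commutative product, the combination $(\hat{D}_i\phi_1)^{\dagger}\hat{D}_j\phi_2$ splits into $\hat{\partial}_i(\phi_1^{\dagger}\hat{D}_j\phi_2)-\phi_1^{\dagger}\hat{\partial}_i\hat{D}_j\phi_2$ plus gauge pieces; antisymmetrizing in $(i,j)$ and recalling~\eqref{eqn:[A1,A2]}, the bulk piece becomes $-\tfrac{g}{2}\varepsilon_{ij}\phi_1^{\dagger}F_{ij}\phi_2=-\tfrac{g}{2}\varepsilon_{ij}\phi_1^{\dagger}F^{c}_{ij}\phi_2+\tfrac{\iu}{2}g^2\varepsilon_{ij}\phi_1^{\dagger}[A_i,A_j]\phi_2$, i.e.\ exactly line~\eqref{eqn:erg2gen}. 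The surviving surface piece is line~\eqref{eqn:erg1gen}.

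For the gauge and potential parts, I would expand $\tfrac14 F_{ij}F^{ij}$ using the Bogomolny completion $\tfrac14 F_{ij}F^{ij}=\tfrac14(F_{ij}-g\varepsilon_{ij}(\phi_2\phi_1^{\dagger}-\phi_0^2))^2+\tfrac{g}{2}\varepsilon_{ij}F_{ij}(\phi_2\phi_1^{\dagger}-\phi_0^2)-\tfrac{g^2}{2}(\phi_2\phi_1^{\dagger}-\phi_0^2)^2$, and combine the last summand with $\lambda(\phi_1^{\dagger}\phi_2-\phi_0^2)^2$ using the hypotheses $\tr(\phi_1^{\dagger}\phi_2)=\tr(\phi_2\phi_1^{\dagger})$ and $\tr(\phi_1^{\dagger}\phi_2)^2=\tr(\phi_2\phi_1^{\dagger})^2$ to get the factor $(\lambda-g^2/2)(\phi_1^{\dagger}\phi_2-\phi_0^2)^2$ in~\eqref{eqn:erg'12}. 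The linear-in-$F_{ij}$ term splits, via~\eqref{eqn:[A1,A2]}, into the $-\tfrac{g}{2}\phi_0^2\varepsilon_{ij}F_{ij}$ of~\eqref{eqn:erg'12} together with $\tfrac{g}{2}\varepsilon_{ij}F^{c}_{ij}\phi_2\phi_1^{\dagger}-\tfrac{\iu}{2}g^2\varepsilon_{ij}[A_i,A_j]\phi_2\phi_1^{\dagger}$, which is precisely line~\eqref{eqn:erg3gen}.

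The main obstacle is bookkeeping: in Bogomolny's original argument one freely uses cyclicity of the trace and the identity $\phi^{\dagger}\phi=\phi\phi^{\dagger}$ (true pointwise for a single scalar), but here neither holds in general for the non-commutative product between $\phi_1$ and $\phi_2$; the two trace hypotheses $\tr(\phi_1^{\dagger}\phi_2)=\tr(\phi_2\phi_1^{\dagger})$, $\tr(\phi_1^{\dagger}\phi_2)^2=\tr(\phi_2\phi_1^{\dagger})^2$ are exactly what is needed to collapse the kinetic/potential cross-terms, while hermiticity of $A_i$ ensures that $F_{ij}^\dagger=F_{ij}$ and that the ordering in~\eqref{eqn:erg2gen}--\eqref{eqn:erg3gen} is the correct one. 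Keeping the two orderings $\phi_1^{\dagger}(\cdot)\phi_2$ and $(\cdot)\phi_2\phi_1^{\dagger}$ separate — rather than identifying them as in the commutative $U(1)$ case — is what forces the appearance of both~\eqref{eqn:erg2gen} and~\eqref{eqn:erg3gen} and, via Remark~\ref{rmk:[A1,A2]eTr}, the genuine $[A_i,A_j]$ contributions.
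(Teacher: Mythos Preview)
Your proposal is correct and follows exactly the approach the paper indicates: the paper's own proof simply states that ``the demonstration is similar to the one of Theorem~\ref{thm:EBogo}'', which in turn just cites Bogomolny's original argument while cautioning that the non-commutative ordering of $\phi,\phi^{\dagger},A_i$ must be tracked. Your explicit completion-of-the-square for the kinetic term, the reduction of the antisymmetric cross-term via $[\hat D_i,\hat D_j]=-\iu g F_{ij}$ to the surface term~\eqref{eqn:erg1gen} and line~\eqref{eqn:erg2gen}, and the Bogomolny square on $\tfrac14 F_{ij}F^{ij}$ producing line~\eqref{eqn:erg3gen} together with the use of the two trace hypotheses to merge $(\phi_2\phi_1^{\dagger}-\phi_0^2)^2$ with $(\phi_1^{\dagger}\phi_2-\phi_0^2)^2$, are precisely the bookkeeping the paper has in mind.
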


\begin{proof}
The demonstration is similar to the one of Theorem~\ref{thm:EBogo}.  
\end{proof}
We will show up the set $\mathscr{F}_{r_{\spt{\theta}}}$ of solutions to TBC related to $\phi$. We will see how $\mathscr{F}_{r_{\spt{\theta}}}\neq\emptyset$ if{}f $r_{\spt{\theta}}\in\Q$ and how $r_{\spt{\theta}}$ is bound to the order of a cyclic subgroup $\tau_{r_{\spt{\theta}}}$ of $U(1)$ (see \S\,\ref{ssec:TBCphi}). From $\mathscr{F}_{r_{\spt{\theta}}}$, we will see how the three addends of $\mathcal{E}_{\spt{12}}$, the density of energy $E_{\spt{12}}$, can only belong to a unique $\mathcal{A}_{\theta_{\nc}}$: no $\tilde{\mathcal{T}}^{2}_{\nc}$ will be thus involved (see \S\,\ref{ssec:p1dp2ep2p1d}). We will evaluate eventually the $E_{\spt{12}}$ at the point of Bogomolny for minimum-energy configurations rewriting \textit{à la} Bogomolny (see \S\,\ref{ssec:domE}). In this regard we will find that $E_{\spt{12}}$ depends on $r_{\spt{\theta}}$ and on $(a_{1}, a_{2})$ so that \mbox{$E\rightarrow+\infty$} for \mbox{$a_{1},a_{2} \rightarrow+\infty$}: the fact that it does not depend on $n$ will be seen in \S\,\ref{ssec:FandM(th)}.

\subsubsection{Solutions to TBC related to $\phi$}
\label{ssec:TBCphi}
\noindent We will show now some solutions to TBC related to $\phi$.
In this regard, let's consider the set $\mathscr{T}$ defined as\footnote{The $(\mathscr{T}_{\spt{2}})$ derives from~\eqref{eqn:bordH2} considering the~\eqref{eqn:defz'}.\label{fn:T2}} 
\begin{defask}
\label{def:T}
$\!\mathscr{T}=\{\phi\colon \phi$ meets the following TBC,
\mbox{$\mathlarger{(\mathscr{T}_{\spt{1}})}\hspace{0.35em}\phi(\hat{x}+\hat{a}_{1},\hat{y})=\Omega'_{1}(\hat{x},\hat{y})\phi(\hat{x},\hat{y})$} and $\mathlarger{(\mathscr{T}_{\spt{2}})}\hspace{0.35em}\phi(\hat{x},\hat{y}+\hat{a}_{2})=\Omega'_{2}(\hat{x},\hat{y})\phi(\hat{x},\hat{y})$, where $\Omega'_{1}(\Omega'_{2})$ are equal to $\Omega_{1}(\Omega_{2})$ modulo a gauge transformation $\Omega_{x}(\Omega_{y})$ with $(\Omega_{1},\Omega_{2})\in\mathscr{D}^{*}_{\spt{\beta\alpha,U(1)}}\}$
\end{defask}
\hspace{-1.6em}and the set\footnote{The elements of $\mathscr{F}$ are inspired by Theta functions $\vartheta_{4}$~\citep{Watson1969}. In Definition~\ref{def:Fphi} we factorized out the term of $f^{\spt{\omega}}_{q,k}(\hat{x},\hat{y},\theta\hat{\mathbbm{1}})$ with the exponent proportional to $\hat{\mathbbm{1}}$. }
\begin{defask}
\label{def:Fphi}
$\mathscr{F}=\{\phi\colon \phi(\hat{x},\hat{y})=\sum^{\infty}_{q,k=-\infty}f^{\spt{\omega}}_{q,k}(\hat{x},\hat{y})\}$ with
\begin{equation}
f^{\spt{\omega}}_{q,k}(\hat{x},\hat{y})=(-)^{q+k}\e^{I_{q,k}(\hat{z}_{1},\hat{z}_{2},\hat{\tau}_{1},\hat{\tau}_{2})}\e^{\omega_{q,k}(x,y)}\e^{-\iu\pi q^{2}\hat{\tau}_{1}-\iu\pi k^{2}\hat{\tau}_{2}-2\iu q\hat{z}_{1}-2\iu k\hat{z}_{2}}
\end{equation}
where
{\fontsize{9}{12,5}
 \selectfont
\begin{equation}
\hspace{-1,5cm}I_{q,k}(\hat{z}_{1},\hat{z}_{2},\hat{\tau}_{1},\hat{\tau}_{2})=-2q(k+1)[\hat{z}_{1},\hat{z}_{2}]-\frac{\pi^{2}}{2}q^{2}k^{2}[\hat{\tau}_{1},\hat{\tau}_{2}]+\frac{\pi}{3} q^{3}[\hat{\tau}_{1},\hat{z}_{1}]-\pi q^{2}(k+1)[\hat{\tau}_{1},\hat{z}_{2}]+\pi qk^{2}[\hat{\tau}_{2},\hat{z}_{1}]+\frac{\pi}{3}k^{3}[\hat{\tau}_{2},\hat{z}_{2}]
\end{equation}}
\!\!and
{\fontsize{8}{12,5}
 \selectfont
\begin{equation}
\hspace{-0cm}\omega_{q,k}(x,y)=\pi\biggl(\frac{\Im^{2}(\omega_{1}(x))}{\Im(\omega_{1}(a_{1}))}+\frac{\Im^{2}(\omega_{2}(y))}{\Im(\omega_{2}(a_{2}))}\biggr)-\iu\pi \bigl(q^{2}\omega_{1}(a_{1})+k^{2}\omega_{2}(a_{2})+2q\omega_{1}(x)+2k\omega_{2}(y)\bigr),
\end{equation}}
\hspace{-.45em}with $\hat{z}_{1}=\pi\alpha_{1}(x)\bm{\hat{x}}+\pi\beta_{1}(y)\bm{\hat{y}}$, $\hat{z}_{2}=\pi\alpha_{2}(x)\bm{\hat{x}}+\pi\beta_{2}(y)\bm{\hat{y}}$, $\hat{\tau}_{1}=\alpha_{1}(a_{1})\bm{\hat{x}}$ and \mbox{$\hat{\tau}_{2}=\beta_{2}(a_{2})\bm{\hat{y}}$} where $\alpha_{i}$'s, $\beta_{i}$'s and $\omega_{i}$'s are linear and homogeneous functions with \mbox{$\omega_{i}\colon\R\rightarrow\C$} and \mbox{$\Im(\omega_{i}(a_{i}))<0$}.
\end{defask}
\begin{remark}
\label{rmk:Imo<0}
The requirement $\Im(\omega_{i}(a_{i}))<0$ in the Definition~\ref{def:Fphi} aims to obtain $\rho(\phi(\hat{x},\hat{y}))$ as a convergent function series. In fact \mbox{$\rho(\phi(\hat{x},\hat{y}))$} is a series of analytic functions uniformly convergent as
\begin{equation}
\hspace{-0cm}\abs{\rho(\phi(\hat{x},\hat{y}))}\leq\sum^{\infty}_{q,k=-\infty}\e^{\pi q^{2}\Im(\omega_{1}(a_{1}))}\e^{\pi k^{2}\Im(\omega_{2}(a_{2}))}\e^{2\pi (q\abs{\omega_{1}(a_{1})}+k\abs{\omega_{2}(a_{2})})}
\end{equation}
that is convergent because d'Alembert ratios are in the form
\begin{equation}
\e^{\pi (2q+1)\Im(\omega_{1}(a_{1}))}\e^{\pi (2k+1)\Im(\omega_{2}(a_{2}))}\e^{2\pi (\abs{\omega_{1}(a_{1})}+\abs{\omega_{2}(a_{2})})}
\end{equation}
and tend to zero as $q,k\rightarrow\infty$.
In particular we will have thus that $\rho(\phi(\hat{x},\hat{y}))$ is an integral function.
\end{remark}
Let's take into account that  
\begin{lem}
\label{lem:e^[x,y]=+-e}
Considering the non constant function $g_{1}(x)$ \emph{(}$g_{2}(y)$\emph{)} on $\mathcal{T}^{2}_{\nc}$ with $g_{1}(\bar{x})\neq 0$ \emph{(}$g_{2}(\bar{y})\neq 0$\emph{)} where $\bar{x}$ \emph{(}$\bar{y}$\emph{)} is a specific value of $x$ \emph{(}$y$\emph{)}, we have that
\begin{center}
\mbox{$\e^{\iu \pi g_{1}(\bar{x})\bm{\hat{x}}}=\pm e$\quad\emph{(}$\e^{\iu \pi g_{2}(\bar{y})\bm{\hat{y}}}=\pm e$\emph{)} $\Leftrightarrow$ $\e^{\iu \pi g_{1}(\bar{x})\theta\hat{\mathbbm{1}}}=e$\quad \emph{(}$\e^{\iu \pi g_{2}(\bar{y})\theta\hat{\mathbbm{1}}}=e$\emph{)}}.
\end{center} 
\end{lem}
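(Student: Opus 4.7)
The plan is to deduce the equivalence from the Campbell-Baker-Hausdorff-Dynkin (C-B-H-D) formula, exploiting the fact that on $\mathcal{T}^{2}_{\nc}$ the commutator $[\bm{\hat{x}}, \bm{\hat{y}}] = \iu\theta\hat{\mathbbm{1}}$ is central, so that every nested commutator of order $\geq 2$ vanishes. The C-B-H-D series therefore collapses and yields the Weyl-type relation
\begin{equation*}
\e^{\iu\pi g_{1}(\bar{x})\bm{\hat{x}}}\,\e^{\iu \alpha\bm{\hat{y}}} = \e^{-\iu\pi\alpha g_{1}(\bar{x})\theta\hat{\mathbbm{1}}}\,\e^{\iu \alpha\bm{\hat{y}}}\,\e^{\iu\pi g_{1}(\bar{x})\bm{\hat{x}}}
\end{equation*}
for every admissible $\alpha$, with the analogous identity for the parenthetical statement obtained by interchanging the roles of $\bm{\hat{x}}$ and $\bm{\hat{y}}$.

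For the forward implication ($\Rightarrow$), I set $W := \e^{\iu\pi g_{1}(\bar{x})\bm{\hat{x}}}$ and observe that $W = \pm e$ renders $W$ central in the algebra. Hence $W$ commutes with every $\e^{\iu \alpha\bm{\hat{y}}}$, and the Weyl relation above forces $\e^{-\iu\pi\alpha g_{1}(\bar{x})\theta\hat{\mathbbm{1}}} = e$ for each such $\alpha$. Selecting the value $\alpha = -1$ reproduces the desired identity $\e^{\iu\pi g_{1}(\bar{x})\theta\hat{\mathbbm{1}}} = e$.

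For the backward implication ($\Leftarrow$), I would assume $\e^{\iu\pi g_{1}(\bar{x})\theta\hat{\mathbbm{1}}} = e$ and, after promoting this equality from the specific choice $\alpha = -1$ to arbitrary $\alpha$ via the hypothesis $g_{1}(\bar{x}) \neq 0$ together with the linearity of the exponent in $\alpha$, invert the Weyl relation to conclude that $W$ commutes with every $\e^{\iu \alpha\bm{\hat{y}}}$. Since $W$ trivially commutes with every $\e^{\iu \alpha\bm{\hat{x}}}$, it lies in the centre of the algebra generated by the coordinate exponentials. Applying the representation $\rho$ of Definition~\ref{def:(BU(1),pi)} gives $\rho(W) = \e^{\iu\pi g_{1}(\bar{x})}$, and the centrality combined with the toric structure of $\mathcal{T}^{2}_{\nc}$ forces $g_{1}(\bar{x})\in\Z$, whence $W = \pm e$ with the sign determined by the parity of $g_{1}(\bar{x})$.

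The main obstacle is the backward direction: a single hypothesis involving only the central element $\hat{\mathbbm{1}}$ must be leveraged to pin down the entire element $W$ up to sign. The delicate step is promoting the equality from a specific $\alpha$ to all $\alpha$, and then using $\rho$ in conjunction with the non-commutative torus structure to force $W = \pm e$ rather than an arbitrary central phase. Once this conversion from a $\hat{\mathbbm{1}}$-condition to a $\bm{\hat{x}}$-condition is justified, the two implications are symmetric consequences of the single Weyl relation above.
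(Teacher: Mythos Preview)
Your approach coincides with the paper's: both derive the Weyl-type commutation relation
\[
\e^{\iu\pi g_{1}(\bar{x})\bm{\hat{x}}}\,\e^{\iu g_{2}(y)\bm{\hat{y}}}\bigl(\e^{\iu g_{2}(y)\bm{\hat{y}}}\,\e^{\iu\pi g_{1}(\bar{x})\bm{\hat{x}}}\bigr)^{-1}=\e^{-\iu\pi g_{1}(\bar{x})g_{2}(y)\theta\hat{\mathbbm{1}}}
\]
from C-B-H-D (this is the paper's eq.~\eqref{eqn:de^[x,y]=e}), and both use the factorisation $\e^{\iu\pi g_{1}(\bar{x})g_{2}(y)\theta\hat{\mathbbm{1}}}=\bigl(\e^{\iu\pi g_{1}(\bar{x})\theta\hat{\mathbbm{1}}}\bigr)^{g_{2}(y)}$, which is exactly your ``linearity of the exponent in $\alpha$'' step.

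The one substantive difference is in how the backward direction is closed. The paper does \emph{not} invoke $\rho$; it simply records that on $\mathcal{T}^{2}_{\nc}$ one has $\e^{\iu\pi\theta\hat{\mathbbm{1}}}\neq e$ (Theorem~\ref{teoth}) and lets the commutator identity together with the factorisation carry both directions. Your detour through $\rho$ and the claim that ``centrality combined with the toric structure forces $g_{1}(\bar{x})\in\Z$'' is extra machinery that the paper does not use, and that step is not obviously justified as stated: you have not explained why a central element of the exponential algebra must have integer image under $\rho$. If you drop the $\rho$ argument and instead mirror the paper by reading the equivalence directly off the commutator identity plus the power-law factorisation (with the non-commutativity hypothesis $\e^{\iu\pi\theta\hat{\mathbbm{1}}}\neq e$ as the only external input), your proof aligns with the paper's and the questionable step disappears.
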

\begin{proof}
It's sufficient to consider that by hypothesis $\mathcal{T}^{2}$ is non-commutative and so $\e^{\iu\pi\theta\hat{\mathbbm{1}}}\neq e$ (see\,Theorem~\ref{teoth}), and that 
\begin{equation}
\label{eqn:de^[x,y]=e}
\e^{\iu\pi g_{1}(\bar{x})\bm{\hat{x}}}\e^{\iu g_{2}(y)\bm{\hat{y}}}(\e^{\iu g_{2}(y)\bm{\hat{y}}}\e^{\iu\pi g_{1}(\bar{x})\bm{\hat{x}}})^{-1}=\e^{-\iu\pi g_{1}(\bar{x})g_{2}(y)\theta\hat{\mathbbm{1}}}.
\end{equation}
The thesis derives considering that $\e^{\iu \pi g_{1}(\bar{x})g_{2}(y)\theta\hat{\mathbbm{1}}}=(\e^{\iu \pi g_{1}(\bar{x})\theta\hat{\mathbbm{1}}})^{g_{2}(y)}$.
Similar demonstration for $\e^{\iu \pi g_{2}(\bar{y})\bm{\hat{y}}}$.
\end{proof}
Denoted by $\theta_{\cc}$ ($\theta_{\nc}$) the parameter $\theta$ for $\mathcal{T}^{2}_{\cc}$ ($\mathcal{T}^{2}_{\nc}$), with \mbox{$\theta_{\cc}\in [0]\0$} where \mbox{$[0]\in\R/2\Z$} and \mbox{$\theta_{\nc}\in\R\setminus [0]$} (see Corollary~\ref{cor:thkerom}), the Lemma~\ref{lem:e^[x,y]=+-e} implies the following  
\begin{corlem}
\label{cor:rth}
In the hypotheses above, 
\begin{center}
\mbox{$\e^{\iu \pi g_{1}(\bar{x})\bm{\hat{x}}}=\pm e$\quad\emph{(}$\e^{\iu \pi g_{2}(\bar{y})\bm{\hat{y}}}=\pm e$\emph{)} $\Leftrightarrow$ $g_{1}(\bar{x})=k_{1}r_{\spt{\theta}}$\quad \emph{(}$g_{2}(\bar{y})=k_{2}r_{\spt{\theta}}$\emph{)}} 
\end{center}
with $k_{1},k_{2}\in\Z\0 $ and $r_{\spt{\theta}}=\theta_{\cc}/\theta_{\nc}$.
\end{corlem}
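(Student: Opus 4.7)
The plan is to chain the preceding Lemma~\ref{lem:e^[x,y]=+-e} with Corollary~\ref{cor:thkerom}, so as to translate the operator identity $\e^{\iu\pi g_{1}(\bar{x})\bm{\hat{x}}}=\pm e$ into an arithmetic condition on the scalar $g_{1}(\bar{x})\theta_{\nc}$. Concretely, I would first apply Lemma~\ref{lem:e^[x,y]=+-e}: on $\mathcal{T}^{2}_{\nc}$ the equivalence
\[
\e^{\iu\pi g_{1}(\bar{x})\bm{\hat{x}}}=\pm e \;\Longleftrightarrow\; \e^{\iu\pi g_{1}(\bar{x})\theta_{\nc}\hat{\mathbbm{1}}}=e
\]
reduces the claim to characterising exactly those real numbers $g_{1}(\bar{x})\theta_{\nc}$ that are mapped to the identity by the exponential of the central element $\hat{\mathbbm{1}}$.

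Next I would appeal to Corollary~\ref{cor:thkerom}, or rather to the computation in its proof: once one sets $\e^{\iu t\hat{\mathbbm{1}}I}\equiv\e^{\iu\pi t}$, the equation $\e^{\iu\pi s\hat{\mathbbm{1}}I}=e$ holds precisely when $s\in\ker(\pi_{[0]})$ with $[0]\in\R/2\Z$, i.e.\ when $s\in 2\Z$. Substituting $s=g_{1}(\bar{x})\theta_{\nc}$, the right-hand side of the equivalence above becomes the requirement $g_{1}(\bar{x})\theta_{\nc}\in 2\Z$. The hypotheses $g_{1}(\bar{x})\neq 0$ and $\theta_{\nc}\in\R\setminus[0]$ (in particular $\theta_{\nc}\neq 0$) then force $g_{1}(\bar{x})\theta_{\nc}\in 2\Z\setminus\{0\}$.

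To close the chain, it remains to re-express this membership using $r_{\spt{\theta}}=\theta_{\cc}/\theta_{\nc}$. Since $\theta_{\cc}\in[0]\setminus\{0\}\subset 2\Z\setminus\{0\}$, the condition $g_{1}(\bar{x})\theta_{\nc}\in 2\Z\setminus\{0\}$ is equivalent to writing $g_{1}(\bar{x})\theta_{\nc}=k_{1}\theta_{\cc}$ for a non-zero integer $k_{1}$; dividing by $\theta_{\nc}$ yields $g_{1}(\bar{x})=k_{1}r_{\spt{\theta}}$, as claimed. The parenthetical statement for $g_{2}(\bar{y})$ follows by an entirely parallel argument with $\bm{\hat{y}}$ in place of $\bm{\hat{x}}$ and the roles of the two generators interchanged.

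The main obstacle I anticipate lies in this last rewriting step: the passage from ``$g_{1}(\bar{x})\theta_{\nc}\in 2\Z\setminus\{0\}$'' to ``$g_{1}(\bar{x})\theta_{\nc}$ is an integer multiple of the specific parameter $\theta_{\cc}$'' is transparent only if $\theta_{\cc}$ is fixed as the generator of $\ker(\pi_{[0]})$ (up to sign). I would therefore make this convention explicit at the outset, treating $\theta_{\cc}$ as a canonical non-zero representative of the class $[0]\in\R/2\Z$ and letting the integer $k_{1}$ absorb the residual freedom, so that the resulting correspondence between admissible $g_{1}(\bar{x})$ and $k_{1}\in\Z\setminus\{0\}$ is a bijection — which is exactly what is needed in the sequel to link $r_{\spt{\theta}}$ to the order of the cyclic subgroup $\tau_{r_{\spt{\theta}}}\subset U(1)$.
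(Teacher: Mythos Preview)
Your argument is correct and follows essentially the same route as the paper: apply Lemma~\ref{lem:e^[x,y]=+-e} to reduce to $\e^{\iu\pi g_{1}(\bar{x})\theta_{\nc}\hat{\mathbbm{1}}}=e$, then invoke the characterisation of $\mathcal{T}^{2}_{\cc}$ (Theorem~\ref{teoth}/Corollary~\ref{cor:thkerom}) to write $g_{1}(\bar{x})\theta_{\nc}=k_{1}\theta_{\cc}$. Your explicit flagging of the convention that $\theta_{\cc}$ be taken as a generator of $\ker(\pi_{[0]})$ is a useful clarification that the paper leaves implicit.
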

\begin{proof}
It's sufficient to consider that \mbox{$\exp(\iu \pi k_{1}\theta_{\cc}\hat{\mathbbm{1}})=e$} \mbox{($\exp(\iu \pi k_{2}\theta_{\cc}\hat{\mathbbm{1}})=e$)} (see Theorem~\ref{teoth}) 
and so that, taking into account the Lemma~\ref{lem:e^[x,y]=+-e}, $g_{1}(\bar{x})\theta_{\nc}=k_{1}\theta_{\cc}$ (\mbox{$g_{2}(\bar{y})\theta_{\nc}=k_{2}\theta_{\cc}$}).
\end{proof}
Now, considering the subset of $\mathscr{F}$ 
\begin{defask}
$\mathscr{F}_{r_{\spt{\theta}}}=\{\phi\colon \phi\in\mathscr{F}\,\,\mathrm{with}\,\,\alpha_{i}\mathrm{'s\,\,and}\,\,\beta_{i}\mathrm{'s\,\,that\,\,meet\,\,the}\,\,\mathrm{(}\mathscr{D}^{*}_{{\spt{\beta\alpha}},1}\mathrm{)}$ such that $\e^{\iu \frac{\pi}{3}  \hat{\tau}'_{i}}$ are equal to $\pm e$, where $\hat{\tau}'_{1}=\beta_{1}(a_{2})\bm{\hat{y}}$ and $\hat{\tau}'_{2}=\alpha_{2}(a_{1})\bm{\hat{x}}\}$,
\end{defask} 
\hspace{-1.55em}we find that
\begin{thm}
\label{thm:FcapT=F*T2nc}
Given $\mathcal{B}^{\nc}_{\spt{U(1)}}$, $\mathscr{F}\cap\,\mathscr{T}=\mathscr{F}_{r_{\spt{\theta}}}$ .
\end{thm}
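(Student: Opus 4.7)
The plan is to verify the conditions $(\mathscr{T}_{\spt{1}})$ and $(\mathscr{T}_{\spt{2}})$ of Definition~\ref{def:T} directly on a generic $\phi\in\mathscr{F}$, by explicitly computing $\phi(\hat{x}+\hat{a}_{1},\hat{y})$ and $\phi(\hat{x},\hat{y}+\hat{a}_{2})$ and matching each of them, through a reindexing of the double series, with $\Omega'_{1}\,\phi(\hat{x},\hat{y})$ and $\Omega'_{2}\,\phi(\hat{x},\hat{y})$, where the twist matrices are required to lie in $\mathscr{D}^{*}_{\spt{\beta\alpha,U(1)}}$ modulo gauge. The key technical maneuver is the index shift $q\mapsto q-1$ after the $\hat{x}$-translation (respectively $k\mapsto k-1$ after the $\hat{y}$-translation), chosen so that the coordinate shift and the relabelling telescope, leaving behind a $(q,k)$-independent prefactor to be identified with $\Omega'_{1}$ (respectively $\Omega'_{2}$).

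Concretely, since $\alpha_{i},\beta_{i},\omega_{i}$ are linear and homogeneous, the shift $x\mapsto x+a_{1}$ sends $\hat{z}_{1}\mapsto\hat{z}_{1}+\pi\hat{\tau}_{1}$, $\hat{z}_{2}\mapsto\hat{z}_{2}+\pi\hat{\tau}'_{2}$ and $\omega_{1}(x)\mapsto\omega_{1}(x)+\omega_{1}(a_{1})$, while $\hat{\tau}_{1},\hat{\tau}_{2}$ and every $y$-dependent object remain fixed. Substituting these into the exponents of $f^{\spt{\omega}}_{q,k}$ and reindexing $q\mapsto q-1$ makes the contributions $-\iu\pi q^{2}\hat{\tau}_{1}$ and $-2\iu q\hat{z}_{1}$, together with the $\Im^{2}$-correction in $\omega_{q,k}$, telescope against the piece $-2\iu\pi q\hat{\tau}_{1}$ generated by the $\hat{z}_{1}$-shift. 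The polynomial $I_{q,k}$ is engineered precisely to absorb the central commutator corrections arising when the exponentials are reordered through the Campbell--Baker--Hausdorff--Dynkin formula: since $[\bm{\hat{x}},\bm{\hat{y}}]=\iu\theta\hat{\mathbbm{1}}$ is central every such commutator is a $c$-number, but the combinatorics of $I_{q,k}$ must match, term by term, the combinatorics of the reshuffling produced by the translation.

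After the reindexing, the $(q,k)$-independent prefactor can be brought to the form $\exp(\iu\beta_{1}(y)\bm{\hat{y}})\exp(\iu\alpha_{1}(x)\bm{\hat{x}})$ --- the normal form of an element of $\mathscr{D}^{*}_{\spt{\beta\alpha,U(1)}}$ --- multiplied by a residual central factor $\exp(\tfrac{\iu\pi}{3}\hat{\tau}'_{1})$ originating from the cubic term in $I_{q,k}$ proportional to $q^{3}$; an analogous computation for the $\hat{y}$-shift produces the twin residual $\exp(\tfrac{\iu\pi}{3}\hat{\tau}'_{2})$. For the prefactor to actually represent a legitimate twist matrix of $\mathscr{D}^{*}_{\spt{\beta\alpha,U(1)}}$, these residuals must collapse to $\pm e$; by Corollary~\ref{cor:rth} this is equivalent to $\exp(\tfrac{\iu\pi}{3}\hat{\tau}'_{i})=\pm e$ for $i=1,2$. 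The simultaneous compatibility of both shifts forces, in addition, the defining equation $(\mathscr{D}^{*}_{{\spt{\beta\alpha}},1})$ on $\alpha_{i},\beta_{i}$, so that the full pair $(\Omega_{1},\Omega_{2})$ indeed belongs to $\mathscr{D}^{*}_{\spt{\beta\alpha,U(1)}}$.

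The argument is reversible: if $\phi\in\mathscr{F}\cap\mathscr{T}$, the same reindexing forces both $(\mathscr{D}^{*}_{{\spt{\beta\alpha}},1})$ and the cubic-commutator condition on $\hat{\tau}'_{i}$, yielding $\mathscr{F}\cap\mathscr{T}\subseteq\mathscr{F}_{r_{\spt{\theta}}}$; for the opposite inclusion, one verifies that under these two conditions the reindexed series coincides with $\phi$ times the identified twist matrix, so that $(\mathscr{T}_{\spt{1}})$ and $(\mathscr{T}_{\spt{2}})$ are satisfied. The main obstacle is precisely the book\-keeping step: checking that the specific $I_{q,k}$ and $\omega_{q,k}$ given in Definition~\ref{def:Fphi} match the CBHD combinatorics exactly, with every spurious central term produced by the translation of the arguments canceled by the corresponding term of $I_{q,k}$ and every $\Im^{2}$-surplus absorbed by $\omega_{q,k}$. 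Once this (lengthy but mechanical) cancellation is settled, the identification of $\Omega'_{1},\Omega'_{2}$ and the residual cubic-commutator conditions follow by inspection.
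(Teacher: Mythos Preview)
Your overall strategy matches the paper's: translate, reindex by $q\mapsto q+1$ (resp.\ $k\mapsto k+1$), and identify the $(q,k)$-independent prefactor with a twist matrix in $\mathscr{D}^{*}_{\spt{\beta\alpha,U(1)}}$ modulo a gauge transformation. However, your account of \emph{where} and \emph{why} the defining conditions of $\mathscr{F}_{r_{\spt{\theta}}}$ arise is off in a way that matters.

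First, the residual factors involving $\hat{\tau}'_{i}$ are not $(q,k)$-independent pieces of the prefactor; they are $(q,k)$-\emph{dependent} exponentials that remain inside the series after reindexing. For the $\hat{x}$-shift one is left with terms of the type $\exp\bigl(-2\pi(q(k+1)+k)[\hat{z}_{1},\hat{\tau}'_{2}]+\tfrac{\pi^{2}}{3}k^{3}[\hat{\tau}_{2},\hat{\tau}'_{2}]\bigr)$ and $\exp(-2\iu k\pi\hat{\tau}'_{2})$ sitting under the sum (see Appendix~\ref{sec:AppFintT}). It is precisely because these depend on $(q,k)$ that a genuine constraint is forced: by Lemma~\ref{lem:e^[x,y]=+-e} they collapse to $e$ for all $(q,k)$ if and only if $\e^{\iu\pi\hat{\tau}'_{2}/3}=\pm e$. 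If the residual were a $(q,k)$-independent central factor as you describe, it could simply be absorbed into the gauge transformation $\Omega_{x}$ and would impose nothing.

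Second, you have the labelling reversed. The $\hat{x}$-translation shifts $\hat{z}_{2}\mapsto\hat{z}_{2}+\pi\hat{\tau}'_{2}$ (since $\hat{\tau}'_{2}=\alpha_{2}(a_{1})\bm{\hat{x}}$), so it is $\hat{\tau}'_{2}$ that appears in the $\hat{x}$-direction and $\hat{\tau}'_{1}$ in the $\hat{y}$-direction --- opposite to what you wrote. The relevant cubic is accordingly the $k^{3}$ term $\tfrac{\pi}{3}k^{3}[\hat{\tau}_{2},\hat{z}_{2}]$ (not the $q^{3}$ term) for the $\hat{x}$-shift. Finally, the condition $(\mathscr{D}^{*}_{{\spt{\beta\alpha}},1})$, i.e.\ $[\hat{z}_{1},\hat{z}_{2}]=0$, is not a symmetric ``compatibility of both shifts'': it is forced specifically by the $\hat{y}$-translation, where an extra $\e^{4q[\hat{z}_{1},\hat{z}_{2}]}$ survives the reindexing and must be killed. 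With these corrections in place, your argument becomes the paper's.
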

\begin{proof}
Given a $\phi\in\mathscr{F}$ on $\mathcal{T}^{2}_{\nc}$, calculating the quantities \mbox{$\phi(\hat{x}+\hat{a}_{1},\hat{y})$} and\linebreak[4] \mbox{$\phi(\hat{x},\hat{y}+\hat{a}_{2})$}, we find that
\begin{subequations}
\label{eqn:phitrasl}
\begin{align}
\phi(\hat{x}+\hat{a}_{1},\hat{y})&=\Omega'_{1}(\hat{x},\hat{y})\phi(\hat{x},\hat{y})\label{eqn:O'1}\\
\phi(\hat{x},\hat{y}+\hat{a}_{2})&=\Omega'_{2}(\hat{x},\hat{y})\phi(\hat{x},\hat{y})\label{eqn:O'2}
\end{align} 
\end{subequations}
with
{\fontsize{9.5}{14}
 \selectfont 
\begin{subequations}
\label{eqn:O'1O'2}
\begin{align}
\hspace{-0em}\Omega'_{1}(\hat{x},\hat{y})&= -\e^{2\iu\hat{z}_{1}+\iu\pi\hat{\tau}_{1}}\e^{2[\hat{z}_{1},\hat{z}_{2}]}\e^{-\frac{\pi}{3}[\hat{\tau}_{1},\hat{z}_{1}]}\e^{\pi[\hat{\tau}_{1},\hat{z}_{2}]}\e^{2\iu\pi\Re(\omega_{1}(x))+\iu\pi\Re(\omega_{1}(a_{1}))},\label{eqn:O'1}\\
\hspace{-0em}\Omega'_{2}(\hat{x},\hat{y})&= -\e^{2\iu \hat{z}_{2}+\iu\pi \hat{\tau}_{2}}\e^{-\frac{\pi}{3}[\hat{\tau}_{2},\hat{z}_{2}]}\e^{2\iu\pi\Re(\omega_{2}(y))+\iu\pi\Re(\omega_{2}(a_{2}))}\label{eqn:O'2}
\end{align} 
\end{subequations}}
\!\!and equal respectively to $\Omega_{1}$ and $\Omega_{2}\in\mathscr{D}^{*}_{\spt{\beta\alpha,U(1)}}$ modulo a gauge transformation, if{}f the $\alpha_{i}$'s and $\beta_{i}$'s meet the $\mathrm{(}\mathscr{D}^{*}_{{\spt{\beta\alpha}},1}\mathrm{)}$ and $\e^{\iu \frac{\pi}{3}  \hat{\tau}'_{i}}$ are equal to $\pm e$. 
In fact, using the Lemma~\ref{lem:e^[x,y]=+-e}, we have that (for more details see Appendix~\ref{sec:AppFintT})
{\fontsize{9,5}{12,5}
 \selectfont 
\begin{equation}
\hspace{0em}\phi(\hat{x}+\hat{a}_{1},\hat{y})\equiv\phi(\hat{z}_{1}+\pi\hat{\tau}_{1},\hat{z}_{2}+\pi\hat{\tau}'_{2})=\Omega'_{1}(\hat{x},\hat{y})\!\!\!\sum^{\infty}_{q,k=-\infty}\hspace{-.5em}f^{\spt{\omega}}_{q+1,k}(\hat{x},\hat{y},\theta\hat{\mathbbm{1}})=\Omega'_{1}(\hat{x},\hat{y})\phi(\hat{x},\hat{y})\label{eqn:phitrx}
\end{equation}}
\!\!if{}f $\e^{\iu \frac{\pi}{3}\hat{\tau}'_{2}}=\pm e$, and moreover we have also that 
{\fontsize{9,5}{12,5}
 \selectfont 
\begin{equation}
\hspace{0em}\phi(\hat{x},\hat{y}+\hat{a}_{2})\equiv\phi(\hat{z}_{1}+\pi\hat{\tau}'_{1},\hat{z}_{2}+\pi\hat{\tau}_{2})=\Omega'_{2}(\hat{x},\hat{y})\!\!\!\sum^{\infty}_{q,k=-\infty}\hspace{-.5em}f^{\spt{\omega}}_{q,k+1}(\hat{x},\hat{y},\theta\hat{\mathbbm{1}})=\Omega'_{2}(\hat{x},\hat{y})\phi(\hat{x},\hat{y})\label{eqn:phitry}
\end{equation}}
\!\!if{}f $\e^{\iu \frac{\pi}{3}  \hat{\tau}'_{1}}$ is equal to $\pm e$ and $[\hat{z}_{1},\hat{z}_{2}]=0$, i.e.\,$\alpha_{i},\beta_{i}$'s meet the  $\mathrm{(}\mathscr{D}^{*}_{{\spt{\beta\alpha}},1}\mathrm{)}$.\\
We find besides that $\Omega'_{1}$($\Omega'_{2}$) is equal to $\Omega_{1}$($\Omega_{2}$) in $\mathscr{D}^{*}_{\spt{\beta\alpha,U(1)}}$ modulo a gauge transformation $\Omega_{x}$($\Omega_{y}$) where
\begin{subequations}
\label{eqn:O1O2}
\begin{align}
\Omega_{1}(\hat{x},\hat{y})&=\e^{2\pi\iu\beta_{1}(y)\bm{\hat{y}}}\e^{2\pi\iu\alpha_{1}(x)\bm{\hat{x}}},\\
\Omega_{2}(\hat{x},\hat{y})&=\e^{2\pi\iu\alpha_{2}(x)\bm{\hat{x}}}\e^{2\pi\iu\beta_{2}(y)\bm{\hat{y}}},
\end{align} 
\end{subequations}
with $\alpha_{i}$'s and $\beta_{i}$'s that meet the $\mathrm{(}\mathscr{D}^{*}_{{\spt{\beta\alpha}},1}\mathrm{)}$ such that $\e^{\iu \frac{\pi}{3}  \hat{\tau}'_{i}}$ are equal to $\pm e$, and
\begin{subequations}
\begin{align}
\Omega_{x}(\hat{x},\hat{y}) &=\e^{\iu\pi\alpha_{1}(x)\bm{\hat{x}}}\e^{\frac{2}{3}\pi^{2}\alpha_{1}(x)\beta_{1}(y)[\bm{\hat{x}},\bm{\hat{y}}]}\e^{\iu\pi\bigl(\gamma_{1}(x)+\frac{\Re^{2}(\omega_{1}(x))}{\Re(\omega_{1}(a_{1}))}\bigr)},\label{eqn:Ox2}\\
\Omega_{y}(\hat{x},\hat{y}) &=\e^{\iu\pi\beta_{2}(y)\bm{\hat{y}}}\e^{-\frac{2}{3}\pi^{2}\alpha_{2}(x)\beta_{2}(y)[\bm{\hat{x}},\bm{\hat{y}}]}\e^{\iu\pi\bigl(\gamma_{2}(y)+\frac{\Re^{2}(\omega_{2}(y))}{\Re(\omega_{2}(a_{2}))}\bigr)},\label{eqn:Oy2}
\end{align}
\end{subequations}
with $\gamma_{i}$  linear functions such that \mbox{$\e^{\iu\pi\gamma_{i}(a_{i})}=-e$}, i.e.\,$\gamma_{i}(a_{i})$ are odd numbers.\\
We have that
\begin{subequations}
\label{eqn:O'1O'2def}
\begin{align}
\Omega'_{1}(\hat{x},\hat{y})&=\Omega_{x}(\hat{x}+\hat{a}_{1},\hat{y})\Omega_{1}(\hat{x},\hat{y})\Omega_{x}^{-1}(\hat{x},\hat{y}),\label{eqn:gaugeO1}\\
\Omega'_{2}(\hat{x},\hat{y})&=\Omega_{y}(\hat{x},\hat{y}+\hat{a}_{2})\Omega_{2}(\hat{x},\hat{y})\Omega_{y}^{-1}(\hat{x},\hat{y}),\label{eqn:gaugeO2}
\end{align}
\end{subequations}
where, in the~\eqref{eqn:gaugeO1}, we consider that $[\hat{z}_{1},\hat{z}_{2}]=0$.
\end{proof}
Furthermore, we obtain that 
\begin{thm}
\label{thm:(BUnc(1),pi)}
Given $(\mathcal{B}^{\nc}_{\spt{U(1)}},\rho)$, we have that
\begin{center}
$\mathscr{F}_{r_{\spt{\theta}}}\neq\emptyset$ if{}f $r_{\spt{\theta}}\in\Q$.
\end{center}
\end{thm}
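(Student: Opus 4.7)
The plan is to handle each implication in turn. The defining conditions of $\mathscr{F}_{r_{\spt{\theta}}}$, translated through Corollary~\ref{cor:rth}, prescribe $\beta_{1}(a_{2})$ and $\alpha_{2}(a_{1})$ as integer multiples of $r_{\spt{\theta}}$. The rationality of $r_{\spt{\theta}}$ will emerge once these values are substituted into the consistency condition that the pair $(\Omega_{1},\Omega_{2})\in\mathscr{D}^{*}_{\spt{\beta\alpha,U(1)}}$ must satisfy on $\mathcal{T}^{2}_{\nc}$.

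\textbf{Forward direction ($\Rightarrow$).} Take $\phi\in\mathscr{F}_{r_{\spt{\theta}}}$. Applying Corollary~\ref{cor:rth} to the two defining conditions $\e^{\iu\pi\hat{\tau}'_{i}/3}=\pm e$ gives $\beta_{1}(a_{2})=3k_{1}r_{\spt{\theta}}$ and $\alpha_{2}(a_{1})=3k_{2}r_{\spt{\theta}}$ with $k_{1},k_{2}\in\Z\setminus\{0\}$. By Theorem~\ref{thm:FcapT=F*T2nc}, $\phi\in\mathscr{T}$, so its twists $(\Omega_{1},\Omega_{2})\in\mathscr{D}^{*}_{\spt{\beta\alpha,U(1)}}$ solve the consistency equation~\eqref{eqn:rev'}. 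The $U(1)$ counterpart of Theorem~\ref{thm:rev3} then enforces $(\mathscr{D}_{g,2})$, namely $\exp(\iu\beta_{1}(a_{2})\alpha_{2}(a_{1})\theta_{\nc}\hat{\mathbbm{1}})=e$. Substituting and using $r_{\spt{\theta}}\theta_{\nc}=\theta_{\cc}$ reduces this to $\exp(\iu\cdot 9k_{1}k_{2}r_{\spt{\theta}}\theta_{\cc}\hat{\mathbbm{1}})=e$, which by Corollary~\ref{cor:thkerom} forces $9k_{1}k_{2}r_{\spt{\theta}}\theta_{\cc}\in\ker(\pi_{\spt{[0]}})=2\Z$. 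Since $\theta_{\cc}$ is a nonzero even integer and $k_{1},k_{2}$ are nonzero integers, $r_{\spt{\theta}}$ must then be rational.

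\textbf{Backward direction ($\Leftarrow$).} Conversely, given $r_{\spt{\theta}}=p/q$ with $p,q\in\Z\setminus\{0\}$, choose $k_{1}=q$ and $k_{2}=1$, so that $9k_{1}k_{2}r_{\spt{\theta}}\theta_{\cc}=9p\theta_{\cc}\in 2\Z$, automatically satisfying $(\mathscr{D}_{g,2})$. Set the linear slopes $\beta'_{1}=3k_{1}r_{\spt{\theta}}/a_{2}$ and $\alpha'_{2}=3k_{2}r_{\spt{\theta}}/a_{1}$, and pick $\alpha'_{1}=\alpha'_{2}$, $\beta'_{2}=\beta'_{1}$, which makes $\alpha'_{1}\beta'_{2}=\beta'_{1}\alpha'_{2}$ and hence $(\mathscr{D}^{*}_{\spt{\beta\alpha,1}})$ hold. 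Finally, select $\omega_{i}$ with $\Im(\omega_{i}(a_{i}))<0$ to guarantee convergence of the series defining $\phi\in\mathscr{F}$ (Remark~\ref{rmk:Imo<0}). By construction this $\phi$ lies in $\mathscr{F}\cap\mathscr{T}=\mathscr{F}_{r_{\spt{\theta}}}$ via Theorem~\ref{thm:FcapT=F*T2nc}.

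\textbf{Main obstacle.} The delicate step is invoking the $U(1)$ version of $(\mathscr{D}_{g,2})$: Theorem~\ref{thm:rev3} is stated for $SU(N)/Z_{N}$, and the footnote at the start of the section only asserts the adaptation to $U(1)$ by dropping $\tr(T_{a})=0$. One must verify that $\exp(\iu\beta_{1}(a_{2})\alpha_{2}(a_{1})\theta_{\nc}\hat{\mathbbm{1}})=e$ survives verbatim in the abelian setting, since this is precisely the algebraic identity that, together with Corollary~\ref{cor:thkerom}, upgrades the mere existence of $(\alpha_{i},\beta_{i})$ satisfying the defining constraints of $\mathscr{F}_{r_{\spt{\theta}}}$ to the rationality of $r_{\spt{\theta}}$.
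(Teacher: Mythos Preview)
Your route differs from the paper's. The paper's proof is a single sentence invoking only Corollary~\ref{cor:rth}: the two defining conditions $\e^{\iu\pi\hat{\tau}'_{i}/3}=\pm e$ of $\mathscr{F}_{r_{\spt{\theta}}}$ are, by that corollary, equivalent to $\alpha_{2}(a_{1})/3,\beta_{1}(a_{2})/3\in r_{\spt{\theta}}(\Z\setminus\{0\})$, and the biconditional with $r_{\spt{\theta}}\in\Q$ is read off from that characterization in the presence of the representation $\rho$ that is part of the hypothesis $(\mathcal{B}^{\nc}_{\spt{U(1)}},\rho)$. No appeal to the consistency equation or to Theorem~\ref{thm:rev3} is made.

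You instead route the forward direction through the consistency equation and the $U(1)$ analogue of $(\mathscr{D}_{g,2})$. This is machinery the paper never invokes here, and the obstacle you yourself flag is genuine within your approach: Theorem~\ref{thm:rev3}'s proof is an induction on powers of a generator of the \emph{finite} center $Z_{N}$, whereas for $U(1)$ the center is all of $U(1)$, so the footnote's remark that the analysis ``can be easily adapted'' does not by itself guarantee that $(\mathscr{D}_{g,2})$ survives verbatim. In short, you have created a detour and then correctly identified the pothole in it; the paper's direct road via Corollary~\ref{cor:rth} avoids the issue entirely.

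Your backward direction is correct and more explicit than the paper's one-line reference. Note, however, that the verification of $(\mathscr{D}_{g,2})$ you insert there is superfluous: membership in $\mathscr{F}_{r_{\spt{\theta}}}$ requires only the $(\mathscr{D}^{*}_{\spt{\beta\alpha},1})$ relation and the two conditions $\e^{\iu\pi\hat{\tau}'_{i}/3}=\pm e$, all of which your explicit choice of slopes already satisfies via Corollary~\ref{cor:rth}.
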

\begin{proof}
It's sufficient to consider the Corollary~\ref{cor:rth}.
\end{proof}
We will define $(\mathcal{B}^{*\nc }_{\spt{U(1)}},\rho)$ as
\begin{defask}
$(\mathcal{B}^{*\nc }_{\spt{U(1)}},\rho)=(\mathcal{B}^{\nc}_{\spt{U(1)}},\rho)$ with $r_{\spt{\theta}}\in\Q$.
\end{defask}
\pagebreak 
Now, the Theorem~\ref{thm:(BUnc(1),pi)} is followed by
\begin{corth}
\label{corth:thetag}
Given $(\mathcal{B}^{*\nc }_{\spt{U(1)}},\rho)$, $e^{\iu\pi \theta\hat{\mathbbm{1}}}$ is the generator of a cyclic subgroup $\tau_{r_{\spt{\theta}}}$ of $U(1)$ so that
\begin{enumerate}[label=\emph{(\roman*)}]
\item  if $r_{\spt{\theta}}\in\Z$ then the order of $\tau_{r_{\spt{\theta}}}$ is equal to $\abs{r_{\spt{\theta}}}$, otherwise \label{item:(ir1)} 
\item if $r_{\spt{\theta}}\in\Q\sim\Z$, representing $r_{\spt{\theta}}$ as $p_{\spt{r_{\spt{\theta}}}}/q_{\spt{r_{\spt{\theta}}}}$ with $p_{\spt{r_{\spt{\theta}}}},q_{\spt{r_{\spt{\theta}}}}$ relatively prime integers, then the order of $\tau_{r_{\spt{\theta}}}$ is equal to $\abs{p_{\spt{r_{\spt{\theta}}}}}$. \label{item:(ir2)}
\end{enumerate}
\end{corth}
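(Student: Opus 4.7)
The plan is: first, observe that $e^{\iu\pi\theta_{\nc}\hat{\mathbbm{1}}}$ is a central scalar unitary and therefore lies in $U(1)$, so it automatically generates a cyclic subgroup $\tau_{r_{\spt{\theta}}}\subset U(1)$; the first assertion of the corollary is then free. The real content is the order computation.

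By definition, the order of $\tau_{r_{\spt{\theta}}}$ is the smallest positive integer $m$ with $\bigl(e^{\iu\pi\theta_{\nc}\hat{\mathbbm{1}}}\bigr)^{m}=e^{\iu\pi(m\theta_{\nc})\hat{\mathbbm{1}}}=e$. Applying Theorem~\ref{teoth} (equivalently Corollary~\ref{cor:thkerom}) with the scalar $m\theta_{\nc}$ in place of $\theta$, this is precisely the condition $m\theta_{\nc}\in\ker(\pi_{\spt{[0]}})=2\Z$. Hence the task reduces to finding the smallest positive integer $m$ for which $m\theta_{\nc}\in 2\Z$.

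Writing $\theta_{\nc}=\theta_{\cc}/r_{\spt{\theta}}$ and taking the canonical representative $\theta_{\cc}=2$ of $[0]\in\R/2\Z$, the condition becomes $m/r_{\spt{\theta}}\in\Z$. In case \emph{(i)}, $r_{\spt{\theta}}\in\Z$, so the smallest such $m$ is immediately $|r_{\spt{\theta}}|$. In case \emph{(ii)}, writing $r_{\spt{\theta}}=p_{\spt{r_{\spt{\theta}}}}/q_{\spt{r_{\spt{\theta}}}}$ in lowest terms, the condition becomes $m\,q_{\spt{r_{\spt{\theta}}}}/p_{\spt{r_{\spt{\theta}}}}\in\Z$; since $\gcd(p_{\spt{r_{\spt{\theta}}}},q_{\spt{r_{\spt{\theta}}}})=1$, this forces $p_{\spt{r_{\spt{\theta}}}}\mid m$, giving the smallest $m=|p_{\spt{r_{\spt{\theta}}}}|$. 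Both cases of the corollary then follow.

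The main (mild) obstacle is only bookkeeping around the choice of $\theta_{\cc}$: one should verify that any admissible $\theta_{\cc}\in 2\Z\0$ producing the same ratio $r_{\spt{\theta}}$ (subject to the constraint $\theta_{\nc}\notin 2\Z$ forced by non-commutativity) yields the same order, which follows from the same Bezout-style divisibility argument once one uses that the admissibility condition amounts to $\gcd(p_{\spt{r_{\spt{\theta}}}},\theta_{\cc}/2)=1$. Beyond this, the proof is an elementary application of the ``exponential criterion'' of Theorem~\ref{teoth} combined with a coprimality argument, with no further subtleties coming from the non-commutative geometry.
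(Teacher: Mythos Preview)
Your core argument is correct and matches the paper's (one-line) route: the paper simply cites Corollary~\ref{cor:rth}, which already packages the criterion ``$e^{\iu\pi m\theta_{\nc}\hat{\mathbbm{1}}}=e$ iff $m/r_{\spt{\theta}}\in\Z$'' that you recover by going one step further back to Theorem~\ref{teoth}/Corollary~\ref{cor:thkerom} with the canonical choice $\theta_{\cc}=2$, and the Bezout reduction to $|p_{\spt{r_{\spt{\theta}}}}|$ is then identical.

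One correction to your closing paragraph: the admissibility condition $\theta_{\nc}=\theta_{\cc}\,q_{\spt{r_{\spt{\theta}}}}/p_{\spt{r_{\spt{\theta}}}}\notin 2\Z$ unwinds to $p_{\spt{r_{\spt{\theta}}}}\nmid(\theta_{\cc}/2)$, which is strictly weaker than your claimed $\gcd\bigl(p_{\spt{r_{\spt{\theta}}}},\theta_{\cc}/2\bigr)=1$ (take $p_{\spt{r_{\spt{\theta}}}}=4$, $\theta_{\cc}=4$). Under mere admissibility the smallest $m$ with $m(\theta_{\cc}/2)q_{\spt{r_{\spt{\theta}}}}/p_{\spt{r_{\spt{\theta}}}}\in\Z$ is $|p_{\spt{r_{\spt{\theta}}}}|/\gcd\bigl(p_{\spt{r_{\spt{\theta}}}},\theta_{\cc}/2\bigr)$, not $|p_{\spt{r_{\spt{\theta}}}}|$, so the order genuinely depends on the choice of $\theta_{\cc}$ and your robustness claim is too strong. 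This does not affect the body of your proof, which fixes $\theta_{\cc}=2$; it only means the statement should be read with that canonical representative, as both you and the paper implicitly do.
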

\begin{proof}
Let's consider the elements of $\mathscr{F}_{r_{\spt{\theta}}}$ and  Corollary~\ref{cor:rth}.
\end{proof}

\subsubsection{The belonging of $\mathcal{E}$ to $\mathcal{A}_{\theta_{\nc}}$}
\label{ssec:p1dp2ep2p1d}
\noindent The elements of $\mathscr{F}_{r_{\spt{\theta}}}$ will thus be ``bricks'' by which we will build the three quantities that make $\mathcal{E}_{\spt{12}}$.\\
\indent We will analyze now the conditions according to which the three addends of $\mathcal{E}_{\spt{12}}$ can be represented biperiodic and we will show how in such conditions $\mathcal{E}_{\spt{12}}$ belongs to $\mathcal{A}_{\theta_{\nc}}$.\\
\indent Let's see first when the quantities \mbox{$\phi^{\dagger}_{1}\phi_{2}$} and \mbox{$\phi_{2}\phi^{\dagger}_{1}$} are biperiodic. In this regard, let's define
\begin{defask}
\label{def:UAath}
$\mathcal{A}=\bigcup\limits_{\alpha\in P}\mathcal{A}^{\alpha}_{\theta}$ with $P$ a non-void index set and $\mathcal{A}^{\alpha}_{\theta}$ related 
to a 2-dimensional torus of periods $(a_{1;\alpha},a_{2;\alpha})$,
\end{defask}
\hspace{-1.45em}and the subsets of $\mathscr{F}_{r_{\spt{\theta}}}\times\mathscr{F}_{r_{\spt{\theta}}}$ 
\begin{defask}
\label{def:frackFw}
$\mathfrak{F}_{r_{\spt{\theta}}}=\{(\phi_{1},\phi_{2})\colon (\phi_{1},\phi_{2})\in\mathscr{F}_{r_{\spt{\theta}}}\times\mathscr{F}_{r_{\spt{\theta}}}$ with \mbox{$\Re(\omega_{i;j}(x_{i}))=\Re(\omega_{i;k}(x_{i}))$} and $\hat{z}_{i;j}=\hat{z}_{i;k}\}$,
\end{defask}
\begin{defask}
\label{def:tfrackFw}
$\tilde{\mathfrak{F}}_{r_{\spt{\theta}};1}=\{(\phi_{1},\phi_{2})\colon (\phi_{1},\phi_{2})\in\mathscr{F}_{r_{\spt{\theta}}}\times\mathscr{F}_{r_{\spt{\theta}}}$ with \mbox{$[\hat{z}_{i;j},\hat{z}_{k;l}]=0$} and one of the pairs of conditions arising from $\{\e^{2\pi\iu\hat{\tau}_{1;1}}\!=\!\pm e,\e^{2\pi\iu\hat{\tau}_{1;2}}\!=\!\pm e\}\!\times\!\{\e^{2\pi\iu\hat{\tau}_{2;1}}\!=\!\pm e,\e^{2\pi\iu\hat{\tau}_{2;2}}\!=\!\pm e\}\}$,
\end{defask}
\hspace{-1.45em}and the subset of $\tilde{\mathfrak{F}}_{r_{\spt{\theta}};1}$
\begin{defask}
\label{def:tfrackFw0}
$\tilde{\mathfrak{F}}_{r_{\spt{\theta}};0}=\{(\phi_{1},\phi_{2})\colon (\phi_{1},\phi_{2})\in\tilde{\mathfrak{F}}_{r_{\spt{\theta}};1}$ with one of the pairs of conditions arising from $\{\e^{\pi\iu\hat{\tau}_{1;1}}\!=\!\pm e,\e^{\pi\iu\hat{\tau}_{1;2}}\!=\!\pm e\}\!\times\!\{\e^{\pi\iu\hat{\tau}_{2;1}}\!=\!\pm e,\e^{\pi\iu\hat{\tau}_{2;2}}\!=\!\pm e\}\}$.
\end{defask}
We can prove the following 
\begin{thm}
\label{thm:O12O21period}
Given $(\mathcal{B}^{*\nc }_{\spt{U(1)}},\rho)$ with $(\phi_{1},\phi_{2})\in \mathscr{F}_{r_{\spt{\theta}}}\times\mathscr{F}_{r_{\spt{\theta}}}$, we have that 
{\fontsize{10,4}{12.5}
 \selectfont
\begin{enumerate}[label=\emph{(\roman*)}]
\item  $\phi^{\dagger}_{1}\phi_{2}\in\mathcal{A}$ if{}f \mbox{$(\phi_{1},\phi_{2})\in \mathfrak{F}_{r_{\spt{\theta}}}$}, \label{item:(i)}  
\item $\phi_{2}\phi^{\dagger}_{1}\in\mathcal{A}$ if{}f \mbox{$(\phi_{1},\phi_{2})\in \mathfrak{F}_{r_{\spt{\theta}}}\cap\tilde{\mathfrak{F}}_{r_{\spt{\theta}};1}$}. 
\end{enumerate}}
\end{thm}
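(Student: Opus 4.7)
The plan is to test biperiodicity of the composite fields directly against the TBC already established for the individual factors. By Theorem~\ref{thm:FcapT=F*T2nc}, any $\phi_{j}\in\mathscr{F}_{r_{\spt{\theta}}}$ satisfies $\phi_{j}(\hat{x}+\hat{a}_{1},\hat{y})=\Omega'_{1;j}\phi_{j}$ and $\phi_{j}(\hat{x},\hat{y}+\hat{a}_{2})=\Omega'_{2;j}\phi_{j}$, with $\Omega'_{i;j}$ given by~\eqref{eqn:O'1O'2} but depending on the specific parameters $(\omega_{i;j},\alpha_{i;j},\beta_{i;j})$ carried by $\phi_{j}$. Since $\phi_{j}^{\dagger}(\hat{x}+\hat{a}_{i},\hat{y})=\phi_{j}^{\dagger}\Omega_{i;j}^{\prime\dagger}$, the two quantities of interest transform as
\begin{equation*}
(\phi_{1}^{\dagger}\phi_{2})(\hat{x}+\hat{a}_{i},\hat{y})=\phi_{1}^{\dagger}\Omega_{i;1}^{\prime\dagger}\Omega'_{i;2}\phi_{2},\qquad (\phi_{2}\phi_{1}^{\dagger})(\hat{x}+\hat{a}_{i},\hat{y})=\Omega'_{i;2}\phi_{2}\phi_{1}^{\dagger}\Omega_{i;1}^{\prime\dagger},
\end{equation*}
and analogously for translations in $\hat{y}$. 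Belonging to $\mathcal{A}$ then reduces to finding a pair of periods $(a_{1;\alpha},a_{2;\alpha})$ such that each of these composite twists reduces to the identity.

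First I would treat case \ref{item:(i)}. Writing $\Omega'_{i;j}$ explicitly from~\eqref{eqn:O'1O'2} and using that the $\hat{z}_{i;j}$ and $\hat{\tau}_{i;j}$ are hermitian, the dagger flips signs only in the phase factors involving $\Re(\omega_{i;j})$ and $\hat{z}_{i;j}$. Computing $\Omega_{i;1}^{\prime\dagger}\Omega'_{i;2}$ and collecting the exponents, the $\hat{z}$-dependent and $\Re(\omega_{i;j})$-dependent parts survive in the form $\e^{2\iu(\hat{z}_{i;2}-\hat{z}_{i;1})}\e^{2\iu\pi(\Re(\omega_{i;2}(x_{i}))-\Re(\omega_{i;1}(x_{i})))}$ times a remaining purely numerical phase depending on $(\hat{\tau}_{i;j},a_{i})$; the remaining phase can always be absorbed into the choice of $(a_{1;\alpha},a_{2;\alpha})$, while the $\hat{z}$- and $\Re(\omega)$-parts cannot unless they vanish identically in $(x,y)$. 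This forces exactly $\hat{z}_{i;1}=\hat{z}_{i;2}$ and $\Re(\omega_{i;1})=\Re(\omega_{i;2})$, which is the definition of $\mathfrak{F}_{r_{\spt{\theta}}}$; conversely, under these conditions the surviving phase can be tuned to $e$ by choosing the $a_{i;\alpha}$ appropriately, giving $\phi_{1}^{\dagger}\phi_{2}\in\mathcal{A}$.

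For case~\ref{item:(ii)} I would repeat the argument but now the $\Omega'$ factors sit on opposite sides, so before one can compare one must move $\Omega_{i;1}^{\prime\dagger}$ past $\phi_{2}$ (or $\Omega'_{i;2}$ past $\phi_{1}^{\dagger}$). Since $\phi_{2}$ contains exponentials built out of $\hat{z}_{i;2}$ while $\Omega_{i;1}^{\prime\dagger}$ contains exponentials in $\hat{z}_{i;1}$, the Campbell--Baker--Hausdorff--Dynkin formula~\citep{Dynkin47} produces, as in Remark~\ref{rmk:rhox2}, extra exponentials of the commutators $[\hat{z}_{i;j},\hat{z}_{k;l}]$; biperiodicity demands that these do not spoil the cancellation, which is precisely the vanishing condition $[\hat{z}_{i;j},\hat{z}_{k;l}]=0$ in Definition~\ref{def:tfrackFw}. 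Once the reordering is free of obstructions, the argument of part~\ref{item:(i)} applies, but one now also needs the additional sign factors arising from $\e^{2\pi\iu\hat{\tau}_{i;j}}$ when moving each $\Omega'$ across a full period; Lemma~\ref{lem:e^[x,y]=+-e} (together with Corollary~\ref{cor:rth}) shows that these can be reduced to $\pm e$ exactly under the pair conditions listed in Definition~\ref{def:tfrackFw}. Thus $\phi_{2}\phi_{1}^{\dagger}\in\mathcal{A}$ iff $(\phi_{1},\phi_{2})\in\mathfrak{F}_{r_{\spt{\theta}}}\cap\tilde{\mathfrak{F}}_{r_{\spt{\theta}};1}$.

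The main technical obstacle is the non-commutative reordering in part~\ref{item:(ii)}: one must verify that truncating C--B--H--D to the first commutator order is actually sufficient, i.e.\ that higher nested brackets $[\hat{z}_{i;j},[\hat{z}_{k;l},\dots]]$ either vanish automatically when $[\hat{z}_{i;j},\hat{z}_{k;l}]=0$ or contribute only phases proportional to $\hat{\mathbbm{1}}$ that can be absorbed into the choice of periods. This is plausible because $[\bm{\hat{x}},\bm{\hat{y}}]\propto\hat{\mathbbm{1}}$ is central (Lemma~\ref{lem:D*ba}), so all iterated commutators beyond the first one vanish; I expect this central-extension structure to be the key ingredient making the reordering controllable and the conditions in Definitions~\ref{def:frackFw}--\ref{def:tfrackFw} both necessary and sufficient.
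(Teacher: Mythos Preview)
Your approach is genuinely different from the paper's. The paper does not argue at the level of the twist factors $\Omega'^{\dagger}_{i;1}\Omega'_{i;2}$; instead it expands $\phi_{1}^{\dagger}\phi_{2}$ and $\phi_{2}\phi_{1}^{\dagger}$ as explicit quadruple series $\sum g^{\spt{\omega};\,\spt{1,2}}_{q,q',k,k'}$, $\sum h^{\spt{\omega};\,\spt{2,1}}_{q,q',k,k'}$, writes the translated products in the same form, and checks term by term when the translated coefficients can be re-indexed to match the original ones. This buys the paper an automatic handle on the ``only if'' direction: equality of the two series forces equality of the individual terms, and from there the conditions defining $\mathfrak{F}_{r_{\spt{\theta}}}$ and $\tilde{\mathfrak{F}}_{r_{\spt{\theta}};1}$ (including the $\e^{2\pi\iu\hat{\tau}_{i}}=\pm e$ requirement) drop out by direct inspection.

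Your route is cleaner conceptually but has two gaps. First, the step ``the remaining phase can always be absorbed into the choice of $(a_{1;\alpha},a_{2;\alpha})$'' is not right: once the conditions of $\mathfrak{F}_{r_{\spt{\theta}}}$ hold, one actually has $\Omega'_{i;1}=\Omega'_{i;2}$ on the nose (the $\hat{\tau}$'s are determined by the $\hat{z}$'s), so there is nothing to absorb; and when they fail, no alternative periods save you---this is precisely the content of the paper's subsequent Theorem~\ref{thm:AiffAth}. Second, and more seriously, in the necessity direction you assert that if the sandwiched factor $\Omega'^{\dagger}_{i;1}\Omega'_{i;2}$ carries nontrivial $(x,y)$-dependence then $\phi_{1}^{\dagger}\Omega'^{\dagger}_{i;1}\Omega'_{i;2}\phi_{2}\neq\phi_{1}^{\dagger}\phi_{2}$. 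In the non-commutative setting this is not automatic: the middle factor is not a scalar, it sits between two operator-valued series, and one must rule out that its $(x,y)$-dependence is compensated by the flanking $\phi$'s. The paper's term-by-term comparison sidesteps this entirely; if you want to keep your structural argument you would need an injectivity statement (e.g.\ that the map $M\mapsto\phi_{1}^{\dagger}M\phi_{2}$ is faithful on the relevant class of $M$), which is not obvious and not supplied. The same issue recurs in part~(ii), where the conjugation $\Omega'_{i;2}(\,\cdot\,)\Omega'^{\dagger}_{i;1}$ must be shown to be nontrivial on $\phi_{2}\phi_{1}^{\dagger}$ whenever the $\tilde{\mathfrak{F}}_{r_{\spt{\theta}};1}$ conditions fail.
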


\begin{proof}
Given a $(\phi_{1},\phi_{2})\in\mathscr{F}_{r_{\spt{\theta}}}\times\mathscr{F}_{r_{\spt{\theta}}}$, we indicate with $g^{\spt{\omega};\,\spt{1,2}}_{q,q',k,k'}$ and $h^{\spt{\omega};\,\spt{2,1}}_{q,q',k,k'}$ respectively the terms of the development of the quantity $\phi_{1}^{\dagger}\phi_{2}$ and $\phi_{2}\phi_{1}^{\dagger}$, i.e.
\begin{subequations}
\label{eqn:g12h21}
\begin{align}
\phi_{1}^{\dagger}\phi_{2}(\hat{x},\hat{y})&=\hspace{-1em}\sum^{+\infty}_{q,q',k,k'=-\infty}\hspace{-1em}g^{\spt{\omega};\,\spt{1,2}}_{q,q',k,k'}(\hat{x},\hat{y}),\label{eqn:g12}\\
\phi_{2}\phi_{1}^{\dagger}(\hat{x},\hat{y})&=\hspace{-1em}\sum^{+\infty}_{q,q',k,k'=-\infty}\hspace{-1em}h^{\spt{\omega};\,\spt{2,1}}_{q,q',k,k'}(\hat{x},\hat{y}),\label{eqn:h21}
\end{align} 
\end{subequations}
and with $g^{\spt{\omega};\,\spt{1,2,a_{i}}}_{q,q',k,k'}$ and $h^{\spt{\omega};\,\spt{2,1,a_{i}}}_{q,q',k,k'}$ the terms of the development of the quantities\footnote{Let's omit the explicit expression of $g^{\spt{\omega};\,\spt{1,2}}_{q,q',k,k'}, h^{\spt{\omega};\,\spt{2,1}}_{q,q',k,k'}, g^{\spt{\omega};\,\spt{1,2,a_{i}}}_{q,q',k,k'}$ and $h^{\spt{\omega};\,\spt{2,1,a_{i}}}_{q,q',k,k'}$ for want of room.}    
\begin{subequations}
\label{eqn:phi1dU1U2phi2}
\begin{align}
\phi^{\dagger}_{1}\phi_{2}(\hat{x}+\hat{a}_{1},\hat{y})&=\phi^{\dagger}_{1}(\hat{x},\hat{y})\Omega'^{\dagger}_{1;1}(\hat{x},\hat{y})\Omega'_{1;2}(\hat{x},\hat{y})\phi_{2}(\hat{x},\hat{y})=\hspace{-1em}\sum^{+\infty}_{q,q',k,k'=-\infty}\hspace{-1em}g^{\spt{\omega};\,\spt{1,2,a_{1}}}_{q,q',k,k'}(\hat{x},\hat{y}),\label{eqn:bp12x}\\
\phi^{\dagger}_{1}\phi_{2}(\hat{x},\hat{y}+\hat{a}_{2})&=\phi^{\dagger}_{1}(\hat{x},\hat{y})\Omega'^{\dagger}_{2;1}(\hat{x},\hat{y})\Omega'_{2;2}(\hat{x},\hat{y})\phi_{2}(\hat{x},\hat{y})=\hspace{-1em}\sum^{+\infty}_{q,q',k,k'=-\infty}\hspace{-1em}g^{\spt{\omega};\,\spt{1,2,a_{2}}}_{q,q',k,k'}(\hat{x},\hat{y})\label{eqn:bp12y},\\
\phi_{2}\phi^{\dagger}_{1}(\hat{x}+\hat{a}_{1},\hat{y})&=\Omega'_{1;2}(\hat{x},\hat{y})\phi_{2}(\hat{x},\hat{y})\phi^{\dagger}_{1}(\hat{x},\hat{y})\Omega'^{\dagger}_{1;1}(\hat{x},\hat{y})=\hspace{-1em}\sum^{+\infty}_{q,q',k,k'=-\infty}\hspace{-1em}h^{\spt{\omega};\,\spt{2,1,a_{1}}}_{q,q',k,k'}(\hat{x},\hat{y}),\label{eqn:bp21x}\\
\phi_{2}\phi^{\dagger}_{1}(\hat{x},\hat{y}+\hat{a}_{2})&=\Omega'_{2;2}(\hat{x},\hat{y})\phi_{2}(\hat{x},\hat{y})\phi^{\dagger}_{1}(\hat{x},\hat{y})\Omega'^{\dagger}_{2;1}(\hat{x},\hat{y})=\hspace{-1em}\sum^{+\infty}_{q,q',k,k'=-\infty}\hspace{-1em}h^{\spt{\omega};\,\spt{2,1,a_{2}}}_{q,q',k,k'}(\hat{x},\hat{y}),\label{eqn:bp21y}
\end{align} 
\end{subequations}
where $\Omega'_{i;j}$ with $i,j=1,2$ indicates the $\Omega'_{i}$ related to $\phi_{j}$ (cf.\,eqs.\,\eqref{eqn:O'1O'2}). \\
We find that 
\begin{subequations}
\label{eqn:phi1dphi2q-bq}
\begin{align}
&g^{\spt{\omega};\,\spt{1,2,a_{1}}}_{q,q',k,k'}(\hat{x},\hat{y})=g^{\spt{\omega};\,\spt{1,2}}_{q-\bar{q},q'-\bar{q}',k,k'}(\hat{x},\hat{y}),\\
&g^{\spt{\omega};\,\spt{1,2,a_{2}}}_{q,q',k,k'}(\hat{x},\hat{y})=g^{\spt{\omega};\,\spt{1,2}}_{q,q',k-\bar{k},k'-\bar{k}'}(\hat{x},\hat{y}),
\end{align} 
\end{subequations}
with $\bar{q},\bar{q}',\bar{k},\bar{k}'\in\Z$, if{}f $\bar{q},\bar{q}',\bar{k},\bar{k}'$ are null and \mbox{$\Re(\omega_{i;j}(x_{i}))=\Re(\omega_{i;k}(x_{i}))$}, \mbox{$\hat{z}_{i;j}=\hat{z}_{i;k}$}.\\
\indent As concerns $\phi_{2}\phi^{\dagger}_{1}$, considering the Lemma~\ref{lem:e^[x,y]=+-e} and that the condition ($\mathscr{D}^{*}_{{\spt{\beta\alpha}},1}$) is met by hypothesis, we find that
\begin{subequations}
\label{eqn:phi2phi1dq-bq}
\begin{align}
&h^{\spt{\omega};\,\spt{2,1,a_{1}}}_{q,q',k,k'}(\hat{x},\hat{y})=h^{\spt{\omega};\,\spt{2,1}}_{q-\bar{q},q'-\bar{q}',k,k'}(\hat{x},\hat{y}),\\
&h^{\spt{\omega};\,\spt{2,1,a_{2}}}_{q,q',k,k'}(\hat{x},\hat{y})=h^{\spt{\omega};\,\spt{2,1}}_{q,q',k-\bar{k},k'-\bar{k}'}(\hat{x},\hat{y})
\end{align} 
\end{subequations}
if{}f $\bar{q},\bar{q}',\bar{k},\bar{k}'$ are null and \mbox{$\Re(\omega_{i;j}(x_{i}))=\Re(\omega_{i;k}(x_{i}))$}, \mbox{$\hat{z}_{i;j}=\hat{z}_{i;k}$} with \mbox{$\e^{\iu 2\pi \hat{\tau}_{i}}=\pm e$}. 
\end{proof}

Futhermore, we will have that 
\begin{thm}
\label{thm:AiffAth}
In the hypotheses above,
\[
\phi^{\dagger}_{1}\phi_{2},\phi_{2}\phi^{\dagger}_{1}\in\mathcal{A}\Rightarrow\phi^{\dagger}_{1}\phi_{2},\phi_{2}\phi^{\dagger}_{1}\in\mathcal{A}_{\theta_{\nc}} .
\]
\end{thm}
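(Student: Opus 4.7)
The plan is to refine the conclusions of Theorem \ref{thm:O12O21period} by pinning down the specific periods $(a_{1;\alpha},a_{2;\alpha})$ of the torus over which $\phi^{\dagger}_1\phi_2$ and $\phi_2\phi^{\dagger}_1$ become periodic to be exactly the periods $(a_1,a_2)$ of $\mathcal{T}^2_{\nc}$. The statement looks very close to Theorem \ref{thm:O12O21period}, but it is the place where the departure from the Forgács--Lozano setting becomes visible: we must certify that no scaled torus $\tilde{\mathcal{T}}^2_{\nc}$ sneaks in.

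First I would invoke Theorem \ref{thm:O12O21period}: the hypothesis $\phi^{\dagger}_1\phi_2\in\mathcal{A}$ forces $(\phi_1,\phi_2)\in\mathfrak{F}_{r_{\spt{\theta}}}$, and $\phi_2\phi^{\dagger}_1\in\mathcal{A}$ forces the stronger $(\phi_1,\phi_2)\in\mathfrak{F}_{r_{\spt{\theta}}}\cap\tilde{\mathfrak{F}}_{r_{\spt{\theta}};1}$. From the proof of that theorem, the shift relations \eqref{eqn:phi1dphi2q-bq} and \eqref{eqn:phi2phi1dq-bq} admit only the trivial solution $\bar{q}=\bar{q}'=\bar{k}=\bar{k}'=0$. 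I would then unwind the definition of $\mathscr{F}_{r_{\spt{\theta}}}$ (Definition \ref{def:Fphi}): its elements are constructed from exponentials whose arguments $\hat{\tau}_i$, $\omega_i$, $\alpha_i$, $\beta_i$ are anchored to the periods $(a_1,a_2)$ of $\mathcal{T}^2_{\nc}$. Correspondingly, the translations appearing in \eqref{eqn:bp12x}--\eqref{eqn:bp21y} are translations by $\hat{a}_1$ and $\hat{a}_2$ themselves, and not by any $\theta$-rescaled quantity as in \citep{Forgacs2005,Lozano:2006xn}.

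Next I would substitute the $\mathfrak{F}_{r_{\spt{\theta}}}$-compatibility relations $\Re(\omega_{i;1}(x_i))=\Re(\omega_{i;2}(x_i))$ and $\hat{z}_{i;1}=\hat{z}_{i;2}$ into the twist factors $\Omega'^{\dagger}_{1;1}\Omega'_{1;2}$ and $\Omega'^{\dagger}_{2;1}\Omega'_{2;2}$ of \eqref{eqn:bp12x}--\eqref{eqn:bp12y}: the contributions originating from $\phi_1^{\dagger}$ and from $\phi_2$ cancel in pairs and the residual phases vanish, leaving the identity. This yields $\phi^{\dagger}_1\phi_2(\hat{x}+\hat{a}_1,\hat{y})=\phi^{\dagger}_1\phi_2(\hat{x},\hat{y})$ and $\phi^{\dagger}_1\phi_2(\hat{x},\hat{y}+\hat{a}_2)=\phi^{\dagger}_1\phi_2(\hat{x},\hat{y})$, which is exactly the defining property of $\mathcal{A}_{\theta_{\nc}}$ (Definition \ref{def:Ath}). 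For $\phi_2\phi^{\dagger}_1$ the same scheme applies to \eqref{eqn:bp21x}--\eqref{eqn:bp21y}, except that the cancellation of $\Omega'_{1;2}$ and $\Omega'^{\dagger}_{1;1}$, now sitting on opposite sides of $\phi_2\phi^{\dagger}_1$, requires commuting several non-commutative exponentials past each other; the additional conditions encoded in $\tilde{\mathfrak{F}}_{r_{\spt{\theta}};1}$, namely $[\hat{z}_{i;j},\hat{z}_{k;l}]=0$ together with $\e^{2\pi\iu\hat{\tau}_{i;j}}=\pm e$, combined with Lemma \ref{lem:e^[x,y]=+-e} and Corollary \ref{cor:rth}, reduce these reorderings to trivial commutator corrections, so that the twists again collapse to the identity on $(\mathcal{B}^{*\nc}_{\spt{U(1)}},\rho)$.

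The main obstacle is exactly this non-commutative reordering in the $\phi_2\phi^{\dagger}_1$ case: one must check that \emph{every} Baker--Campbell--Hausdorff correction generated while pulling $\Omega'_{1;2}$ through $\phi_2\phi^{\dagger}_1$ to meet $\Omega'^{\dagger}_{1;1}$ lies in $\ker \pi_{\spt{[0]}}$, so that it evaluates to $e$ under the representation $\rho$. This is precisely what the condition $r_{\spt{\theta}}\in\Q$ (via Theorem \ref{thm:(BUnc(1),pi)}) and the $\pm e$-valued conditions on $\hat{\tau}_{i;j}$ buy us, and it is the structural reason why the original periods $(a_1,a_2)$ are recovered rather than the scaled periods of \citep{Forgacs2005,Lozano:2006xn}. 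Once this check is in place, both $\phi^{\dagger}_1\phi_2$ and $\phi_2\phi^{\dagger}_1$ are functions of period $(a_1,a_2)$ on $\mathcal{T}^2_{\nc}$, hence belong to $\mathcal{A}_{\theta_{\nc}}$.
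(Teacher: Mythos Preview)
Your proposal is correct and follows essentially the same approach as the paper: retrace the mechanism of Theorem~\ref{thm:O12O21period}, observe that the shift relations \eqref{eqn:phi1dphi2q-bq}--\eqref{eqn:phi2phi1dq-bq} admit only the trivial solution $\bar q=\bar q'=\bar k=\bar k'=0$, and conclude that the only periods compatible with $\phi^{\dagger}_1\phi_2,\phi_2\phi^{\dagger}_1\in\mathcal{A}$ are the native periods $(a_1,a_2)$ of $\mathcal{T}^2_{\nc}$. The paper's own proof is a one-line sketch to this effect; your write-up simply makes the cancellation of the twist factors $\Omega'^{\dagger}_{i;1}\Omega'_{i;2}$ and the role of the $\tilde{\mathfrak{F}}_{r_{\spt{\theta}};1}$ conditions explicit, which is fine but not strictly required.
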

\begin{proof}
It's sufficient to retrace the proof of the Theorem~\ref{thm:O12O21period} and note that, taking as an example just the translation in $\hat{x}$ of $\hat{a}_{1}$ (the same is valid also for the translation in $\hat{y}$ of $\hat{a}_{2}$), $\phi^{\dagger}_{1}\phi_{2}(\hat{x}+\hat{a}_{1},\hat{y})$ $\bigl(\phi_{2}\phi^{\dagger}_{1}(\hat{x}+\hat{a}_{1},\hat{y})\bigr)$ can only be equal to $\phi^{\dagger}_{1}\phi_{2}(\hat{x},\hat{y})$ $\bigl(\phi_{2}\phi^{\dagger}_{1}(\hat{x},\hat{y})\bigr)$ and not to $\phi^{\dagger}_{1}\phi_{2}(\hat{x}+\hat{a}'_{1},\hat{y})$ $\bigl(\phi_{2}\phi^{\dagger}_{1}(\hat{x}+\hat{a}'_{1},\hat{y})\bigr)$ with \mbox{$(a'_{1},a'_{2})\neq (a_{1},a_{2})$} periods related to some $\mathcal{T}'^{2}_{\nc}\neq\mathcal{T}^{2}_{\nc}$. 
\end{proof}

We are now ready to analyze the conditions according to which $\frac{1}{4}F_{ij}F^{ij}$, \mbox{$\bigl((\hat{D}_{i}\phi_{1})^{\dagger}(\hat{D}^{i}\phi_{2})\bigr)(\hat{x},\hat{y})$} and \mbox{$(\phi^{\dagger}_{1}\phi_{2}-\phi^{2}_{0})^{2}$} can be represented biperiodic. 
\begin{thm}
\label{thm:3bper}
Given $(\mathcal{B}^{*\nc }_{\spt{U(1)}},\rho)$ with $(\phi_{1},\phi_{2})\in \mathscr{F}_{r_{\spt{\theta}}}\times\mathscr{F}_{r_{\spt{\theta}}}$ and\footnote{We can notice how the~\eqref{eqn:solA} are solutions of the~\eqref{eqn:bordUA}:\,here the $\Omega_{\mu}$'s are dependent on $\lambda$ and in the place of~\eqref{eqn:bordU2A}, taking into account the $(\mathscr{T}_{\spt{2}})$ (see Definition~\ref{def:T} and the footnote~\ref{fn:T2}), we have considered: 
\begin{equation}
A_{i}(\hat{x},\hat{y}+\hat{a}_{2})=\tilde{\Omega}_{2;2,i}(\hat{x},\hat{y})A_{i}(\hat{x},\hat{y})\tilde{\Omega}^{\dagger}_{2;1,i}(\hat{x},\hat{y})+\frac{\iu}{g}\tilde{\Omega}_{2;2,i}(\hat{x},\hat{y})\hat{\partial}_{i}\tilde{\Omega}^{\dagger}_{2;1,i}(\hat{x},\hat{y}),\notag
\end{equation}
where $\tilde{\Omega}_{2;1,i}$ ($\tilde{\Omega}_{2;2,i}$) indicates the $\Omega'_{2}$ related to $\tilde{\phi}_{1;i}$ ($\tilde{\phi}_{2;i}$) (cf.\,eq.\,\eqref{eqn:O'2}).} 
\begin{equation}
\label{eqn:solA}
A_{i}(\hat{x},\hat{y})=\tilde{A}_{i}(\hat{x},\hat{y})+\varepsilon_{ij}\frac{1}{g a_{j}}\bm{\hat{x}}_{j}
\end{equation}
where $\tilde{A}_{i}(\hat{x},\hat{y})=c_{i}\tilde{\phi}_{2;i}\tilde{\phi}^{\dagger}_{1;i}$, with constant $c_{i}$\footnote{Indicating with $[M]$ the dimension of the mass, in the system of natural units $\hbar=c=1$, we have that $c_{i}$ have dimension $[M]^{\spt{-1/2}}$.} and $(\tilde{\phi}_{1;i},\tilde{\phi}_{2;i})\in\mathfrak{F}_{r_{\spt{\theta}}}$, for
\begin{enumerate}[label=\emph{(\alph*)}]
\item $\frac{1}{4}F_{ij}F^{ij}$ we have that\label{item:a}
\begin{enumerate} [label=\emph{(\alph{enumi}.\roman*)}]
\item if $\hat{\tilde{z}}_{k;i}\neq\hat{\tilde{z}}_{k;j}$ then $\frac{1}{4}F_{ij}F^{ij}\in\mathcal{A}$ if{}f \mbox{$(\tilde{\phi}_{1;i},\tilde{\phi}_{2;i})\in\tilde{\mathfrak{F}}_{r_{\spt{\theta}};1}$},\label{item:ai}
\item if $\hat{\tilde{z}}_{k;i}=\hat{\tilde{z}}_{k;j}$ then $\frac{1}{4}F_{ij}F^{ij}\in\mathcal{A}_{\theta_{\nc}}$, while for \label{item:aii} 
\end{enumerate}
\item $\bigl((\hat{D}_{i}\phi_{1})^{\dagger}(\hat{D}^{i}\phi_{2})\bigr)(\hat{x},\hat{y})$ we have that $\bigl((\hat{D}_{i}\phi_{1})^{\dagger}(\hat{D}^{i}\phi_{2})\bigr)(\hat{x},\hat{y})\in\mathcal{A}$ if{}f \mbox{$(\phi_{i},\tilde{\phi}_{j;k})\in\mathfrak{F}_{r_{\spt{\theta}}}$}, and eventually we have that
\item$(\phi^{\dagger}_{1}\phi_{2}-\phi^{2}_{0})^{2}\in\mathcal{A}$ if{}f \mbox{$(\phi_{1},\phi_{2})\in \mathfrak{F}_{r_{\spt{\theta}}}$}. \label{item:c} 
\end{enumerate}
\end{thm}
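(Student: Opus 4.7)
The plan is to reduce each of the three claims to $\mathcal{A}$-membership statements for ``$\phi^{\dagger}\phi$''- or ``$\phi\phi^{\dagger}$''-type bilinears built out of the $\phi_{i}$'s and $\tilde{\phi}_{j;k}$'s, and then invoke Theorem~\ref{thm:O12O21period} term by term. Throughout I would use Theorem~\ref{thm:AiffAth} to promote $\mathcal{A}$-membership to $\mathcal{A}_{\theta_{\nc}}$-membership whenever the only alternative would be a different period lattice.

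Part~\ref{item:c} is an immediate corollary: Theorem~\ref{thm:O12O21period}(i) gives $\phi^{\dagger}_{1}\phi_{2}\in\mathcal{A}$ iff $(\phi_{1},\phi_{2})\in\mathfrak{F}_{r_{\spt{\theta}}}$, and biperiodicity is preserved by subtracting the constant $\phi^{2}_{0}$ and by squaring with fixed periods. For part~(b), I would expand $(\hat{D}_{i}\phi_{1})^{\dagger}(\hat{D}^{i}\phi_{2})$ using $A_{i}=c_{i}\tilde{\phi}_{2;i}\tilde{\phi}^{\dagger}_{1;i}+\varepsilon_{ij}\bm{\hat{x}}_{j}/(ga_{j})$: the linear-in-$\bm{\hat{x}}$ piece, combined with $\hat{\partial}_{\lambda}$ via~\eqref{eqn:partial}, produces only central multiples of $\hat{\mathbbm{1}}$, so the surviving summands are proportional to bilinears of the form $\phi^{\dagger}_{1}\phi_{2}$, $\phi^{\dagger}_{1}\tilde{\phi}_{2;i}$, $\tilde{\phi}^{\dagger}_{1;i}\phi_{2}$ or $\tilde{\phi}^{\dagger}_{1;i}\tilde{\phi}_{2;j}$. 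Applying Theorem~\ref{thm:O12O21period}(i) to each of these translates term by term into the joint requirement that every pair $(\phi_{i},\tilde{\phi}_{j;k})$ appearing in the expansion lie in $\mathfrak{F}_{r_{\spt{\theta}}}$.

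Part~\ref{item:a} is the main challenge. Writing $F_{ij}=F^{c}_{ij}-\iu g[A_{i},A_{j}]$ and using~\eqref{eqn:partial} on the $\bm{\hat{x}}$-piece of $A_{j}$, the curl $F^{c}_{ij}$ reduces to $\hat{\partial}_{i}\tilde{A}_{j}-\hat{\partial}_{j}\tilde{A}_{i}$ plus a central constant, while $[A_{i},A_{j}]$ contains $c_{i}c_{j}[\tilde{\phi}_{2;i}\tilde{\phi}^{\dagger}_{1;i},\tilde{\phi}_{2;j}\tilde{\phi}^{\dagger}_{1;j}]$ together with $[\bm{\hat{x}}_{k},\tilde{A}_{j}]$-pieces (again proportional to $\hat{\partial}\tilde{A}_{j}$) and a further $\hat{\mathbbm{1}}$-constant; the commutator structure here is a non-commutativity artifact and not a non-abelianity one (Remark~\ref{rmk:[A1,A2]eTr}). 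Squaring yields quartic terms $\tilde{\phi}_{2;i}\tilde{\phi}^{\dagger}_{1;i}\tilde{\phi}_{2;j}\tilde{\phi}^{\dagger}_{1;j}$ whose biperiodicity separates the two sub-cases. When $\hat{\tilde{z}}_{k;i}=\hat{\tilde{z}}_{k;j}$ (case~\ref{item:aii}), the shared exponent functions force $\tilde{A}_{i}$ and $\tilde{A}_{j}$ to commute and let the quartic regroup into a product of two $\tilde{\phi}^{\dagger}\tilde{\phi}$-bilinears already biperiodic by the standing hypothesis $(\tilde{\phi}_{1;i},\tilde{\phi}_{2;i})\in\mathfrak{F}_{r_{\spt{\theta}}}$, so biperiodicity with the natural periods of $\mathcal{T}^{2}_{\nc}$ is automatic. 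When $\hat{\tilde{z}}_{k;i}\neq\hat{\tilde{z}}_{k;j}$ (case~\ref{item:ai}), no such collapse occurs and I would need to swap $\tilde{\phi}^{\dagger}_{1;i}$ past $\tilde{\phi}_{2;j}$; the parasitic phase factors of the form $\e^{2\pi\iu\hat{\tau}}$ generated by the swap are killed exactly by the extra conditions packaged into $\tilde{\mathfrak{F}}_{r_{\spt{\theta}};1}$, so that the rearranged bilinear becomes eligible for Theorem~\ref{thm:O12O21period}(ii).

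The hardest part will be the careful bookkeeping of those $\e^{2\pi\iu\hat{\tau}}$ phases generated at every reordering inside $[A_{i},A_{j}]$ and its square: one must verify that the conditions defining $\tilde{\mathfrak{F}}_{r_{\spt{\theta}};1}$ are exactly necessary and sufficient — strong enough to neutralise every spurious phase in the reindexing of the infinite sums~\eqref{eqn:phi2phi1dq-bq}, yet no stronger than is actually forced by them. A parallel minor technicality will be to check that the central $\hat{\mathbbm{1}}$-contributions coming from $\hat{\partial}_{i}\bm{\hat{x}}_{j}$ and from $[\bm{\hat{x}}_{i},\bm{\hat{x}}_{j}]$ either cancel between the two halves of $F^{c}_{ij}$ or sit in the centre in a way that does not disturb the period analysis.
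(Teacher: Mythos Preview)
Your treatment of~\ref{item:c} and of part~(b) is essentially the paper's: reduce to $\phi^{\dagger}\phi$-type bilinears and invoke Theorem~\ref{thm:O12O21period}. The gap is in part~\ref{item:a}. You correctly decompose $F^{c}_{ij}$ and $[A_{i},A_{j}]$ and note that the cross-terms $[\bm{\hat{x}}_{k},\tilde{A}_{j}]$ are proportional to $\hat{\partial}\tilde{A}_{j}$, but you then carry both the $\hat{\partial}_{i}\tilde{A}_{j}-\hat{\partial}_{j}\tilde{A}_{i}$ piece of $F^{c}_{ij}$ and these cross-terms forward into the squaring. In fact they cancel \emph{exactly}: a short computation using~\eqref{eqn:partial} gives the paper's key identity
\[
F_{ij}(\hat{x},\hat{y})=-\varepsilon_{ij}\,\frac{\theta}{g\,a_{i}a_{j}}\;-\;\iu g\,[\tilde{A}_{i},\tilde{A}_{j}](\hat{x},\hat{y}),
\]
so that $F_{ij}$ is a central constant plus the bare commutator of the periodic parts. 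This is what drives the dichotomy.

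Without this cancellation your case~\ref{item:aii} does not close. If $\hat{\partial}\tilde{A}$ terms survived, then even after $[\tilde{A}_{i},\tilde{A}_{j}]=0$ you would be left with squares of derivatives of $\tilde{A}_{k}=c_{k}\tilde{\phi}_{2;k}\tilde{\phi}^{\dagger}_{1;k}$, which are $\phi\phi^{\dagger}$-type objects; by Theorem~\ref{thm:O12O21period}(ii) their biperiodicity would require the \emph{extra} condition $(\tilde{\phi}_{1;k},\tilde{\phi}_{2;k})\in\tilde{\mathfrak{F}}_{r_{\spt{\theta}};1}$, not merely the standing hypothesis $\mathfrak{F}_{r_{\spt{\theta}}}$. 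So your ``quartic regroups into two $\tilde{\phi}^{\dagger}\tilde{\phi}$-bilinears already biperiodic'' is not available, and the unconditional statement of~\ref{item:aii} would fail along your route. With the cancellation, by contrast, $\hat{\tilde{z}}_{k;i}=\hat{\tilde{z}}_{k;j}$ makes $[\tilde{A}_{i},\tilde{A}_{j}]=0$, hence $\tfrac14 F_{ij}F^{ij}$ is a pure constant and trivially lies in $\mathcal{A}_{\theta_{\nc}}$; while for $\hat{\tilde{z}}_{k;i}\neq\hat{\tilde{z}}_{k;j}$ one analyses the single series $[\tilde{A}_{i},\tilde{A}_{j}]$ term by term exactly as you outline, and the $\e^{2\pi\iu\hat{\tau}}$ phases are neutralised iff $(\tilde{\phi}_{1;i},\tilde{\phi}_{2;i})\in\tilde{\mathfrak{F}}_{r_{\spt{\theta}};1}$.
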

\begin{proof}
As concerns the term $\frac{1}{4}F_{ij}F^{ij}$, it's sufficient to notice that, calculating $F_{ij}$ according to the eq.\,\eqref{eqn:solA}, we obtain that 
\begin{equation}
\label{eqn:Fij}
F_{ij}(\hat{x},\hat{y})=-\varepsilon_{ij}\frac{\theta}{g a_{i}a_{j}}-\iu g [\tilde{A}_{i},\tilde{A}_{j}](\hat{x},\hat{y}).
\end{equation}
\hspace{-.02em}Considering 
\begin{equation}
\label{eqn:[Ai,Aj]p}
[\tilde{A}_{i},\tilde{A}_{j}](\hat{x},\hat{y})=\hspace{-1cm}\sum^{+\infty}_{q,q',k,k',r,r',s,s'=-\infty}\hspace{-.15cm}\tilde{A}^{i,j}_{\spt{\substack{q,q',k,k'\\r,r',s,s'}}}(\hat{x},\hat{y})
\end{equation}
and
\begin{subequations}
\label{eqn:[Ai,Aj]a1a2p}
\begin{align}
\hspace{-0cm}[\tilde{A}_{i},\tilde{A}_{j}](\hat{x}+\hat{a}_{1},\hat{y}) =&\hspace{-1cm}\sum^{+\infty}_{q,q',k,k',r,r',s,s'=-\infty}\hspace{-.15cm}\tilde{A}^{i,j;\,a_{1}}_{\spt{\substack{q,q',k,k'\\r,r',s,s'}}}(\hat{x},\hat{y}),\\
\hspace{-0cm}[\tilde{A}_{i},\tilde{A}_{j}](\hat{x},\hat{y}+\hat{a}_{2}) =&\hspace{-1cm}\sum^{+\infty}_{q,q',k,k',r,r',s,s'=-\infty}\hspace{-.15cm}\tilde{A}^{i,j;\,a_{2}}_{\spt{\substack{q,q',k,k'\\r,r',s,s'}}}(\hat{x},\hat{y}),
\end{align}
\end{subequations}
if $\hat{\tilde{z}}_{k;i}\neq\hat{\tilde{z}}_{k;j}$ then, taking into consideration the Lemma~\ref{lem:e^[x,y]=+-e} with the  hypothesis that $(\tilde{\phi}_{1;i},\tilde{\phi}_{2;i})\in\mathfrak{F}_{r_{\spt{\theta}}}$, we find that
{\fontsize{10}{12.5}
 \selectfont
\begin{subequations}
\begin{align}
\tilde{A}^{i,j;\,a_{1}}_{\spt{\substack{q,q',k,k'\\r,r',s,s'}}}(\hat{x},\hat{y})=&\tilde{A}^{i,j}_{\spt{\substack{q-\bar{q},q'-\bar{q}',k,k'\\r-\bar{r},r'-\bar{r}',s,s'}}}(\hat{x},\hat{y}),\\
\tilde{A}^{i,j;\,a_{2}}_{\spt{\substack{q,q',k,k'\\r,r',s,s'}}}(\hat{x},\hat{y})=&\tilde{A}^{i,j}_{\spt{\substack{q,q',k-\bar{k},k'-\bar{k}'\\r,r',s-\bar{s},s'-\bar{s}'}}}(\hat{x},\hat{y}),
\end{align}
\end{subequations}} 
\hspace{-.15cm}if{}f $\bar{q},\bar{q}',\bar{r},\bar{r}'$ ($\bar{k},\bar{k}',\bar{s},\bar{s}'$) are null and $\e^{\iu 2\pi \hat{\tau}_{i;j}}=\pm e$. 
On the other hand, in the case in which $\hat{\tilde{z}}_{k;i}=\hat{\tilde{z}}_{k;j}$, we obtain that $[\tilde{A}_{i},\tilde{A}_{j}](\hat{x},\hat{y})$ is null and so that  \mbox{$\frac{1}{4}F_{ij}F^{ij}\in\mathcal{A}_{\theta_{\nc}}$} without any condition on $\e^{\iu 2\pi \hat{\tau}_{i;j}}$.\\
Calculating explicitly the quantities \mbox{$\bigl((\hat{D}_{i}\phi_{1})^{\dagger}(\hat{D}^{i}\phi_{2})\bigr)(\hat{x}+\hat{a}_{1},\hat{y})$} and \mbox{$\bigl((\hat{D}_{i}\phi_{1})^{\dagger}(\hat{D}^{i}\phi_{2})\bigr)(\hat{x},\hat{y}+\hat{a}_{2})$}, we can easily notice that they are equal to $\bigl((\hat{D}_{i}\phi_{1})^{\dagger}(\hat{D}^{i}\phi_{2})\bigr)(\hat{x},\hat{y})$ if{}f \mbox{$(\phi_{i},\tilde{\phi}_{j;k})\in\mathfrak{F}_{r_{\spt{\theta}}}$}. 
In the end, the case\emph{~\ref{item:c}} is obtained from the case\emph{~\ref{item:(i)}} of the Theorem~\ref{thm:O12O21period}.
\end{proof}

In particular we have that 
\begin{thm} 
\label{thm:3inAth}
In the hypotheses above, 
\begin{itemize}
\item[$\mathrm{(i)}$] if $\frac{1}{4}F_{ij}F^{ij}\in\mathcal{A}$ then $\frac{1}{4}F_{ij}F^{ij}\in\mathcal{A}_{\theta_{\nc}}$,
\item[$\mathrm{(ii)}$] if $\bigl((\hat{D}_{i}\phi_{1})^{\dagger}(\hat{D}^{i}\phi_{2})\bigr)(\hat{x},\hat{y})\in\mathcal{A}$ then $\bigl((\hat{D}_{i}\phi_{1})^{\dagger}(\hat{D}^{i}\phi_{2})\bigr)(\hat{x},\hat{y})\in\mathcal{A}_{\theta_{\nc}}$,
\item[$\mathrm{(iii)}$] if $(\phi^{\dagger}_{1}\phi_{2}-\phi^{2}_{0})^{2}\in\mathcal{A}$ then $(\phi^{\dagger}_{1}\phi_{2}-\phi^{2}_{0})^{2}\in\mathcal{A}_{\theta_{\nc}}$.
\end{itemize}
\end{thm}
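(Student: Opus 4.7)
The plan is to mirror, quantity by quantity, the argument used in the proof of Theorem~\ref{thm:AiffAth}: for each of the three expressions one writes its translation under a putative period $(a'_{1},a'_{2})$ as a double Fourier-type expansion in the mode indices of the $\phi_{i}$ and $\tilde{\phi}_{j;k}$, and then shows by matching coefficients that the only period that returns the expression to itself is $(a_{1},a_{2})$, i.e.\ the period of $\mathcal{T}^{2}_{\nc}$. The ambient hypothesis that the quantity lies in $\mathcal{A}$ only guarantees biperiodicity for \emph{some} $(a'_{1},a'_{2})$; the content of the claim is that this ambiguity is fictitious and that the periods are pinned down uniquely.

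For~(i), I would first split $F_{ij}$ as in~\eqref{eqn:Fij}. The constant contribution $-\varepsilon_{ij}\theta/(g a_{i}a_{j})$ belongs trivially to every $\mathcal{A}^{\alpha}_{\theta}$, hence to $\mathcal{A}_{\theta_{\nc}}$. For the commutator piece I would use the expansion~\eqref{eqn:[Ai,Aj]p} of $[\tilde{A}_{i},\tilde{A}_{j}]$ together with its shifted versions~\eqref{eqn:[Ai,Aj]a1a2p}, but now considering an arbitrary putative shift $\hat{a}'_{1}$ (resp.\ $\hat{a}'_{2}$) in place of $\hat{a}_{1}$ (resp.\ $\hat{a}_{2}$). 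Following exactly the index-matching performed in the proof of Theorem~\ref{thm:3bper}\ref{item:a}, identification of the shifted mode coefficients $\tilde{A}^{i,j;a'_{1}}_{q,q',k,k',r,r',s,s'}$ with the unshifted ones $\tilde{A}^{i,j}_{q,q',k,k',r,r',s,s'}$ forces $\bar q=\bar q'=\bar r=\bar r'=0$ and simultaneously pins the translation to $\hat a'_{1}=\hat a_{1}$; an entirely symmetric computation applies to $\hat a'_{2}$. The degenerate case $\hat{\tilde z}_{k;i}=\hat{\tilde z}_{k;j}$ of~\ref{item:aii} is immediate, since then $[\tilde A_{i},\tilde A_{j}]=0$ and only the constant term survives.

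For~(ii), I would expand $(\hat{D}_{i}\phi_{1})^{\dagger}(\hat{D}^{i}\phi_{2})$ using~\eqref{eqn:solA} as a sum of bilinears in $\phi_{j}$ and $\tilde{\phi}_{k;l}$ of the types already classified in Theorem~\ref{thm:O12O21630period}, i.e.\ of the shape $\phi^{\dagger}_{1}\phi_{2}$ up to factors involving $\bm{\hat x}_{j}/a_{j}$. Each such bilinear, when translated by a putative $\hat a'_{i}$, reproduces itself only when $\hat a'_{i}=\hat a_{i}$; the sum therefore inherits the same uniqueness. For~(iii), I would apply Theorem~\ref{thm:AiffAth}\ref{item:(i)} to deduce $\phi^{\dagger}_{1}\phi_{2}\in\mathcal{A}_{\theta_{\nc}}$, and then use that $\mathcal{A}_{\theta_{\nc}}$ is closed under products and under subtraction of (constant) scalars to conclude that $(\phi^{\dagger}_{1}\phi_{2}-\phi^{2}_{0})^{2}\in\mathcal{A}_{\theta_{\nc}}$.

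The main obstacle will be executing~(i) cleanly, because $[\tilde{A}_{i},\tilde{A}_{j}]$ carries the largest number of summation indices and requires tracking several nested applications of the Baker--Campbell--Hausdorff identities already used in Remark~\ref{rmk:rhox2} and in the proof of Theorem~\ref{thm:FcapT=F*T2nc}. Structurally, however, the mode-matching is identical to that in Theorem~\ref{thm:AiffAth}, so no genuinely new idea is needed beyond carefully bookkeeping the indices and invoking the hypothesis $(\tilde\phi_{1;i},\tilde\phi_{2;i})\in\mathfrak{F}_{r_{\spt{\theta}}}\cap\tilde{\mathfrak{F}}_{r_{\spt{\theta}};1}$ that guarantees the $\e^{2\pi\iu\hat\tau_{i;j}}=\pm e$ relations needed to close the argument.
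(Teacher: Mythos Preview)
Your proposal is correct and follows essentially the same approach as the paper: for (i) and (ii) the paper simply asserts that the translated expressions $[\tilde{A}_{i},\tilde{A}_{j}](\hat{x}+\hat{a}_{1},\hat{y})$ and $\bigl((\hat{D}_{i}\phi_{1})^{\dagger}(\hat{D}^{i}\phi_{2})\bigr)(\hat{x}+\hat{a}_{1},\hat{y})$ can only equal their unshifted counterparts and not versions at some other period $(a'_{1},a'_{2})$, which is exactly your mode-matching argument retracing Theorems~\ref{thm:AiffAth} and~\ref{thm:3bper}; for (iii) the paper likewise just invokes Theorem~\ref{thm:AiffAth}, and your closure-under-products route is a legitimate way to unpack that invocation. (Note the typo \texttt{thm:O12O21630period} for \texttt{thm:O12O21period} in your text.)
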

\begin{proof}
For the terms $\frac{1}{4}F_{ij}F^{ij}$ and $\bigl((\hat{D}_{i}\phi_{1})^{\dagger}(\hat{D}^{i}\phi_{2})\bigr)(\hat{x},\hat{y})$ it's sufficient to note respectively that \mbox{$[\tilde{A}_{i},\tilde{A}_{j}](\hat{x}+\hat{a}_{1},\hat{y})$} \mbox{$\Bigl([\tilde{A}_{i},\tilde{A}_{j}](\hat{x},\hat{y}+\hat{a}_{2})\Bigr)$} and \mbox{$\bigl((\hat{D}_{i}\phi_{1})^{\dagger}(\hat{D}^{i}\phi_{2})\bigr)(\hat{x}+\hat{a}_{1},\hat{y})$} \mbox{$\Bigl(\bigl((\hat{D}_{i}\phi_{1})^{\dagger}(\hat{D}^{i}\phi_{2})\bigr)(\hat{x},\hat{y}+\hat{a}_{2})\Bigr)$} can only be equal to \mbox{$[\tilde{A}_{i},\tilde{A}_{j}](\hat{x},\hat{y})$} and \mbox{$\bigl((\hat{D}_{i}\phi_{1})^{\dagger}(\hat{D}^{i}\phi_{2})\bigr)(\hat{x},\hat{y})$} and not to \mbox{$[\tilde{A}_{i},\tilde{A}_{j}](\hat{x}+\hat{a}'_{1},\hat{y})$} \mbox{$\Bigl([\tilde{A}_{i},\tilde{A}_{j}](\hat{x},\hat{y}+\hat{a}'_{2})\Bigr)$} and \mbox{$\bigl((\hat{D}_{i}\phi_{1})^{\dagger}(\hat{D}^{i}\phi_{2})\bigr)(\hat{x}+\hat{a}'_{1},\hat{y})$} \mbox{$\Bigl(\bigl((\hat{D}_{i}\phi_{1})^{\dagger}(\hat{D}^{i}\phi_{2})\bigr)(\hat{x},\hat{y}+\hat{a}'_{2})\Bigr)$} with \mbox{$(a'_{1},a'_{2})\neq (a_{1},a_{2})$} periods \mbox{related to some $\mathcal{T}'^{2}_{\nc}\neq\mathcal{T}^{2}_{\nc}$}.\\
For the term \mbox{$(\phi^{\dagger}_{1}\phi_{2}-\phi^{2}_{0})^{2}$} it's sufficient to consider the Theorem~\ref{thm:AiffAth}.
\end{proof}
To the Theorems~\ref{thm:3bper} and~\ref{thm:3inAth} it follows that
\begin{cor}
\label{cor:3bpand3inAth}
In the hypotheses above, \\
$\frac{1}{4}F_{ij}F^{ij}, \bigl((\hat{D}_{i}\phi_{1})^{\dagger}(\hat{D}^{i}\phi_{2})\bigr)(\hat{x},\hat{y}),(\phi^{\dagger}_{1}\phi_{2}-\phi^{2}_{0})^{2}\in\mathcal{A} \Rightarrow\mathcal{E}_{\spt{12}}\in\mathcal{A}_{\theta_{\nc}}$.
\end{cor}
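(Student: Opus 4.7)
The plan is to reduce the statement to Theorem~\ref{thm:3inAth} applied to each of the three summands separately, and then to observe that $\mathcal{A}_{\theta_{\nc}}$ is closed under finite $\mathbb{C}$-linear combinations.

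First I would unpack the integrand of~\eqref{eqn:tenergy'12}, namely $\mathcal{E}_{\spt{12}}=\tfrac{1}{4}F_{ij}F^{ij}+(\hat{D}_{i}\phi_{1})^{\dagger}(\hat{D}^{i}\phi_{2})+\lambda(\phi_{1}^{\dagger}\phi_{2}-\phi_{0}^{2})^{2}$. The hypothesis of the corollary is exactly that each of these three summands belongs to the union $\mathcal{A}=\bigcup_{\alpha\in P}\mathcal{A}^{\alpha}_{\theta}$. Applying items (i)--(iii) of Theorem~\ref{thm:3inAth} in turn, each such membership is upgraded to membership in the single space $\mathcal{A}_{\theta_{\nc}}$: the key content of that theorem, already exploited in its proof, is that a translation of any of the three quantities by a pair of periods $(a_{1}',a_{2}')\neq(a_{1},a_{2})$ coming from some torus $\mathcal{T}'^{2}_{\nc}\neq \mathcal{T}^{2}_{\nc}$ cannot reproduce the value at $(\hat{x},\hat{y})$, so the admissible periods are pinned to the periods $(a_{1},a_{2})$ of $\mathcal{T}^{2}_{\nc}$.

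Second, I would invoke the trivial fact that $\mathcal{A}_{\theta_{\nc}}$ is a $\mathbb{C}$-vector space. By Definition~\ref{def:Ath} it is the collection of functions on $M_{O}=\mathcal{T}^{2}$ that have periods $(a_{1},a_{2})$, and any $\mathbb{C}$-linear combination of such functions is again biperiodic with the same periods. Hence the sum of the three elements of $\mathcal{A}_{\theta_{\nc}}$ identified in the previous step is still an element of $\mathcal{A}_{\theta_{\nc}}$, yielding $\mathcal{E}_{\spt{12}}\in\mathcal{A}_{\theta_{\nc}}$.

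There is no substantive obstacle in this argument: the corollary is essentially bookkeeping on top of Theorem~\ref{thm:3inAth} together with the linearity of $\mathcal{A}_{\theta_{\nc}}$. The only subtlety worth flagging is that, had the three summands been known only to lie in $\mathcal{A}$ without the refinement produced by Theorem~\ref{thm:3inAth}, they could \emph{a priori} have sat in three distinct $\mathcal{A}^{\alpha}_{\theta}$'s whose periods disagree, and the sum would not be biperiodic at all; Theorem~\ref{thm:3inAth} is precisely what rules out this pathology by forcing all three summands into the common $\mathcal{A}_{\theta_{\nc}}$ and thereby legitimizing the addition step.
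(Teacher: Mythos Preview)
Your proof is correct and follows essentially the same approach as the paper: apply Theorem~\ref{thm:3inAth} to each of the three summands to upgrade membership in $\mathcal{A}$ to membership in $\mathcal{A}_{\theta_{\nc}}$, then sum. The paper's proof additionally remarks, via Theorem~\ref{thm:3bper}, that the simultaneous membership hypothesis amounts to $(\phi_{i},\tilde{\phi}_{j;k})\in\mathfrak{F}_{r_{\spt{\theta}}}$, but this is contextual and not needed for the logical deduction; conversely, you make explicit the vector-space closure of $\mathcal{A}_{\theta_{\nc}}$, which the paper leaves tacit.
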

\begin{proof}
It's enough to notice that when the three quantities that make $\mathcal{E}_{\spt{12}}$ belong contemporarily to $\mathcal{A}$ (i.e.\,for \mbox{$(\phi_{i},\tilde{\phi}_{j;k})\in\mathfrak{F}_{r_{\spt{\theta}}}$} (see Theorem~\ref{thm:3bper})), they belong in particular to $\mathcal{A}_{\theta_{\nc}}$ (see Theorem~\ref{thm:3inAth}). 
\end{proof}
Thus, starting from $(\mathcal{B}^{*\nc }_{\spt{U(1)}},\rho)$ and $\mathscr{F}\times\mathscr{F}$, we have that $\mathcal{E}$ has $\mathcal{T}^{2}_{\nc}$ as domain. In the refs.\,\citep{Forgacs2005,Lozano:2006xn}, instead, the Corollary~\ref{cor:3bpand3inAth} does not hold, because the belonging to $\mathcal{A}$ of the three addends does not imply the belonging of $\mathcal{E}$ to $\mathcal{A}_{\theta_{\nc}}$. In fact, they have that while $\frac{1}{4}F_{ij}F^{ij}$ belongs to $\tilde{\mathcal{A}}_{\theta_{\nc}}$ related to $\tilde{\mathcal{T}}^{2}_{\nc}$, \mbox{$\bigl((\hat{D}_{i}\phi)^{\dagger}(\hat{D}^{i}\phi)\bigr)(\hat{x},\hat{y})$} and $(\phi^{\dagger}\phi-\phi^{2}_{0})^{2}$ belong to $\mathcal{A}_{\theta_{\nc}}$ so that \mbox{$\mathcal{E}\notin\mathcal{A}$}. \\
Thus, differently from the refs.\,\citep{Forgacs2005,Lozano:2006xn}, we can define the following pair on which \mbox{$\mathcal{E}\in\mathcal{A}_{\theta_{\nc}}$} is defined: 
\begin{defask}
\label{def:B}
$(\mathcal{B}^{\nc}_{\spt{\mathfrak{F}_{r_{\spt{\theta}}}}},\rho)=(\mathcal{B}^{*\nc }_{\spt{U(1)}},\rho)$  with $(\phi_{1},\phi_{2})\in\mathscr{F}_{r_{\spt{\theta}}}\times\mathscr{F}_{r_{\spt{\theta}}}$ and  
\begin{equation}
\label{eqn:solAB}
A_{i}(\hat{x},\hat{y})=\tilde{A}_{i}(\hat{x},\hat{y})+\varepsilon_{ij}\frac{1}{g a_{j}}\bm{\hat{x}}_{j}
\end{equation}
\hspace{-.02em}where $\tilde{A}_{i}(\hat{x},\hat{y})=c_{i}\tilde{\phi}_{2;i}\tilde{\phi}^{\dagger}_{1;i}$ with constant $c_{i}$ and \mbox{$(\phi_{i},\tilde{\phi}_{j;k})\in\mathfrak{F}_{r_{\spt{\theta}}}$}. 
\end{defask}

\subsubsection{Evaluation \textit{à la} Bogomolny of $E$ on $\mathcal{T}^{2}_{\nc}$} 
\label{ssec:domE}
\noindent We will evaluate $E_{\spt{12}}$ at the point of Bogomolny for minimum-energy configurations rewriting it \textit{à la} Bogomolny.\\
\indent Hereinafter, with $k_{\spt{\alpha_{1}}}, k_{\spt{\beta_{2}}}$ $(k_{\spt{\alpha_{2}}}, k_{\spt{\beta_{1}}})$ we will indicate respectively the coefficient of proportionality between $2\alpha_{1}(a_{1}), 2\beta_{2}(a_{2})$ $(\alpha_{2}(a_{1})/3, \beta_{1}(a_{2})/3)$ and $r_{\spt{\theta}}$.\\
\indent Let's consider the following
\begin{lem}
\label{lem:p2p1d=p1p2d1v}
\textsc{(hermitianity)}\\
Given $(\mathcal{B}^{*\nc }_{\spt{U(1)}},\rho)$, considering $(\phi_{1},\phi_{2})\in\mathscr{F}_{r_{\spt{\theta}}}\times\mathscr{F}_{r_{\spt{\theta}}}$, we have that
\begin{center}
$\phi^{\dagger}_{1}\phi_{2}=\phi^{\dagger}_{2}\phi_{1}$ or $\phi_{2}\phi^{\dagger}_{1}=\phi_{1}\phi^{\dagger}_{2}$ if{}f $\phi_{1}=\phi_{2}$.
\end{center}
\end{lem}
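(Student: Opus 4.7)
The $(\Leftarrow)$ direction is immediate: if $\phi_{1}=\phi_{2}$, then $\phi^{\dagger}_{1}\phi_{2}=\phi^{\dagger}_{1}\phi_{1}$ is manifestly self-adjoint under the involution $\dagger$, and likewise $\phi_{2}\phi^{\dagger}_{1}=\phi_{1}\phi^{\dagger}_{1}$. So the whole content of the lemma is the $(\Rightarrow)$ direction; it suffices to work out one of the two hypotheses, say $\phi^{\dagger}_{1}\phi_{2}=\phi^{\dagger}_{2}\phi_{1}$, since the argument for $\phi_{2}\phi^{\dagger}_{1}=\phi_{1}\phi^{\dagger}_{2}$ is entirely parallel (the only difference is the order in which the exponentials in $\hat{z}_{i}$ appear, which is handled by the same commutator bookkeeping).

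The plan is to exploit the very rigid analytic structure of the elements of $\mathscr{F}_{r_{\spt{\theta}}}$. By Definition~\ref{def:Fphi}, each $\phi_{j}$ (\,$j=1,2$) is a double series $\phi_{j}=\sum_{q,k}f^{\spt{\omega_{j}}}_{q,k;j}$ in which the ``principal'' part factorises as $\e^{\omega_{q,k;j}(x,y)}\,\e^{-\iu\pi q^{2}\hat{\tau}_{1;j}-\iu\pi k^{2}\hat{\tau}_{2;j}-2\iu q\hat{z}_{1;j}-2\iu k\hat{z}_{2;j}}$, and the remainder $\e^{I_{q,k}(\hat{z}_{1;j},\hat{z}_{2;j},\hat{\tau}_{1;j},\hat{\tau}_{2;j})}$ is a finite polynomial in commutators proportional to $\hat{\mathbbm{1}}$. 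Substituting into $\phi^{\dagger}_{1}\phi_{2}-\phi^{\dagger}_{2}\phi_{1}=0$, applying $\rho$ and reorganising with (C-B-H-D) (exactly as in Remark~\ref{rmk:rhox2}), we obtain a uniformly convergent series (cf.~Remark~\ref{rmk:Imo<0}) of the form
\begin{equation*}
\sum_{q,q',k,k'}\Bigl(A^{12}_{q,q',k,k'}(x,y)-A^{21}_{q,q',k,k'}(x,y)\Bigr)=0,
\end{equation*}
where each $A^{jl}_{q,q',k,k'}$ carries an exponential factor $\exp\{2\pi\iu[(q'-q)\Re(\omega_{1;j}(x))/a_{1}+(k'-k)\Re(\omega_{2;j}(y))/a_{2}]\}$ times a Gaussian damping $\exp\{\pi[q^{2}\Im\omega_{1;1}(a_{1})+q'^{2}\Im\omega_{1;2}(a_{1})+\cdots]\}$.

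From here I would argue by uniqueness of Fourier expansion. The functions $\{\e^{2\pi\iu(m x/a_{1}+n y/a_{2})}\}_{m,n\in\Z}$ are linearly independent on $\mathcal{T}^{2}_{\nc}$ (they are the characters of the torus after $\rho$), so collecting terms of fixed total frequency $(m,n)=(q'-q,k'-k)$ forces each such sub-sum to vanish separately. Within a fixed frequency bin the remaining freedom is in the Gaussian weights $\exp\pi[q^{2}\Im\omega_{1;1}(a_{1})+q'^{2}\Im\omega_{1;2}(a_{1})]$ and in the discrete indices; matching the dominant $(q=q',k=k')$ terms (those with smallest Gaussian decay) in both sums yields $\omega_{i;1}=\omega_{i;2}$ and subsequently $\hat{z}_{i;1}=\hat{z}_{i;2}$, $\hat{\tau}_{i;1}=\hat{\tau}_{i;2}$. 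These equalities, propagated to all non-leading terms by induction on $|q|+|q'|+|k|+|k'|$, force $\alpha_{i;1}=\alpha_{i;2}$, $\beta_{i;1}=\beta_{i;2}$ and $\omega_{i;1}=\omega_{i;2}$, whence $\phi_{1}=\phi_{2}$ as elements of $\mathscr{F}_{r_{\spt{\theta}}}$.

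The delicate step is the linear-independence argument in the non-commutative setting: the commutator corrections $[\hat{z}_{1},\hat{z}_{2}]$, $[\hat{\tau}_{i},\hat{z}_{j}]$ appearing in $I_{q,k}$ are all proportional to $\hat{\mathbbm{1}}$ (since $(\Omega_{1},\Omega_{2})\in\mathscr{D}^{*}_{\spt{\beta\alpha,U(1)}}$ and $[\bm{\hat{x}},\bm{\hat{y}}]=\iu\theta\hat{\mathbbm{1}}$), so under $\rho$ they contribute only scalar phase factors depending on $q,k,q',k'$; these do not mix different Fourier modes but do renormalise their coefficients. One must therefore check that no cancellation between $\phi^{\dagger}_{1}\phi_{2}$ and $\phi^{\dagger}_{2}\phi_{1}$ can arise from these scalar phases when $(\omega_{i;1},\hat{z}_{i;1})\neq(\omega_{i;2},\hat{z}_{i;2})$ --- which is ultimately a statement about the injectivity of the parametrisation $\phi\mapsto(\omega_{i},\hat{z}_{i},\hat{\tau}_{i})$ on $\mathscr{F}_{r_{\spt{\theta}}}$, and the main bookkeeping effort of the proof is concentrated there.
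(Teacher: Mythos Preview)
Your strategy --- compare $\phi_{1}^{\dagger}\phi_{2}$ with $\phi_{2}^{\dagger}\phi_{1}$ term by term in the $(q,q',k,k')$-expansion and argue that the individual summands can coincide only when all the defining data $(\omega_{i;j},\hat z_{i;j},\hat\tau_{i;j})$ of $\phi_{1}$ and $\phi_{2}$ agree --- is exactly the paper's approach. The paper's proof is far more laconic than yours: it simply records the terms $g^{\spt{\omega};\,\spt{1,2}}_{q,q',k,k'}$ of $\phi_{1}^{\dagger}\phi_{2}$ and $g^{\spt{\omega};\,\spt{2,1}}_{q,q',k,k'}$ of $\phi_{2}^{\dagger}\phi_{1}$ (respectively $h^{\spt{\omega};\,\spt{2,1}}_{q,q',k,k'}$ and $h^{\spt{\omega};\,\spt{1,2}}_{q,q',k,k'}$ for the other case) and asserts that they are equal, index by index, if and only if $\phi_{1}=\phi_{2}$, with no further justification of why equality of the sums forces equality of the terms.

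Your Fourier-plus-Gaussian argument is a reasonable attempt to supply that missing justification, but one step needs tightening. You identify the ``Fourier frequency'' with the integer pair $(q'-q,k'-k)$, yet the oscillatory factor you write down depends on $\Re(\omega_{1;j}(x))$, whose slope is itself a parameter of $\phi_{j}$; likewise the contributions from $\hat z_{i;j}$ under $\rho$ carry slopes depending on the $\alpha,\beta$-data of $\phi_{j}$. Until you already know that $\omega_{i;1}=\omega_{i;2}$ and $\hat z_{i;1}=\hat z_{i;2}$, the two series $\phi_{1}^{\dagger}\phi_{2}$ and $\phi_{2}^{\dagger}\phi_{1}$ do not sit in a common basis of torus characters, so the step ``collect terms of fixed $(m,n)$'' is not yet licensed. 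A cleaner route is to match the Gaussian envelopes first (this forces the imaginary parts $\Im\omega_{i;j}$ to agree), and then use linear independence of exponentials $\e^{\iu\lambda x}$ for distinct real $\lambda$ --- not periodicity --- to pin down the remaining real parameters; alternatively, just compare the explicit closed forms of $g^{\spt{\omega};\,\spt{1,2}}_{q,q',k,k'}$ and $g^{\spt{\omega};\,\spt{2,1}}_{q,q',k,k'}$ directly, which is all the paper actually does.
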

\begin{proof}
It's sufficient to consider that the terms \mbox{$g^{\spt{\omega};\,\spt{1,2}}_{q,q',k,k'}(\hat{x},\hat{y},\theta\hat{\mathbbm{1}})$} and \mbox{$h^{\spt{\omega};\,\spt{2,1}}_{q,q',k,k'}(\hat{x},\hat{y},\theta\hat{\mathbbm{1}})$} (see eqs.\,\eqref{eqn:g12h21}) are such that
\begin{subequations}
\label{eqn:phi1dphi2vphi2phi1d}
\begin{align}
g^{\spt{\omega};\,\spt{1,2}}_{q,q',k,k'}(\hat{x},\hat{y})&=g^{\spt{\omega};\,\spt{2,1}}_{q,q',k,k'}(\hat{x},\hat{y}),\\
h^{\spt{\omega};\,\spt{2,1}}_{q,q',k,k'}(\hat{x},\hat{y})&=h^{\spt{\omega};\,\spt{1,2}}_{q,q',k,k'}(\hat{x},\hat{y})
\end{align}
\end{subequations}
if{}f $\phi_{1}=\phi_{2}$.
\end{proof}
\indent Let's also consider the following
\begin{lem}
\label{lem:p1dp2=p2p1d}
Given $(\mathcal{B}^{*\nc }_{\spt{U(1)}},\rho)$, considering $(\phi_{1},\phi_{2})\in\mathscr{F}_{r_{\spt{\theta}}}\times\mathscr{F}_{r_{\spt{\theta}}}$, we have that 
\begin{center}
$\phi^{\dagger}_{1}\phi_{2}=\phi_{2}\phi^{\dagger}_{1}$ if{}f \mbox{$(\phi_{1},\phi_{2})\in \tilde{\mathfrak{F}}_{r_{\spt{\theta}};1}$}.
\end{center} 
\end{lem}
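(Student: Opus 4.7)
The plan is to mirror the method used in the proof of Theorem~\ref{thm:O12O21period}: expand $\phi_{1}^{\dagger}\phi_{2}$ and $\phi_{2}\phi_{1}^{\dagger}$ as double series via Definition~\ref{def:Fphi}, and compare term by term the resulting ``words'' in the non-commuting building blocks $\hat{z}_{i;j}$ and $\hat{\tau}_{i;j}$. Concretely, for $(\phi_{1},\phi_{2})\in\mathscr{F}_{r_{\spt{\theta}}}\times\mathscr{F}_{r_{\spt{\theta}}}$ write, as in~\eqref{eqn:g12h21},
\begin{equation*}
\phi_{1}^{\dagger}\phi_{2}=\sum_{q,q',k,k'} g^{\spt{\omega};\,\spt{1,2}}_{q,q',k,k'},\qquad \phi_{2}\phi_{1}^{\dagger}=\sum_{q,q',k,k'} h^{\spt{\omega};\,\spt{2,1}}_{q,q',k,k'}.
\end{equation*}
The coefficients $g^{\spt{\omega};\,\spt{1,2}}_{q,q',k,k'}$ and $h^{\spt{\omega};\,\spt{2,1}}_{q,q',k,k'}$ differ only in the order in which the exponentials carrying $\hat{z}_{i;j},\hat{\tau}_{i;j}$ of $\phi_{1}^{\dagger}$ are multiplied against those of $\phi_{2}$.

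The first step is therefore to rearrange the factors of $h^{\spt{\omega};\,\spt{2,1}}_{q,q',k,k'}$ into the order fixed in $g^{\spt{\omega};\,\spt{1,2}}_{q,q',k,k'}$ by means of the Campbell-Baker-Hausdorff-Dynkin formula, as was done in the proof of Lemma~\ref{lem:D*ba}. This produces, for each $(q,q',k,k')$, a multiplicative ``exchange factor'' that is the exponential of a linear combination of the nested commutators $[\hat{z}_{1;j},\hat{z}_{2;k}]$, $[\hat{\tau}_{1;j},\hat{z}_{i;k}]$, $[\hat{\tau}_{1;j},\hat{\tau}_{2;k}]$ and higher, with integer coefficients $\propto q,q',k,k'$. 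The second step is then to invoke the linear independence of such nested commutators (same argument as in equations~\eqref{eqn:[fmu,fnu]}--\eqref{eqn:eTCDab}) to conclude that the exchange factor equals the identity for every $(q,q',k,k')$ if and only if each independent commutator either vanishes or its exponentiated image is trivial; applying Lemma~\ref{lem:e^[x,y]=+-e} to the remaining scalar pieces, the conditions $[\hat{z}_{i;j},\hat{z}_{k;l}]=0$ and $\e^{2\pi\iu\hat{\tau}_{i;j}}=\pm e$ survive, which is precisely the definition of $\tilde{\mathfrak{F}}_{r_{\spt{\theta}};1}$.

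Finally, I would close the biconditional by checking the converse: assuming $(\phi_{1},\phi_{2})\in\tilde{\mathfrak{F}}_{r_{\spt{\theta}};1}$, all cross-commutators produced by C-B-H-D collapse (the $\hat{z}$-commutators vanish outright, while the $\hat{\tau}$-commutators exponentiate to $e$ once coupled with the integer winding indices $q,k$), so that $g^{\spt{\omega};\,\spt{1,2}}_{q,q',k,k'}=h^{\spt{\omega};\,\spt{2,1}}_{q,q',k,k'}$ termwise and the two series coincide. The main obstacle is bookkeeping: one must be careful to enumerate every independent nested commutator up to the order at which an integer combination of $q,q',k,k'$ can spoil linear independence, and to confirm that no further constraint beyond those defining $\tilde{\mathfrak{F}}_{r_{\spt{\theta}};1}$ is forced. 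This is exactly the same delicate point that led to Remark~\ref{rmk:5ord}, so extra care is warranted here.
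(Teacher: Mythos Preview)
Your proposal is correct and follows the paper's own route: the paper writes the exchange factor explicitly as $h^{\spt{\omega};\,\spt{2,1}}_{q,q',k,k'}=e^{\Delta}\,g^{\spt{\omega};\,\spt{1,2}}_{q,q',k,k'}$ with $\Delta$ given in eq.~\eqref{eqn:Delta(h,g)Iw}, and then reads off the conditions for $e^{\Delta}=e$ for all $q,q',k,k'$. Your concern about higher nested commutators (in the spirit of Remark~\ref{rmk:5ord}) is moot here, since $[\bm{\hat{x}},\bm{\hat{y}}]=\iu\theta\hat{\mathbbm{1}}$ is central and C-B-H-D terminates at first order; the only genuine subtlety is verifying that no conditions beyond those defining $\tilde{\mathfrak{F}}_{r_{\spt{\theta}};1}$ are forced, which the paper handles (see the footnote to its proof) by observing that once $[\hat{z}_{i;j},\hat{z}_{k;l}]=0$ holds, the remaining $\hat\tau$--$\hat z$ cross terms in $\Delta$ exponentiate to $e$ automatically via the $\mathscr{F}_{r_{\spt{\theta}}}$ hypothesis and Lemma~\ref{lem:e^[x,y]=+-e}.
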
 
\begin{proof}
It's sufficient to consider the terms $g^{\spt{\omega};\,\spt{1,2}}_{q,q',k,k'}(\hat{x},\hat{y})$ and $h^{\spt{\omega};\,\spt{2,1}}_{q,q',k,k'}(\hat{x},\hat{y})$ respectively of the development of $\phi^{\dagger}_{1}\phi_{2}$ and $\phi_{2}\phi^{\dagger}_{1}$ (see eqs.\,\eqref{eqn:g12h21}) and notice that
\begin{equation}
\label{eqn:h=gIw}
h^{\spt{\omega};\,\spt{2,1}}_{q,q',k,k'}(\hat{x},\hat{y})=e^{\Delta(\hat{z}_{i;j},\hat{\tau}_{i;j})}g^{\spt{\omega};\,\spt{1,2}}_{q,q',k,k'}(\hat{x},\hat{y})
\end{equation}
with 
{\fontsize{9}{21}
 \selectfont
\begin{equation}
\label{eqn:Delta(h,g)Iw}
\begin{split}
\hspace{-.25cm}\Delta(\hat{z}_{i;j},\hat{\tau}_{i;j})=&2\pi(q'k^{2}[\hat{\tau}_{2;2},\hat{z}_{1;1}]-qk'^{2}[\hat{\tau}_{2;1},\hat{z}_{1;2},])+\pi^{2}(q^{2}k'^{2}[\hat{\tau}_{1;2},\hat{\tau}_{2;1}]-q'^{2}k^{2}[\hat{\tau}_{1;1},\hat{\tau}_{2;2}])+\\
\hspace{-.25cm}&+4(qk'[\hat{z}_{1;2},\hat{z}_{2;1}]-q'k[\hat{z}_{1;1},\hat{z}_{2;2}])+2\pi(q^{2}k'[\hat{\tau}_{1;2},\hat{z}_{2;1}]-q'^{2}k[\hat{\tau}_{1;1},\hat{z}_{2;2}])+\\
\hspace{-.25cm}&+2\pi (q^{2}q'[\hat{\tau}_{1;2},\hat{z}_{1;1}]-q'^{2}q[\hat{\tau}_{1;1},\hat{z}_{1;2}])+2\pi (k'k^{2}[\hat{\tau}_{2;2},\hat{z}_{2;1}]-k'^{2}k[\hat{\tau}_{2;1},\hat{z}_{2;2}])+\\
\hspace{-.25cm}&-4q'q[\hat{z}_{1;1},\hat{z}_{1;2}]-4k'k[\hat{z}_{2;1},\hat{z}_{2;2}]-\pi^{2}q'^{2}q^{2}[\hat{\tau}_{1;1},\hat{\tau}_{1;2}]-\pi^{2}k'^{2}k^{2}[\hat{\tau}_{2;1},\hat{\tau}_{2;2}],
\end{split}
\end{equation}}
\hspace{-.3em}from which we have to consider that, for $i=k$, \mbox{$[\hat{\tau}_{i;j},\hat{\tau}_{k;l}]=0$}. We have thus the equality of $h^{\spt{\omega};\,\spt{2,1}}_{q,q',k,k'}(\hat{x},\hat{y})$ with $g^{\spt{\omega};\,\spt{1,2}}_{q,q',k,k'}(\hat{x},\hat{y})$ if{}f\footnote{The request that $[\hat{z}_{i;j},\hat{z}_{k;l}]=0$ implies that $[\hat{\tau}_{i;j},\hat{z}_{i;k}]=[\hat{\tau}_{i;k},\hat{z}_{i;j}]$ and the equality to $e$ of the terms $\e^{\pi[\hat{\tau}_{i;j},\hat{z}_{k;l}]}$ with $i\neq k$: if \mbox{$[\hat{z}_{i;j},\hat{z}_{k;l}]=0$} then, for $i\neq k$, $\e^{\pi[\hat{\tau}_{i;j},\hat{z}_{k;l}]}=\e^{\pi[\hat{\tau}'_{k;l},\hat{z}_{i;j}]}$ but, from the Lemma~\ref{lem:e^[x,y]=+-e} with the hypothesis that $\phi_{i}\in\mathscr{F}_{r_{\spt{\theta}}}$, we have that $\e^{\pi[\hat{\tau}'_{k;l},\hat{z}_{i;j}]}$ is equal to $e$.\label{fn:e[t1,t2]=e}}  \mbox{$(\phi_{1},\phi_{2})\in \tilde{\mathfrak{F}}_{r_{\spt{\theta}};1}$}.
\end{proof}
\begin{remark}
It's interesting to notice that the equality between $\phi_{1}^{\dagger}\phi_{2}$ and $\phi_{2}\phi_{1}^{\dagger}$ though granted for $\mathcal{T}^{2}_{\cc}$, it is not for $\mathcal{T}^{2}_{\nc}$: starting from $\mathscr{F}\times\mathscr{F}$, while for $\mathcal{T}^{2}_{\cc}$ 
the equality is obvious, for $\mathcal{T}^{2}_{\nc}$ we can see it is valid only if \mbox{$(\phi_{1},\phi_{2})\in \tilde{\mathfrak{F}}_{r_{\spt{\theta}};1}$}.
\end{remark}
Let's consider now the following two Lemmas
\begin{lem}
\label{lem:EalàBogo'}
Given $(\mathcal{B}^{\nc}_{\spt{\mathfrak{F}_{r_{\spt{\theta}}}}},\rho)$, 
\begin{center}
\hspace{-.8cm}$E_{\spt{12}}$ can be rewritten \textit{à la} Bogomolny if{}f $\tilde{\phi}_{2;i}=\tilde{\phi}_{1;i}$ and \mbox{$(\phi_{1},\phi_{2})\in \tilde{\mathfrak{F}}_{r_{\spt{\theta}};1}$}.
\end{center}
\end{lem}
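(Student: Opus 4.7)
The plan is to reduce both implications to the hypotheses of Theorem~\ref{thm:EBogorev}, using Lemmas~\ref{lem:p2p1d=p1p2d1v} and~\ref{lem:p1dp2=p2p1d} as the main tools.

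\textbf{Sufficiency.} Assume $\tilde{\phi}_{2;i}=\tilde{\phi}_{1;i}$ and $(\phi_{1},\phi_{2})\in \tilde{\mathfrak{F}}_{r_{\spt{\theta}};1}$. First, Lemma~\ref{lem:p1dp2=p2p1d} immediately gives the operator identity $\phi^{\dagger}_{1}\phi_{2}=\phi_{2}\phi^{\dagger}_{1}$; taking the trace and its square yields the first two hypotheses of Theorem~\ref{thm:EBogorev}. For the third, recall from Definition~\ref{def:B} that
\[
A_{i}=\tilde{A}_{i}+\varepsilon_{ij}\frac{1}{g a_{j}}\bm{\hat{x}}_{j},\qquad \tilde{A}_{i}=c_{i}\tilde{\phi}_{2;i}\tilde{\phi}^{\dagger}_{1;i}.
\]
Since by assumption $\bm{\hat{x}}_{j}^{\dagger}=\bm{\hat{x}}_{j}$, the inhomogeneous piece is hermitian. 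Substituting $\tilde{\phi}_{2;i}=\tilde{\phi}_{1;i}$ gives $\tilde{A}_{i}=c_{i}\tilde{\phi}_{1;i}\tilde{\phi}^{\dagger}_{1;i}$, which is hermitian provided $c_{i}\in\R$. Thus all three hypotheses of Theorem~\ref{thm:EBogorev} hold and $E_{\spt{12}}$ admits the Bogomolny rewriting.

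\textbf{Necessity.} Conversely, assume $E_{\spt{12}}$ can be rewritten \textit{à la} Bogomolny, so the hypotheses of Theorem~\ref{thm:EBogorev} are in force. From $A^{\dagger}_{i}=A_{i}$ and the hermiticity of the inhomogeneous piece one gets $\tilde{A}^{\dagger}_{i}=\tilde{A}_{i}$, i.e.\
\[
c_{i}^{*}\,\tilde{\phi}_{1;i}\tilde{\phi}^{\dagger}_{2;i}=c_{i}\,\tilde{\phi}_{2;i}\tilde{\phi}^{\dagger}_{1;i}.
\]
Taking $c_{i}\in\R$ (which is the only choice compatible with real $A_{i}$ on $\mathcal{T}^{2}_{\nc}$), this becomes $\tilde{\phi}_{1;i}\tilde{\phi}^{\dagger}_{2;i}=\tilde{\phi}_{2;i}\tilde{\phi}^{\dagger}_{1;i}$, and Lemma~\ref{lem:p2p1d=p1p2d1v} forces $\tilde{\phi}_{2;i}=\tilde{\phi}_{1;i}$. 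Next, the equalities $\tr(\phi^{\dagger}_{1}\phi_{2})=\tr(\phi_{2}\phi^{\dagger}_{1})$ and $\tr(\phi^{\dagger}_{1}\phi_{2})^{2}=\tr(\phi_{2}\phi^{\dagger}_{1})^{2}$, when expanded through the definition of $\rho$ and the series representation of $\phi_{1},\phi_{2}\in\mathscr{F}_{r_{\spt{\theta}}}$, reduce to a matching of the Fourier-type coefficients $g^{\spt{\omega};\,\spt{1,2}}_{q,q',k,k'}$ and $h^{\spt{\omega};\,\spt{2,1}}_{q,q',k,k'}$ appearing in the proof of Lemma~\ref{lem:p1dp2=p2p1d}; because the biperiodic modes are linearly independent under $\rho$, this matching forces the operator identity $\phi^{\dagger}_{1}\phi_{2}=\phi_{2}\phi^{\dagger}_{1}$, and Lemma~\ref{lem:p1dp2=p2p1d} then yields $(\phi_{1},\phi_{2})\in \tilde{\mathfrak{F}}_{r_{\spt{\theta}};1}$.

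\textbf{Main obstacle.} The delicate point is the converse for the trace conditions: a priori, equality of traces is weaker than equality of the operators themselves. The step relies on the fact that elements of $\mathscr{F}_{r_{\spt{\theta}}}$ have a unique expansion in the basis of modes $f^{\spt{\omega}}_{q,k}$ whose images under $\rho$ form a linearly independent family in $\mathcal{A}_{\theta_{\nc}}$, so that vanishing trace of a difference forces the coefficientwise vanishing, hence the operator identity. Making this orthogonality/independence argument rigorous (via the discussion of convergence in Remark~\ref{rmk:Imo<0}) is the technical heart of the proof.
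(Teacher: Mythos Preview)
Your sufficiency direction and the necessity of $\tilde{\phi}_{2;i}=\tilde{\phi}_{1;i}$ via Lemma~\ref{lem:p2p1d=p1p2d1v} match the paper's argument.

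The gap is in your necessity argument for $(\phi_{1},\phi_{2})\in\tilde{\mathfrak{F}}_{r_{\spt{\theta}};1}$. You attempt to pass from the two trace equalities $\tr(\phi^{\dagger}_{1}\phi_{2})=\tr(\phi_{2}\phi^{\dagger}_{1})$ and $\tr(\phi^{\dagger}_{1}\phi_{2})^{2}=\tr(\phi_{2}\phi^{\dagger}_{1})^{2}$ to the operator identity $\phi^{\dagger}_{1}\phi_{2}=\phi_{2}\phi^{\dagger}_{1}$ by a ``linear independence of modes under $\rho$'' argument. This does not work as stated: two scalar constraints (trace and trace of the square) cannot force coefficientwise equality of infinite series, regardless of orthogonality. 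You flag this yourself as the ``main obstacle'', but in fact the obstacle is not technical---the route is simply wrong.

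The paper avoids this entirely. The point is that, by Definition~\ref{def:tr}, the trace $\tr f$ is only defined for $f\in\mathcal{A}_{\theta}$. Hence for the hypotheses of Theorem~\ref{thm:EBogorev} even to be stated, one needs $\phi^{\dagger}_{1}\phi_{2},\phi_{2}\phi^{\dagger}_{1}\in\mathcal{A}$. By Theorem~\ref{thm:O12O21period}, $\phi_{2}\phi^{\dagger}_{1}\in\mathcal{A}$ holds if and only if $(\phi_{1},\phi_{2})\in\mathfrak{F}_{r_{\spt{\theta}}}\cap\tilde{\mathfrak{F}}_{r_{\spt{\theta}};1}$; the first membership is already part of $(\mathcal{B}^{\nc}_{\spt{\mathfrak{F}_{r_{\spt{\theta}}}}},\rho)$, so the periodicity of $\phi_{2}\phi^{\dagger}_{1}$ forces $(\phi_{1},\phi_{2})\in\tilde{\mathfrak{F}}_{r_{\spt{\theta}};1}$ directly. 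Only after this (via Theorem~\ref{thm:AiffAth} and Lemma~\ref{lem:p1dp2=p2p1d}) does one obtain the operator identity, which then yields the trace equalities---not the other way around. In short, the necessity comes from well-definedness of the trace, not from the numerical equalities it satisfies.
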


\begin{proof}
The hypotheses of the Theorem~\ref{thm:EBogorev} are met if{}f $\tilde{\phi}_{2;i}=\tilde{\phi}_{1;i}$ and \mbox{$(\phi_{1},\phi_{2})\in\tilde{\mathfrak{F}}_{r_{\spt{\theta}};1}$}. In fact $A_{i}$ is hermitian if{}f \mbox{$\tilde{\phi}_{2;i}=\tilde{\phi}_{1;i}$} (see Lemma~\ref{lem:p2p1d=p1p2d1v}) while \mbox{$\tr(\phi^{\dagger}_{1}\phi_{2})=\tr(\phi_{2}\phi^{\dagger}_{1})$} and  \mbox{$\tr(\phi^{\dagger}_{1}\phi_{2})^{2}=\tr(\phi_{2}\phi^{\dagger}_{1})^{2}$} if{}f  \mbox{$\phi^{\dagger}_{1}\phi_{2},\phi_{2}\phi^{\dagger}_{1}\in\mathcal{A}$} and so if{}f \mbox{$(\phi_{1},\phi_{2})\in\tilde{\mathfrak{F}}_{r_{\spt{\theta}};1}$} (see Theorem~\ref{thm:O12O21period}): once \mbox{$\phi^{\dagger}_{1}\phi_{2},\phi_{2}\phi^{\dagger}_{1}\in\mathcal{A}$} and that thus \mbox{$\phi^{\dagger}_{1}\phi_{2},\phi_{2}\phi^{\dagger}_{1}\in\mathcal{A}_{\theta}$} (see Theorem~\ref{thm:AiffAth}), we have that \mbox{$\tr(\phi^{\dagger}_{1}\phi_{2})=\tr(\phi_{2}\phi^{\dagger}_{1})$} and \mbox{$\tr(\phi^{\dagger}_{1}\phi_{2})^{2}=\tr(\phi_{2}\phi^{\dagger}_{1})^{2}$} (see Lemma~\ref{lem:p1dp2=p2p1d}).
\end{proof}

And moreover  
\begin{lem}
\label{lem:Trphid1phi2}
Given $(\mathcal{B}^{*\nc }_{\spt{U(1)}},\rho)$ with \mbox{$(\phi_{1},\phi_{2})\in\mathscr{F}_{r_{\spt{\theta}}}\times\mathscr{F}_{r_{\spt{\theta}}}$} and $\phi^{\dagger}_{1}\phi_{2}$ hermitian, we have that 
\[
\tr(\phi^{\dagger}_{1}\phi_{2})=0.
\]
\end{lem}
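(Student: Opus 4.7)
The plan is to reduce the statement to a ``diagonal'' identity and then extract the vanishing from Fourier/oscillatory integration over $\mathcal{T}^{2}_{\nc}$, with the noncommutative phase produced by $\rho$ playing the decisive role.

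First I would exploit the hermiticity hypothesis. Since $(\phi^{\dagger}_{1}\phi_{2})^{\dagger}=\phi^{\dagger}_{2}\phi_{1}$, requiring $\phi^{\dagger}_{1}\phi_{2}$ hermitian is exactly the identity $\phi^{\dagger}_{1}\phi_{2}=\phi^{\dagger}_{2}\phi_{1}$. By Lemma~\ref{lem:p2p1d=p1p2d1v} this forces $\phi_{1}=\phi_{2}=:\phi\in\mathscr{F}_{r_{\spt{\theta}}}$, so the task collapses to showing $\tr(\phi^{\dagger}\phi)=0$.

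Next I would expand using Definition~\ref{def:Fphi}:
\begin{equation*}
\phi^{\dagger}\phi=\sum_{q,q',k,k'}(f^{\spt{\omega}}_{q,k})^{\dagger}f^{\spt{\omega}}_{q',k'},
\end{equation*}
and apply $\rho$ term-by-term, using the product rule~\eqref{eqn:pi(prod)} of Remark~\ref{rmk:rhox2} to reorder all exponentials in $\bm{\hat{x}},\bm{\hat{y}}$. Since $\phi\in\mathscr{F}_{r_{\spt{\theta}}}$ the commutator $[\hat{z}_{1},\hat{z}_{2}]$ vanishes via $(\mathscr{D}^{*}_{{\spt{\beta\alpha}},1})$, and the only nontrivial commutator contributing is $[\bm{\hat{x}},\bm{\hat{y}}]=\iu\theta\hat{\mathbbm{1}}$. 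Consequently $\rho\bigl((f^{\spt{\omega}}_{q,k})^{\dagger}f^{\spt{\omega}}_{q',k'}\bigr)$ is a scalar prefactor times an exponential of the shape $\exp\bigl(\iu\pi L_{q,k,q',k'}(x,y)+Q(x,y)\bigr)$, where $L$ is linear in $(x,y)$ with coefficients depending on $(q'-q,k'-k)$ plus a noncommutative phase correction proportional to $\theta$, and $Q$ is the (real, negative-definite) quadratic arising from the $\Im^{2}(\omega_{i})/\Im(\omega_{i}(a_{i}))$ piece of $\omega_{q,k}$.

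Finally I would integrate over $[0,a_{1}]\times[0,a_{2}]$. The integral factorises in $x$ and $y$, and standard Fourier orthogonality kills any summand for which the linear part of $L_{q,k,q',k'}$ has a nonvanishing coefficient of $x$ or $y$. What remains should be a restricted sum over indices $(q,k,q',k')$ saturating the ``resonance'' conditions, and here I would exploit the signs $(-)^{q+k}(-)^{q'+k'}$ inherent in the definition of $\mathscr{F}_{r_{\spt{\theta}}}$ together with a symmetric index pairing (such as $(q,k,q',k')\leftrightarrow(q',k',q,k)$, which flips the sign of the noncommutative phase $\exp(-\iu\pi^{2}(qk'-q'k)\theta)$ coming from~\eqref{eqn:pi(prod)}) to produce pairwise cancellation.

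The main obstacle I anticipate is Step~4: showing that \emph{every} surviving Fourier-zero mode cancels. The convergent series~(Remark~\ref{rmk:Imo<0}) permits term-wise integration, but the noncommutative correction produces $\theta$-dependent phases that need careful bookkeeping to guarantee that the involution $(q,k,q',k')\mapsto(q',k',q,k)$ (or $(-q,-k,-q',-k')$) reverses the total sign. This is precisely the point where the rationality $r_{\spt{\theta}}\in\Q$, encoded via the $\mathscr{D}^{*}_{\spt{\beta\alpha,U(1)}}$ constraints and the condition $\e^{\iu\frac{\pi}{3}\hat{\tau}'_{i}}=\pm e$ defining $\mathscr{F}_{r_{\spt{\theta}}}$, must enter to make the phases match up exactly.
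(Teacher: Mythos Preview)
Your first reduction is correct and matches the paper: hermiticity of $\phi^{\dagger}_{1}\phi_{2}$ together with Lemma~\ref{lem:p2p1d=p1p2d1v} forces $\phi_{1}=\phi_{2}=:\phi$, and one is left with $\tr(\phi^{\dagger}\phi)$.

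The gap is in your integration step. You write each $\rho$-image as $\exp\bigl(\iu\pi L(x,y)+Q(x,y)\bigr)$ and then invoke ``standard Fourier orthogonality'' to kill the off-resonant modes. But $Q$ is a genuine \emph{quadratic} in $x$ and $y$ coming from the $\Im^{2}(\omega_{i})/\Im(\omega_{i}(a_{i}))$ piece of Definition~\ref{def:Fphi}; the individual summands are Gaussian-damped, not periodic, so $\int_{0}^{a_{1}}\e^{\iu Lx}\e^{Q(x)}\,dx$ is an incomplete Gaussian (an error function), not a Kronecker delta. Fourier orthogonality simply does not apply term-by-term here, and the ``resonance'' picture collapses. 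Your fallback, the involution $(q,k,q',k')\leftrightarrow(q',k',q,k)$, does not rescue this: it sends each term to its complex conjugate (since $(f^{\spt{\omega}}_{q,k})^{\dagger}f^{\spt{\omega}}_{q',k'}$ and $(f^{\spt{\omega}}_{q',k'})^{\dagger}f^{\spt{\omega}}_{q,k}$ are Hermitian adjoints), while the sign $(-)^{q+k+q'+k'}$ is invariant. So the pairing shows the sum is real, not that it vanishes.

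What the paper actually does is pass to sum-and-difference indices $l=q'+q$, $l'=q'-q$, $m=k'+k$, $m'=k'-k$, compute the $x$-integral of each term \emph{explicitly} as a difference of error functions evaluated at arguments shifted by $l\mapsto l+2$ (see~\eqref{eqn:Inta1G'}--\eqref{eqn:defAll1m1}), and then use the $\mathscr{F}_{r_{\spt{\theta}}}$ constraint $\e^{\iu\frac{\pi}{3}\hat{\tau}'_{i}}=\pm e$ (via Lemma~\ref{lem:e^[x,y]=+-e} and Corollary~\ref{cor:rth}) to trivialise the residual $m'$-dependent phase, after which the sum over $l$ telescopes to zero. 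The rationality of $r_{\spt{\theta}}$ enters exactly here, but through the error-function telescoping, not through a Fourier selection rule.
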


\begin{proof}
Let's notice that, according to the hypotheses, \mbox{$\phi^{\dagger}_{1}\phi_{2}\in\mathcal{A}_{\theta}$} (see the Lemma~\ref{lem:p2p1d=p1p2d1v} and the case\emph{~\ref{item:(i)}} of the Theorem~\ref{thm:O12O21period}).\\ 
Indicating $\phi_{1}$ with $\phi$ (see the Lemma~\ref{lem:p2p1d=p1p2d1v}) and defining  \mbox{$l=q'+q$}, \mbox{$l'=q'-q$}, \mbox{$m=k'+k$}, \mbox{$m'=k'-k$}, with $q',k'$ indexes referred to $\phi^{\dagger}$, we find that
\begin{equation}
\label{eqn:phidphig'lm}
\phi^{\dagger}\phi(\hat{x},\hat{y})=\hspace{-1em}\sum^{+\infty}_{l,l',m,m'=-\infty}\hspace{-1em}g'^{\spt{\omega}}_{l,l',m,m'}(\hat{x},\hat{y})
\end{equation}
where
\begin{equation}
\label{eqn:g'lm}
\hspace{-.5cm}g'^{\spt{\omega}}_{l,l',m,m'}(\hat{x},\hat{y})=(-)^{l+m}\e^{I^{\spt{g'}}_{l,l',m,m'}(\hat{z}_{1},\hat{z}_{2},\hat{\tau}_{1},\hat{\tau}_{2})}\e^{\omega'_{l,l',m,m'}(x,y)}\e^{\iu\pi mm'\hat{\tau}_{2}+\iu\pi ll'\hat{\tau}_{1}+2\iu l'\hat{z}_{1}+2\iu m'\hat{z}_{2}}
\end{equation}
with
{\fontsize{9.5}{15}
 \selectfont
\begin{equation}
\begin{split}
\hspace{-.85cm}I^{\spt{g'}}_{l,l',m,m'}(\hat{z}_{1},\hat{z}_{2},\hat{\tau}_{1},\hat{\tau}_{2})=&-\frac{\pi}{2}l'(m^{2}+m'^{2})[\hat{\tau}_{2},\hat{z}_{1}]+\frac{\pi^{2}}{4}ll'(m^{2}+m'^{2})[\hat{\tau}_{1},\hat{\tau}_{2}]+2l'(l+m)[\hat{z}_{1},\hat{z}_{2}]+\\
&+\pi l'l(l+m)[\hat{\tau}_{1},\hat{z}_{2}]-\frac{\pi}{3}l'^{3}[\hat{\tau}_{1},\hat{z}_{1}]-\frac{\pi}{3}m'^{3}[\hat{\tau}_{2},\hat{z}_{2}],
\end{split}
\end{equation}}
\hspace{-.3em}and \mbox{$\omega'_{l,l',m,m'}(x,y)=\omega'_{l,l'}(x/a_{1},\omega_{1}(a_{1}))+\omega'_{m,m'}(y/a_{2},\omega_{2}(a_{2}))$} with
{\fontsize{10}{15}
 \selectfont
\begin{equation}
\hspace{-.7cm}\omega'_{l,l'}(x/a_{1},\omega_{1}(a_{1}))= 2\pi\Im(\omega_{1}(a_{1}))\biggl(\Bigr(\,\frac{x}{a_{1}}+l\,\Bigr)\frac{x}{a_{1}}+\frac{(l'^{2}+l^{2})}{4}\biggr)+2\pi\iu\Re(\omega_{1}(a_{1}))l'\Bigl( \frac{x}{a_{1}}+\frac{l}{2}\Bigr).
\end{equation}}
\indent Now, according to the hypothesis, $\alpha_{i}$'s and $\beta_{i}$'s meet the $\mathrm{(}\mathscr{D}^{*}_{{\spt{\beta\alpha}},1}\mathrm{)}$ so that \mbox{$[\hat{z}_{1},\hat{z}_{2}]=0$} and then, considering what we stated in the footnote~\ref{fn:e[t1,t2]=e}, \mbox{we have that}
\begin{subequations}
\label{eqn:conDab}
\begin{align}
&\e^{-\frac{\pi}{2}l'(m^{2}+m'^{2})[\hat{\tau}_{2},\hat{z}_{1}]}=(\e^{\pi [\hat{\tau}'_{1},\hat{z}_{2}]})^{-\frac{l'(m^{2}+m'^{2})}{2}}=e\\
&\e^{\frac{\pi^{2}}{4}ll'(m^{2}+m'^{2})[\hat{\tau}_{1},\hat{\tau}_{2}]}=(\e^{\pi [\hat{\tau}'_{2},\hat{\tau}'_{1}]})^{\pi\frac{ll'(m^{2}+m'^{2})}{4}}=e,\\
&\e^{\pi l'l(l+m)[\hat{\tau}_{1},\hat{z}_{2}]}=(\e^{\pi [\hat{\tau}'_{2},\hat{z}_{1}]})^{l'l(1+m)}=e.
\end{align} 
\end{subequations}
Taking into account~\eqref{eqn:conDab}, we have that  
{\fontsize{10}{15}
 \selectfont
\begin{equation}
\hspace{-1.5cm}\rho(g'^{\spt{\omega}}_{l,l',m,m'}(\hat{x},\hat{y}))=G'^{\spt{\omega}}_{l,l',m'}(x/a_{1},\theta,k_{\spt{\alpha_{1}}},k_{\spt{\alpha_{2}}},k_{\spt{\beta_{2}}},\omega_{1}(a_{1}))G'^{\spt{\omega}}_{m,m',l'}(y/a_{2},\theta,k_{\spt{\beta_{2}}},k_{\spt{\beta_{1}}},-k_{\spt{\alpha_{1}}},\omega_{2}(a_{2}))
\end{equation}}
\!\!with
{\fontsize{9.5}{15}
 \selectfont
\begin{equation}
\hspace{-2.5cm}G'^{\spt{\omega}}_{l,l',m'}(x/a_{1},\theta,k_{\spt{\alpha_{1}}},k_{\spt{\alpha_{2}}},k_{\spt{\beta_{2}}},\omega_{1}(a_{1}))=(-)^{l}\e^{\Re(\omega'_{l,l'}(x/a_{1},\omega_{1}(a_{1})))}\e^{\iu \frac{\pi}{2} ll'\bigl(k_{\spt{\alpha_{1}}}r_{\spt{\theta}}+2\Re(\omega_{1}(a_{1}))\bigr)}\e^{\iu K^{\spt{\omega}}_{l',m'}(k_{\spt{\alpha_{1}}},k_{\spt{\alpha_{2}}},k_{\spt{\beta_{2}}},\omega_{1}(a_{1}))\frac{x}{a_{1}}}
\end{equation}}
\!\!where 
{\fontsize{9.25}{15}
 \selectfont
\begin{equation}
K^{\spt{\omega}}_{l',m'}(k_{\spt{\alpha_{1}}},k_{\spt{\alpha_{2}}},k_{\spt{\beta_{2}}},\omega_{1}(a_{1}))=\pi\Bigl(2l'\Re(\omega_{1}(a_{1}))+\frac{\pi}{2}m'^{3}k_{\spt{\alpha_{2}}}k_{\spt{\beta_{2}}}r^{2}_{\spt{\theta}}\theta +(l'k_{\spt{\alpha_{1}}}+6m'k_{\spt{\alpha_{2}}})r_{\spt{\theta}}\Bigr).
\end{equation}}
\indent Now, considering that \mbox{$\rho(\phi^{\dagger}_{1}\phi_{2}(\hat{x},\hat{y}))$} is a continuos function (see Remark~\ref{rmk:Imo<0}), we can invoke Fubini's Theorem and conclude that \mbox{$\int_{\spt{\mathcal{T}^{2}}}\rho(\phi^{\dagger}_{1}\phi_{2}(\hat{x},\hat{y}))$} is null if \mbox{$\int^{a_{1}}_{0}\rho_{y}(\phi^{\dagger}_{1}\phi_{2})(x)\,dx$} (\mbox{$\int^{a_{2}}_{0}\rho_{x}(\phi^{\dagger}_{1}\phi_{2})(y)\,dy$}) is null, where \mbox{$\rho_{y}(\phi^{\dagger}_{1}\phi_{2})(x)=\rho(\phi^{\dagger}_{1}\phi_{2}(\hat{x},\hat{y}))$} (\mbox{$\rho_{x}(\phi^{\dagger}_{1}\phi_{2})(y)=\rho(\phi^{\dagger}_{1}\phi_{2}(\hat{x},\hat{y}))$}). We will show thus that \mbox{$\int^{a_{1}}_{0}\rho_{y}(\phi^{\dagger}_{1}\phi_{2})(x)\,dx$} is null (analogous demonstration holds for \mbox{$\int^{a_{2}}_{0}\rho_{x}(\phi^{\dagger}_{1}\phi_{2})(y)\,dy$}). Considering that \mbox{$\rho(\phi^{\dagger}_{1}\phi_{2}(\hat{x},\hat{y}))$} converges uniformly (see Remark~\ref{rmk:Imo<0}), we can integrate term to term and have that
{\fontsize{9.5}{12.5}
 \selectfont
\begin{equation}
\label{eqn:Intrhok(x)}
\hspace{-2cm}\int^{a_{1}}_{0}\rho_{y}(\phi^{\dagger}_{1}\phi_{2})(x)\,dx=\hspace{-.45cm}\sum^{+\infty}_{l,l',m,m'=-\infty}\hspace{-.45cm}G'^{\spt{\omega}}_{m,m',l'}(y/a_{2},\theta,k_{\spt{\beta_{2}}},k_{\spt{\beta_{1}}},-k_{\spt{\alpha_{1}}},\omega_{2}(a_{2}))\int^{a_{1}}_{0}G'^{\spt{\omega}}_{l,l',m'}(x/a_{1},\theta,k_{\spt{\alpha_{1}}},k_{\spt{\alpha_{2}}},k_{\spt{\beta_{2}}},\omega_{1}(a_{1}))dx
\end{equation}}
\!\!where
{\fontsize{8.5}{20}
 \selectfont 
\begin{equation}
\label{eqn:Inta1G'}
\hspace{-2cm}\int^{a_{1}}_{0}G'^{\spt{\omega}}_{l,l',m'}(x/a_{1},\theta,k_{\spt{\alpha_{1}}},k_{\spt{\alpha_{2}}},k_{\spt{\beta_{2}}},\omega_{1}(a_{1}))=A_{l,l',m'}\biggl(\erf \Bigl(\frac{2\pi\, l\Im(\omega_{1}(a_{1}))+\iu K^{\spt{\omega}}_{l',m'}}{2\sqrt{2\pi\abs{\Im(\omega_{1}(a_{1}))}}}\Bigr)-\erf\Bigl(\frac{2\pi (l+2)\Im(\omega_{1}(a_{1}))+\iu K^{\spt{\omega}}_{l',m'}}{2\sqrt{2\pi\abs{\Im(\omega_{1}(a_{1}))}}}\Bigr)\biggr)
\end{equation}}
\!\!with 
\begin{equation}
\label{eqn:defAll1m1}
\hspace{-.5cm}A_{l,l',m'}=\frac{(-)^{l}a_{1}}{2\sqrt{2 \abs{\Im(\omega_{1}(a_{1}))}}}\e^{\frac{(K^{\spt{\omega}}_{l',m'})^{2}}{8\pi\Im(\omega_{1}(a_{1}))}+\frac{\pi}{2}l'^{2}\Im(\omega_{1}(a_{1}))}\e^{\iu \frac{  l}{2}(\pi l'k_{\spt{\alpha_{1}}}r_{\spt{\theta}}+2\pi l'\Re(\omega_{1}(a_{1}))-K^{\spt{\omega ;x}}_{l',m'})}. 
\end{equation}
Considering that \mbox{$\e^{\frac{\pi^{2}}{3}m'^{3}[\hat{\tau}_{2},\hat{\tau}'_{2}]+6\pi m'k_{\spt{\alpha_{2}}}r_{\spt{\theta}}}$} is equal to $e$ (see Lemma~\ref{lem:e^[x,y]=+-e} and Corollary~\ref{cor:rth}), we can easily get that \mbox{$\int^{a_{1}}_{0}\rho_{y}(\phi^{\dagger}_{1}\phi_{2})(x)\,dx$} is null.
\end{proof}
We are ready to evaluate $E_{\spt{12}}$.
\begin{thm}
\label{thm:E12Bogo}
Given $(\mathcal{B}^{\nc}_{\spt{\mathfrak{F}_{r_{\spt{\theta}}}}},\rho)$ with $A_{i}$ and $\phi^{\dagger}_{1}\phi_{2}$ hermitian and \mbox{$(\phi_{1},\phi_{2})\in \tilde{\mathfrak{F}}_{r_{\spt{\theta}};1}$}, at the point of Bogomolny \mbox{\emph{(}i.e.\,$\lambda=g^{2}/2$\emph{)}} we have that
\begin{equation}
\label{eqn:ElimitR}
E_{\spt{12}}=\frac{g^{2}\phi^{4}_{0}}{2}a_{1}a_{2}+\frac{\theta^{2}_{\cc}}{2g^{2}r^{2}_{\spt{\theta}}a_{1}a_{2}}.
\end{equation}
\end{thm}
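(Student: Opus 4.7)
The plan is to apply Theorem~\ref{thm:EBogorev} to rewrite $E_{12}$ \emph{à la} Bogomolny. Under the hypotheses (Hermitian $A_i$ and $\phi^{\dagger}_1\phi_2$, with $(\phi_1,\phi_2)\in\tilde{\mathfrak{F}}_{r_{\spt{\theta}};1}$), Lemma~\ref{lem:EalàBogo'} guarantees this rewriting is available, so I would substitute $\lambda=g^2/2$ to kill the term $(\lambda-g^2/2)(\phi_1^{\dagger}\phi_2-\phi_0^2)^2$ in~\eqref{eqn:erg'12}. The remaining pieces split into two BPS-type squared blocks, the topological/boundary block~\eqref{eqn:erg1gen}, and the genuinely non-commutative residues~\eqref{eqn:erg2gen}--\eqref{eqn:erg3gen}.

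Next, I would restrict to the minimum-energy configurations. Here I differ from the classical route: rather than solving any BPS system (which is in fact inconsistent with the toric geometry, as the paper points out), I would appeal to the result deferred to Appendix~\ref{sec:AppErg}, which singles out an explicit class in $\tilde{\mathfrak{F}}_{r_{\spt{\theta}};1}$ making the Bogomolny squares and the blocks~\eqref{eqn:erg1gen}--\eqref{eqn:erg3gen} drop out of the trace. On this class, case~\ref{item:aii} of Theorem~\ref{thm:3bper} applies: the $\hat{\tilde{z}}_{k;i}=\hat{\tilde{z}}_{k;j}$ condition forces $[\tilde A_i,\tilde A_j]=0$ in~\eqref{eqn:Fij}, and the boundary pieces in~\eqref{eqn:erg1gen} vanish after $\tr$ by the biperiodicity guaranteed by~\eqref{eqn:solAB} and Corollary~\ref{cor:3bpand3inAth}.

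With those reductions in place the proof becomes two explicit trace evaluations. For the potential block, I would expand
\begin{equation*}
\tr(\phi^{\dagger}_1\phi_2-\phi_0^2)^2=\tr(\phi^{\dagger}_1\phi_2)^2-2\phi_0^2\tr(\phi^{\dagger}_1\phi_2)+\phi_0^4\,\tr(\hat{\mathbbm{1}}),
\end{equation*}
using Lemma~\ref{lem:Trphid1phi2} (which kills the cross term since $\phi_1^{\dagger}\phi_2$ is Hermitian by hypothesis) together with the vanishing of $\tr(\phi^{\dagger}_1\phi_2)^2$ on the minimum class; what remains, multiplied by $g^2/2$, is $\tfrac{1}{2}g^2\phi_0^4\,a_1 a_2$. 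For the kinetic block, since $[\tilde A_i,\tilde A_j]=0$, equation~\eqref{eqn:Fij} collapses to $F_{ij}=-\varepsilon_{ij}\,\theta_{\nc}/(g a_1 a_2)$, a constant on $\mathcal{T}^{2}_{\nc}$, so that
\begin{equation*}
\tfrac{1}{4}\tr(F_{ij}F^{ij})=\frac{\theta_{\nc}^{2}}{2g^{2}a_1 a_2}.
\end{equation*}
Rewriting $\theta_{\nc}=\theta_{\cc}/r_{\spt{\theta}}$ delivers the second addend in~\eqref{eqn:ElimitR} and, summed with the previous contribution, the claimed formula.

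The main obstacle will be the step usually postponed to the appendix: establishing that the class of fields described by Definition~\ref{def:B} really is energy-minimising without solving any BPS system, and in particular that on this class the non-commutative residues~\eqref{eqn:erg2gen}--\eqref{eqn:erg3gen} (carrying the $\phi^{\dagger}_1 F^{c}_{ij}\phi_2$ and $\phi^{\dagger}_1[A_i,A_j]\phi_2$ pieces together with their cyclic partners) vanish under $\tr$. Once that is granted, the computation reduces to the bookkeeping above, and the independence of $E_{12}$ from the winding $n$—which the paper highlights as a distinctive feature—comes out automatically, since no integer index survives in $\theta_{\nc}/(g a_1 a_2)$ or in $\phi_0^4 a_1 a_2$.
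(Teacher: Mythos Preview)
Your final computations are correct, but the logical scaffolding connecting them to the Bogomolny rewriting is broken. After setting $\lambda=g^2/2$ in~\eqref{eqn:erg'12}, you enumerate the surviving pieces as ``two BPS-type squared blocks, the topological/boundary block~\eqref{eqn:erg1gen}, and the residues~\eqref{eqn:erg2gen}--\eqref{eqn:erg3gen}'', silently dropping the flux term $-\tfrac{g}{2}\phi_0^2\varepsilon_{ij}F_{ij}$. More seriously, you then assert that Appendix~\ref{sec:AppErg} makes \emph{both} Bogomolny squares vanish under $\tr$. This is false: only the first square $\tfrac{1}{2}(\hat D_i\phi_1-\iu\varepsilon_{ij}\hat D_j\phi_1)^\dagger(\hat D_i\phi_2-\iu\varepsilon_{ij}\hat D_j\phi_2)$ has vanishing trace (via $\tr((\hat D_i\phi_1)^\dagger \hat D_i\phi_2)=0$ and $\tr(\varepsilon_{ij}(\hat D_i\phi_1)^\dagger \hat D_j\phi_2)=0$). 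The second square $\tfrac{1}{4}(F_{ij}-g\varepsilon_{ij}(\phi_2\phi_1^\dagger-\phi_0^2))^2$ does \emph{not}: with $F_{12}=-\theta/(ga_1a_2)$ constant and $\tr(\phi_2\phi_1^\dagger)=\tr((\phi_2\phi_1^\dagger)^2)=0$, its trace equals $\tfrac{\theta^2}{2g^2a_1a_2}-\phi_0^2\theta+\tfrac{g^2\phi_0^4}{2}a_1a_2$, and it is precisely this residual, combined with the flux term contributing $+\phi_0^2\theta$, that yields~\eqref{eqn:ElimitR}. The paper makes this explicit by first expanding the second square into the form~\eqref{eqn:erg'BRev} and only then invoking the trace-vanishing lemmas.

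Your closing paragraph then evaluates $\tfrac14\tr(F_{ij}F^{ij})$ and $\tfrac{g^2}{2}\tr(\phi_1^\dagger\phi_2-\phi_0^2)^2$ separately---but these are terms of the \emph{original} energy~\eqref{eqn:tenergy'12}, not of the Bogomolny-rewritten one you set up. In fact those two evaluations, together with $\tr((\hat D_i\phi_1)^\dagger \hat D_i\phi_2)=0$ from the appendix, constitute a perfectly valid \emph{direct} proof that bypasses the Bogomolny rewriting altogether and is arguably cleaner than the paper's route. So either commit to that direct argument and drop the Bogomolny framing entirely, or follow the paper and keep honest track of what the second square and the flux term actually contribute.
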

\begin{proof}
The~\eqref{eqn:ElimitR} is obtained rewriting $E_{\spt{12}}$ \textit{à la} Bogomolny (see Lemma~\ref{lem:EalàBogo'}) that, taking into account that here $[\tilde{A}_{i},\tilde{A}_{j}](\hat{x},\hat{y})=0$ and $\phi^{\dagger}_{1}\phi_{2}=\phi_{2}\phi^{\dagger}_{1}$, takes the following form
 {\fontsize{9.5}{12,5}
 \selectfont 
\begin{equation} 
\label{eqn:erg'BRev}
\begin{split}
\hspace{-1.75cm}E_{\spt{12}}=\tr\biggl(\frac{1}{2}(\hat{D}_{i}\phi_{1}-\iu\varepsilon_{ij}\hat{D}_{j}\phi_{1})^{\dagger}(\hat{D}_{i}\phi_{2}-\iu\varepsilon_{ij}\hat{D}_{j}\phi_{2})&+\frac{g^{2}}{2}(\phi_{2}\phi^{\dagger}_{1})^{2}-\Bigl(g^{2}\phi^{2}_{0}-\frac{\theta}{a_{1}a_{2}}\Bigr)\phi_{2}\phi^{\dagger}_{1}+\frac{g^{2}\phi^{4}_{0}}{2}+\frac{\theta^{2}}{2g^{2}a^{2}_{1}a^{2}_{2}}+\\
&+\frac{\iu}{2}\varepsilon_{ij}\bigl(\hat{\partial}_{i}(\phi^{\dagger}_{1}\hat{D}_{j}\phi_{2})-\hat{\partial}_{j}(\phi^{\dagger}_{1}\hat{D}_{i}\phi_{2})\bigr)\biggr).
\end{split} 
\end{equation}}
\,Now, $\tr(\phi_{2}\phi^{\dagger}_{1})$ is null (see Lemma~\ref{lem:p1dp2=p2p1d} and~\ref{lem:Trphid1phi2}) and we can show that \mbox{$\tr\bigl(\hat{\partial}_{i}(\phi^{\dagger}_{1}\hat{D}_{j}\phi_{2})-\hat{\partial}_{j}(\phi^{\dagger}_{1}\hat{D}_{i}\phi_{2})\bigr)$}, \mbox{$\tr\bigl((\hat{D}_{i}\phi_{1})^{\dagger}(\hat{D}_{i}\phi_{2})\bigr)$}, $\tr\bigl(\varepsilon_{ij}(\hat{D}_{i}\phi_{1})^{\dagger}(\hat{D}_{j}\phi_{2} )\bigr)$ and \mbox{$\tr((\phi_{2}\phi^{\dagger}_{1})^{2})$} are null (the detailed calculation of $E_{\spt{12}}$ is given in the Appendix~\ref{sec:AppErg}).
\end{proof}
\begin{remark}
\label{rmk:noBPS}
We can notice how here, differently from~\citep{Manton2004,Arroyo04,Forgacs2005,Lozano:2006xn,EWeinberg2012}, the quantities \mbox{$\tr\bigl((\hat{D}_{i}\phi_{1})^{\dagger}(\hat{D}_{i}\phi_{2})\bigr)$}, $\tr\bigl(\varepsilon_{ij}(\hat{D}_{i}\phi_{1})^{\dagger}(\hat{D}_{j}\phi_{2} )\bigr)$, \mbox{$\tr((\phi_{2}\phi^{\dagger}_{1})^{2})$} and \mbox{$\tr(\phi_{2}\phi^{\dagger}_{1})$} cannot be but null. Therefore in this context, to find minimum-energy configurations, it is not necessary to resolve a system of two equations like the following\footnote{We can notice how the system~\eqref{eqn:Bogomod} is not the usual one of the two BPS eqs because while the~\eqref{eqn:B1} is the I BPS eq, the~\eqref{eqn:B2} is, instead, different from II BPS eq.}(see eq.\,\eqref{eqn:erg'BRev})
\begin{subequations}
\label{eqn:Bogomod}
\begin{align}
&\hat{D}_{i}\phi-\iu\varepsilon_{ij}\hat{D}_{j}\phi=0,\label{eqn:B1}\\
&\frac{g^{2}}{2}(\phi\phi^{\dagger})^{2}-\Bigl(g^{2}\phi^{2}_{0}-\frac{\theta}{a_{1}a_{2}}\Bigr)\phi\phi^{\dagger}=0.\label{eqn:B2}
\end{align} 
\end{subequations}
\end{remark}

\subsection{Influence of $\theta$ on the observables}
\label{ssec:FandM(th)}
\noindent We will proceed now with the analysis of the influence of $\theta$ on the observables.\\
We start noting that no $\theta$-depending scaled gauge charge $\tilde{g}$ is expected here. In the refs.\,\citep{Forgacs2005,Lozano:2006xn}, considering $A^{\spt{\Omega}}_{i}$, i.e.\,a generic gauge transformation $\Omega$ of $A_{i}$, it was noted how such quantity was equal to the sum of $\tilde{A}^{\spt{\Omega}}_{i}$ characterized by $\tilde{g}$ and by an untrasformed linear field in the coordinates. Thus, it was inferred that a gauge theory on a non-commutative torus, with non trivial boundary conditions as the~\eqref{eqn:bordH} and the~\eqref{eqn:bordUA}, was equivalent to a gauge theory on $\tilde{\mathcal{T}}^{2}$ with periodic boundary conditions (in that context $\tilde{A}_{i}$ resulted periodic on $\tilde{\mathcal{T}}^{2}$) and with a gauge charge equal to $\tilde{g}$.\\
Here instead, considering the $A_{i}$ defined by~\eqref{eqn:solA}, we obtain that  
\begin{equation}
A^{\spt{\Omega}}_{i}(\hat{x},\hat{y})=\Omega(\hat{x},\hat{y})\tilde{A}_{i}(\hat{x},\hat{y})\Omega^{-1}(\hat{x},\hat{y})+\varepsilon_{ij}\frac{1}{g a_{j}}\bm{\hat{x}}_{j}.
\end{equation} 
We have thus a $\tilde{A}^{\spt{\Omega}}_{i}$ with a $\tilde{g}\rightarrow\infty$ that is not equal to any $\theta$-depending scaled gauge charge. As concerns the above equivalence (displayed in eq.\,(2.24) of the  ref.\,\citep{Forgacs2005} and in eq.\,(3.18) of the ref.\,\citep{Lozano:2006xn}), we have to admit that it cannot hold since \emph{a gauge theory with periodic boundary conditions implies that} \mbox{$\bigl((\hat{D}_{i}\phi_{1})^{\dagger}(\hat{D}^{i}\phi_{2})\bigr)(\hat{x},\hat{y})\notin\mathcal{A}_{\theta_{\nc}}$}.

We will deal now with the flux of the magnetic field $\mathcal{F}$, i.e.\,\mbox{$\mathcal{F}=\tr F_{12}$}, for generic $n$. In this regard, we consider the following subset of $\mathscr{F}_{r_{\spt{\theta}}}$ 
\begin{defask}
$\!\mathscr{F}_{r_{\spt{\theta}};n}=\{\phi\colon \phi\in\mathscr{F}_{r_{\spt{\theta}}}$ such that \mbox{$\alpha_{i}(x)=n \bar{\alpha}_{i}(x)$} and \mbox{$\beta_{i}(y)=n\bar{\beta}_{i}(y)$} with $\bar{\alpha}_{i}(x),\bar{\beta}_{i}(y)$'s relative to some $\bar{\phi}\in\mathscr{F}_{r_{\spt{\theta}}}\}$.
\end{defask} 
We can easily realize that, calling $\bar{\Omega}_{i}$ the $\Omega_{i}$'s relative to $\bar{\phi}$, the $\Omega'_{1}(\Omega'_{2})$ of $\phi\in\mathscr{F}_{r_{\spt{\theta}};n}$ is equal to $\bar{\Omega}^{n}_{1}(\bar{\Omega}^{n}_{2})$ modulo a gauge transformation. We show that 
\begin{thm}
\label{thm:F(theta)}
Given $(\mathcal{B}^{\nc}_{\spt{\mathfrak{F}_{r_{\spt{\theta}}}}},\rho)$, we have that


\begin{equation}
\label{eqn:F(theta)}
\mathcal{F}=-\frac{1}{r_{\spt{\theta}}}\frac{\theta_{\cc}}{g}.
\end{equation}

\bigskip

\end{thm}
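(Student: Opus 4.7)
The starting point is equation~\eqref{eqn:Fij}, which already furnishes an explicit decomposition
\[
F_{12}(\hat{x},\hat{y})=-\frac{\theta}{g\,a_{1}a_{2}}-\iu g\,[\tilde{A}_{1},\tilde{A}_{2}](\hat{x},\hat{y}),
\]
where $\theta\equiv\theta_{\nc}$ since by hypothesis $[\bm{\hat{x}},\bm{\hat{y}}]=\iu\theta\hat{\mathbbm{1}}$ on $\mathcal{T}^{2}_{\nc}$. Applying the trace operator of Definition~\ref{def:tr} and using its linearity, I split
\[
\mathcal{F}=\tr F_{12}=-\frac{\theta}{g\,a_{1}a_{2}}\int_{\mathcal{T}^{2}}\rho(\hat{\mathbbm{1}})\;-\;\iu g\,\tr\bigl([\tilde{A}_{1},\tilde{A}_{2}]\bigr).
\]

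The first term is immediate: $\rho(\hat{\mathbbm{1}})=1$ and $\int_{\mathcal{T}^{2}}dx\,dy=a_{1}a_{2}$, so the scalar piece contributes $-\theta_{\nc}/g$. The heart of the proof is therefore the vanishing of the second piece. My plan is to mimic the calculation carried out in Lemma~\ref{lem:Trphid1phi2}: writing $\tilde{A}_{i}=c_{i}\tilde{\phi}_{2;i}\tilde{\phi}^{\dagger}_{1;i}$ with $(\phi_{i},\tilde{\phi}_{j;k})\in\mathfrak{F}_{r_{\spt{\theta}}}$, both $\tilde{A}_{1}\tilde{A}_{2}$ and $\tilde{A}_{2}\tilde{A}_{1}$ admit a series representation in the $f^{\spt{\omega}}_{q,k}$-basis (cf.\ eqs.~\eqref{eqn:[Ai,Aj]p}--\eqref{eqn:[Ai,Aj]a1a2p}). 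Applying $\rho$ term by term and using the uniform convergence established in Remark~\ref{rmk:Imo<0}, Fubini's theorem justifies integrating termwise. Each term factorizes into an $x$- and a $y$-integral of the same form as in~\eqref{eqn:Inta1G'}, and one then shows, exactly as in the proof of Lemma~\ref{lem:Trphid1phi2}, that the constraint $(\mathscr{D}^{*}_{{\spt{\beta\alpha}},1})$ together with the Corollary~\ref{cor:rth} condition $\e^{\iu\pi\theta_{\cc}\hat{\mathbbm{1}}}=e$ makes the sum telescope to zero for $\tilde{A}_{1}\tilde{A}_{2}$ and for $\tilde{A}_{2}\tilde{A}_{1}$ individually (alternatively, if both traces are nonzero, one checks by a shift of summation index that they are equal and hence their difference vanishes).

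Combining the two pieces gives $\mathcal{F}=-\theta_{\nc}/g$, and substituting $\theta_{\nc}=\theta_{\cc}/r_{\spt{\theta}}$ (which follows from the definition of $r_{\spt{\theta}}$) yields the claimed
\[
\mathcal{F}=-\frac{1}{r_{\spt{\theta}}}\frac{\theta_{\cc}}{g}.
\]
The main obstacle is the combinatorial bookkeeping for the vanishing of $\tr[\tilde{A}_{1},\tilde{A}_{2}]$: the expansion involves four $\phi$-factors rather than two, so the series in~\eqref{eqn:[Ai,Aj]p} is indexed by eight integers and one must carefully track how the phases $\e^{\iu\pi\hat{\tau}_{i;j}}$, $\e^{\iu\pi\hat{\tau}'_{i;j}}$ conspire with the linearity constraint $(\mathscr{D}^{*}_{{\spt{\beta\alpha}},1})$ to force the $x$-integrals of Gaussian-error-function type to cancel between the two orderings.
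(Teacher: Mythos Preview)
Your overall structure is fine --- split $F_{12}$ via eq.~\eqref{eqn:Fij}, evaluate the scalar piece, and then argue that the commutator term does not contribute --- and your final substitution $\theta_{\nc}=\theta_{\cc}/r_{\spt{\theta}}$ is exactly what the paper does. But you have missed a simplification that collapses the entire second half of your argument to one line.

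In the pair $(\mathcal{B}^{\nc}_{\spt{\mathfrak{F}_{r_{\spt{\theta}}}}},\rho)$ of Definition~\ref{def:B}, the condition $(\phi_{i},\tilde{\phi}_{j;k})\in\mathfrak{F}_{r_{\spt{\theta}}}$ forces all the $\hat{z}$'s to coincide (see Definition~\ref{def:frackFw}), in particular $\hat{\tilde{z}}_{k;1}=\hat{\tilde{z}}_{k;2}$. By case~\ref{item:aii} of Theorem~\ref{thm:3bper} this makes $[\tilde{A}_{1},\tilde{A}_{2}](\hat{x},\hat{y})=0$ \emph{identically}, not merely in trace. The paper's proof is therefore just: eq.~\eqref{eqn:Fij} with $[\tilde{A}_{i},\tilde{A}_{j}]=0$ gives $F_{12}=-\theta_{\nc}/(g\,a_{1}a_{2})$, and tracing over $\mathcal{T}^{2}_{\nc}$ yields $\mathcal{F}=-\theta_{\nc}/g=-\theta_{\cc}/(r_{\spt{\theta}}g)$. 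No eight-index series, no error-function integrals, no cancellation between orderings.

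Your proposed route --- showing $\tr(\tilde{A}_{1}\tilde{A}_{2})=\tr(\tilde{A}_{2}\tilde{A}_{1})$ by a term-by-term computation in the style of Lemma~\ref{lem:Trphid1phi2} --- would of course also give zero (trivially, since the integrand is zero), but the ``combinatorial bookkeeping'' you flag as the main obstacle is entirely avoidable once you notice the pointwise vanishing. The lesson is to check first whether the hypotheses already kill the object before integrating it.
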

\begin{proof}
It's sufficient to consider the eq.\,\eqref{eqn:Fij} with the fact that by hypothesis \mbox{$[\tilde{A}_{i},\tilde{A}_{j}](\hat{x},\hat{y})=0$} and that $r_{\spt{\theta}}$ is by definition equal to $\theta_{\cc}/\theta_{\nc}$.
\end{proof}
In particular we have that the eq.\,\eqref{eqn:F(theta)} holds even if we consider the $\phi$ of $(\mathcal{B}^{\nc}_{\spt{\mathfrak{F}_{r_{\spt{\theta}}}}},\rho)$ belonging to $\mathscr{F}_{r_{\spt{\theta}};n}$. Thus, we can notice how \emph{here differently from the refs.}\,\citep{Baal82,Manton2004,Arroyo04,Forgacs2005,Lozano:2006xn,EWeinberg2012} \emph{there is no dependence of $\mathcal{F}$ on $n$} and so \emph{the quantization of the color-electric charge $q_{e}$ does not depend on $n$ but on $r_{\spt{\theta}}$} that belongs to $\Q$ (see Definition~\ref{def:B}).
\begin{remark}
It's interesting to notice that, if instead of the~\eqref{eqn:solAB}, we put 
\begin{equation}
\label{eqn:solAn'}
A_{i}(\hat{x},\hat{y})=\tilde{A}_{i}(\hat{x},\hat{y})+\varepsilon_{ij}\frac{n}{g a_{j}}\bm{\hat{x}}_{j}
\end{equation}
with $\tilde{A}_{i}(\hat{x},\hat{y})=c_{i}\tilde{\phi}_{2;i}\tilde{\phi}^{\dagger}_{1;i}$ and $\tilde{\phi}_{j;i}\in\mathscr{F}_{r_{\spt{\theta}};n}$,
though we would obtain that $\mathcal{F}$ is dependent on $n$, i.e.\,\,\mbox{$\mathcal{F}=n(n-2)\theta/g$}, in any case the~\eqref{eqn:bordUA} would not be satisfied and so such ansatz could not be taken into consideration.
\end{remark}
We have thus that
\begin{corth*}
Given $(\mathcal{B}^{\nc}_{\spt{\mathfrak{F}_{r_{\spt{\theta}}}}},\rho)$, we have that
\begin{equation}
\label{eqn:qe=1/rththcc}
q_{e}=\frac{1}{r_{\spt{\theta}}g}\,.
\end{equation}
\end{corth*}
Now, we will deal with $E_{\spt{12}}$ (cf.\,eq.\,\eqref{eqn:ElimitR}). In this regard, we can notice how the Theorem~\ref{thm:E12Bogo} holds in particular also  $\phi_{i}\in\mathscr{F}_{r_{\spt{\theta}};n}$. \emph{We are thus in the presence of a lower limit of energy at the point of Bogomolny whose quantization, as for $q_{e}$, depends on $r_{\spt{\theta}}$ and it is such that  \mbox{$E\rightarrow+\infty$} for \mbox{$a_{1},a_{2} \rightarrow+\infty$}. In the refs}.\,\citep{Baal82,Manton2004,Arroyo04,Forgacs2005,Lozano:2006xn,EWeinberg2012}\emph{, instead, such limit depended exclusively on $n$}.\\
\indent We will analyze now the influence of $\theta$ on the masses of the particle spectrum. In this regard, we find that  
\begin{thm}
\label{thm:Higgsth}
Given $(\mathcal{B}^{\nc}_{\spt{\mathfrak{F}_{r_{\spt{\theta}}}}},\rho)$ with $\mathcal{L}$ defined by~\eqref{eqn:dL}, $\lambda>0$, $c_{i}>0$,
 $A_{i}$ and $\phi^{\dagger}_{1}\phi_{2}$ hermitian and \mbox{$(\phi_{1},\phi_{2})\in \tilde{\mathfrak{F}}_{r_{\spt{\theta}};0}$}, denoting by $H$ and $\Xi$ the fields of the perturbative expansion of $\phi^{\dagger}_{1}\phi_{2}$ around the vacuum, we have that
{\fontsize{8.8}{12,5}
 \selectfont
\begin{subequations}
\label{eqn:m(rth)CS}
\begin{align}
\hspace{-1.25cm}&\hspace{1.25cm}m_{\spt{H}}=\sqrt{\Bigl(\frac{1}{a^{2}_{1}}-2\mu^{2}\Bigr)\cos(\vartheta(r_{\spt{\theta}}))},\,m_{\spt{\Xi}}=\frac{1}{a_{1}}\sqrt{\sin(\vartheta(r_{\spt{\theta}}))},\qquad for\qquad 0 \leq\vartheta(r_{\spt{\theta}})<\frac{\pi}{2}\label{eqn:m(rth)C1S},\\
\hspace{-1.25cm}&\hspace{1.25cm}m_{\spt{H}}=0,\,m_{\spt{\Xi}}=\sqrt{\Bigl(\frac{1}{a^{2}_{1}}-2\mu^{2}\Bigr)},\qquad for\qquad \vartheta(r_{\spt{\theta}})=\frac{\pi}{2}\label{eqn:m(rth)C2S},\\
\hspace{-1.25cm}&\hspace{-0cm}m_{\spt{H}}=\sqrt{-\frac{\cos(\vartheta(r_{\spt{\theta}}))}{a_{1}a_{2}}\Bigl(\frac{a_{1}}{a_{2}}+\frac{a_{2}}{a_{1}}\Bigr)},\,m_{\spt{\Xi}}=\sqrt{\biggl(\frac{1}{a_{1}a_{2}}\Bigl(\frac{a_{1}}{a_{2}}+\frac{a_{2}}{a_{1}}\Bigr)-2\mu^{2}\biggr)\sin(\vartheta(r_{\spt{\theta}}))},\quad for\quad\frac{\pi}{2}<\vartheta(r_{\spt{\theta}})<\pi\label{eqn:m(rth)C3S},\displaybreak\\
\hspace{-1.25cm}&\hspace{1.25cm}m_{\spt{H}}=m_{\spt{\Xi}}=0\qquad for\qquad\vartheta(r_{\spt{\theta}})=\pi\leq\vartheta(r_{\spt{\theta}})\leq\frac{3}{2}\pi\label{eqn:m(rth)C4S},\\
\hspace{-1.25cm}&m_{\spt{H}}=\sqrt{\biggl(\frac{1}{a_{1}a_{2}}\Bigl(\frac{a_{1}}{a_{2}}+\frac{a_{2}}{a_{1}}\Bigr)-2\mu^{2}\biggr)\cos(\vartheta(r_{\spt{\theta}}))},\,m_{\spt{\Xi}}=\sqrt{-\frac{\sin(\vartheta(r_{\spt{\theta}}))}{a_{1}a_{2}}\Bigl(\frac{a_{1}}{a_{2}}+\frac{a_{2}}{a_{1}}\Bigr)},\quad for\quad\frac{3}{2}\pi<\vartheta(r_{\spt{\theta}})< 2\pi ,\label{eqn:m(rth)C5S}
\end{align}
\end{subequations}}
\hspace{-.4em}and
{\fontsize{8,8}{15}
 \selectfont
\begin{subequations}
\label{eqn:m(rth)CSA}
\begin{align}
\hspace{0cm}&m_{\spt{\tilde{A}_{1}}}=0,\,m_{\spt{\tilde{A}_{2}}}=\sqrt{-\frac{2gc_{2}\mu^{2}}{a_{1}\lambda}\cos(\vartheta(r_{\spt{\theta}}))}\qquad for\qquad 0 \leq\vartheta(r_{\spt{\theta}})<\frac{\pi}{2}\label{eqn:m(rth)C1SA},\\
\hspace{0cm}&m_{\spt{\tilde{A}_{1}}}=\sqrt{\frac{2gc_{1}\mu^{2}}{a_{2}\lambda}\cos(\vartheta(r_{\spt{\theta}}))},\,m_{\spt{\tilde{A}_{2}}}=\sqrt{-\frac{2gc_{2}\mu^{2}}{a_{1}\lambda}\sin(\vartheta(r_{\spt{\theta}}))}\qquad for\qquad\frac{\pi}{2}\leq\vartheta(r_{\spt{\theta}})<\pi\label{eqn:m(rth)C3SA},\\
\hspace{0cm}&m_{\spt{\tilde{A}_{i}}}=0\qquad for\qquad\vartheta(r_{\spt{\theta}})=\pi\leq\vartheta(r_{\spt{\theta}})\leq\frac{3}{2}\pi\label{eqn:m(rth)C4SA},\\
\hspace{0cm}&m_{\spt{\tilde{A}_{1}}}=\sqrt{\frac{2gc_{1}\mu^{2}}{a_{2}\lambda}\sin(\vartheta(r_{\spt{\theta}}))},\,m_{\spt{\tilde{A}_{2}}}=\sqrt{-\frac{2gc_{2}\mu^{2}}{a_{1}\lambda}\cos(\vartheta(r_{\spt{\theta}}))}\qquad for\qquad\frac{3}{2}\pi<\vartheta(r_{\spt{\theta}})< 2\pi ,\label{eqn:m(rth)C5SA}
\end{align}
\end{subequations}}
\hspace{-.71em} with $m_{\spt{H}}$ and $m_{\spt{\Xi}}$ the masses of the scalar fields $H$ and $\Xi$, $m_{\spt{\tilde{A}_{i}}}$ the masses of the vectorial fields $\tilde{A}_{i}$, \mbox{$\mu^{2}=-\bigl(\sum_{i}\frac{1}{a^{2}_{i}}+2\phi^{2}_{0}\lambda\bigr)$} and
\begin{equation}
\vartheta(r_{\spt{\theta}})=\frac{\pi}{2}\bigl(k_{\spt{\alpha_{1}}}r_{\spt{\theta}}+2\Re(\omega_{1}(a_{1}))\bigr)+\frac{\pi}{2} \bigl(k_{\spt{\beta_{2}}}r_{\spt{\theta}}+2\Re(\omega_{2}(a_{2}))\bigr).
\end{equation}
\end{thm}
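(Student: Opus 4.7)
The plan is to perform a standard Higgs-mechanism expansion adapted to the present trace $\tr(\cdot)=\int_{\mathcal{T}^{2}}\rho(\cdot)$ on $(\mathcal{B}^{\nc}_{\spt{\mathfrak{F}_{r_{\spt{\theta}}}}},\rho)$. First I would identify the vacuum: since the hypotheses guarantee $\phi^{\dagger}_{1}\phi_{2}$ hermitian and $\phi_{2}\phi^{\dagger}_{1}=\phi^{\dagger}_{1}\phi_{2}$ (Lemma \ref{lem:p1dp2=p2p1d}, using $\tilde{\mathfrak{F}}_{r_{\spt{\theta}};0}\subset\tilde{\mathfrak{F}}_{r_{\spt{\theta}};1}$), I can shift $\phi^{\dagger}_{1}\phi_{2}=\phi_{0}^{2}+H+\iu\Xi$ with $H,\Xi$ real scalar fluctuations, and retain only the quadratic pieces of \eqref{eqn:dL}. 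The kinetic term $(\hat{D}_{i}\phi_{1})^{\dagger}(\hat{D}^{i}\phi_{2})$, evaluated via $\rho$, produces derivative terms that after integration by parts contribute factors depending on the periods $(a_{1},a_{2})$ through the constant parts $\varepsilon_{ij}\bm{\hat{x}}_{j}/(g a_{j})$ in \eqref{eqn:solAB}; these are what generate the $1/a_{i}^{2}$ and $1/(a_{1}a_{2})$ coefficients appearing in \eqref{eqn:m(rth)CS}.

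Second, the angle $\vartheta(r_{\spt{\theta}})$ enters through the evaluation of $\rho$ on products of the theta-like building blocks of $\mathscr{F}_{r_{\spt{\theta}}}$. Concretely, the phase prefactor $\e^{\iu\frac{\pi}{2}ll'(k_{\spt{\alpha_{1}}}r_{\spt{\theta}}+2\Re(\omega_{1}(a_{1})))}$ (and its $y$-counterpart with $k_{\spt{\beta_{2}}}$) already encountered in the proof of Lemma \ref{lem:Trphid1phi2} is the source of the $\cos(\vartheta(r_{\spt{\theta}}))$ and $\sin(\vartheta(r_{\spt{\theta}}))$ modulations in the mass squareds. I would expand the relevant traces of $H^{2}$, $\Xi^{2}$, and $(\phi^{\dagger}_{1}\phi_{2})^{2}$ into double Fourier series in $(x/a_{1},y/a_{2})$, exploit the uniform convergence established in Remark \ref{rmk:Imo<0} to invoke Fubini, and then collect the surviving zero modes; the hermiticity hypothesis together with the condition $\e^{\pi\iu\hat{\tau}_{i;j}}=\pm e$ defining $\tilde{\mathfrak{F}}_{r_{\spt{\theta}};0}$ is what selects only $\cos\vartheta$ and $\sin\vartheta$ (not higher harmonics) as the effective modulation.

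Third, for the gauge-boson sector I would substitute $A_{i}=\tilde{A}_{i}+\varepsilon_{ij}\bm{\hat{x}}_{j}/(ga_{j})$ into $g^{2}A_{i}A^{i}\phi^{\dagger}_{1}\phi_{2}$ and take the trace. The quadratic-in-$\tilde{A}_{i}$ piece yields $g^{2}\phi_{0}^{2}\tilde{A}_{i}\tilde{A}^{i}$ multiplied by an overall trace of the vacuum, and the cross-term $g\,\varepsilon_{ij}\tilde{A}_{i}\bm{\hat{x}}_{j}\phi^{2}_{0}/a_{j}$ contributes the mixed factors $c_{i}\mu^{2}/(a_{j}\lambda)$ visible in \eqref{eqn:m(rth)CSA}. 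Because $\tilde{A}_{i}$ is built from $\tilde{\phi}_{2;i}\tilde{\phi}^{\dagger}_{1;i}$ (hence of the same functional form as $\phi^{\dagger}_{1}\phi_{2}$), the very same angle $\vartheta(r_{\spt{\theta}})$ modulates the vectorial masses, and the relative weights $c_{1}/a_{2}$ versus $c_{2}/a_{1}$ come from the antisymmetric tensor $\varepsilon_{ij}$ in \eqref{eqn:solAB}.

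Finally, the case analysis in \eqref{eqn:m(rth)C1S}--\eqref{eqn:m(rth)C5S} and \eqref{eqn:m(rth)C1SA}--\eqref{eqn:m(rth)C5SA} would be recovered by imposing positivity of the mass squareds: each quadrant of $\vartheta(r_{\spt{\theta}})\in[0,2\pi)$ selects a different sign of $\cos\vartheta$ and $\sin\vartheta$, and one must pair these with the sign of $(1/a_{i}^{2}-2\mu^{2})$ versus $(a_{1}/a_{2}+a_{2}/a_{1})/(a_{1}a_{2})-2\mu^{2}$ so as to obtain real masses; the special value $\vartheta=\pi/2$ forces $H$ to become the Goldstone direction, while the window $\pi\le\vartheta\le 3\pi/2$ corresponds to the vacuum collapsing to $\Braket{\phi^{\dagger}_{1}\phi_{2}}=0$, which explains $m_{\spt{H}}=m_{\spt{\Xi}}=0$ there and also $m_{\spt{\tilde{A}_{i}}}=0$ via the Higgs mechanism shutting down. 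The main obstacle I anticipate is the bookkeeping of this case split—matching each sign pattern to the correct square root and verifying that the alternative, non-surviving quadratic terms really do vanish under $\tr$ (rather than contributing mass-mixing); this is where I expect the mild technical difficulty to lie, while the underlying computation is otherwise a routine $\rho$-expansion.
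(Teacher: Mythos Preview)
Your proposal has a genuine conceptual gap in the very first step. You parametrize the fluctuations as $\phi^{\dagger}_{1}\phi_{2}=\phi_{0}^{2}+H+\iu\Xi$ with real $H,\Xi$; but by hypothesis $\phi^{\dagger}_{1}\phi_{2}$ is hermitian, so after applying $\rho$ it is a real function and your $\Xi$ would be identically zero. The paper's $H$ and $\Xi$ are \emph{not} the real and imaginary parts of a shifted complex scalar. Instead the proof begins with the key factorization
\[
\rho\bigl((\phi^{\dagger}\phi)(\hat{x},\hat{y})\bigr)=\cos(\vartheta(r_{\spt{\theta}}))\,\Phi^{2}_{1}(x,y,r_{\spt{\theta}})+\sin(\vartheta(r_{\spt{\theta}}))\,\Phi^{2}_{2}(x,y,r_{\spt{\theta}}),
\]
where $\Phi_{1},\Phi_{2}$ are two \emph{real} fields built from the series defining $\mathscr{F}_{r_{\spt{\theta}}}$. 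The angle $\vartheta$ thus enters structurally, as the mixing coefficient between $\Phi^{2}_{1}$ and $\Phi^{2}_{2}$, not merely as a phase surviving a trace. One then forms the effective potential $\rho(V'(\phi))$ in the variables $(\Phi_{1},\Phi_{2})$, absorbing the $\sum_{i}1/a_{i}^{2}$ piece of the kinetic term into $\mu^{2}$, and locates its minima by $\partial\rho(V')/\partial\Phi_{i}=0$. The case split in \eqref{eqn:m(rth)CS} is not a positivity constraint imposed \emph{a posteriori} on mass squareds, as you suggest, but reflects which component ($\Phi_{1}$ or $\Phi_{2}$) acquires the VEV in each quadrant of $\vartheta$; $H$ and $\Xi$ are the fluctuations of $\Phi_{1}$ and $\Phi_{2}$ about that quadrant-dependent vacuum.

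A second gap is in the gauge sector. The vectorial masses do not arise from the standard $g^{2}A_{i}A^{i}\langle\phi^{\dagger}\phi\rangle$ mechanism you sketch. Because $\tilde{A}_{i}=c_{i}\tilde{\phi}_{2;i}\tilde{\phi}^{\dagger}_{1;i}$ is itself of the $\mathscr{F}_{r_{\spt{\theta}}}$ type, the paper isolates from the kinetic term a \emph{second} effective potential $\rho(\tilde{V}(\tilde{\phi}_{i}))$ quadratic-plus-quartic in $\rho(\tilde{\phi}_{i}\tilde{\phi}^{\dagger}_{i})$, applies the same factorization to $\rho(\tilde{\phi}_{i}\tilde{\phi}^{\dagger}_{i})$, and performs a \emph{second} symmetry-breaking analysis in the variables $(\tilde{\Phi}_{1;i},\tilde{\Phi}_{2;i})$. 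This is the ``double Higgs mechanism'' advertised in the abstract, and it is why only one of $m_{\tilde{A}_{1}},m_{\tilde{A}_{2}}$ is nonzero in some ranges of $\vartheta$ (the other potential has no nontrivial minimum there). Your single-breaking picture cannot reproduce this asymmetry between $\tilde{A}_{1}$ and $\tilde{A}_{2}$, nor the fact that $m_{\Xi}$ is generically nonvanishing.
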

\begin{proof}
Indicating $\phi_{1}$ and $\phi_{2}$ with $\phi$ (see the Lemma~\ref{lem:p2p1d=p1p2d1v}), we start noting that $\rho((\phi^{\dagger}\phi)(\hat{x},\hat{y}))$ can be rewritten as 
{\fontsize{10}{12,5}
 \selectfont
\begin{equation}
\label{eqn:factphi}
\hspace{-.8cm}\rho((\phi^{\dagger}\phi)(\hat{x},\hat{y}))=\cos(\vartheta(r_{\spt{\theta}}))\Phi^{2}_{1}(x,y,r_{\spt{\theta}},\omega_{1}(a_{1}),\omega_{2}(a_{2}))+\sin(\vartheta(r_{\spt{\theta}}))\Phi^{2}_{2}(x,y,r_{\spt{\theta}},\omega_{1}(a_{1}),\omega_{2}(a_{2}))
\end{equation}}
\!\!where\footnote{Considering the Lemma~\ref{lem:e^[x,y]=+-e} and the Corollary~\ref{cor:rth}, we can show that $\Phi^{2}_{i}$ are biperiodic.}  
{\fontsize{8}{15}
 \selectfont
  \begin{equation}
\hspace{-3.15cm}\Phi^{2}_{i}(x,y,r_{\spt{\theta}},\omega_{1}(a_{1}),\omega_{2}(a_{2}))=\hspace{-.45cm}\sum^{+\infty}_{l,l',m,m'=-\infty}\hspace{-.45cm}(-)^{l+m}\e^{\Re(\omega'_{l+1,l'+1}(x/a_{1},\omega_{1}(a_{1}))+\omega'_{m+1,m'+1}(y/a_{2},\omega_{2}(a_{2})))}\cos\bigl(\zeta^{\spt{\omega}}_{l,l',m,m'}(r_{\spt{\theta}})+\varphi^{\spt{\omega}}_{l'+1,m'+1}(x,y,r_{\spt{\theta}})+\delta_{i2}\frac{\pi}{2}\bigr),
\end{equation}}
\!\!with
{\fontsize{9}{13}
 \selectfont
\begin{equation}
\hspace{-.65cm}\zeta^{\spt{\omega}}_{l,l',m,m'}(r_{\spt{\theta}})=\frac{\pi}{2}\Bigl( \bigl(l(l'+1)+l'\bigr)\bigl(k_{\spt{\alpha_{1}}}r_{\spt{\theta}}+2\Re(\omega_{1}(a_{1}))\bigr)+\bigl(m(m'+1)+m'\bigr)\bigl(k_{\spt{\beta_{2}}}r_{\spt{\theta}}+2\Re(\omega_{2}(a_{2}))\bigr)\Bigr),
\end{equation}}
{\fontsize{9}{13}
 \selectfont
\begin{equation}
\varphi^{\spt{\omega}}_{l',m'}(x,y,r_{\spt{\theta}})=K^{\spt{\omega}}_{l',m'}(k_{\spt{\alpha_{1}}},k_{\spt{\alpha_{2}}},k_{\spt{\beta_{2}}},\omega_{1}(a_{1}))\frac{x}{a_{1}}+K^{\spt{\omega}}_{m',l'}(k_{\spt{\beta_{2}}},k_{\spt{\beta_{1}}},-k_{\spt{\alpha_{1}}},\omega_{2}(a_{2}))\frac{y}{a_{2}}.
\end{equation}}
\indent Now, considering that by hypothesis \mbox{$(\phi_{1},\phi_{2})\in \tilde{\mathfrak{F}}_{r_{\spt{\theta}};0}$}, we obtain that
{\fontsize{9}{12,5}
 \selectfont
\begin{equation}
\hspace{-2cm} \rho\bigl(\bigl((\hat{D}_{i}\phi)^{\dagger}(\hat{D}_{i}\phi)\bigr)(\hat{x},\hat{y})\bigr)=-\frac{\varepsilon_{ij}\varepsilon_{ij}}{2a^{2}_{j}}\rho([\bm{\hat{x}}_{j},[\bm{\hat{x}}_{j},\phi^{\dagger}\phi]])+g^{2}c^{2}_{i}\rho(\phi^{\dagger}\phi)(\rho(\tilde{\phi}_{i}\tilde{\phi}^{\dagger}_{i}))^{2}+2gc_{i}\frac{\varepsilon_{ij}}{ a_{j}}\rho(\phi^{\dagger}\phi)\rho(\tilde{\phi}_{i}\tilde{\phi}^{\dagger}_{i})+\frac{\varepsilon_{ij}\varepsilon_{ij}}{a^{2}_{j}}\rho(\phi^{\dagger}\phi),\label{eqn:DidDiF2}
\end{equation}}
\hspace{-.25em}and that 
\begin{equation}
\label{eqn:rhoV0}
\hspace{-2cm}\rho(V(\phi))=-2\lambda\phi^{2}_{0}\rho(\phi^{\dagger}\phi)+\lambda(\rho(\phi^{\dagger}\phi))^{2}+\lambda\phi^{4}_{0}.
\end{equation}
Considering thus the~\eqref{eqn:factphi} and taking into consideration the term \mbox{$\frac{\varepsilon_{ij}\varepsilon_{ij}}{a^{2}_{j}}\rho(\phi^{\dagger}\phi)$} in \mbox{$\rho\bigl(\bigl((\hat{D}_{i}\phi)^{\dagger}(\hat{D}_{i}\phi)\bigr)(\hat{x},\hat{y})\bigr)$}, for Higgs mechanism we will consider the following potential 
{\fontsize{9.8}{12.5}
 \selectfont
\begin{equation}
\label{eqn:rhoV'}
\hspace{-1.25cm}\rho(V'(\phi))=\mu^{2}\bigl(\cos(\vartheta(r_{\spt{\theta}}))\Phi^{2}_{1}+\sin(\vartheta(r_{\spt{\theta}}))\Phi^{2}_{2}\bigr)+\lambda\bigl(\cos(\vartheta(r_{\spt{\theta}}))\Phi^{2}_{1}+\sin(\vartheta(r_{\spt{\theta}}))\Phi^{2}_{2}\bigr)^{2}+\frac{1}{4\lambda}\Bigl(\mu^{2}+\sum_{i}\frac{1}{a^{2}_{i}}\Bigr)^{2}
\end{equation}}
\hspace{-.3em}with \mbox{$\mu^{2}=-\bigl(\sum_{i}\frac{1}{a^{2}_{i}}+2\phi^{2}_{0}\lambda\bigr)$}, whose critical points are obtained from the following equations
{\fontsize{9.8}{12.5}
 \selectfont
\begin{subequations}
\label{eqn:vac}
\begin{align}
\hspace{-0cm}&\frac{\partial \rho(V'(\phi))}{\partial \Phi_{1}}= 2\Phi_{1}\cos(\vartheta(r_{\spt{\theta}}))\bigl(\mu^{2}+2\lambda\bigl(\Phi^{2}_{1}\cos(\vartheta(r_{\spt{\theta}}))+\Phi^{2}_{2}\sin(\vartheta(r_{\spt{\theta}}))\bigr)\bigr),\\
\hspace{-0cm}&\frac{\partial \rho(V'(\phi))}{\partial \Phi_{2}}=2\Phi_{2}\sin(\vartheta(r_{\spt{\theta}}))\bigl(\mu^{2}+2\lambda\bigl(\Phi^{2}_{1}\cos(\vartheta(r_{\spt{\theta}}))+\Phi^{2}_{2}\sin(\vartheta(r_{\spt{\theta}}))\bigr)\bigr).
\end{align}
\end{subequations}}
\hspace{-.3em}For the case \mbox{$0 <\vartheta(r_{\spt{\theta}})<\frac{\pi}{2}$}, taking into consideration the eqs.\,\eqref{eqn:vac} and considering a neighbourhood of the origin $O$ so that \mbox{$\Phi^{2}_{i}(x,y,r_{\spt{\theta}})\geq 0$} for any \mbox{$0 \leq\vartheta(r_{\spt{\theta}})\leq 2\pi$}, we can show that the symmetry breaking typical of Higgs mechanism can happen around the vacuum $\Phi_{1}=\frac{1}{\sqrt{2}}\Phi_{0}$ and $\Phi_{2}=0$ with \mbox{$\Phi_{0}>0$} and \mbox{$\mu^{2}=-\lambda\cos(\vartheta(r_{\spt{\theta}}))\Phi^{2}_{0}$}.\\ 
We will indicate with $\rho((\phi^{\dagger}\phi)'(\hat{x},\hat{y}))$ the expansion of $\rho((\phi^{\dagger}\phi)(\hat{x},\hat{y}))$ around \mbox{$\Phi_{1}=\frac{1}{\sqrt{2}}\Phi_{0}$} and \mbox{$\Phi_{2}=0$} in terms of \mbox{the fields $H$ and $\Xi$, i.e.}
{\fontsize{10}{12.5}
 \selectfont
\begin{equation}
\label{eqn:factphiR}
\hspace{-1.35cm}\rho((\phi^{\dagger}\phi)'(\hat{x},\hat{y}))=\frac{1}{2}\Phi^{2}_{0}\cos(\vartheta(r_{\spt{\theta}}))+\Phi_{0}\cos(\vartheta(r_{\spt{\theta}}))H(x,y)+\frac{1}{2}\cos(\vartheta(r_{\spt{\theta}}))H^{2}(x,y)+\frac{1}{2}\sin(\vartheta(r_{\spt{\theta}}))\Xi^{2}(x,y). 
\end{equation}}
\!\!Let's calculate now the $m_{\spt{\tilde{A}_{i}}}$. In this regard, from the~\eqref{eqn:DidDiF2} let's consider the quantity 
\begin{equation}
g^{2}c^{2}_{i}\rho(\phi^{\dagger}\phi)(\rho(\tilde{\phi}_{i}\tilde{\phi}^{\dagger}_{i}))^{2}+2gc_{i}\frac{\varepsilon_{ij}}{ a_{j}}\rho(\phi^{\dagger}\phi)\rho(\tilde{\phi}_{i}\tilde{\phi}^{\dagger}_{i})\label{eqn:VAi}
\end{equation} 
and instead of $\rho(\phi^{\dagger}\phi(\hat{x},\hat{y}))$ let's insert the expansion $\rho((\phi^{\dagger}\phi)'(\hat{x},\hat{y}))$. We will obtain thus that such quantity is equal to
{\fontsize{8,5}{12.5}
  \selectfont
\begin{equation}
\begin{split}
\hspace{-2.25cm}&\frac{\varepsilon_{ij}}{ a_{j}}gc_{i}\Phi^{2}_{0}\cos(\vartheta(r_{\spt{\theta}}))\rho(\tilde{\phi}_{i}\tilde{\phi}^{\dagger}_{i})+\frac{g^{2}c^{2}_{i}}{2}\Phi^{2}_{0}\cos(\vartheta(r_{\spt{\theta}}))(\rho(\tilde{\phi}_{i}\tilde{\phi}^{\dagger}_{i}))^{2}+2\frac{\varepsilon_{ij}}{ a_{j}}gc_{i}\Phi_{0}\cos(\vartheta(r_{\spt{\theta}}))\rho(\tilde{\phi}_{i}\tilde{\phi}^{\dagger}_{i})H+g^{2}c^{2}_{i}\Phi_{0}\cos(\vartheta(r_{\spt{\theta}}))(\rho(\tilde{\phi}_{i}\tilde{\phi}^{\dagger}_{i}))^{2}H+\\
\hspace{-2.25cm}&+\frac{\varepsilon_{ij}}{ a_{j}}gc_{i}\cos(\vartheta(r_{\spt{\theta}}))\rho(\tilde{\phi}_{i}\tilde{\phi}^{\dagger}_{i})H^{2}+\frac{g^{2}c^{2}_{i}}{2}\cos(\vartheta(r_{\spt{\theta}}))(\rho(\tilde{\phi}_{i}\tilde{\phi}^{\dagger}_{i}))^{2}H^{2}+\frac{\varepsilon_{ij}}{ a_{j}}gc_{i}\sin(\vartheta(r_{\spt{\theta}}))\rho(\tilde{\phi}_{i}\tilde{\phi}^{\dagger}_{i})\Xi^{2}+\frac{g^{2}c^{2}_{i}}{2}\sin(\vartheta(r_{\spt{\theta}}))(\rho(\tilde{\phi}_{i}\tilde{\phi}^{\dagger}_{i}))^{2}\Xi^{2}\label{eqn:SVAi2}.
\end{split}
\end{equation}}
\hspace{-.35em}From the expansion~\eqref{eqn:SVAi2} let's isolate the first two addends defining
\begin{equation}
\rho(\tilde{V}(\tilde{\phi}_{i}))=\frac{\varepsilon_{ij}}{ a_{j}}gc_{i}\Phi^{2}_{0}\cos(\vartheta(r_{\spt{\theta}}))\rho(\tilde{\phi}_{i}\tilde{\phi}^{\dagger}_{i})+\frac{g^{2}c^{2}_{i}}{2}\Phi^{2}_{0}\cos(\vartheta(r_{\spt{\theta}}))(\rho(\tilde{\phi}_{i}\tilde{\phi}^{\dagger}_{i}))^{2}.\label{eqn:rVtphi}
\end{equation} 
The $m_{\spt{\tilde{A}_{i}}}$ will be obtained from the symmetry breaking of the potential $\rho(\tilde{V}(\tilde{\phi}_{i}))$.\pagebreak\\
Considering that $\tilde{\phi}_{i}\tilde{\phi}^{\dagger}_{i}=\tilde{\phi}^{\dagger}_{i}\tilde{\phi}_{i}$ (see  Lemma~\ref{lem:p1dp2=p2p1d}) and taking into consideration the eq.\,\eqref{eqn:factphi}, we have that  
{\fontsize{10}{12.5}
  \selectfont
\begin{equation}
\label{eqn:factTphi}
\hspace{-1.35cm}\rho(\tilde{\phi}_{i}\tilde{\phi}^{\dagger}_{i}(\hat{x},\hat{y}))=\cos(\vartheta(r_{\spt{\theta}}))\tilde{\Phi}^{2}_{1;i}(x,y,r_{\spt{\theta}},\tilde{\omega}_{1;i}(a_{1}),\tilde{\omega}_{2;i}(a_{2}))+\sin(\vartheta(r_{\spt{\theta}}))\tilde{\Phi}^{2}_{2;i}(x,y,r_{\spt{\theta}},\tilde{\omega}_{1;i}(a_{1}),\tilde{\omega}_{2;i}(a_{2}))
\end{equation}}
\!\!with \mbox{$\tilde{\Phi}^{2}_{j;i}(x,y,r_{\spt{\theta}},\tilde{\omega}_{1;i}(a_{1}),\tilde{\omega}_{2;i}(a_{2}))=\Phi^{2}_{j}(x,y,r_{\spt{\theta}},\tilde{\omega}_{1;i}(a_{1}),\tilde{\omega}_{2;i}(a_{2}))$}.
The critical points of $\rho(\tilde{V}(\tilde{\phi}_{i}))$ are obtained from the following equations 
{\fontsize{9.8}{12.5}
 \selectfont
\begin{subequations}
\label{eqn:tvac}
\begin{align}
\hspace{-.75cm}&\frac{\partial \rho(\tilde{V}(\tilde{\phi}_{i}))}{\partial \tilde{\Phi}_{1;i}}= 2c_{i}g\Phi^{2}_{0}\cos(\vartheta(r_{\spt{\theta}}))\tilde{\Phi}_{1;i}\cos(\vartheta(r_{\spt{\theta}}))\Bigl(\frac{\varepsilon_{ij}}{ a_{j}}+c_{i}g\bigl(\tilde{\Phi}^{2}_{1;i}\cos(\vartheta(r_{\spt{\theta}}))+\tilde{\Phi}^{2}_{2;i}\sin(\vartheta(r_{\spt{\theta}}))\bigr)\Bigr),\\
\hspace{-.75cm}&\frac{\partial \rho(\tilde{V}(\tilde{\phi}_{i}))}{\partial \tilde{\Phi}_{2;i}}=2c_{i}g\Phi^{2}_{0}\cos(\vartheta(r_{\spt{\theta}}))\tilde{\Phi}_{2;i}\sin(\vartheta(r_{\spt{\theta}}))\Bigl(\frac{\varepsilon_{ij}}{ a_{j}}+c_{i}g\bigl(\tilde{\Phi}^{2}_{1;i}\cos(\vartheta(r_{\spt{\theta}}))+\tilde{\Phi}^{2}_{2;i}\sin(\vartheta(r_{\spt{\theta}}))\bigr)\Bigr).
\end{align}
\end{subequations}}
\hspace{-.5em}For \mbox{$0 <\vartheta(r_{\spt{\theta}})<\frac{\pi}{2}$}, taking into consideration the eqs.\,\eqref{eqn:tvac} and considering, as for $\rho(V'(\phi))$, a neighbourhood of the origin $O$ so that \mbox{$\tilde{\Phi}^{2}_{j;i}(x,y,r_{\spt{\theta}})\geq 0$} for any \mbox{$0 \leq\vartheta(r_{\spt{\theta}})\leq 2\pi$}, we can show that while for $\rho(\tilde{V}(\tilde{\phi}_{1}))$ a symmetry breaking cannot occur, it is not the same for $\rho(\tilde{V}(\tilde{\phi}_{2}))$. In fact such breaking can occur around \mbox{$\tilde{\Phi}_{1;2}=\frac{1}{\sqrt{2}}\tilde{\Phi}_{0;2}$} and \mbox{$\tilde{\Phi}_{2;2}=0$}, with \mbox{$\tilde{\Phi}_{0;2}>0$} and \mbox{$\frac{2}{a_{1}}=gc_{2}\cos(\vartheta(r_{\spt{\theta}}))\tilde{\Phi}^{2}_{0;2}$}. Inserting thus in $\rho(\tilde{V}(\tilde{\phi}_{2}))$ the expansion $\rho((\tilde{\phi}^{\dagger}_{2}\tilde{\phi}_{2})'(\hat{x},\hat{y}))$ of $\rho((\tilde{\phi}^{\dagger}_{2}\tilde{\phi}_{2})(\hat{x},\hat{y}))$ around \mbox{$\tilde{\Phi}_{1;2}=\frac{1}{\sqrt{2}}\tilde{\Phi}_{0;2}$} and \mbox{$\tilde{\Phi}_{2;2}=0$} in terms of the fields $\tilde{H}_{2}$ and $\tilde{\Xi}_{2}$, i.e.
{\fontsize{9.8}{12.5}
 \selectfont
\begin{equation}
\label{eqn:factphiTR}
\hspace{-2cm}\rho((\tilde{\phi}^{\dagger}_{2}\tilde{\phi}_{2})'(\hat{x},\hat{y}))=\frac{1}{2}\tilde{\Phi}^{2}_{0;2}\cos(\vartheta(r_{\spt{\theta}}))+\tilde{\Phi}_{0;2}\cos(\vartheta(r_{\spt{\theta}}))\tilde{H}_{2}(x,y)+\frac{1}{2}\cos(\vartheta(r_{\spt{\theta}}))\tilde{H}^{2}_{2}(x,y)+\frac{1}{2}\sin(\vartheta(r_{\spt{\theta}}))\tilde{\Xi}^{2}_{2}(x,y), 
\end{equation}}
\hspace{-.3em}we obtain
\begin{equation}
-\frac{gc_{2}\mu^{2}}{a_{1}\lambda}\cos(\vartheta(r_{\spt{\theta}}))=\frac{1}{2}m^{2}_{\spt{\tilde{A}_{2}}}.
\end{equation}
In the end, we obtain the $m_{\spt{H}}$ and $m_{\spt{\Xi}}$ inserting $\rho((\phi^{\dagger}\phi)'(\hat{x},\hat{y}))$ in $\rho(V'(\phi))$ and $\rho((\tilde{\phi}^{\dagger}_{2}\tilde{\phi}_{2})'(\hat{x},\hat{y}))$ in the expansion~\eqref{eqn:SVAi2}. Thus we obtain that
\begin{subequations}
\begin{align}
\Bigl(\mu^{2}-\frac{1}{2a^{2}_{1}}\Bigr)\cos(\vartheta(r_{\spt{\theta}}))&=-\frac{1}{2}m^{2}_{\spt{H}},\\
\frac{1}{a^{2}_{1}}\sin(\vartheta(r_{\spt{\theta}}))&=m^{2}_{\spt{\Xi}}.
\end{align}
\end{subequations}
\indent The other cases are obtained in a similar way. In the figure~\ref{fig:V(rth)}, $\rho(V'(\phi))$ is shown for some values of $\vartheta(r_{\spt{\theta}})$.
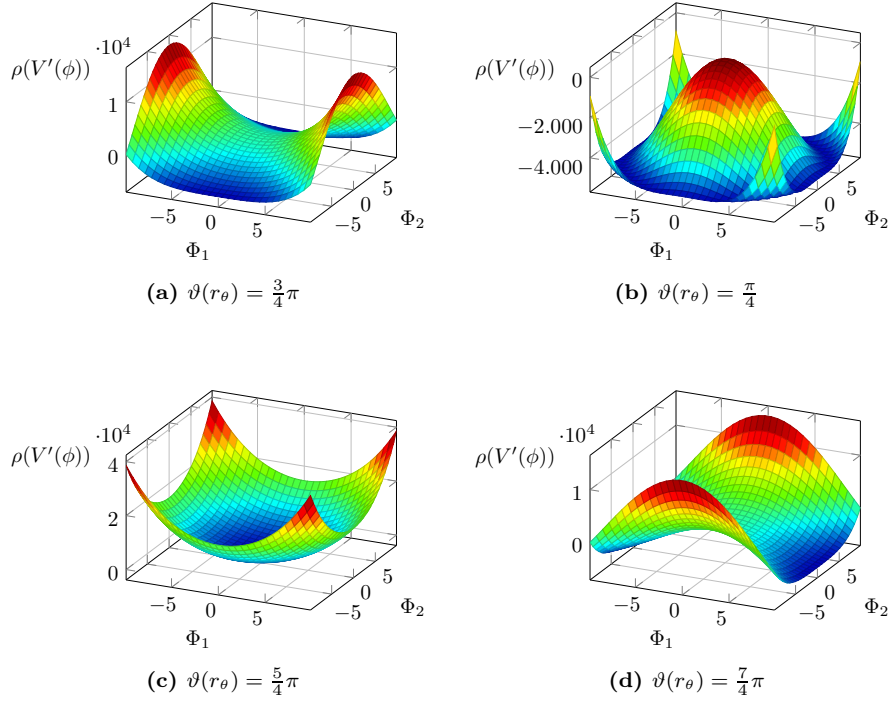
\begin{figure}[t!]
\vspace{-.75cm}
\centering
\subfloat[][\footnotesize{\mbox{$\vartheta(r_{\spt{\theta}})=\frac{3}{4}\pi$}}]{\begin{tikzpicture}[scale=0.8]
\begin{axis}[x label style={at={(axis description cs:0.3,-0.05)}},
    y label style={at={(axis description cs:1,.1)}},
z label style={at={(axis description cs:-0.1,.75)}},
xlabel=$\Phi_{1}$,
ylabel=$\Phi_{2}$,
zlabel=$\rho(V'(\phi))$,
zlabel style={rotate=-90},width=0.5\textwidth, grid=major
]
\addplot3 [domain=-9.85:9.85,
samples=30,
surf,
colormap/bluered]
{72*sqrt(2)*x^2+x^4/2-72*sqrt(2)*y^2-x^2*y^2+y^4/2};
\end{axis}
\end{tikzpicture}
}\quad
\subfloat[][\footnotesize{\mbox{$\vartheta(r_{\spt{\theta}})=\frac{\pi}{4}$}}]{\begin{tikzpicture}[scale=0.8]
\begin{axis}[x label style={at={(axis description cs:0.3,-0.05)}},
    y label style={at={(axis description cs:1,.1)}},
z label style={at={(axis description cs:-0.1,.75)}},
xlabel=$\Phi_{1}$,
ylabel=$\Phi_{2}$,
zlabel=$\rho(V'(\phi))$,
zlabel style={rotate=-90},width=0.5\textwidth, grid=major
]
\addplot3 [domain=-9.85:9.85,
samples=30,
surf,
colormap/bluered]
{-72*sqrt(2)*x^2+x^4/2-72*sqrt(2)*y^2+x^2*y^2+y^4/2};
\end{axis}
\end{tikzpicture}
}

\vspace{2em}

\subfloat[][\footnotesize{\mbox{$\vartheta(r_{\spt{\theta}})=\frac{5}{4}\pi$}}]{\begin{tikzpicture}[scale=0.8] 
\begin{axis}[x label style={at={(axis description cs:0.3,-0.05)}},
    y label style={at={(axis description cs:1,.1)}},
z label style={at={(axis description cs:-0.1,.75)}},
xlabel=$\Phi_{1}$,
ylabel=$\Phi_{2}$,
zlabel=$\rho(V'(\phi))$,
zlabel style={rotate=-90},width=0.5\textwidth, grid=major
]
\addplot3 [domain=-9.85:9.85,
samples=30,
surf,
colormap/bluered]
{72*sqrt(2)*x^2+x^4/2+72*sqrt(2)*y^2+x^2*y^2+y^4/2};
\end{axis}
\end{tikzpicture}
}\quad
\subfloat[][\footnotesize{\mbox{$\vartheta(r_{\spt{\theta}})=\frac{7}{4}\pi$}}]{\begin{tikzpicture}[scale=0.8]
\begin{axis}[x label style={at={(axis description cs:0.3,-0.05)}},
    y label style={at={(axis description cs:1,.1)}},
z label style={at={(axis description cs:-0.1,.75)}},
xlabel=$\Phi_{1}$,
ylabel=$\Phi_{2}$,
zlabel=$\rho(V'(\phi))$,
zlabel style={rotate=-90},width=0.5\textwidth, grid=major
]
\addplot3 [domain=-9.85:9.85,
samples=30,
surf,
colormap/bluered]
{-72*sqrt(2)*x^2+x^4/2+72*sqrt(2)*y^2-x^2*y^2+y^4/2};
\end{axis}
\end{tikzpicture}
}
\caption{$\rho(V'(\phi))$ for some values of $\vartheta(r_{\spt{\theta}})$.}
\label{fig:V(rth)}
\end{figure}
\end{proof}

Now, considering the eq.\,\eqref{eqn:F(theta)}, we obtain the following  

\begin{corth}
\label{cor:FL}
Given $(\mathcal{B}^{\nc}_{\spt{\mathfrak{F}_{r_{\spt{\theta}}}}},\rho)$  with $\mathcal{L}$ defined by~\eqref{eqn:dL}, $\theta_{\nc}>0$, $\lambda>0$, $c_{i}>0$, $A_{i}$ and \mbox{$\phi^{\dagger}_{1}\phi_{2}$} hermitian and \mbox{$(\phi_{1},\phi_{2})\in \tilde{\mathfrak{F}}_{r_{\spt{\theta}};0}$}, considering $g>0$ and the quantities \mbox{$(k_{\spt{\alpha_{1}}}+k_{\spt{\beta_{2}}})\theta_{\cc}>0$} and \mbox{$\omega_{\spt{\mathfrak{R}}}\leq -1/2$}, with \mbox{$\omega_{\spt{\mathfrak{R}}}=\sum^{2}_{i=1}\Re(\omega_{i}(a_{i}))$}, in the confined phase we have that 
\begin{subequations}
\label{eqn:FL}
\begin{align}
\hspace{-0.5cm}\frac{\theta_{\cc}(k_{\spt{\alpha_{1}}}+k_{\spt{\beta_{2}}})}{g(1+2\omega_{\spt{\mathfrak{R}}})}<\mathcal{F}<\frac{\theta_{\cc}(k_{\spt{\alpha_{1}}}+k_{\spt{\beta_{2}}})}{2g(\omega_{\spt{\mathfrak{R}}}-1)}\qquad &for\qquad \omega_{\spt{\mathfrak{R}}}<-1/2\,\, ,\label{eqn:FL1}\\
\hspace{-0.5cm}\mathcal{F}<-\frac{\theta_{\cc}(k_{\spt{\alpha_{1}}}+k_{\spt{\beta_{2}}})}{3g}\qquad &for\qquad\omega_{\spt{\mathfrak{R}}}= -1/2\,\, .\label{eqn:FL3}
\end{align}
\end{subequations}
\end{corth}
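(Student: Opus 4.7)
The plan is to combine the exact flux formula of Theorem~\ref{thm:F(theta)} with the characterization of the confined phase supplied by Theorem~\ref{thm:Higgsth}, and then to translate one interval into another via the explicit dependence
\[
\vartheta(r_{\spt{\theta}})=\frac{\pi}{2}(k_{\spt{\alpha_{1}}}+k_{\spt{\beta_{2}}})\,r_{\spt{\theta}}+\pi\,\omega_{\spt{\mathfrak{R}}}
\]
that was recorded in the proof of the latter.

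First I would read off, from the five sub-cases~\eqref{eqn:m(rth)CS}--\eqref{eqn:m(rth)CSA}, the range of $\vartheta(r_{\spt{\theta}})$ that corresponds to a spontaneously broken $U(1)$ and hence, in the dual-superconductivity interpretation, to the confined phase. All masses of both the scalar and the vector fields vanish only in the sub-case~\eqref{eqn:m(rth)C4S}/\eqref{eqn:m(rth)C4SA}, where $\pi\leq\vartheta(r_{\spt{\theta}})\leq 3\pi/2$; in the other four sub-cases the proof of Theorem~\ref{thm:Higgsth} exhibits a non-trivial vacuum of the form $\Phi_{1}=\Phi_{0}/\sqrt{2}$, $\Phi_{2}=0$ (or its analogue after rotating by $\pi/2$ in the $(\Phi_{1},\Phi_{2})$-plane). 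Modulo $2\pi$ the confined phase is therefore the connected interval $-\pi/2<\vartheta(r_{\spt{\theta}})<\pi$.

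Next I would substitute the explicit expression of $\vartheta(r_{\spt{\theta}})$ into this inequality and solve for $r_{\spt{\theta}}$. By hypothesis $\theta_{\nc}>0$ and $(k_{\spt{\alpha_{1}}}+k_{\spt{\beta_{2}}})\theta_{\cc}>0$, so $r_{\spt{\theta}}=\theta_{\cc}/\theta_{\nc}$ and $(k_{\spt{\alpha_{1}}}+k_{\spt{\beta_{2}}})$ share the same sign; dividing through by $(k_{\spt{\alpha_{1}}}+k_{\spt{\beta_{2}}})$ then preserves the orientation of the inequalities and yields
\[
\frac{-(1+2\omega_{\spt{\mathfrak{R}}})}{k_{\spt{\alpha_{1}}}+k_{\spt{\beta_{2}}}}<r_{\spt{\theta}}<\frac{2(1-\omega_{\spt{\mathfrak{R}}})}{k_{\spt{\alpha_{1}}}+k_{\spt{\beta_{2}}}}.
\]
For $\omega_{\spt{\mathfrak{R}}}<-1/2$ both endpoints are non-zero and of the same sign as $r_{\spt{\theta}}$, so one may pass to reciprocals (reversing order once) and then multiply by $-\theta_{\cc}/g$ (reversing order a second time, since $\theta_{\cc}/g$ has the same sign as $r_{\spt{\theta}}$) to invoke $\mathcal{F}=-\theta_{\cc}/(g\,r_{\spt{\theta}})$ from Theorem~\ref{thm:F(theta)}; careful bookkeeping of signs produces exactly~\eqref{eqn:FL1}. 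For $\omega_{\spt{\mathfrak{R}}}=-1/2$ the lower endpoint collapses to $0$, only the upper endpoint $r_{\spt{\theta}}<3/(k_{\spt{\alpha_{1}}}+k_{\spt{\beta_{2}}})$ survives, and the same reciprocal/multiplication procedure delivers the one-sided bound~\eqref{eqn:FL3}.

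The main obstacle I foresee is the identification of the confined phase from Theorem~\ref{thm:Higgsth}: since that theorem splits into five sub-cases, one has to check that the complement of the ``all masses zero'' sub-case is indeed the connected interval $(-\pi/2,\pi)\pmod{2\pi}$ in the variable $\vartheta(r_{\spt{\theta}})$, and that in each surviving sub-case a bona fide symmetry-breaking vacuum really is present (as opposed to a saddle of the effective potential). Once this qualitative step is settled, everything that follows is an elementary, sign-sensitive manipulation of linear inequalities together with the single substitution of $\mathcal{F}=-\theta_{\cc}/(g\,r_{\spt{\theta}})$.
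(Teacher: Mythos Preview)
Your proposal is correct and follows essentially the same route as the paper: the paper's proof simply invokes the flux formula $\mathcal{F}=-\theta_{\cc}/(g\,r_{\spt{\theta}})$ from Theorem~\ref{thm:F(theta)} together with the identification of the confined phase as $-\pi/2<\vartheta(r_{\spt{\theta}})<\pi$ from eqs.~\eqref{eqn:m(rth)CS}, and you have spelled out the intermediate inequality manipulations that the paper leaves implicit.
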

\begin{proof}
Considering what already said in the Introduction, the eqs.\,\eqref{eqn:FL} are obtained from~\eqref{eqn:F(theta)} and by the fact that the confined phase is characterized by \mbox{$-\frac{\pi}{2}<\vartheta(r_{\spt{\theta}})<\pi$} (see eqs.\,\eqref{eqn:m(rth)CS}).
\end{proof}

\begin{remark}
\label{rmk:m(tau)}
Differently from~\citep{Baal82,Manton2004,Arroyo04,Forgacs2005,Lozano:2006xn,EWeinberg2012}, the possibility to have a superior and an inferior limit is thus predicted for $\mathcal{F}$.
This is particularly interesting in the phenomenology of Meissner effect. From our point of view, the espulsion of an external magnetic field $\mathbf{B}_{\spt{\textsc{ext}}}$ from inside a superconductor, typical of such effect, ceases when $\vartheta(r_{\spt{\theta}})$ assumes values corresponding to a restoration of the $U(1)$ symmetry (see eqs.\,\eqref{eqn:m(rth)C4S}). In this regard, a novelty is noted: the symmetry breaking of Higgs mechanism is thus determined by the belonging or not of $\vartheta(r_{\spt{\theta}})$ to a specific interval of values and not, as usual, by the sign of the constant $\mu^{2}$. And this, as $\vartheta(r_{\spt{\theta}})$ can vary in time (on the basis of what we have just said about Meissner effect), implies particle spectrum whose \emph{masses can vary in time in a quantized way and they are null when $\vartheta(r_{\spt{\theta}})$ takes values outside a specific interval}. \pagebreak\\
\indent In the end, a further novelty is the absence, in general, of a massless Goldstone boson: the expansion of $\rho(\tilde{V}(\tilde{\phi}_{i}))$ around a minimum, together with the usual one $\rho(V'(\phi))$  around one of its minima, leads to predict in general a non null mass for such boson (see proof of the Theorem~\ref{thm:Higgsth}).   
\end{remark}

We have thus shown how, differently from refs.\,\citep{Forgacs2005,Lozano:2006xn}, thanks to our approach to non-commutativity, $\mathcal{E}$ has a domain of definition. The analysis on the domain of integration has led to define the pair $(\mathcal{B}^{\nc}_{\spt{\mathfrak{F}_{r_{\spt{\theta}}}}},\rho)$ (see Definition~\ref{def:B}) such that \mbox{$\mathcal{E}\in\mathcal{A}_{\theta_{\nc}}$} and, inter alia, to characterize  $e^{\iu\pi \theta_{\nc}\hat{\mathbbm{1}}}$ as the generator of a cyclic subgroup $\tau_{r_{\spt{\theta}}}$ of $U(1)$ (see Corollary~\ref{corth:thetag}).\\
Successively, for an appropriate choice of fields $\phi$ and $A_{\mu}$, we have evaluated $E$ for $\lambda=g^{2}/2$ rewriting it \textit{à la} Bogomolny (see Theorem~\ref{thm:E12Bogo}). In this regard, two novelties have emerged: the first is that $E$ depends only on $r_{\spt{\theta}}$ and on $(a_{1},a_{2})$ so that \mbox{$E\rightarrow+\infty$} for \mbox{$a_{1},a_{2} \rightarrow+\infty$} (in the refs.\,\citep{Baal82,Manton2004,Arroyo04,Forgacs2005,Lozano:2006xn,EWeinberg2012}, instead, it depended only on the index $n$ that labelled the homotopic classes of the fields $\phi$ and $A_{\mu}$) and the second is that the fields, in the hypothesis of the Theorem~\ref{thm:E12Bogo}, already appear like minimum energy configuration: \emph{therefore it is not necessary to solve any equation system, even less a BPS one} (see Remark~\ref{rmk:noBPS}) as it happens for the refs.\,\citep{Manton2004,Arroyo04,Forgacs2005,Lozano:2006xn,EWeinberg2012} instead. In particular, in such regard, we have noticed how we cannot get back the II BPS eq as \mbox{$[\tilde{A}_{i},\tilde{A}_{j}](\hat{x},\hat{y})=0$}.\\
\indent Then we have moved on to deal with the influence of $\theta$ on the other observables of the model.
No dependency of $g$ from $\theta$ has been predicted as in the refs.\,\citep{Forgacs2005,Lozano:2006xn}.\\
We have found that $\mathcal{F}$, and so the quantization and the sign of $q_{e}$, depends exclusively on $r_{\spt{\theta}}$ (see eq.\,\eqref{eqn:F(theta)} and eq.\,\eqref{eqn:qe=1/rththcc}) and not on $n$ as in refs.\,\citep{Baal82,Manton2004,Arroyo04,Forgacs2005,Lozano:2006xn,EWeinberg2012}.
In the end, we have obtained a relation between the masses of the particle spectrum, relative to the fields $\phi$ and $A_{\mu}$, and $\vartheta(r_{\spt{\theta}})$ (see Theorem~\ref{thm:Higgsth}). According to this relation, these masses can vary in time in a quantized way just because $\vartheta(r_{\spt{\theta}})$ can vary in time (see Remark~\ref{rmk:m(tau)}) and are null when $\vartheta(r_{\spt{\theta}})$ is outside a specific interval: considering the\emph{~\ref{item:(ir1)}} of the Corollary~\ref{corth:thetag}, we can also say that \emph{the masses vary in time as the order of $\tau_{r_{\spt{\theta}}}$ varies in time and are null when such order is outside a specific interval}.\\
As for the particle spectrum, we have found that the double $U(1)$ symmetry breaking regulated by $r_{\spt{\theta}}$, one relative to the minima of $\rho(\tilde{V}(\tilde{\phi}_{i}))$ and the other due to the minima of $\rho(V'(\phi))$, has implied, in general, that also the usual massless Goldstone boson acquired mass (see Theorem~\ref{thm:Higgsth}).\\
\indent Let's finish with a question: what gives stability to the vacuum of the theory as this is not granted anymore by the belonging of such vacuum to a homotopic class labelled by $n$, as we have already seen? From this point of view, we will be able to answer after investigating the nature of $r_{\spt{\theta}}$ and its dynamics.


\section{Conclusions}
\label{sec:concl}
\noindent The Tables~\ref{tab:FSvsEpp1} and~\ref{tab:FSvsEpp} show the main differences between ``Fock space approach'', as it is dealt with in~\citep{Forgacs2005,Lozano:2006xn}, and the approach proposed here to non-commutativity.\\
We conclude with five considerations. The first concerns the fact that, without any loss of generality, we could replace the usual commutative treatment of coordinates with ours: in the ordinary calculations we can replace $(x,y)\in\R^{2}$ with $(x\bm{\hat{x}},y\bm{\hat{y}})$, with $[\bm{\hat{x}},\bm{\hat{y}}]=\iu\theta\hat{\mathbbm{1}}$, and consider then the commutative case setting $\theta$ not necessarily on zero (as in~\citep{Forgacs2005,Lozano:2006xn}) but equal to an element of \mbox{$[0]\0$} with \mbox{$[0]\in\R/2\Z$}. In this way it will be interesting to evaluate the possible influence of $\theta_{\cc}$, and then also of $\theta_{\nc}$, on the observables in different models from the one examined so far.\\
\indent The second consideration is that here it is possible to predict, for the color-electric charge of a quark, also fractional charges and not only therefore integer ones as in refs.\,\citep{Baal82,Manton2004,Arroyo04,Forgacs2005,Lozano:2006xn,EWeinberg2012}.\\
\indent The third consideration concerns energy at the point of Bogomolny for minimum configurations. Unlike the aforementioned references, here such energy depends on global parameters of the manifold on which $\mathcal{E}$ is defined, i.e.\,the periods $(a_{1},a_{2})$, and it is not such that \mbox{$E<+\infty$} for \mbox{$a_{1},a_{2} \rightarrow+\infty$}. In other words, in our case, it is not possible to consider the model on a plane.\\
\begin{table}[h!]
\vspace{-2cm}
\hspace{-2.25cm}
\begin{tikzpicture}[x= 13 em, y=-.5cm,  node distance=0 cm,outer sep = 0pt]

\node[dayF,fill=blue!10] (lundi) at (1,8)   {\textsc{fock space approach}};
\node[day,hoursF2,fill=green!10]  (mardi) [right = of lundi]    {\textsc{approach proposed}};
\node[hourF] (8-9) at (1,8)              {\large{Coordinates Map}};
\node[hourF] (9-10)  [below = of  8-9 ] {\fontsize{11}{15}\selectfont{Translation $t_{a}$ in the direction $\hat{x}$ of a quantity $a$}};
\node[3hoursM,fill= yellow!30,minimum height = 3.3cm] (10-11) [below = of  9-10] {\Large{$[\hat{x},\hat{y}] \propto\hat{\mathbbm{1}}$}};
\node[3hoursM,fill= yellow!30,minimum height = 3.8cm] (11) [below = of  10-11] {\large{Twist matrices}};
\node[3hoursM,fill= yellow!30,minimum height = 3.3cm] (12) [below = of 11] {\Large{$\theta$}};
\node[hourF,minimum height = 2.243cm] (13) [below = of 12] {\large{Domain of $\mathcal{E}$}};
\node[title, text centered, above = of  8-9]   {\fontsize{8.5}{9}\selectfont{\textsc{dual superconductivity model}}};

\node[6BF,3hours] at (1,8)  {\emph{Weyl map}\\$(x,y)\mapsto (\hat{x},\hat{y})$};
\node[6BF,3hours]     at (1,12) {\fontsize{11}{15}\selectfont{$t_{a}\colon \hat{x}\mapsto\hat{x}+a\hat{\mathbbm{1}}$}};
\node[6BF,3hoursM,text width = 14 em]     at (1,16) {\large{Arbitrary position}};
\node[6BF,3hoursM,text width = 14 em,minimum height = 3.8cm]     at (1,22.6) {\hspace{-.1em}\fontsize{6}{12}\selectfont{$U_{1}(\hat{x},\hat{y})=\e^{\iu\pi\omega L_{1}\hat{y}}$, $U_{2}(\hat{x},\hat{y})=\e^{-\iu\pi\omega L_{2}\hat{x}}$} \\\vspace{1em} \fontsize{8}{12}\selectfont{with\,\, $\omega =\frac{1}{\theta\pi}(1-s(\theta))$}\\\vspace{1em} \fontsize{8}{12}\selectfont{$s(\theta)=\sqrt{1-2\pi\theta n/L_{1}L_{2}}$,\hspace{.4em}$n\in\Z$}};
\node[6BF,fill=blue!17.5, minimum height=1.65cm]     at (1,30.2) {For\, $\mathcal{T}^{2}_{\cc}$,\,\, $\theta=0$ with $\theta\in\R$};
\node[6BF, minimum height=1.65cm]     at (1,33.5)   {For\, $\mathcal{T}^{2}_{\nc}$,\,\,$\theta>0$};
\node[6BF,3hours]     at (1,36.8) {\fontsize{10}{15}\selectfont{No domain:}\\\vspace{0em} \hspace{-.132em}\fontsize{5.9}{12}\selectfont{\!$(\hat{D}_{\mu}\phi)^{\dagger}(\hat{D}^{\mu}\phi)$\! and\! $(\phi^{\dagger}\phi-\phi^{2}_{0})^{2}$\!\! defined\!\! on\!\! $\mathcal{T}^{2}$\!},\\\vspace{0em}\fontsize{7}{15}\selectfont{\mbox{$\frac{1}{4}F_{\mu\nu}F^{\mu\nu}$} defined on $\tilde{\mathcal{T}}^{2}$},\\\vspace{.2em}\fontsize{7}{15}\selectfont{with $\tilde{L}_{i}=s(\theta)L_{i}$,\,$\tilde{L}_{i}$ periods of $\tilde{\mathcal{T}}^{2}$}};

\node[2AF2]        at (2.1285,8)  {\emph{Exponential mapping} \\ $p\in\mathcal{T}^{2}\mapsto (\hat{x},\hat{y})\equiv (x\bm{\hat{x}},y\bm{\hat{y}})$};
\node[2AF2]     at (2.1285,12)  {\fontsize{11}{15}\selectfont{$t_{a}\colon x\bm{\hat{x}}\mapsto x\bm{\hat{x}}+a\bm{\hat{x}}=(x+a)\bm{\hat{x}}$}};
\node[2AF2, fill = green!20, minimum height=1.8cm]       at (2.1285,16) {\fontsize{6.5}{9}\selectfont{A position that, for $\mathcal{T}^{2}_{\cc}$, can derive from the property of cyclicity of $Z_{N}$: if $(\Omega^{(ab)}_{\mu},\Omega^{(ab)}_{\nu})\in\mathscr{D}^{\ell}_{\spt{\beta\alpha}}$\\\vspace{.5em} \hspace{-.075em}\fontsize{6.8}{12}\selectfont{$Z_{\mu\nu}=[[\Omega_{\mu},\Omega_{\nu}]]\! \Leftrightarrow\! [\hat{\mu},\hat{\nu}]\!\propto\!\hat{\mathbbm{1}}$\! and\! $(\Omega^{(ab)}_{\mu},\Omega^{(ab)}_{\nu})\!\in\!\mathscr{D}_{g}$}}};
\node[2AF2, minimum height=1.5cm]     at (2.1285,19.603) {\fontsize{6.5}{9}\selectfont{\mbox{A\! position\! that can derive\! from\! the\! homogeneity\! of $\mathcal{T}^{2}_{\nc}$}\\\fontsize{7.25}{9}\selectfont{For $\mathcal{T}^{2}_{\nc}$,  $\mathscr{C}\cap\mathscr{D}_{\spt{\beta\alpha}}\neq\emptyset\Leftrightarrow[\hat{\mu},\hat{\nu}]\propto\hat{\mathbbm{1}}$ and $\mathscr{D}^{*}_{\spt{\beta\alpha}}\neq\emptyset$}}};
\node[2AF2,minimum height=1.9cm,fill = green!20]     at (2.1285,22.61) {\hspace{-0em}\fontsize{6.5}{9}\selectfont{For $\mathcal{T}^{2}_{\cc}$, $(U_{1},U_{2})\equiv (\Omega_{\mu},\Omega_{\nu})\in\mathscr{D}^{\ell}_{\spt{\beta\alpha,U(1)}}\!\sim\!\mathscr{D}^{*}_{\spt{\beta\alpha,U(1)}}$}\\\vspace{.7em}\fontsize{8}{9}\selectfont{\mbox{$\Omega_{1}(\hat{x},\hat{y})=\e^{2\pi\iu n\beta_{1}(y)\bm{\hat{y}}}$, $\Omega_{2}(\hat{x},\hat{y})=\e^{2\pi\iu n\alpha_{2}(x)\bm{\hat{x}}}$}\\  with $n\!\!\!\!\!\in\!\!\!\!\!\!\Z$}};
\node[2AF2, minimum height=1.9cm]     at  (2.1285,26.4) {\fontsize{8.5}{9}\selectfont{For $\mathcal{T}^{2}_{\nc}$,\,\, $(\Omega_{\mu},\Omega_{\nu})\in\mathscr{D}^{*}_{\spt{\beta\alpha,U(1)}}$}\\\vspace{.2em}\fontsize{8}{9}\selectfont{$\Omega_{1}(\hat{x},\hat{y})=\e^{2\pi\iu n\beta_{1}(y)\bm{\hat{y}}}\e^{2\pi\iu n\alpha_{1}(x)\bm{\hat{x}}}$, $\Omega_{2}(\hat{x},\hat{y})=\e^{2\pi\iu n\alpha_{2}(x)\bm{\hat{x}}}\e^{2\pi\iu n\beta_{2}(y)\bm{\hat{y}}}$}};
\node[2AF2, fill = green!20, minimum height=1.65cm]     at (2.1285,30.2) {\fontsize{9}{9}\selectfont{Given $\mathcal{B}^{\cc }_{\spt{U(1)}}$,\, $\e^{\iu\pi\theta_{\cc}\hat{\mathbbm{1}}}=e$\\  where \mbox{$\theta_{\cc}\in [0]\0$} with \mbox{$[0]\in\R/2\Z$}}};
\node[2AF2, minimum height=1.65cm]     at (2.1285,33.5)  {\hspace{-.2em}\fontsize{6.5}{9}\selectfont{Given $(\mathcal{B}^{*\nc }_{\spt{U(1)}},\rho)$, $e^{\iu\pi \theta_{\nc}\hat{\mathbbm{1}}}$ is the generator of a cyclic subgroup $\tau_{r_{\spt{\theta}}}$ of $U(1)$ where \mbox{$r_{\spt{\theta}}=\theta_{\cc}/\theta_{\nc}\in\Q$} and \mbox{$\theta_{\nc}\in\R\setminus [0]$} (see Corollary~\ref{corth:thetag})}};
\node[2AF2,minimum height = 2.243cm]     at (2.1285,36.8) {\large{$\mathcal{T}^{2}_{\nc}$}};
\end{tikzpicture}
\caption{Differences between ``Fock space approach'', as it is dealt with in~\citep{Forgacs2005,Lozano:2006xn}, and the approach proposed here relatively to the non-commutative treatment of space coordinates. The sets $\mathscr{D}^{\ell}_{\spt{\beta\alpha,U(1)}}$ and $\mathscr{D}^{*}_{\spt{\beta\alpha,U(1)}}$ are the analogous for $U(1)$ respectively of $\mathscr{D}^{\ell}_{\spt{\beta\alpha}}$ and $\mathscr{D}^{*}_{\spt{\beta\alpha}}$.}
\label{tab:FSvsEpp1}
\end{table}
\indent The fourth concerns the masses of the particle spectrum. In fact, these masses depend on $r_{\spt{\theta}}$ which in its turn may be non constant on time (still in a non specified way). Thus, in general, we may conjecture that \emph{the dark matter is not made up necessarily of unknown particles (here, among other things, it is expected that even the usual Goldstone boson acquires mass). But it could also be the ordinary matter whose particles have however mass values different from the current ones.}\\ 
\indent Finally, the fifth and ultimate consideration concerns the open question on the stability of the vacuum that, as we have seen, is no longer guaranteed by topological instances (belonging or not of the vacuum to a homotopic class). From our point of view, the question is: considering the eqs.\,\eqref{eqn:m(rth)CS} and~\eqref{eqn:m(rth)CSA}, how come is $\vartheta(r_{\spt{\theta}})$ ``confined'' in an interval of values that does not imply the annulment of all the masses of the particle spectrum?\\
\begin{table}[!]
\vspace{-2cm}
\hspace{-1cm}
\begin{tikzpicture}[x= 13 em, y=-.5cm,  node distance=0 cm,outer sep = 0pt]
\node[day,fill=blue!10] (lundi) at (1,8)   {\textsc{fock space approach}};
\node[day,hours,fill=green!10]  (mardi) [right = of lundi]    {\textsc{approach proposed}};

\node[hour] (8-9) at (1,8) {\fontsize{10}{15}\selectfont{\large{Gauge charge $g$}}};
\node[2hours,minimum height = 2.243cm,fill= yellow!30] (9-10) [below = of 8-9] {\Large{$\mathcal{F}$}};

\node[2hours,minimum height = 2.97cm,fill= yellow!30] (10-11) [below = of 9-10] {\fontsize{8.5}{15}\selectfont{\hspace{-0.25cm}{\large{$E$}}, rewritten \textit{à la} Bogomolny, for $\lambda=g^{2}/2$ and for \\\vspace{.25em} minimum configurations }};
\node[2hours,fill= yellow!30] (11-12) [below = of 10-11] {\fontsize{9.5}{15}\selectfont{Masses of the particle spectrum relative to $\phi$ and $A_{\mu}$}};

\node[title, text centered, above = of  8-9]   {\fontsize{8.5}{9}\selectfont{\textsc{dual superconductivity model}}};

\node[6B,2hours] at (1,8)  {\fontsize{13}{15}\selectfont{$\tilde{g}=s(\theta)g$}};
\node[6B,1hour,minimum height = 2.243cm] at (1,11)  {\fontsize{14.5}{15}\selectfont{$\frac{2\pi n}{g}$}};
\node[6B,1hour,minimum height = 3cm]     at (1,15.49) {\fontsize{13.5}{15}\selectfont{$\pi\phi^{2}_{0}\abs{n}$}\\\vspace{1em}\fontsize{10}{15}\selectfont{if the BPS system holds}};
\node[6B,1hour]     at (1,21.43) {\fontsize{10}{15}\selectfont{No relation with $\theta$}};

\node[2A,2hours]        at (2,8)   {\fontsize{13}{15}\selectfont{$g$}};
\node[2A,1hour,minimum height = 2.243cm]        at (2,11)   {\fontsize{13}{9}\selectfont{$-\frac{1}{r_{\spt{\theta}}}\frac{\theta_{\cc}}{g}$} \\\vspace{.4em}\fontsize{7.5}{5}\selectfont{It can have an upper and a lower limit (see Corollary~\ref{cor:FL})}};
\node[2A,1hour,minimum height = 3cm]     at (2,15.49) {\fontsize{13}{15}\selectfont{$\frac{g^{2}\phi^{4}_{0}}{2}a_{1}a_{2}+\frac{\theta^{2}_{\cc}}{2g^{2}r^{2}_{\spt{\theta}}a_{1}a_{2}}$}\\\vspace{1em}\hspace{-.2em}\fontsize{7.5}{15}\selectfont{in\! the\! hypotheses\! of\! the\! Theorem~\ref{thm:E12Bogo}}};
\node[2A,1hour]     at (2,21.43) {\fontsize{9.5}{15}\selectfont{Relation with $\theta$ according to the eqs.\,\eqref{eqn:m(rth)CS} and~\eqref{eqn:m(rth)CSA} }};
\end{tikzpicture}
\caption{Differences between ``Fock space approach'', as it is dealt with in~\citep{Forgacs2005,Lozano:2006xn}, and the approach proposed here relatively to the influence of $\theta$ on the observables of the \emph{dual superconductivity} model.} 
\label{tab:FSvsEpp}
\end{table}
In the future, we plan to answer this question, by first studying the model statics for energies with $\lambda\neq g^{2}/2$ and then the multivortex dynamics (already discussed in the commutative context on the $\mathcal{T}^{2}$ in ref.\,\citep{Gonzalez07}). In this regard, it is interesting to check if also for $\lambda\neq g^{2}/2$ it is possible, as here, to find the minimum configurations without solving any equation system but the TBC only, and so to develop a multivortex dynamics starting from them (usually this dynamics is developed starting from GL or BPS eqs (see ref.\,\citep{Manton2004})). At the moment, the only useful indication about the dynamics comes from the fact that only the parameter $r_{\spt{\theta}}$ can be considered time-dependent (see Remark~\ref{rmk:m(tau)}). \\
On this aspect, it will be interesting to find the relation between $r_{\spt{\theta}}$ and the metric that regulates the aforementioned multivortex dynamics. 
So, considering the eqs.\,\eqref{eqn:m(rth)CS} and~\eqref{eqn:m(rth)CSA}, we will obtain relations between the masses and such metric which could represent, more generally, the first step in the search for a relation between $m_{\ssp{H^{0}}}$ (mass of Higgs) and $c$ (speed of light). \\
\indent Finally, from a mathematical point of view, it will be interesting to test the validity of the Conjecture~\ref{conj:nnlin} at higher orders than the fifth one, by using the computer aid and a calculation algorithm of (C-B-H-D) like the one illustrated in~\citep{Casas2009}. 
If, at a certain order, such a conjecture does not hold, a condition can be ``captured'' on $[\hat{x},\hat{y}]$ and we can already get from it the proportionality to $\hat{\mathbbm{1}}$ without having to consider the~\eqref{eqn:cOmunuab} for $\mathcal{T}^{2}_{\nc}$ or the property of cyclicity of $Z_{N}$ for $\mathcal{T}^{2}_{\cc}$ (see Theorem~\ref{thm:rev3Tc}).

 %

\pdfbookmark[1]{Acknowledgments}{acknowledgments}

\section*{Acknowledgments}

\noindent We thank L.\,Martina for having proposed this topic and for many helpful discussions.
This work is financially partially supported by the ``Regione Puglia: Ritorno al Futuro - Ricerca''.

\appendix


\section{Some calculations about {\boldmath{$\phi(\hat{x}+\hat{a}_{1},\hat{y})$}} and \mbox{{\boldmath{$\phi(\hat{x},\hat{y}+\hat{a}_{2})$}}}}
\label{sec:AppFintT}
\noindent We will report here the detail of the calculations of the equations~\eqref{eqn:phitrasl}.\\

\vspace{-1.15em}

\indent Given a $\phi\in\mathscr{F}$, we have that 
{\fontsize{11}{10}
 \selectfont
\begin{equation}
\hspace{-2.25cm}\phi(\hat{x}+\hat{a}_{1},\hat{y})=\!\!\!\sum^{\infty}_{q,k=-\infty}\!\!(-)^{q+k}\e^{I_{q,k}(\hat{z}_{1}+\pi\hat{\tau}_{1},\hat{z}_{2}+\pi\hat{\tau}'_{2},\hat{\tau}_{1},\hat{\tau}_{2})}\e^{\omega_{q,k}(x+a_{1},y)}\e^{-\iu\pi q^{2}\hat{\tau}_{1}-\iu\pi k^{2}\hat{\tau}_{2}-2\iu q\hat{z}_{1}-2\iu q\pi\hat{\tau}_{1}-2\iu k\hat{z}_{2}-2\iu k\pi\hat{\tau}'_{2}},
\end{equation}}
\hspace{-.25em}where
{\fontsize{9.5}{20}
 \selectfont
\begin{equation}
\begin{split}
\hspace{-2.5cm}I_{q,k}(\hat{z}_{1}+\pi\hat{\tau}_{1},\hat{z}_{2}+\pi\hat{\tau}'_{2},\hat{\tau}_{1},\hat{\tau}_{2})=&-2q(k+1)[\hat{z}_{1},\hat{z}_{2}]-\frac{\pi^{2}}{2}(q^{2}+2q)k^{2}[\hat{\tau}_{1},\hat{\tau}_{2}]+\frac{\pi}{3} q^{3}[\hat{\tau}_{1},\hat{z}_{1}]-\pi(q^{2}+2q)(k+1)[\hat{\tau}_{1},\hat{z}_{2}]+\notag\\
&+\pi qk^{2}[\hat{\tau}_{2},\hat{z}_{1}]+\frac{\pi}{3}k^{3}[\hat{\tau}_{2},\hat{z}_{2}]-2\pi q(k+1)[\hat{z}_{1},\hat{\tau}'_{2}]+\frac{\pi^{2}}{3}k^{3}[\hat{\tau}_{2},\hat{\tau}'_{2}].
\end{split}
\end{equation}}
\hspace{-.25em}However
{\fontsize{9.5}{20}
 \selectfont
\begin{equation}
\hspace{-2.5cm}\phi(\hat{x}+\hat{a}_{1},\hat{y})=\!\!\!\sum^{\infty}_{q,k=-\infty}\!\!(-)^{q+k}\e^{I_{q,k}(\hat{z}_{1}+\pi\hat{\tau}_{1},\hat{z}_{2}+\pi\hat{\tau}'_{2},\hat{\tau}_{1},\hat{\tau}_{2})}\e^{\omega_{q,k}(x+a_{1},y)}\e^{-\iu\pi (q+1)^{2}\hat{\tau}_{1}-\iu\pi k^{2}\hat{\tau}_{2}-2\iu (q+1)\hat{z}_{1}-2\iu k\hat{z}_{2}+2\iu\hat{z}_{1}+\iu\pi\hat{\tau}_{1}-2\iu k\pi\hat{\tau}'_{2}},
\end{equation}}
\hspace{-.25em}so that 
{\fontsize{10}{15}
 \selectfont
\begin{equation}
\begin{split}
\hspace{-2.5cm}\phi(\hat{x}+\hat{a}_{1},\hat{y})=\e^{2\iu\hat{z}_{1}+\iu\pi\hat{\tau}_{1}}\e^{2\iu\pi\Re(\omega_{1}(x))+\iu\pi\Re(\omega_{1}(a_{1}))}\!\!\!\sum^{\infty}_{q,k=-\infty}\!\!(-)^{q+k}\e&^{I_{q,k}(\hat{z}_{1}+\pi\hat{\tau}_{1},\hat{z}_{2}+\pi\hat{\tau}'_{2},\hat{\tau}_{1},\hat{\tau}_{2})}\e^{I'_{q,k}(\hat{z}_{1},\hat{z}_{2},\hat{\tau}_{1},\hat{\tau}_{2})}\e^{\omega_{q+1,k}(x,y)}\\
&\times\e^{-\iu\pi (q+1)^{2}\hat{\tau}_{1}-\iu\pi k^{2}\hat{\tau}_{2}-2\iu (q+1)\hat{z}_{1}-2\iu k\hat{z}_{2}-2\iu k\pi\hat{\tau}'_{2}}
\end{split}
\end{equation}}
\hspace{-.25em}with
{\fontsize{11}{15}
 \selectfont
\begin{equation}
\hspace{-2cm}I'_{q,k}(\hat{z}_{1},\hat{z}_{2},\hat{\tau}_{1},\hat{\tau}_{2})=-2k[\hat{z}_{1},\hat{z}_{2}]-\frac{\pi^{2}}{2}k^{2}[\hat{\tau}_{1},\hat{\tau}_{2}]+\pi q(q+1)[\hat{\tau}_{1},\hat{z}_{1}]-\pi k [\hat{\tau}_{1},\hat{z}_{2}]+\pi k^{2}[\hat{\tau}_{2},\hat{z}_{1}]-2\pi k[\hat{z}_{1},\hat{\tau}'_{2}].
\end{equation}}
\hspace{-.25em}Now, considering that 
{\fontsize{8,5}{15}
 \selectfont
\begin{equation}
\hspace{-3.1cm}I_{q,k}(\hat{z}_{1}+\pi\hat{\tau}_{1},\hat{z}_{2}+\pi\hat{\tau}'_{2},\hat{\tau}_{1},\hat{\tau}_{2})+I'_{q,k}(\hat{z}_{1},\hat{z}_{2},\hat{\tau}_{1},\hat{\tau}_{2})=I_{q+1,k}(\hat{z}_{1},\hat{z}_{2},\hat{\tau}_{1},\hat{\tau}_{2})+2[\hat{z}_{1},\hat{z}_{2}]-\frac{\pi}{3}[\hat{\tau}_{1},\hat{z}_{1}]+\pi [\hat{\tau}_{1},\hat{z}_{2}]-2\pi (q(k+1)+k)[\hat{z}_{1},\hat{\tau}'_{2}]+\frac{\pi^{2}}{3}k^{3}[\hat{\tau}_{2},\hat{\tau}'_{2}],
\end{equation}}
\hspace{-.25em}we obtain that
{\fontsize{11}{15}
 \selectfont
\begin{equation}
\begin{split}
\label{eqn:trasla1}
\hspace{-1.5cm}\phi(\hat{x}+\hat{a}_{1},\hat{y})=\Omega'_{1}(\hat{x},\hat{y})\hspace{-.25cm}\sum^{\infty}_{q,k=-\infty}\!\!\e&^{-2\pi (q(k+1)+k)[\hat{z}_{1},\hat{\tau}'_{2}]+\frac{\pi^{2}}{3}k^{3}[\hat{\tau}_{2},\hat{\tau}'_{2}]}\e^{I_{q+1,k}(\hat{z}_{1},\hat{z}_{2},\hat{\tau}_{1},\hat{\tau}_{2})}\e^{\omega_{q+1,k}(x,y)}\\
&\times\e^{-\iu\pi (q+1)^{2}\hat{\tau}_{1}-\iu\pi k^{2}\hat{\tau}_{2}-2\iu (q+1)\hat{z}_{1}-2\iu k\hat{z}_{2}-2\iu k\pi\hat{\tau}'_{2}}
\end{split}
\end{equation}}
\hspace{-.25em}and so, taking into account the Lemma~\ref{lem:e^[x,y]=+-e}, we have that 
\begin{equation}
\phi(\hat{x}+\hat{a}_{1},\hat{y})=\Omega'_{1}(\hat{x},\hat{y})\sum^{\infty}_{q,k=-\infty}f^{\spt{\omega}}_{q+1,k}(\hat{x},\hat{y},\theta\hat{\mathbbm{1}})=\Omega'_{1}(\hat{x},\hat{y})\phi(\hat{x},\hat{y})
\end{equation}
\,if{}f $\e^{\iu \frac{\pi}{3}  \hat{\tau}'_{2}}=\pm e$. \\
\indent As concerns $\phi(\hat{x},\hat{y}+\hat{a}_{2})$, proceeding in the same way, we obtain that
{\fontsize{11}{15}
 \selectfont
\begin{equation}
\begin{split}
\label{eqn:trasla2}
\hspace{-1.75cm}\phi(\hat{x},\hat{y}+\hat{a}_{2})=\Omega'_{2}(\hat{x},\hat{y})\!\!\!\sum^{\infty}_{q,k=-\infty}\!\!\e&^{4q[\hat{z}_{1},\hat{z}_{2}]+2\pi q^{2}[\hat{\tau}_{1},\hat{z}_{2}]+2\pi qk[\hat{z}_{2},\hat{\tau}'_{1}]+\frac{\pi^{2}}{3}q^{3}[\hat{\tau}_{1},\hat{\tau}'_{1}]}\e^{I_{q,k+1}(\hat{z}_{1},\hat{z}_{2},\hat{\tau}_{1},\hat{\tau}_{2})}\e^{\omega_{q,k+1}(x,y)}\\
&\times\e^{-\iu\pi q^{2}\hat{\tau}_{1}-\iu\pi (k+1)^{2}\hat{\tau}_{2}-2\iu q\hat{z}_{1}-2\iu (k+1)\hat{z}_{2}-2\iu q\pi\hat{\tau}'_{1}}
\end{split}
\end{equation}}
\hspace{-.25em}and so, taking into account the Lemma~\ref{lem:e^[x,y]=+-e}, we have that
\begin{equation}
\phi(\hat{x},\hat{y}+\hat{a}_{2})=\Omega'_{2}(\hat{x},\hat{y})\sum^{\infty}_{q,k=-\infty}f^{\spt{\omega}}_{q,k+1}(\hat{x},\hat{y},\theta\hat{\mathbbm{1}})=\Omega'_{2}(\hat{x},\hat{y})\phi(\hat{x},\hat{y})
\end{equation}
\,if{}f\footnote{It is not necessary to impose \mbox{$\e^{\iu 2\pi \hat{\tau}_{1}}=\pm e$} as a further condition as if $[\hat{z}_{1},\hat{z}_{2}]=0$ then \mbox{$[\hat{\tau}_{1},\hat{z}_{2}]=[\hat{\tau}'_{2},\hat{z}_{1}]$} and so, considering the Lemma~\ref{lem:e^[x,y]=+-e} and the condition \mbox{$\e^{\iu \frac{\pi}{3}  \hat{\tau}'_{2}}=\pm e$} already obtained from~\eqref{eqn:trasla1}, we find that the quantity \mbox{$(\e^{\pi[\hat{\tau}_{1},\hat{z}_{2}]})^{2q^{2}}$} in~\eqref{eqn:trasla2} is equal to $e$.}  $\e^{\iu \frac{\pi}{3}  \hat{\tau}'_{1}}$ is equal to $\pm e$ and $[\hat{z}_{1},\hat{z}_{2}]=0$ .


\section{Some details about the terms in $E$ rewritten \textit{à la} Bogomolny}
\label{sec:AppErg}
\noindent Here is the detailed calculation of $E_{\spt{12}}$ rewritten \textit{à la} Bogomolny in the hypotheses of the Theorem~\ref{thm:E12Bogo}.\\
\indent Considering the~\eqref{eqn:Fij}, the Lemma~\eqref{lem:p1dp2=p2p1d} and that\footnote{\mbox{According to\! the\! hypotheses\! $(\tilde{\phi}_{k;i},\tilde{\phi}_{l;j})\!\in\mathfrak{F}_{r_{\spt{\theta}}}$\! and\! so $\tilde{A}^{i,j}_{\spt{\substack{q,q',k,k'\\r,r',s,s'}}}(\hat{x},\hat{y},\theta\hat{\mathbbm{1}})$\! is\! null\!
(see eq.\,\eqref{eqn:[Ai,Aj]p}).}} $[\tilde{A}_{i},\tilde{A}_{j}](\hat{x},\hat{y})=0$, we can easily see how the sum of the quantities~\eqref{eqn:erg2gen} and~\eqref{eqn:erg3gen} is null.\\
The $\tr(\phi_{2}\phi^{\dagger}_{1})$ is null (see Lemma~\ref{lem:p1dp2=p2p1d} and~\ref{lem:Trphid1phi2}) and as concerns the remaining quantities (see eq.\,\eqref{eqn:erg'BRev}), we show here their representation by $\rho$. The demonstration then that {\fontsize{10}{28}\selectfont\mbox{$\tr\bigl(\hat{\partial}_{i}(\phi^{\dagger}_{1}\hat{D}_{j}\phi_{2})-\hat{\partial}_{j}(\phi^{\dagger}_{1}\hat{D}_{i}\phi_{2})\bigr)$}}, \mbox{$\tr\bigl((\hat{D}_{i}\phi_{1})^{\dagger}(\hat{D}_{i}\phi_{2})\bigr)$}, $\tr\bigl(\varepsilon_{ij}(\hat{D}_{i}\phi_{1})^{\dagger}(\hat{D}_{j}\phi_{2} )\bigr)$ and \mbox{$\tr((\phi_{2}\phi^{\dagger}_{1})^{2})$} are null, is analogous to the one of the Lemma~\ref{lem:Trphid1phi2}.\\

For \mbox{$\frac{\iu}{2}\varepsilon_{ij}\bigl(\hat{\partial}_{i}(\phi^{\dagger}_{1}\hat{D}_{j}\phi_{2})-\hat{\partial}_{j}(\phi^{\dagger}_{1}\hat{D}_{i}\phi_{2})\bigr)$}, we have that
\begin{equation}
\label{eqn:phidDphi}
\hspace{-.65cm}\hat{\partial}_{1}(\phi^{\dagger}\hat{D}_{2}\phi)-\hat{\partial}_{2}(\phi^{\dagger}\hat{D}_{1}\phi)=\frac{\iu}{a_{1}}\hat{\partial}_{1}(\phi^{\dagger}\phi\,\bm{\hat{x}}) +\frac{\iu}{a_{2}}\hat{\partial}_{2}(\phi^{\dagger}\phi\,\bm{\hat{y}})-\iu gc_{2}\hat{\partial}_{1}(\phi^{\dagger}\tilde{\phi}_{2}\tilde{\phi}^{\dagger}_{2}\phi)+\iu gc_{1}\hat{\partial}_{2}(\phi^{\dagger}\tilde{\phi}_{1}\tilde{\phi}^{\dagger}_{1}\phi ).
\end{equation}
We obtain that (see Remark~\ref{rmk:rhox2})
{\fontsize{8,5}{14}
 \selectfont
\begin{equation}
\begin{split}
\hspace{-2.9cm}\rho\bigl(\frac{\iu}{a_{1}}\hat{\partial}_{1}(\phi^{\dagger}\phi\,\bm{\hat{x}}) +\frac{\iu}{a_{2}}\hat{\partial}_{2}(\phi^{\dagger}\phi\,\bm{\hat{y}})\bigr)=\frac{\iu\theta}{a_{1}a_{2}}\sum^{+\infty}_{l,l',m,m'=-\infty}&\biggl(2+\iu\pi r_{\spt{\theta}}\Bigl(ll'\frac{k_{\spt{\alpha_{1}}}}{2}+(l'k_{\spt{\alpha_{1}}}+6m'k_{\spt{\alpha_{2}}})\frac{x}{a_{1}}\Bigr)+\iu\pi r_{\spt{\theta}}\Bigl(mm'\frac{k_{\spt{\beta_{2}}}}{2}+(6l'k_{\spt{\beta_{1}}}+m'k_{\spt{\beta_{2}}})\frac{y}{a_{2}}\Bigr)\biggr)\\
&\times G'^{\spt{\omega}}_{l,l',m'}(x/a_{1},\theta,k_{\spt{\alpha_{1}}},k_{\spt{\alpha_{2}}},k_{\spt{\beta_{2}}},\omega_{1}(a_{1}))G'^{\spt{\omega}}_{m,m',l'}(y/a_{2},\theta,k_{\spt{\beta_{2}}},k_{\spt{\beta_{1}}},-k_{\spt{\alpha_{1}}},\omega_{2}(a_{2}))\label{eqn:trphidphix}
\end{split}
\end{equation}}
\hspace{-.3em}and that 
{\fontsize{7.8}{18}
 \selectfont
\begin{equation}
\begin{split}
\hspace{-3.1cm}\rho\bigl(\iu g\hat{\partial}_{2}(\phi^{\dagger}_{1}\tilde{\phi}_{1}\tilde{\phi}^{\dagger}_{1}\phi_{2}(\hat{x},&\hat{y}))-\iu g\hat{\partial}_{1}(\phi^{\dagger}_{1}\tilde{\phi}_{2}\tilde{\phi}^{\dagger}_{2}\phi_{2}(\hat{x},\hat{y}))\bigr)=\hspace{-.7cm}\sum^{+\infty}_{l,l',m,m',r,r',s,s'=-\infty}\hspace{-.05cm}\biggl(-\frac{\pi gc_{1}\theta_{\cc}}{a_{1}}\Bigl((mm'+ss')\frac{k_{\spt{\beta_{2}}}}{2}+\bigl(6(l'+r')k_{\spt{\beta_{1}}}+(m'+s')k_{\spt{\beta_{2}}}\bigr)\frac{y}{a_{2}}\Bigr) \\
\hspace{-3.1cm}&\times\tilde{G}'^{\spt{\omega}}_{m,m',l,l',s,s',r,r'}(y/a_{2},\theta,k_{\spt{\beta_{2}}},k_{\spt{\beta_{1}}},-k_{\spt{\alpha_{1}}},\omega_{2}(a_{2}),\tilde{\omega}_{2;1}(a_{2}))\tilde{G}'^{\spt{\omega}}_{l,l',m,m',r,r',s,'s'}(x/a_{1},\theta,k_{\spt{\alpha_{1}}},k_{\spt{\alpha_{2}}},k_{\spt{\beta_{2}}},\omega_{1}(a_{1}),\tilde{\omega}_{1;1}(a_{1}))+\\
\hspace{-3.1cm}&+\frac{\pi gc_{2}\theta_{\cc}}{a_{2}}\Bigl((ll'+rr')\frac{k_{\spt{\alpha_{1}}}}{2}+\bigl((l'+r')k_{\spt{\alpha_{1}}}+6(m'+s')k_{\spt{\alpha_{2}}}\bigr)\frac{x}{a_{1}}\Bigr) \tilde{G}'^{\spt{\omega}}_{l,l',m,m',r,r',s,'s'}(x/a_{1},\theta,k_{\spt{\alpha_{1}}},k_{\spt{\alpha_{2}}},k_{\spt{\beta_{2}}},\omega_{1}(a_{1}),\tilde{\omega}_{1;2}(a_{1}))\\
\hspace{-3.1cm}&\times\tilde{G}'^{\spt{\omega}}_{m,m',l,l',s,s',r,r'}(y/a_{2},\theta,k_{\spt{\beta_{2}}},k_{\spt{\beta_{1}}},-k_{\spt{\alpha_{1}}},\omega_{2}(a_{2}),\tilde{\omega}_{2;2}(a_{2}))\biggr)
\end{split}
\end{equation}}
\hspace{-.3em}with
{\fontsize{9}{15}
 \selectfont
\begin{equation}
\begin{split}
\hspace{-3cm}\tilde{G}'^{\spt{\omega}}_{l,l',m,m',r,r',s,'s'}(x/a_{1},\theta,k_{\spt{\alpha_{1}}},k_{\spt{\alpha_{2}}},k_{\spt{\beta_{2}}},\omega_{1}(a_{1}),\tilde{\omega}_{1;1}(a_{1}))=(-)^{l+r}&\e^{\Re\bigl(\omega'_{r,r'}(x/a_{1},\tilde{\omega}_{1;1}(a_{1}))+\omega'_{l,l'}(x/a_{1},\omega_{1}(a_{1}))\bigr)}\e^{\iu\frac{\pi}{2}\bigl(ll'(k_{\spt{\alpha_{1}}}r_{\spt{\theta}}+2\Re(\omega_{1}(a_{1})))\bigr)}\\
&\times\e^{\iu\frac{\pi}{2}\bigl(rr'(k_{\spt{\alpha_{1}}}r_{\spt{\theta}}+2\Re(\tilde{\omega}_{1;i}(a_{1})))\bigr)}\e^{\iu \tilde{K}^{\spt{\omega}}_{l',m,m',r',s,s'}\frac{x}{a_{1}}}
\end{split}
\end{equation}}
\hspace{-.3em}where
{\fontsize{7,25}{15}
 \selectfont
\begin{equation}
\hspace{-3.65cm}\tilde{K}^{\spt{\omega}}_{l',m,m',r',s,s'}(k_{\spt{\alpha_{1}}},k_{\spt{\alpha_{2}}},k_{\spt{\beta_{2}}},\omega_{1}(a_{1}),\tilde{\omega}_{1;1}(a_{1}))=\pi\biggl(2l'\Re(\omega_{1}(a_{1}))+2r'\Re(\tilde{\omega}_{1;1}(a_{1}))+\frac{\pi}{2}k_{\spt{\alpha_{2}}}k_{\spt{\beta_{2}}}r^{2}_{\spt{\theta}}\theta\bigl(m'^{3}+3m's'(m-s)+s'^{3}\bigr)+\bigl((l'+r')k_{\spt{\alpha_{1}}}+6(m'+s')k_{\spt{\alpha_{2}}}\bigr)r_{\spt{\theta}}\biggr).
\end{equation}}
\hspace{-.3em}For the demonstration that {\fontsize{10}{28}
 \selectfont\mbox{$\tr\bigl(\hat{\partial}_{i}(\phi^{\dagger}_{1}\hat{D}_{j}\phi_{2})-\hat{\partial}_{j}(\phi^{\dagger}_{1}\hat{D}_{i}\phi_{2})\bigr)$}} is null, it's useful to consider the~\eqref{eqn:Inta1G'} and that 
{\fontsize{8}{12.5}
 \selectfont
\begin{equation}
\begin{split}
\label{eqn:IntG8}
\hspace{-3.5cm}\int^{a_{1}}_{0}\tilde{G}'^{\spt{\omega}}_{l,l',m,m',r,r',s,'s'}&(x/a_{1},\theta,k_{\spt{\alpha_{1}}},k_{\spt{\alpha_{2}}},k_{\spt{\beta_{2}}},\omega_{1}(a_{1}),\tilde{\omega}_{1;1}(a_{1}))\,dx=(-1)^{l+r}a_{1}\tilde{A}_{l,l',m,m',r,r',s,s'}(k_{\spt{\alpha_{1}}},k_{\spt{\alpha_{2}}},k_{\spt{\beta_{2}}},\omega_{1}(a_{1}),\tilde{\omega}_{1;1}(a_{1}))\\
&\times\bigl(\erf(b_{l,l',m,m',r,r',s,s'}(k_{\spt{\alpha_{1}}},k_{\spt{\alpha_{2}}},k_{\spt{\beta_{2}}},\omega_{1}(a_{1}),\tilde{\omega}_{1;1}(a_{1}))-\erf(b_{l+2,l',m,m',r+2,r',s,s'}(k_{\spt{\alpha_{1}}},k_{\spt{\alpha_{2}}},k_{\spt{\beta_{2}}},\omega_{1}(a_{1}),\tilde{\omega}_{1;1}(a_{1}))\bigr),
\end{split}
\end{equation}}
\hspace{-.3em}with
{\fontsize{8}{20}
 \selectfont
\begin{equation}
\begin{split}
\hspace{-3.5cm}\tilde{A}_{l,l',m,m',r,r',s,s'}(k_{\spt{\alpha_{1}}},k_{\spt{\alpha_{2}}},k_{\spt{\beta_{2}}},\omega_{1}(a_{1}),\tilde{\omega}_{1;1}(a_{1}))=&\frac{\e^{\frac{\pi}{2}\bigl(l'^{2}\Im(\omega_{1}(a_{1}))+r'^{2}\Im(\tilde{\omega}_{1;1}(a_{1}))+(l-r)^{2}\frac{\Im(\omega_{1}(a_{1}))\Im(\tilde{\omega}_{1;1}(a_{1}))}{\Im(\omega_{1}(a_{1}))+\Im(\tilde{\omega}_{1;1}(a_{1}))}+\frac{(\tilde{K}^{\spt{\omega}}_{l',m,m',r',s,s'})^{2}}{4\pi^{2} (\Im(\omega_{1}(a_{1}))+\Im(\tilde{\omega}_{1;1}(a_{1})))}\bigr)}}{2\sqrt{2(\abs{\Im(\omega_{1}(a_{1}))}+\abs{\Im(\tilde{\omega}_{1;1}(a_{1}))})}}\\
\hspace{-.5cm}&\times\e^{\frac{\iu }{2}\Bigl(\pi k_{\spt{\alpha_{1}}}r_{\spt{\theta}}(ll'+rr')+2\pi (ll'\Re(\omega_{1}(a_{1}))+rr'\Re(\tilde{\omega}_{1;1}(a_{1})))-\tilde{K}^{\spt{\omega }}_{l',m,m',r',s,s'}\frac{l\Im(\omega_{1}(a_{1}))+r\Im(\tilde{\omega}_{1;1}(a_{1}))}{\Im(\omega_{1}(a_{1}))+\Im(\tilde{\omega}_{1;1}(a_{1}))}\Bigr)}
\end{split}
\end{equation}}
\hspace{-.3em}and
{\fontsize{10}{15}
 \selectfont
\begin{equation}
\hspace{-1.5cm}b_{l,l',m,m',r,r',s,s'}(k_{\spt{\alpha_{1}}},k_{\spt{\alpha_{2}}},k_{\spt{\beta_{2}}},\omega_{1}(a_{1}),\tilde{\omega}_{1;1}(a_{1}))=\frac{2\pi\bigl( l\Im(\omega_{1}(a_{1}))+ r\Im(\tilde{\omega}_{1;1}(a_{1}))\bigr)+\iu \tilde{K}^{\spt{\omega }}_{l',m,m',r',s,s'}}{2\sqrt{2\pi\bigl(\abs{\Im(\omega_{1}(a_{1}))}+\abs{\Im(\tilde{\omega}_{1;1}(a_{1}))}\bigr)}}.
\end{equation}}
\indent As concerns \mbox{$(\hat{D}_{i}\phi_{1})^{\dagger}(\hat{D}_{i}\phi_{2})$} and $\varepsilon_{ij}(\hat{D}_{i}\phi_{1})^{\dagger}(\hat{D}_{j}\phi_{2} )$, taking into account the~\eqref{eqn:rho(x,x2)} and defining  
 \begin{subequations}
\label{eqn:ab}
\begin{align}
&a_{l,l',m'}(x,k_{\spt{\alpha_{1}}},k_{\spt{\alpha_{2}}})=r_{\spt{\theta}}\Bigl(ll'\frac{k_{\spt{\alpha_{1}}}}{2}+\bigl(l'k_{\spt{\alpha_{1}}}+6m'k_{\spt{\alpha_{2}}}\bigr)\frac{x}{a_{1}}\Bigr),\\
&b_{l',m,m'}(y,k_{\spt{\beta_{1}}},k_{\spt{\beta_{2}}})= r_{\spt{\theta}} \Bigl(mm'\frac{k_{\spt{\beta_{2}}}}{2}+\bigl(6 l'k_{\spt{\beta_{1}}}+m' k_{\spt{\beta_{2}}}\bigr) \frac{y}{a_{2}}\Bigr),
\end{align} 
\end{subequations}
we obtain that 
{\fontsize{7.8}{15}
 \selectfont
\begin{equation}
\label{eqn:rho(DiDi)}
\begin{split}
\hspace{-3.3cm}&\rho\bigl((\hat{D}_{i}\phi)^{\dagger}(\hat{D}_{i}\phi)\bigr)(x,y) =\hspace{-1em}\sum^{+\infty}_{l,l',m,m'=-\infty}\hspace{-.5em}\biggl(\frac{\pi\theta}{ a^{2}_{1}}\Bigl(1+(1-\pi\theta)b_{l',m,m'}(y,k_{\spt{\beta_{1}}},k_{\spt{\beta_{2}}})-\frac{\pi}{2}b^{2}_{l',m,m'}(y,k_{\spt{\beta_{1}}},k_{\spt{\beta_{2}}})\theta\bigl(1-\frac{\pi}{2}\theta\bigr)\Bigr)+\\
\hspace{-3.3cm}&\hspace{-.75cm}+\frac{\pi\theta}{ a^{2}_{2}}\Bigl(1-(1-\pi\theta)a_{l,l',m'}(x,k_{\spt{\alpha_{1}}},k_{\spt{\alpha_{2}}})-\frac{\pi}{2} a^{2}_{l,l',m'}(x,k_{\spt{\alpha_{1}}},k_{\spt{\alpha_{2}}})\theta\bigl(1-\frac{\pi}{2}\theta\bigr)\Bigr)\biggr)G'^{\spt{\omega}}_{l,l',m'}(x/a_{1},\theta,k_{\spt{\alpha_{1}}},k_{\spt{\alpha_{2}}},k_{\spt{\beta_{2}}},\omega_{1}(a_{1}))G'^{\spt{\omega}}_{m,m',l'}(y/a_{2},\theta,k_{\spt{\beta_{2}}},k_{\spt{\beta_{1}}},-k_{\spt{\alpha_{1}}},\omega_{2}(a_{2}))+\\
\hspace{-3.3cm}&+2g\hspace{-1em}\sum^{+\infty}_{l,l',m,m',r,r',s,s'=-\infty}\hspace{-.5em}\biggl(\frac{c_{1}}{a_{2}}\check{G}'^{\spt{\omega}}_{l,l',m,m',r,r',s,'s'}(x/a_{1},\theta,k_{\spt{\alpha_{1}}},k_{\spt{\alpha_{2}}},k_{\spt{\beta_{2}}},\omega_{1}(a_{1}),\tilde{\omega}_{1;1}(a_{1}))\check{G}'^{\spt{\omega}}_{m,m',l,l',s,s',r,r'}(y/a_{2},\theta,k_{\spt{\beta_{2}}},k_{\spt{\beta_{1}}},-k_{\spt{\alpha_{1}}},\omega_{2}(a_{2}),\tilde{\omega}_{2;1}(a_{2}))+\\
\hspace{-3.3cm}&\hspace{2.5cm}-\frac{c_{2}}{a_{1}}\check{G}'^{\spt{\omega}}_{l,l',m,m',r,r',s,'s'}(x/a_{1},\theta,k_{\spt{\alpha_{1}}},k_{\spt{\alpha_{2}}},k_{\spt{\beta_{2}}},\omega_{1}(a_{1}),\tilde{\omega}_{1;2}(a_{1}))\check{G}'^{\spt{\omega}}_{m,m',l,l',s,s',r,r'}(y/a_{2},\theta,k_{\spt{\beta_{2}}},k_{\spt{\beta_{1}}},-k_{\spt{\alpha_{1}}},\omega_{2}(a_{2}),\tilde{\omega}_{2;2}(a_{2}))\biggr)+\\
\hspace{-3.3cm}&\hspace{0.25cm}+g^{2}c^{2}_{i}\hspace{-5em}\sum^{+\infty}_{l,l',m,m',r,r',s,s',u,u',v,v'=-\infty}\hspace{-5em}\mathring{G}'^{\spt{\omega}}_{l,l',m,m',r,r',s,'s',u,u',v,v'}(x/a_{1},\theta,k_{\spt{\alpha_{1}}},k_{\spt{\alpha_{2}}},k_{\spt{\beta_{2}}},\omega_{1}(a_{1}),\tilde{\omega}_{1;i}(a_{1}))\mathring{G}'^{\spt{\omega}}_{m,m',l,l',s,s',r,r',v,v',u,u'}(y/a_{2},\theta,k_{\spt{\beta_{2}}},k_{\spt{\beta_{1}}},-k_{\spt{\alpha_{1}}},\omega_{2}(a_{2}),\tilde{\omega}_{2;i}(a_{2})),
\end{split} 
\end{equation}}
\hspace{-.3em}and
{\fontsize{7.8}{15}
 \selectfont
\begin{equation}
\begin{split}
\hspace{-3.2cm}\varepsilon_{ij}\rho((\hat{D}_{i}\phi)^{\dagger}(\hat{D}_{j}\phi )(\hat{x},\hat{y}))=&\frac{\iu\theta}{a_{1}a_{2}}\sum^{+\infty}_{q,q',k,k'=-\infty}\hspace{-1em}\Bigl(1+\iu\pi a_{l,l',m'}(x,k_{\spt{\alpha_{1}}},k_{\spt{\alpha_{2}}})+\iu\pi b_{l',m,m'}(y,k_{\spt{\beta_{1}}},k_{\spt{\beta_{2}}})\Bigr)G'^{\spt{\omega}}_{l,l',m'}(x/a_{1},\theta,k_{\spt{\alpha_{1}}},k_{\spt{\alpha_{2}}},k_{\spt{\beta_{2}}},\omega_{1}(a_{1}))\\
\hspace{-3.2cm}&\hspace{-4.15cm}\times G'^{\spt{\omega}}_{m,m',l'}(y/a_{2},\theta,k_{\spt{\beta_{2}}},k_{\spt{\beta_{1}}},-k_{\spt{\alpha_{1}}},\omega_{2}(a_{2}))
 +\pi g\theta\hspace{-3.5em}\sum^{+\infty}_{l,l',m,m',r,r',s,s'=-\infty}\hspace{-.5em}\biggl(\frac{c_{2}}{a_{2}}(a_{l,l',m'}+a_{r,r',s'})(x,k_{\spt{\alpha_{1}}},k_{\spt{\alpha_{2}}})\check{G}'^{\spt{\omega}}_{l,l',m,m',r,r',s,'s'}(x/a_{1},\theta,k_{\spt{\alpha_{1}}},k_{\spt{\alpha_{2}}},k_{\spt{\beta_{2}}},\omega_{1}(a_{1}),\tilde{\omega}_{1;2}(a_{1}))\\
\hspace{-3.2cm}&\hspace{-4.15cm}\times\check{G}'^{\spt{\omega}}_{m,m',l,l',s,s',r,r'}(y/a_{2},\theta,k_{\spt{\beta_{2}}},k_{\spt{\beta_{1}}},-k_{\spt{\alpha_{1}}},\omega_{2}(a_{2}),\tilde{\omega}_{2;2}(a_{2}))-\frac{c_{1}}{a_{1}}(b_{l',m,m'}+b_{r,r',s'})(y,k_{\spt{\beta_{1}}},k_{\spt{\beta_{2}}})\check{G}'^{\spt{\omega}}_{l,l',m,m',r,r',s,'s'}(x/a_{1},\theta,k_{\spt{\alpha_{1}}},k_{\spt{\alpha_{2}}},k_{\spt{\beta_{2}}},\omega_{1}(a_{1}),\tilde{\omega}_{1;1}(a_{1}))\\
\hspace{-3.2cm}&\hspace{-4.15cm}\times\check{G}'^{\spt{\omega}}_{m,m',l,l',s,s',r,r'}(y/a_{2},\theta,k_{\spt{\beta_{2}}},k_{\spt{\beta_{1}}},-k_{\spt{\alpha_{1}}},\omega_{2}(a_{2}),\tilde{\omega}_{2;1}(a_{2}))\biggr)
\end{split}
\end{equation}}
\hspace{-.3em}where
{\fontsize{8}{20}
 \selectfont
\begin{subequations}
\begin{align}
\hspace{-3cm}&\check{G}'^{\spt{\omega}}_{l,l',m,m',r,r',s,'s'}(x/a_{1},\theta,k_{\spt{\alpha_{1}}},k_{\spt{\alpha_{2}}},k_{\spt{\beta_{2}}},\omega_{1}(a_{1}),\tilde{\omega}_{1;1}(a_{1}))=(-)^{l+r}\e^{\Re\bigl(\omega'_{r,r'}(x/a_{1},\tilde{\omega}_{1;1}(a_{1}))+\omega'_{l,l'}(x/a_{1},\omega_{1}(a_{1}))\bigr)}\e^{\iu\frac{\pi}{2}\bigl(ll'(k_{\spt{\alpha_{1}}}r_{\spt{\theta}}+2\Re(\omega_{1}(a_{1})))\bigr)}\notag\\
\hspace{-3cm}&\hspace{8cm}\times\e^{\iu\frac{\pi}{2}\bigl(rr'(k_{\spt{\alpha_{1}}}r_{\spt{\theta}}+2\Re(\tilde{\omega}_{1;1}(a_{1})))\bigr)}\e^{\iu \check{K}^{\spt{\omega}}_{l',m,m',r',s,s'}\frac{x}{a_{1}}},\\
\hspace{-3cm}&\mathring{G}'^{\spt{\omega}}_{l,l',m,m',r,r',s,'s',u,u',v,v'}(x/a_{1},\theta,k_{\spt{\alpha_{1}}},k_{\spt{\alpha_{2}}},k_{\spt{\beta_{2}}},\omega_{1}(a_{1}),\tilde{\omega}_{1;1}(a_{1}))=(-)^{l+r+u}\e^{\Re\bigl(\omega'_{l,l'}(x/a_{1},\omega_{1}(a_{1}))+\omega'_{r,r'}(x/a_{1},\tilde{\omega}_{1;1}(a_{1}))+\omega'_{u,u'}(x/a_{1},\tilde{\omega}_{1;1}(a_{1}))\bigr)}\notag\\
\hspace{-3cm}&\hspace{7cm}\times\e^{\iu\frac{\pi}{2}\bigl(ll'(k_{\spt{\alpha_{1}}}r_{\spt{\theta}}+2\Re(\omega_{1}(a_{1})))+(rr'+uu')(k_{\spt{\alpha_{1}}}r_{\spt{\theta}}+2\Re(\tilde{\omega}_{1;1}(a_{1})))\bigr)}\e^{\iu \mathring{K}^{\spt{\omega}}_{l',m,m',r',s,s',u',v,v'}\frac{x}{a_{1}}},
\end{align} 
\end{subequations}}
\hspace{-.3em}with 
{\fontsize{8}{15}
 \selectfont
\begin{subequations}
\begin{align}
\hspace{-3.35cm}&\check{K}^{\spt{\omega}}_{l',m,m',r',s,s'}(k_{\spt{\alpha_{1}}},k_{\spt{\alpha_{2}}},k_{\spt{\beta_{2}}},\omega_{1}(a_{1}),\tilde{\omega}_{1;1}(a_{1}))=\pi\biggl(2l'\Re(\omega_{1}(a_{1}))+2r'\Re(\tilde{\omega}_{1;1}(a_{1}))+\frac{\pi}{4}k_{\spt{\alpha_{2}}}k_{\spt{\beta_{2}}}r^{2}_{\spt{\theta}}\theta\bigl(m'^{2}(2m'+3s')-s'(s'^{2}-3(m-s)^{2})\bigr)+\notag\\
\hspace{-3.35cm}&\hspace{7cm}+\bigl((l'+r')k_{\spt{\alpha_{1}}}+6(m'+s')k_{\spt{\alpha_{2}}}\bigr)r_{\spt{\theta}}\biggr),\\
\hspace{-3.35cm}&\mathring{K}^{\spt{\omega}}_{l',m,m',r',s,s',u',v,v'}(k_{\spt{\alpha_{1}}},k_{\spt{\alpha_{2}}},k_{\spt{\beta_{2}}},\omega_{1}(a_{1}),\tilde{\omega}_{1;1}(a_{1}))=\pi\biggl(2l'\Re(\omega_{1}(a_{1}))+2(r'+u')\Re(\tilde{\omega}_{1;1}(a_{1}))+\frac{\pi}{2}k_{\spt{\alpha_{2}}}k_{\spt{\beta_{2}}}r^{2}_{\spt{\theta}}\theta\bigl(m'^{3}+s'^{3}+v'^{3}+3s'v'(s-v)+\notag\\
\hspace{-3.35cm}&\hspace{7cm}+3m'(s'(m-s)+v'(m-v))\bigr)+\bigl((l'+r'+u')k_{\spt{\alpha_{1}}}+6(m'+s'+v')k_{\spt{\alpha_{2}}}\bigr)r_{\spt{\theta}}\biggr).
\end{align} 
\end{subequations}}
\hspace{-.3em}This time to show that \mbox{$\tr\bigl((\hat{D}_{i}\phi_{1})^{\dagger}(\hat{D}_{i}\phi_{2})\bigr)$} and $\tr\bigl(\varepsilon_{ij}(\hat{D}_{i}\phi_{1})^{\dagger}(\hat{D}_{j}\phi_{2} )\bigr)$ are null, apart from considering the eqs.\,\eqref{eqn:Inta1G'} and~\eqref{eqn:IntG8}, we have to consider that
{\fontsize{8}{13}
 \selectfont
\begin{equation}
\begin{split}
\label{eqn:IntG12}
\hspace{-3.45cm}\int^{a_{1}}_{0}&\mathring{G}'^{\spt{\omega}}_{l,l',m,m',r,r',s,'s',u,u',v,v'}(x/a_{1},\theta,k_{\spt{\alpha_{1}}},k_{\spt{\alpha_{2}}},k_{\spt{\beta_{2}}},\omega_{1}(a_{1}),\tilde{\omega}_{1;1}(a_{1}))\,dx=(-)^{l+r+u}a_{1}\mathring{A}_{l,l',m,m',r,r',s,s',u,u',v,v'}(k_{\spt{\alpha_{1}}},k_{\spt{\alpha_{2}}},k_{\spt{\beta_{2}}},\omega_{1}(a_{1}),\tilde{\omega}_{1;1}(a_{1}))\\
\hspace{-3.45cm}&\times\bigl(\erf(\mathring{b}_{l,l',m,m',r,r',s,s',u,u',v,v'}(k_{\spt{\alpha_{1}}},k_{\spt{\alpha_{2}}},k_{\spt{\beta_{2}}},\omega_{1}(a_{1}),\tilde{\omega}_{1;1}(a_{1}))-\erf(\mathring{b}_{l+2,l',m,m',r+2,r',s,s',u+2,u',v,v'}(k_{\spt{\alpha_{1}}},k_{\spt{\alpha_{2}}},k_{\spt{\beta_{2}}},\omega_{1}(a_{1}),\tilde{\omega}_{1;1}(a_{1}))\bigr),
\end{split}
\end{equation}}
\hspace{-.3em}with \mbox{$\mathring{A}_{l,l',m,m',r,r',s,s',u,u',v,v'}(k_{\spt{\alpha_{1}}},k_{\spt{\alpha_{2}}},k_{\spt{\beta_{2}}},\omega_{1}(a_{1}),\tilde{\omega}_{1;1}(a_{1}))$} equal to 
\bigskip
{\fontsize{9}{20}
 \selectfont
\begin{equation}
\begin{split}
\hspace{-2.25cm}&\frac{\e^{\frac{(\mathring{K}^{\spt{\omega}}_{l',m,m',r',s,s',u',v,v'})^{2}+4\pi^{2}\bigl(((r-u)^{2}+2(r'^{2}+u'^{2}))(\Im(\tilde{\omega}_{1;1}(a_{1})))^{2}+((l-r)^{2}+(l-u)^{2}+2l'^{2}+r'^{2}+u'^{2})\Im(\tilde{\omega}_{1;1}(a_{1}))\Im(\omega_{1}(a_{1}))+l'^{2}(\Im(\omega_{1}(a_{1})))^{2}\bigr)}{8\pi\bigl(\Im(\omega_{1}(a_{1}))+2\Im(\tilde{\omega}_{1;1}(a_{1}))\bigr)}}}{2\sqrt{2(\abs{\Im(\omega_{1}(a_{1}))}+2\abs{\Im(\tilde{\omega}_{1;1}(a_{1}))})}}\\
\hspace{-2.25cm}&\hspace{1cm}\times\e^{\frac{\iu\Im(\omega_{1}(a_{1}))}{2(\abs{\Im(\omega_{1}(a_{1}))}+2\abs{\Im(\tilde{\omega}_{1;1}(a_{1}))})}\bigl(\pi k_{\spt{\alpha_{1}}}r_{\spt{\theta}}(ll'+rr'+uu')+2\pi (ll'\Re(\omega_{1}(a_{1}))+(rr'+uu')\Re(\tilde{\omega}_{1;1}(a_{1})))-l\mathring{K}^{\spt{\omega}}_{l',m,m',r',s,s',u',v,v'}\bigr)}\\
\hspace{-2.25cm}&\hspace{1cm}\times\e^{\frac{\iu\Im(\tilde{\omega}_{1;1}(a_{1}))}{2(\abs{\Im(\omega_{1}(a_{1}))}+2\abs{\Im(\tilde{\omega}_{1;1}(a_{1}))})}\bigl(2\pi k_{\spt{\alpha_{1}}}r_{\spt{\theta}}(ll'+rr'+uu')+4\pi (ll'\Re(\omega_{1}(a_{1}))+(rr'+uu')\Re(\tilde{\omega}_{1;1}(a_{1})))-(r+u)\mathring{K}^{\spt{\omega}}_{l',m,m',r',s,s',u',v,v'}\bigr)}
\end{split}
\end{equation}}
\hspace{-.3em}and 
{\fontsize{10}{15}
 \selectfont
\begin{equation}
\hspace{-3cm}\mathring{b}_{l,l',m,m',r,r',s,s',u,u',v,v'}(k_{\spt{\alpha_{1}}},k_{\spt{\alpha_{2}}},k_{\spt{\beta_{2}}},\omega_{1}(a_{1}),\tilde{\omega}_{1;1}(a_{1}))=\frac{2\pi\bigl( l\Im(\omega_{1}(a_{1}))+ (r+u)\Im(\tilde{\omega}_{1;1}(a_{1}))\bigr)+\iu \mathring{K}^{\spt{\omega}}_{l',m,m',r',s,s',u',v,v'}}{2\sqrt{2\pi\bigl(\abs{\Im(\omega_{1}(a_{1}))}+2\abs{\Im(\tilde{\omega}_{1;1}(a_{1}))}\bigr)}}.
\end{equation}}
\indent In the end, we obtain
{\fontsize{9}{15}
 \selectfont
\begin{equation}
\hspace{-3.3cm}\rho((\phi\phi^{\dagger})^{2}(\hat{x},\hat{y}))=\hspace{-3.15em}\sum^{+\infty}_{l,l',m,m',r,r',s,s'=-\infty}\hspace{-3em}\tilde{G}'^{\spt{\omega}}_{l,l',m,m',r,r',s,'s'}(x/a_{1},\theta,k_{\spt{\alpha_{1}}},k_{\spt{\alpha_{2}}},k_{\spt{\beta_{2}}},\omega_{1}(a_{1}),\omega_{1}(a_{1}))\tilde{G}'^{\spt{\omega}}_{m,m',l,l',s,s',r,r'}(y/a_{2},\theta,k_{\spt{\beta_{2}}},k_{\spt{\beta_{1}}},-k_{\spt{\alpha_{1}}},\omega_{2}(a_{2}),\omega_{2}(a_{2})).
\end{equation}}
\hspace{-.3em}For the demonstration that \mbox{$\tr((\phi_{2}\phi^{\dagger}_{1})^{2})$} is null, we can just consider the~\eqref{eqn:IntG8}.


\addcontentsline{toc}{section}{\refname}

\bibliographystyle{utphys}

\bibliography{Bibliography}

\end{document}